\tikzset{snake it/.style={decorate, decoration=snake}}
\@nx\else[{#1}]\fi}% Replace this code
\@nx\else[{#1}]\fi\else\csname #2\@xa\endcsname\fi}% with this code
\newcommand{\labeltext}[2]{%
  \@bsphack
  \csname phantomsection\endcsname % in case hyperref is used
  \def\@currentlabel{#1}{\label{#2}}%
  \@esphack
}
\renewenvironment{shaded}{%
  \MakeFramed{\advance\hsize-\width \FrameRestore\FrameRestore}}%
 {\endMakeFramed}
\definecolor{shadecolor}{gray}{0.88}
\title{Quantum Polymorphisms and the Complexity of\\ Quantum Constraint Satisfaction}
\author{Lorenzo Ciardo\textsuperscript{\textdagger}, Gideo Joubert\textsuperscript{\ddag}, and Antoine Mottet\textsuperscript{\ddag}\\
\textsuperscript{\textdagger}TU Graz\\
\textsuperscript{\ddag}Hamburg University of Technology
}
\newtheorem{theorem}{Theorem}
\newtheorem{proposition}[theorem]{Proposition}
\newtheorem{fact}[theorem]{Fact}
\newtheorem{corollary}[theorem]{Corollary}
\newtheorem{lemma}[theorem]{Lemma}
\newtheorem{claim}[theorem]{Claim}
\theoremstyle{definition}
\newtheorem{definition}[theorem]{Definition}
\theoremstyle{remark}
\newtheorem{remark}[theorem]{Remark}
\newtheorem{example}[theorem]{Example}
\newcommand{\ba}{\mathbf{a}}
\newcommand{\bb}{\mathbf{b}}
\newcommand{\bc}{\mathbf{c}}
\newcommand{\bs}{\mathbf{s}}
\newcommand{\bu}{\mathbf{u}}
\newcommand{\bx}{\mathbf{x}}
\newcommand{\bv}{\mathbf{v}}
\newcommand{\by}{\mathbf{y}}
\newcommand{\bw}{\mathbf{w}}
\newcommand{\be}{\mathbf{e}}
\newcommand{\bq}{\mathbf{q}}
\newcommand{\bz}{\mathbf{z}}
\newcommand{\A}{\rel{A}}
\newcommand{\B}{\rel{B}}
\newcommand{\C}{\rel{C}}
\newcommand{\D}{\rel{D}}
\newcommand{\K}{\rel{K}}
\newcommand{\GG}{\rel{G}}
\newcommand{\N}{\mathbb{N}}
\newcommand{\R}{\mathbb{R}}
\newcommand{\Z}{\mathbb{Z}}
\newcommand{\X}{\rel{X}}
\newcommand{\Y}{\rel{Y}}
\newcommand\MIP{\ensuremath{\mathsf{MIP}^*}}
\newcommand\ang[2]{{\ensuremath\langle #1,#2\rangle}}
\newcommand\NP{\mathsf{NP}}
\newcommand\RE{\ensuremath{\mathsf{RE}}}
\DeclareMathOperator\qnoto{\to_{\operatorname{qno}}}
\DeclareMathOperator\Gaif{Gaif}
\DeclareMathOperator\ar{ar}
\DeclareMathOperator\Span{span}
\DeclareMathOperator\dist{dist}
\DeclareMathOperator\CSP{CSP}
\newcommand\YES{\mathrm{YES}}
\renewcommand\NO{\mathrm{NO}}
\DeclareMathOperator\qCSP{CSP_q}
\DeclareMathOperator\qnoCSP{CSP_q^{no}}
\DeclareMathOperator\qPol{qPol}
\DeclareMathOperator\qnoPol{qPol^{no}}
\DeclareMathOperator\qcPol{\ensuremath{\oplus}Pol}
\DeclareMathOperator\Pol{Pol}
\DeclareMathOperator\id{id}
\DeclareMathOperator\comp{comp}
\newcommand\rel[1]{\mathbb{#1}}
\DeclareMathOperator\Hom{Hom}
\newcommand\tuple[1]{\mathbf{#1}}
\newcommand\qto[1][\operatorname{q}]{\to_{#1}}
\newcommand\ignore[1]{}
\newcommand\qc[1]{\mbox{\ensuremath{\oplus{#1}}}}
\newcommand{\minion}[1]{{\mathscr #1}}
\DeclareMathOperator\minimalClone{\rel M}
\begin{document}

\maketitle
\begin{abstract}
\noindent
We introduce the concept of quantum polymorphisms to the complexity theory of quantum constraint satisfaction. Via this notion, we build an algebraic framework of reductions between quantum CSPs, and we establish a Galois connection between quantum polymorphism minions and quantum relational constructions. By leveraging a contextuality property of quantum polymorphisms, we fully characterise the existence of commutativity gadgets for relational structures, introduced by Ji as a method for achieving quantum soundness of classical CSP reductions. Prior to our work, only a partial classification was known for a subclass of Boolean languages
and for non-Boolean languages meeting specific structural conditions [Culf--Mastel, FOCS'25].
As an application of our framework, we prove that the quantum CSPs parameterised by odd cycles and the quantum CSP expressing quantum satisfiability of Siggers clauses are undecidable.
\end{abstract}

\setcounter{page}{0}\thispagestyle{empty}\clearpage

\newpage
\tableofcontents

\setcounter{page}{0}\thispagestyle{empty}\clearpage

\section{Introduction}

The \textit{quantum constraint satisfaction problem} is the decision problem of testing for the existence of a winning quantum strategy for a 2-player 1-round game where the two (cooperating, computationally unbounded) players aim to prove to a classical verifier that a system of constraints can be simultaneously satisfied. 
Here, a quantum strategy consists of a bipartite finite-dimensional quantum state shared by the players and a collection of measurements that the players perform on the state.
Upon receiving from the verifier questions involving local constraints of the system, the players measure their part of the state and answer with partial assignments based on the measurement outcomes. 
The origins of such game---often known as a \textit{non-local game}---come from physics: Bell's\cite{bell1964einstein} and Kochen--Specker's~\cite{kochen1967problem} proofs of the incompatibility of quantum mechanics with local
hidden-variable theories, stemming from  the Einstein--Podolsky--Rosen paradox~\cite{einstein1935can}, depend on the fact that, for certain types of constraints, quantum strategies generate correlations in the player's answers that are unattainable in classical physics.

Limiting the strategies to unentangled quantum states collapses the quantum CSP to its classical counterpart, the \textit{constraint satisfaction problem} (CSP).
The classical CSP is easily shown to be solvable in non-deterministic polynomial time. On the other hand, it was proved in~\cite{slofstra2019set} that the quantum CSP problem is undecidable---in fact,
as a consequence of the \MIP=\RE\ theorem~\cite{ji2021mip}, it remains undecidable even in its \textit{gapped} version.
In both the classical and quantum settings, restricting the constraint systems in the game to conjunctions of clauses belonging to some specific language can make the problem strictly easier. 
For example, for classical CSPs, if all constraints are XOR or  Horn-SAT clauses, the problem is easily seen to be in $\mathsf{P}$, while if they are expressive enough to capture $3$-colouring or $3$-SAT, the problem is $\NP$-complete. Historically, the apparent lack of problems of this type having intermediate complexity  sparked the hypothesis that such dichotomy could, in fact, extend to the whole class of \textit{finite-domain} CSPs.\footnote{The complexity behaviour of infinite-domain CSPs is much wilder (see, e.g.~\cite{SymmetriesNotEnough}). However, a dichotomy has been conjectured for a large class of infinite-domain CSPs with a finite description~\cite{BPP}; see~\cite{SmoothApproximations} for a recent article about the progress on the conjecture.}
`$\operatorname{P}$ vs.\ $\NP$-complete' dichotomies for fragments of CSPs started to emerge, primary examples being Schaefer's~\cite{Schaefer78:stoc} and Hell--Ne\v{s}et\v{r}il's~\cite{HellN90} dichotomies for Boolean and undirected graph CSPs, respectively. 
However, a \textit{unified} framework for approaching the \textit{CSP Dichotomy Conjecture}~\cite{Feder98:monotone} only emerged at the turn of the century, with the foundational line of work initiated by Jeavons, Cohen, and Gyssens that became known as the \textit{(universal-)algebraic approach to CSPs}~\cite{Jeavons97:closure,Jeavons98:algebraic}.
By lifting CSP theory from the realm of Turing machines and reductions to that of algebra, the study of \textit{polymorphism clones}---universal-algebraic objects capturing high-order invariance properties of CSP solutions---provided a clearer viewpoint on the CSP complexity landscape, and it eventually culminated with the positive resolution of the conjecture by Bulatov~\cite{Bulatov17:focs} and, independently, Zhuk~\cite{Zhuk17_FOCS,Zhuk20:jacm} in 2017. 
In the case of quantum CSPs, the complexity landscape is much less understood. While there are various examples of constraint classes for which the quantum CSP is undecidable~\cite{slofstra2019set,ji2021mip,harris2024universality,AtseriasKS19,culf2024re,paddock2025satisfiability,Zeman} or polynomial-time solvable~\cite{kempe2010unique,Ji, AtseriasKS19,ciardo_quantum_minion,paddock2025satisfiability,bulatov2025satisfiability,levene2026unique}, the complexity is unknown for most classes of languages, due to the lack of a general approach for studying them with unified tools.

The main conceptual contribution of this work is the introduction of the notion of \textit{quantum polymorphisms} as a tool to lift the study of algebraic invariants of classical CSP languages to the quantum setting. Via this notion, we initiate an algebraic complexity theory of quantum CSPs modelled on the algebraic approach initiated in~\cite{Jeavons97:closure,Jeavons98:algebraic} and culminated 20 years later with the proof of the (classical) Dichotomy Theorem.

Classical polymorphisms turned out to be the right tool for exploring CSP complexity at the algebraic level. The root of this phenomenon is a connection between relational constructions (expressing the CSP language) and operation systems (expressing the family of polymorphisms). Via this connection, reducibility between CSPs associated with different languages can be explained via inclusion (or algebraic forms of the latter, in particular, so called minion homomorphisms) between their polymorphism sets.
In this work, we prove that quantum polymorphisms completely determine the complexity of quantum CSPs, thus enabling the use of algebra in the complexity analysis of such problems.
Moreover, we link quantum polymorphisms to a quantised version of classical relational constructions that we name \textit{q-definitions} and \textit{q-constructions}, obtained by lifting the classical model-theoretic counterparts to the setting of quantum strategies.

Broadly, computable reductions in the literature on non-local games and quantum CSP complexity typically follow a common conceptual pattern: taking a classical CSP reduction (in particular, a so-called \textit{subdivision} of constraints or a more general pp-definition), and trying to achieve quantum completeness and soundness. 
It is well known that this approach does not work in general. A substantial obstacle to the prospect of using classical gadgets as reductions between quantum CSPs is ultimately caused by the following, non-classical phenomenon: replacing constraints by gadgets increases the size of the contexts---which, in general, destroys simultaneous measurability (see, e.g.,~\cite{ji2021mip,paddock2025satisfiability,AtseriasKS19}). On the other hand, for some families of constraint languages, the issue can be solved by pairing classical gadgets with an extra type of gadget known as a \textit{commutativity gadget}. On a high level, these are specific types of gadgets 
that can be plugged into classical gadgets to enforce simultaneous measurability of the final instance resulting from the reduction. For example, by showing that a specific \textit{triangular prism} construction yields a commutativity gadget for the 3-clique, Ji showed that the classical polynomial-time reduction from 3-SAT to 3-colouring can be lifted to a reduction between the corresponding quantum CSPs~\cite{Ji}.
These commutativity gadgets have no counterpart in the classical setting, as they enforce (by design) no additional constraints on the variables in their scope apart from commutativity. Therefore, they are useless from the classical point of view.

Later works generalised this idea to other types of quantum CSPs. In particular, commutativity gadgets for certain Boolean CSPs were constructed in~\cite{AtseriasKS19}. In a subsequent recent breakthrough~\cite{culf2024re}, Culf and Mastel designed commutativity gadgets for all Boolean constraint languages whose classical CSP is NP-complete (or, more generally, is not preserved under majority~\cite{Schaefer78:stoc}).
In the same work, the authors identified a \textit{sufficient} property of CSP languages---which they called \textit{two-variable (non-)falsifiability}---that guarantees the existence of commutativity gadgets for the corresponding CSPs and thus allows quantising classical reductions. Very recently, Culf, de Bruyn, Vernooij, and Zeman~\cite{Zeman} gave a different, \textit{necessary} condition for the existence of commutativity gadgets, by showing that CSPs with non-classical quantum endomorphisms do not admit commutativity gadgets. 
Despite this recent progress, a full characterisation of CSPs admitting commutativity gadgets was hitherto not available.

We prove that Ji's reductions obtained by pairing classical gadgets with commutativity gadgets are captured by (and are a strict subset of) the reductions coming from quantum polymorphisms. Hence,
as a second main contribution of this work, we give a complete characterisation of the class of CSP languages admitting commutativity gadgets, in terms of a \textit{contextuality} property of their quantum polymorphisms.
In particular, we show that a commutativity gadget exists if, and only if, the minion of quantum polymorphisms coincides with the \textit{quantum closure} of the clone of standard polymorphisms. 
Through this characterisation, we prove that all odd cycles, as well as the Siggers digraph~\cite{kearnes_optimal_2014}, admit commutativity gadgets and, thus, the corresponding quantum CSP is undecidable.\footnote{This answers an open question raised in~\cite{Zeman}.} In addition, we show that whenever a language admits a commutativity gadget, this can be chosen to be a high-enough categorical power of the language itself. We use this to build commutativity gadgets for Boolean structures in a unified manner, and we extend the result in~\cite{culf2024re} to a full classification of Boolean structures admitting commutativity gadgets. As a consequence, we obtain a complexity dichotomy for Boolean quantum CSPs, by showing that they are either in P or undecidable.

\section{Overview of results and techniques}
\label{sec_overview}
The gist of the algebraic approach to CSPs is a correspondence between
$(i)$ algebraic properties of invariants of solution sets of CSPs---namely, polymorphisms---and $(ii)$ relational constructions that enable reductions between different CSPs and, thus, determine tractability or hardness. In
the next three subsections of this overview, 
we lift both objects to the quantum setting, and we use them to characterise the existence of commutativity gadgets. We then apply the polymorphic framework to prove undecidability for classes of quantum CSPs by using a result on the emergence of certain non-orthogonality patterns in contextual quantum polymorphisms.  This section contains an informal overview of the main ideas we use to establish these results. Full details are in the body of the paper (from~\Cref{sec_prelimns} on).

\subsection{Quantum algebraic reductions}
\label{subsec_overview_algebraic_reductions}

Consider a 2-prover 1-round game as illustrated in the Introduction.
Suppose that the admitted constraints in the protocol are encoded in a relational structure $\A$; i.e., the goal of the provers is to convince the verifier that a given system $\X$ (called the \textit{instance}) of $\A$-clauses applied to variables of $\X$ is simultaneously satisfiable. The problem of deciding whether a winning quantum strategy exists for such game is the quantum CSP parameterised by the relational structure $\A$, and it is denoted by $\qCSP(\A)$.
We define a quantum polymorphism of $\A$ as a winning quantum strategy for such protocol, where the instance $\X$ is constrained to be some $n$-ary categorical power $\A^n$ of $\A$.\footnote{If $\A$ is a digraph (i.e., it has a single, binary relation), then $\A^n$ is the standard direct power of digraphs.} In the language of \textit{quantum homomorphisms} between relational structures~\cite{abramsky2017quantum}, this can be rephrased
as follows.

\vspace{-.25cm}
\begin{shaded}
\vspace{-.45cm}
\begin{definition}
\label{defn_quantum_polymorphisms}
    A \emph{quantum polymorphism} of arity $n\in\N$ for a structure $\A$ is a quantum homomorphism $Q\colon\rel A^n\qto\rel A$. We let $\qPol(\A)$ be the set of all quantum polymorphisms of $\A$.
\end{definition}
\vspace{-.45cm}
\end{shaded}
\vspace{-.25cm}

By forcing the admitted strategies in the game to be classical and deterministic, we retrieve the classical polymorphisms of $\A$; i.e., $\Pol(\A)\subseteq\qPol(\A)$. 
The role of classical polymorphisms in CSP complexity is well understood: $\Pol(\A)$ completely determines the complexity of $\CSP(\A)$, in the sense that, given two langauges $\A$ and $\B$ such that $\Pol(\A)=\Pol(\B)$,  the two problems $\CSP(\A)$ and $\CSP(\B)$ are polynomial-time equivalent. 
Moreover, not all information contained in $\Pol(\A)$ is relevant complexity-wise. In fact, the complexity of $\CSP(\A)$ is fully determined by the identities of a restricted form satisfied by $\Pol(\A)$, namely the so-called \textit{minor} identities.
This leads to the notion of a \textit{function minion} (or \textit{minor-closed function class}),
which provides an algebraic tool for comparing the complexity of CSPs across different languages. Function minions capture precisely the minor identities satisfied by polymorphisms, thereby isolating the complexity-relevant structure of $\Pol(\A)$.

\begin{theorem}[\cite{BOP18}]
\label{thm_classical_minion_homo_reduction}
    Let $\A$ and $\B$ be relational structures. If there exists a minion homomorphism from $\Pol(\A)$ to $\Pol(\B)$, then $\CSP(\B)$ is logspace-reducible to $\CSP(\A)$.
\end{theorem}

Quantum polymorphisms are not operations over the domain $A$ and, thus, cannot be directly equipped with the structure of a function minion. Nevertheless, we show that they \textit{behave algebraically} like classical polymorphisms, by endowing them with the structure of \textit{abstract minions}. These objects, originally introduced in the contexts of relaxations of \textit{promise CSPs}~\cite{BBKO21,cz23sicomp:clap,cz23soda:minions,CiardoZassociation,cz23stoc:ba,BrakensiekGS23,larrauri2025ineffectiveness}, are designed to retain the algebraic structure of function minions while dispensing from the requirement that their elements be concrete functions. 
To establish that $\qPol(\A)$ is an abstract minion, we exploit the fact that quantum homomorphisms admit a specific composition operation, described in~\cite{abramsky2017quantum} using the framework of Kleisli categories  (see~\Cref{subsec_body_algebraic_structure}). This structure enables an algebraic comparison of quantum polymorphisms across different languages. In particular, we lift~\Cref{thm_classical_minion_homo_reduction} to the following result.

\vspace{-.25cm}
\begin{shaded}
\vspace{-.45cm}
\begin{restatable}{theorem}{thmMainqPolHomoReductions}
\label{thm_main_qPol_homo_reductions}
    Let $\A$ and $\B$ be relational structures. If there exists a minion homomorphism from $\qPol(\A)$ to $\qPol(\B)$, then $\qCSP(\B)$ is logspace-reducible to $\qCSP(\A)$.
\end{restatable}
\vspace{-.45cm}
\end{shaded}
\vspace{-.25cm}

The proof of~\Cref{thm_main_qPol_homo_reductions} (whose full details are given in~\Cref{subsec_body_reductions_via_minion_homo}) is based on a reduction coming from the Long-Code testing for classical constraint systems~\cite{bellare1998free} (see also~\cite[\S3.3]{BBKO21}). We establish quantum completeness and soundness of this reduction using the minor-preservation property induced by the minion homomorphism. 
As a consequence of the above result, quantum polymorphisms---and, more precisely, the minor identities they satisfy---completely characterise the complexity of quantum CSPs. 
Moreover,~\Cref{thm_main_qPol_homo_reductions} enables the transfer of known undecidability results for specific quantum CSPs (for instance, quantum 3SAT~\cite{ji2021mip,mastel2024two} or quantum XOR~\cite{slofstra2019set}) to a much broader class of problems via the analysis of the corresponding quantum polymorphisms.
This draws a nice parallel with the classical setting, where the algebraic study of polymorphisms enabled the use of tools from universal algebra, ultimately playing a central role in the proof of the CSP dichotomy theorem. The next step is to look at the relational side of the picture.
\subsection{Quantum relational constructions}
\label{subsec_overview_relational_reductions}
The fact that a homomorphism $\xi$ between (classical) polymorphism minions $\Pol(\A)$ and $\Pol(\B)$ gives rise to reductions can be viewed in two distinct ways. At the level of the instance, $\xi$ allows a complete and sound reduction from $\CSP(\B)$ to $\CSP(\A)$ by ensuring that satisfying assignment to the first problem correspond to satisfying assignments of the second. At the level of the language, $\xi$ captures the fact that $\A$ can \textit{simulate} $\B$ via a specific type of uniform construction. These two sides are tied together via strong categorical links (see, for example,~\cite{hadek2025categorical}). We now look at the latter side, in the case of quantum CSPs. 

Classically, the right notion of simulation for CSPs is based on \textit{primitive positive (pp) formulae}---first-order formulae using relations in the language, equality, conjunction, and
existential quantification---a concept that originates in model theory and universal algebra. Lifting this concept directly to the case of quantum CSPs presents a seriour obstacle: It is well known that, in general, reductions between classical CSPs based on pp-formulae are not complete and sound in the quantum setting. This is due to the fact that they alter the commutativity contexts of the instance, see e.g.~\cite{Ji,AtseriasKS19,culf2024re}. For this reason, we lift the notion of pp-formula to a quantum counterpart that provably works as a reduction between quantum CSPs. This results in the notion of \textit{q-definition}, which we formally describe in~\Cref{subsec_body_quantum_reductions}. At a high level,
this notion should be viewed as the ``minimal set of conditions required to quantise classical pp-definitions of constraint languages''. One limitation of q- (and pp-) definitions is that any structure that is definable from $\A$ must have domain $A$. In order to be able to reduce between quantum CSPs having distinct domains, we introduce the more general notion of \textit{q-constructions}, modelled around the classical pp-constructions.

The next step is to show that q-definitions and the more expressive q-constructions perform the required tasks of simulating a quantum CSP within another. To that end, we prove that q-constructions induce minion homomorphisms of the corresponding quantum polymorphism minions (and thus, via~\Cref{thm_main_qPol_homo_reductions}, reductions between the corresponding quantum CSPs). 

\vspace{-.25cm}
\begin{shaded}
\vspace{-.45cm}
\begin{restatable}{theorem}{qconstructionimpliesqPolhom}
\label{q_construction_implies_qPol_hom}
    If $\A$ q-constructs $\B$, then there exists a minion homomorphism $\qPol(\A)\to\qPol(\B)$.
\end{restatable}
\vspace{-.45cm}
\end{shaded}
\vspace{-.25cm}

Note that, in the statement above, neither the domains nor the signatures of $\A$ and $\B$ are required to coincide. 
The key to proving the above result is to show that it is possible to perform a ``vectorisation step'' identifying tuples of tuples with vectors (see condition $(ii)$ in~\Cref{defn_q_construction}) at the level of minion homomorphisms of quantum polymorphism minions (and, thus, of reductions between quantum CSPs). While for classical pp-constructions this fact is straightforward, it is not a priori clear that this operation does not affect completeness and soundness in the quantum case. One might expect that an extra commutativity gadget (see later) is needed specifically to quantise this part of the classical construction. We prove that, in fact, no extra gadget is necessary for this particular step to work. To this end,  we make use of a minion-theoretic observation on the extension of partial minion homomorphisms defined on elements having only essential coordinates to global minion homomorphisms.

In the case of q-definitions, we additionally prove a quantum counterpart of the classical operation/relation Galois connection for pp-definitions~\cite{bodnarchuk1969galois,geiger1968closed}. 
\vspace{-.25cm}
\begin{shaded}
\vspace{-.45cm}
\begin{restatable}{theorem}{propgaloisoracular}
\label{prop_galois_oracular}
    Let $\A$ and $\B$ be relational structures over the same domain. Then $\A$ q-defines $\B$ if, and only if, $\qPol(\A)\subseteq\qPol(\B)$. 
\end{restatable}
\vspace{-.45cm}
\end{shaded}
\vspace{-.25cm}

\subsection{Non-contextual quantum polymorphisms and commutativity gadgets}
\label{subsec_overview_commutativity_gadgets}

We have, at this point, two perspectives on reductions between quantum CSPs---one algebraic, and one relational.
The next step is to obtain a direct link between such reductions and their classical counterparts, with the goal of achieving a systematic method for lifting classical hardness reductions to quantum undecidability reductions.
To that end, we make use of
the notion of commutativity gadgets.
At a high level, the idea is that, while every q-definition is in particular a pp-definition, 
the converse is not true in general, as the commutativity contexts of the measurements of the quantum strategy for the reduced instance are in general unrelated to those for the initial instance.
Commutativity gadgets were introduced in~\cite{Ji} to overcome this obstacle for certain constraint languages, by artificially introducing the required commutativity relations between the PVMs.

Recall that quantum strategies for a non-local game consist of projection-valued measurements (PVMs) with outcome set given by the language $\A$. In general,  the correlation associated with the measurements of a quantum strategy may locally satisfy all constraints with perfect probability even if such correlation cannot be associated with any system of measurements simultaneously performed on all constraints. This is ultimately a phenomenon of quantum contextuality, occurring whenever the PVMs of the strategy do not give rise to a global context on the whole instance---i.e., whenever some of the PVMs do not commute with each other.
This phenomenon can also occur for quantum polymorphisms: we say that a quantum polymorphism $Q\colon\A^n\qto\A$ is \textit{contextual} if at least one pair $(Q_1,Q_2)$ of PVMs in $Q$ does not commute. If, on the other hand, all PVMs commute, we say that $Q$ is \textit{non-contextual}. As a straightforward application of the spectral theorem, non-contextual quantum polymorphisms can be equivalently viewed as probability distributions over classical deterministic polymorphisms. %
A main conceptual contribution of our work is to show that the existence of a commutativity gadget for a language can be fully characterised by the non-contextuality of its quantum polymorphisms, as stated next.

\vspace{-.25cm}
\begin{shaded}
\vspace{-.45cm}
\begin{theorem}\label{gadget-characterization_overview_friendly}
Let $\rel A$ be a relational structure. The following are equivalent:
\begin{enumerate}
    \item $\rel A$ has a commutativity gadget.
    \item $\rel A^n$ is a commutativity gadget for all large enough $n\in\N$.
    \item All quantum polymorphisms of $\A$ are non-contextual.
\end{enumerate}
\end{theorem}
\vspace{-.45cm}
\end{shaded}
\vspace{-.25cm}

In order to establish the 
$3.\Rightarrow 2.\Rightarrow 1.$ implications in the
theorem above, we use the fact that any pair of elements in $A$ can be \textit{generated} as the image of some polymorphism of high-enough arity $n$. Then, we prove that, for such $n$, the $n$-th categorical power of $\A$ yields a commutativity gadget for $\A$ provided that all quantum homomorphisms $\A^n\qto\A$ (i.e., all $n$-ary quantum polymorphisms of $\A$) are non-contextual. In order to prove the $1.\Rightarrow 3.$ implication, the high-level idea is to start with a quantum polymorphism of $\A$, and compose it with a tensor product of quantum homomorphisms from the 
commutativity gadget to $\A$. 
To make this step work, we make use of the composition operation introduced in~\cite{abramsky2017quantum}
to prove that quantum homomorphisms can be expressed as Kleisli morphisms for a graded monad.

Via the characterisation of commutativity gadgets in~\Cref{gadget-characterization_overview_friendly}, we are able to connect classical pp-constructions to both the algebraic reductions in~\Cref{subsec_overview_algebraic_reductions}, and the relational constructions from~\Cref{subsec_overview_relational_reductions}. As a consequence, we prove that if $\A$ pp-constructs $\B$ and all quantum polymorphisms of $\A$ are non-contextual, then $\A$ q-constructs $\B$ and, thus, $\qCSP(\B)$ reduces to $\qCSP(\A)$. In particular, this gives the following. 

\vspace{-.25cm}
\begin{shaded}
\vspace{-.45cm}
\begin{restatable}{theorem}{cornpcompleteplusnocontextualmeansundecidable}
\label{cor_np_complete_plus_no_contextual_means_undecidable}
    Let $\A$ be a structure such that $\CSP(\A)$ is not in $\mathsf{P}$ and all quantum polymorphisms of $\A$ are non-contextual. Then, $\qCSP(\A)$ is undecidable.
\end{restatable}
\vspace{-.45cm}
\end{shaded}
\vspace{-.25cm}

This result yields a broadly applicable undecidability criterion for quantum CSPs.\footnote{We point out that the condition $\CSP(\A)\not\in\mathsf{P}$ in~\Cref{cor_np_complete_plus_no_contextual_means_undecidable} can be replaced by the algebraic condition of $\Pol(\A)$ being trivial (or, equivalently, by $\A$ pp-constructing, say, 3SAT). In this way, the result does not trivialise if $\mathsf{P}=\NP$; see~\Cref{cor_np_complete_plus_no_contextual_means_undecidable_ALGEBRAIC}.} In particular, it captures and extends the known undecidability results established in~\cite{AtseriasKS19} and in~\cite{culf2024re}: We show in~\Cref{subsec_comparison_TVF} that any relational structure that does not meet the so-called \textit{two-variable falsifiability (TVF)} property (and, thus, is captured by the undecidability criteria  of~\cite{AtseriasKS19,culf2024re}) cannot have contextual quantum polymorphisms
and, thus, is captured by~\Cref{cor_np_complete_plus_no_contextual_means_undecidable}. As we shall see, the latter result is strictly more expressive, as it applies to various classes of CSP languages that are not captured by~\cite{AtseriasKS19,culf2024re}. 

\subsection{Contextuality and bifurcations}
\label{subsec_overview_contextuality_bifurcations}

A consequence of the machinery developed so far is that we are now able to systematically investigate the complexity landscape of quantum CSPs  by studying the properties of their quantum polymorphisms---in particular, whether they are non-contextual.
Recall that we defined a quantum homomorphism to be contextual when at least two of the measurements of the corresponding quantum strategy do not commute and, thus, cannot be simultaneously performed in at least one reference system.
Before proceeding further, it shall be useful to address the following question: \textit{What information can we derive from the fact that a quantum homomorphism is contextual}?

Informally, the idea here is to ``trade non-commutativity for non-orthogonality''.
At first glance, this trade appears unfavourable as any two non-commuting projectors are, in particular, non-orthogonal. However, it turns out that a \textit{single} instance of non-commutativity among projectors arising in a perfect quantum strategy generates a structured pattern of \textit{multiple} non-orthogonality relations. We call such non-orthogonality patterns \textit{(contextuality) bifurcations}. (For the formal definition, see~\Cref{sec_bifurcations}; two illustrations of such patterns are given in~\Cref{fig_two_bifurcation} and~\Cref{fig_omega_bifurcation}.)
Specifically, we prove the following.

\vspace{-.25cm}
\begin{shaded}
\vspace{-.45cm}
\begin{restatable}{theorem}{thmbifurcations}
\label{thm_bifurcations}
    Let $\X$ and $\A$ be relational structures such that $\Gaif(\X)$ is a connected graph, and let $Q\colon\X\qto\A$ be a contextual quantum homomorphism. Then there exists a bifurcation for $Q$ whose length is at most the diameter of $\Gaif(\X)$.
\end{restatable}
\vspace{-.45cm}
\end{shaded}
\vspace{-.25cm}

In practice, the strategy for applying~\Cref{cor_np_complete_plus_no_contextual_means_undecidable} to show that the quantum polymorphisms of a structure $\A$ are non-contextual (and hence that $\A$ admits commutativity gadgets) proceeds as follows. First, one observes that the support of any bifurcation defines a substructure $\GG \subseteq \A^n \times \A$, since $Q$ is a perfect strategy. Next, one shows that the existence of a contextual quantum polymorphism would yield sufficiently many bifurcations (or bifurcations of a suitable form) to force $\GG$ to contain edges not present in $\A^n \times \A$, leading to a contradiction.

\begin{figure}
\begin{center}
    \includegraphics[width=\textwidth]{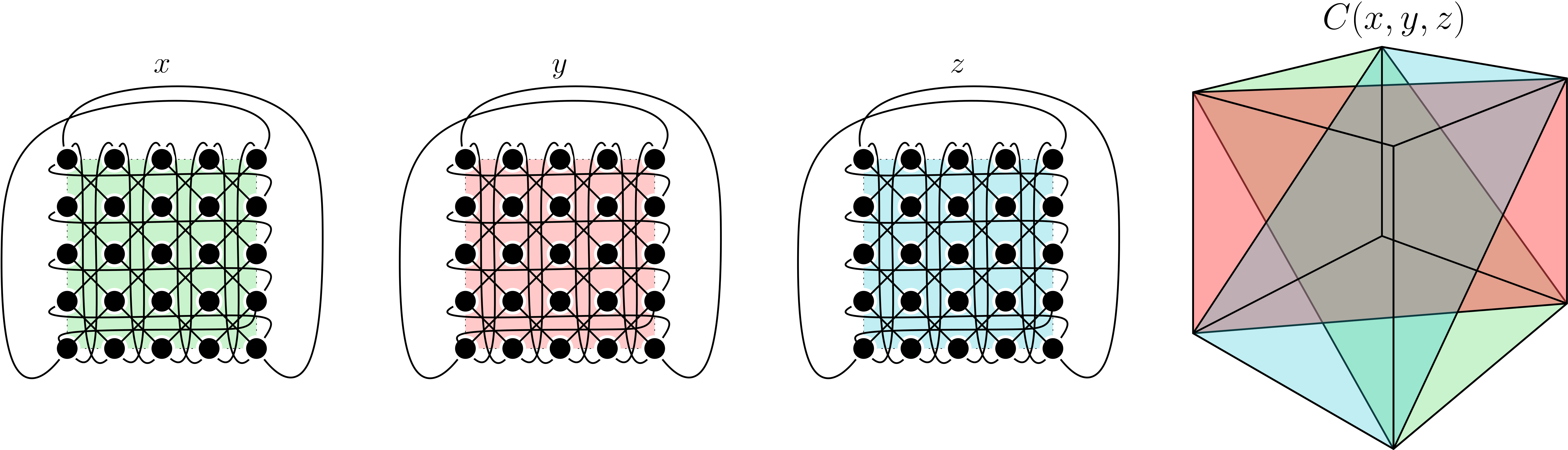}
    \caption{A reduction proving undecidability of $\qCSP(\rel C_5)$ by a reduction from the ternary Boolean 1-in-3 predicate. Each variable of an instance of 1-in-3-SAT is replaced by a cloud of $5^2$ vertices as shown, isomorphic to the graph $(\rel C_5)^2$.
    Each constraint $C(x,y,z)$ is then replaced by a cloud of $5^3$ vertices inducing the graph $(\rel C_5)^3$, and the variable-clouds are identified with subsets of the constraint-cloud as depicted in the image.}
    \label{fig:reduction-C5}
    \end{center}
\end{figure}

\subsection{Quantum polymorphisms for Siggers and odd cycles}
\label{subsec_overview_siggers_and_cycles}

The machinery developed so far allows transferring undecidability across classes of quantum CSPs, in a similar way as NP-hardness propagates across classical CSPs via gadget reductions associated with pp-definitions and pp-constructions. To showcase this strategy, we prove undecidability of the quantum CSPs parameterised by two simple types of languages.
In~\Cref{sec_odd_cycles}, we prove the following.

\vspace{-.25cm}
\begin{shaded}
\vspace{-.45cm}
\begin{restatable}{theorem}{thmquantumoddcyclesisundecidable}
\label{thm_quantum_odd_cycles_is_undecidable}
    Let $\rel C_m$ be the undirected cycle of size $m$. Then $\qCSP(\rel C_m)$ is undecidable for each odd $m\geq 3$.
\end{restatable}
\vspace{-.45cm}
\end{shaded}
\vspace{-.25cm}
This resolves an open question raised in~\cite{Zeman}, where it was mentioned that the known techniques for proving undecidability of quantum CSPs do not capture this class of graphs.

The \textit{Siggers digraph} is the $3$-vertex digraph obtained by making one edge of a directed triangle undirected. 
The prominence of this language in algebraic CSP theory lies in the fact that the corresponding \textit{loop condition} gives rise to \textit{Siggers-identities}~\cite[Theorem 2.2]{kearnes_optimal_2014}, which are known to exactly capture the borderline between tractability and hardness for classical CSPs~\cite{BKW17} (like other types of similar identities, see e.g.~\cite{kozik2015characterizations}). In~\Cref{sec_siggers}, we prove the following.

\vspace{-.25cm}
\begin{shaded}
\vspace{-.45cm}
\begin{restatable}{theorem}{thmsiggersundecidable}
\label{thm_siggers_undecidable}
    Let $\A$ be the Siggers digraph. Then $\qCSP(\rel A)$ is undecidable.
\end{restatable}
\vspace{-.45cm}
\end{shaded}
\vspace{-.25cm}

The proofs of these results, given in full detail in the body of the paper, are based on establishing that the quantum polymorphisms of such languages are non-contextual, via a combination of the bifurcation result (\Cref{thm_bifurcations}) and ad-hoc arguments exploiting the particular structures of the languages. Then, undecidability follows from~\Cref{cor_np_complete_plus_no_contextual_means_undecidable}.

\subsection{Dichotomy for Boolean languages}
\label{subsec_overview_boolean}

As a third application of our framework to concrete classes of languages, we consider the case of Boolean structures, and we obtain a complete classification of the Boolean structures admitting a commutativity gadget.

\vspace{-.25cm}
\begin{shaded}
\vspace{-.45cm}
\begin{restatable}{theorem}{thmbooleancommgadgetclassification}
\label{thm_boolean_comm_gadget_classification}
    Let $\A$ be a Boolean relational structure. Then the following are equivalent:
    \begin{itemize}
        \item[$(i)$] $\A$ admits a commutativity gadget;
        \item[$(ii)$] $\A$ has no majority polymorphism or is not two-variable falsifiable.
    \end{itemize}
\end{restatable}
\vspace{-.45cm}
\end{shaded}
\vspace{-.25cm}

We point out that the $(ii)\Rightarrow(i)$ implication was shown in~\cite{culf2024re}; here, we provide a different proof based on the analysis of quantum polymorphisms of Boolean structures. We show that the polymorphisms of 1-in-3-SAT, are non-contextual and then use quantum compatible pp-definitions to show this for other Boolean structures. We note that some of these arguments make use of
\textit{non-oracular} quantum polymorphisms, see~\Cref{subsec_overview_non_oracular}.

To obtain the converse implication $(i)\Rightarrow(ii)$, we identify a structure $\minimalClone$ whose
quantum polymorphism minion
$\qPol(\minimalClone)$
has the property of being contained in the quantum polymorphism minion of any Boolean structure that is invariant under majority and is two-variable falsifiable, see~\Cref{fig:O_b^4} and~\Cref{thm_DM_quantum_minimality}. Then, we prove that the minion $\qPol(\minimalClone)$ is contextual, as it contains contextual quantum polymorphisms of any arity $\geq 4$. On the other hand, we show that  $\minimalClone$  does not admit contextual quantum polymorphisms of any arity $n\leq 3$ (see~\Cref{thm_DM_contextuality}). In particular, this means that all quantum endomorphisms of $\minimalClone$ are non-contextual. Hence, the necessary condition for the existence of commutativity gadgets given in~\cite{Zeman} is provably insufficient to establish that Boolean TVF structures that are preserved by majority cannot admit a commutativity gadget, and the full power of our quantum polymorphism characterisation is needed.

\begin{figure}
    \centering
    \begin{tikzpicture}[scale=.6,
  every node/.style={circle, draw, minimum size=6mm, inner sep=1pt, font=\tiny},
  >=stealth, ->
]

% ===== NODES =====

% Bottom
\node (0000) at (0,0) {0000};

% Level 1
\node[fill=yellow!80] (1000) at (-3,2) {1000};
\node[fill=yellow!60]   (0100) at (-1,2) {0100};
\node[fill=yellow!40] (0010) at (1,2)  {0010};
\node[fill=yellow!20] (0001) at (3,2)  {0001};

% Level 2
\node[fill=red!20]   (1100) at (-5,4) {1100};
\node[fill=green!20] (1010) at (-3,4) {1010};
\node[fill=blue!20]  (1001) at (-1,4) {1001};
\node[fill=blue!20]  (0110) at (1,4)  {0110};
\node[fill=green!20] (0101) at (3,4)  {0101};
\node[fill=red!20]   (0011) at (5,4)  {0011};

% Level 3
\node[fill=yellow!20] (1110) at (-3,6) {1110};
\node[fill=yellow!40] (1101) at (-1,6) {1101};
\node[fill=yellow!60]   (1011) at (1,6)  {1011};
\node[fill=yellow!80] (0111) at (3,6)  {0111};

% Top
\node (1111) at (0,8) {1111};

% ===== RELATIONS =====
\foreach \y in {1000,0100,0010,0001} \draw (0000)--(\y);

\draw (1000)--(1100);
\draw (1000)--(1010);
\draw (1000)--(1001);
\draw (0100)--(1100);
\draw (0100)--(0110);
\draw (0100)--(0101);
\draw (0010)--(1010);
\draw (0010)--(0110);
\draw (0010)--(0011);
\draw (0001)--(1001);
\draw (0001)--(0101);
\draw (0001)--(0011);

\draw (1100)--(1110);
\draw (1100)--(1101);
\draw (1010)--(1110);
\draw (1010)--(1011);
\draw (1001)--(1101);
\draw (1001)--(1011);
\draw (0110)--(1110);
\draw (0110)--(0111);
\draw (0101)--(1101);
\draw (0101)--(0111);
\draw (0011)--(1011);
\draw (0011)--(0111);

\foreach \x in {1110,1101,1011,0111} \draw (\x)--(1111);
\end{tikzpicture}
    \caption{Illustration of the $4$-th power of the structure $\minimalClone$ yielding the minimal contextual minion. The proof of the $(i)\Rightarrow(ii)$ implication of~\Cref{thm_boolean_comm_gadget_classification} relies on showing that this structure admits a contextual quantum homomorphism to $\minimalClone$; see~\Cref{subsec_body_minimal_contextual_clone} for full details.}
    \label{fig:O_b^4}
\end{figure}

Combining~\Cref{thm_boolean_comm_gadget_classification} with results from~\cite{slofstra2019set,AtseriasKS19}, we obtain the following complexity dichotomy for quantum CSPs parameterised by Boolean languages. We denote by $\rel{XOR}$ the structure encoding Boolean linear systems of equations.

\vspace{-.25cm}
\begin{shaded}
\vspace{-.45cm}
\begin{restatable}{corollary}{corcomplexitydichotomyboolean}
\label{cor_complexity_dichotomy_boolean}
    Let $\A$ be a Boolean relational structure. If $\A$ pp-defines $\rel{XOR}$, $\qCSP(\A)$ is undecidable; otherwise, $\qCSP(\A)$ is solvable in polynomial time.
\end{restatable}
\vspace{-.45cm}
\end{shaded}
\vspace{-.25cm}
We refer the reader to~\Cref{remark_comparison_boolean_paddock_slofstra} for a comparison of this result with the Boolean dichotomies given in~\cite[Theorem~5.11 (a)]{paddock2025satisfiability} and in~\cite[\S6.1]{AtseriasKS19}.

\subsection{Non-oracular quantum polymorphisms}
\label{subsec_overview_non_oracular}

So far, we have only considered the oracular setting for quantum homomorphisms. 
At the level of two-player games, this corresponds to the so-called \textit{constraint-constraint} or \textit{constraint-variable} formulations of the CSP game (see, for example, as opposed to the \textit{variable-variable} variant where the cooperating players both receive as questions two variables of the instance). At the level of quantum homomorphisms, this corresponds to requiring that the projectors in a winning strategy associated with variables in the same constraint (or \textit{context}) should commute. 
We believe this is the natural setting for establishing a theory of reductions between quantum CSPs, for two reasons. First, categorical interpretations of quantum CSPs---primarily, the quantum-monad formulation of~\cite{abramsky2017quantum}---appear to naturally capture the oracular setting for CSPs of arbitrary arities.
Second, in the gapped version of quantum CSPs, the definition of the quantum value of a game in terms of the expected value of the trace of products of projectors in a randomly sampled constraint requires commutativity between such projectors in order for the trace to be a real number, if the arity is at least $3$; see, for example, the discussion in~\cite[\S3.1]{mousavi2025quantum}. On the other hand, in the case of binary CSPs, both notions are natural, and they are known to sometimes give rise to different behaviours.
For example, the non-oracular Unique Games are provably in P~\cite{kempe2010unique}, while it is plausible that the oracular version is undecidable~\cite{mousavi2025quantum}; see also~\cite{karamlou2025quantum}.
See also~\cite[\S4.3]{abramsky2017quantum}, where non-oracular quantum graph homomorphisms in the sense of~\cite{MancinskaR16} are shown to be captured via oracular homomorphisms for specific Boolean constraint systems.

We show that the machinery illustrated so far extends to a large extent to the non-oracular setting. In particular, we define the non-oracular version of quantum polymorphisms, and we show that it precisely captures the existence of commutativity gadgets for non-oracular quantum CSPs in exactly the same way as in the oracular case; see~\Cref{gadget-characterization_non_oracular}. Furthermore, we extend the notion of q-definition to the non-oracular case, and we prove that the relation/operation Galois connection of~\Cref{prop_galois_oracular} extends to the non-oracular setting.

\vspace{-.25cm}
\begin{shaded}
\vspace{-.45cm}
\begin{restatable}{theorem}{corgaloisnonoracular}
\label{cor_1308_2111}
Two structures $\rel A,\rel B$ satisfy $\qnoPol(\rel A)\subseteq\qnoPol(\rel B)$ if, and only if, $\rel A$ q-no-defines $\rel B$.
\end{restatable}
\vspace{-.45cm}
\end{shaded}
\vspace{-.25cm}
We refer to~\Cref{sec_nonoracular} for more details on the properties of quantum non-oracular polymorphisms, and the reductions that are captured by them.

\subsection{Outlook}
\label{subsec_overview_outlook}
Quantum polymorphisms provide a bridge between two types of methods: $(i)$ reductions from the classical algebraic CSP theory, and $(ii)$ the quantum-specific techniques used in the complexity analysis of non-local games---in particular, commutativity gadgets. It is, in our view, quite surprising that both $(i)$ and $(ii)$ can be captured within a single formalism, given the different contexts from which they originate. Establishing this link is a key conceptual contribution of the paper and plays a crucial role in our undecidability results for concrete classes of languages.
In more detail, quantum polymorphisms are ultimately encodings of specific quantum strategies; as such, they exhibit, in general, non-classical behaviours---specifically, non-commutativity of measurements across different contexts. This is precisely what enables them to capture (and characterise the existence of) commutativity gadgets. At the same time, quantum polymorphisms admit the same algebraic structure as their classical counterparts, namely that of minions, and can thus be related to other such structures on a purely algebraic ground.
It is this dual nature that makes quantum polymorphisms able to 
transfer undecidability across different problems over different languages---simulating the way NP-completeness is transferred across classical CSPs.

Among the open directions
arising from this paper, we mention the prospect of obtaining a stronger contextuality bifurcation theorem than~\Cref{thm_bifurcations}, resulting in more complex non-orthogonality patterns generated by contextual homomorphisms. This would allow applying our undecidability criterion~\Cref{cor_np_complete_plus_no_contextual_means_undecidable} to new classes of languages. Similarly, it is an interesting prospect to adapt and extend the techniques used in~\Cref{sec_boolean} for proving the existence of contextual quantum polymorphisms for non-Boolean languages, thus ruling out the existence of commutativity gadgets.
While in the current paper we exclusively deal with the exact version of quantum CSPs---i.e., the decision problem asking to decide whether a \textit{perfect} quantum strategy for the corresponding 2-player game exists---techniques based on the weighted-algebra formalism introduced by Mastel--Slofstra~\cite{mastel2024two} appear to be suited for a translation of our results to the setting of imperfect soundness (see also~\cite{Zeman}, where it is shown that Ji's commutativity gadgets are \textit{robust}, in the sense that they can be used in the imperfect-soundness version of the quantum CSP).
Finally, in the current work we only consider quantum strategies for non-local games, described by measurements of a finite-dimensional bipartite quantum state. An interesting direction for follow-up work is to translate our results to the case of quantum-approximable strategies (i.e., limits of quantum strategies) and commuting-operator strategies (i.e., possibly infinite-dimensional quantum strategies where the two players' measurements do not act on separate tensor factors, but are merely required to commute with each other); see for example~\cite{CleveM14,lupini2020perfect}.

\paragraph{\textit{Structure of the paper}}
The remaining part of the paper is organised as follows.
In~\Cref{sec_prelimns}, we formally define classical and quantum CSPs, and we give a list of some basic results about quantum homomorphisms and operations thereof that shall we useful throughout the paper.~\Cref{sec_quantum_algebraic_constructions} presents the algebraic side of our result: we show in~\Cref{subsec_body_algebraic_structure} that quantum polymorphisms are closed under minor operations and, thus, form an abstract minion, and in~\Cref{subsec_body_reductions_via_minion_homo} we prove that minion homomorphism between quantum polymorphism minions determine complexity reductions between the corresponding quantum CSPs. We then turn to the relational side of classical/quantum bridge in~\Cref{sec_quantum_relational_constructions}, by formally describing q-definitions and q-constructions in~\Cref{subsec_body_quantum_reductions}, as well as establishing a Galois connection with quantum polymorphism.
In~\Cref{subsec_body_commutativity_gadgets} and~\Cref{subsec_body_quantum_pol_comm_gadgets}, we show how the contextuality of quantum polymorphisms of a language fully captures commutativity gadgets, which translates into a fully polymorphic, concrete undecidability criterion for quantum CSPs (\Cref{cor_np_complete_plus_no_contextual_means_undecidable}). This provably extends the known undecidability criteria in the literature on non-local games, as discussed in~\Cref{subsec_comparison_TVF}. In~\Cref{sec_cliques}, we describe the quantum polymorphisms of cliques and obtain a simple proof of the undecidability of the corresponding CSPs, first established in~\cite{Ji,culf2024re,Zeman}.
Next, in~\Cref{sec_bifurcations}, we describe non-orthogonality patterns for systems of orthogonal projectors that we call bifurcations, and we show that they occur in any contextual quantum homomorphism (or quantum polymorphism). We then specialise our analysis to particular classes of quantum CSPs. In particular, by using the polymorphic framework, contextuality bifurcations, and ad-hoc arguments based on the specific types of constraints, we establish in~\Cref{sec_siggers} and~\Cref{sec_odd_cycles} the undecidability of the problems encoding quantum satisfiability of Siggers clauses and quantum homomorphisms to odd cycles, respectively. In~\Cref{sec_nonoracular}, we consider the case of non-oracular quantum CSPs, we define the non-oracular version of quantum polymorphisms, and we show that they capture non-oracular commutativity gadgets in a similar way as in the oracular setting. Finally,~\Cref{sec_boolean} address quantum CSPs parameterised by Boolean languages. By describing the quantum polymorphisms of such languages, we obtain a full classification of the existence of commutativity gadgets, as well as a complete $\mathsf{P}$ vs.\ undecidable complexity dichotomy for Boolean quantum CSPs.

\section{Preliminaries}
\label{sec_prelimns}
In this section, we give formal definitions of most of the notions used throughout the paper, and we list a few simple facts on quantum homomorphisms and their operations.

\subsection{Relational structures}
\label{subsec_prelimns_relational_structures}
We let $\N=\{1,2,\dots\}$ denote the set of positive integers.
For $\ell\in\N$, $[\ell]$ is the set $\{1,2,\dots,\ell\}$.
A \emph{signature} $\sigma$ is a finite set of relation symbols $R$, each with its \emph{arity} $\ar(R)\in\N$.
A \emph{relational structure} $\X$ with signature $\sigma$ (in short, a \textit{$\sigma$-structure} or simply a \textit{structure} when $\sigma$ is implicit) consists of a finite set $X$ and a relation $R^\X\subseteq X^{\ar(R)}$ for each symbol $R\in\sigma$.
All relational structures in this work shall be implicitly assumed to be finite.
A \textit{(classical) homomorphism} $f\colon\X\to\Y$ between two relational structures $\X$ and $\Y$ on a common signature $\sigma$ is a map from the domain $X$ of $\X$ to the domain $Y$ of $\Y$ that preserves all relations; i.e., $f(\bx)\in R^\Y$ for each symbol $R\in\sigma$ and each tuple $\bx\in R^\X$ (where $f(\bx)$ is the entrywise application of $f$ to the entries of $\bx$). We denote the existence of a classical homomorphism from $\X$ to $\Y$ by the notation $\X\to\Y$. Moreover, we let $\Hom(\rel X,\rel Y)$ be the set of all homomorphisms $f:\X\to\Y$.
Given a $\sigma$-structure $\X$, a positive integer $r\in\N$, and a set $S\subseteq X^r$, we
let $(X;S)$ be the structure with domain $X$ and a single relation $S$.
We let the \textit{Gaifman graph} of a relational structure $\X$ be  the loopless undirected graph $\Gaif(\X)$ on vertex set $V(\Gaif(\X))=X$ and edge set $E(\Gaif(\X))$ containing an edge $\{x,y\}$ if, and only if, $x$ and $y$ appear together in a tuple in some relation of $\X$. 

Given a $\sigma$-structure $\Y$, the \emph{constraint satisfaction problem} parameterised by $\Y$ (in short, $\CSP(\Y)$) is the following computational problem:\footnote{This is the \emph{decision} version of $\CSP(\Y)$. In the \emph{search} version, the problem is to find an explicit homomorphism $f:\X\to\Y$ for any input $\X$ such that $\X\to\Y$. The two versions of $\CSP$ are equivalent up to polynomial-time reductions~\cite{bulatov2005classifying}.} Given as input a $\sigma$-structure $\X$, output $\YES$ if $\X\to\Y$, and $\NO$ if $\X\not\to\Y$. 
For $\ell\in\N$, we denote by $\Y^\ell$ the $\ell$-th \emph{direct power} of $\Y$; i.e., $\Y^\ell$ is the $\sigma$-structure whose domain is $Y^\ell$ and whose relations are defined as follows: given $R\in\sigma$  and any $\ell\times \ar(R)$ matrix $M$ whose rows are tuples in $R^\Y$, the columns of $M$ form a tuple in $R^{\Y^\ell}$. A (classical) \emph{polymorphism} of $\Y$ (of arity $\ell$) is a homomorphism from $\Y^\ell$ to $\Y$. By $\Pol(\Y)$ (resp.\ $\Pol^{(\ell)}(\Y)$) we denote the set of all polymorphisms (resp.\ all polymorphisms of arity $\ell$) of $\Y$.
The \textit{core} of a $\sigma$-structure $\X$ is a minimal (under inclusion) induced substructure of $\X$ that is homomorphically equivalent to $\X$. It is easy to check that the core of a structure is unique up to isomorphism, and any endomorphism is an automorphism (see e.g.~\cite{HellNesetrilCore}).%

The \emph{primitive positive (pp) fragment} of first-order logic consists of the formulae that are built out of atomic formulae using conjunctions and existential quantifiers.
If $\rel Y$ is a relational structure and $\varphi(x_1,\dots,x_n)$ is a primitive positive formula in the signature of $\rel Y$ with free variables $x_1,\dots,x_n$, then the \emph{relation defined by} $\varphi$ is the set of $n$-tuples that satisfy the formula in $\rel Y$.
If $R\subseteq Y^n$ is a relation, we say that it is \emph{pp-definable} over $\rel Y$ if there exists a pp-formula defining it.
A classical fact about pp-formulae is the following duality with \emph{gadgets}:
every pp-formula can be understood as a relational structure $\rel X_\varphi$ with distinguished vertices $x_1,\dots,x_n$ such that for every structure $\rel Y$, $(y_1,\dots,y_n)$ is in the relation defined by $\varphi$ in $\rel Y$ if, and only if, there exists a homomorphism $\rel X_\varphi\to\rel Y$ that maps $x_i$ to $y_i$ for each $i\in[n]$.
Suppose $\X$ and $\Y$ are relational structures with possible distinct domains and signatures.
We say that $\Y$ has a \textit{pp-construction} in $\X$ (or that $\X$ \textit{pp-constructs} $\Y$) if $\Y$ is homomorphically equivalent to a structure
having domain $X^d$ for some $d\in\N$, and relations
$R_1,\dots,R_k$, where each $R_i\subseteq (X^d)^{r_i}$ is pp-definable over $\X$, when seen as a relation $\tilde{R_i}\subseteq X^{d r_i}$ by ``vectorising'' tuples of tuples.%

\ignore{
\subsection{Clones}
\label{subsec_prelimns_clones}
For a structure $\A$, the set
$\Pol(\A)$ of all polymorphisms of $\A$ is a \textit{function clone}---i.e., a set of finitary operations over $A$ that contains all projections and is closed under finitary composition.
In order to lift this to the quantum setting, we will make use of the notion of \textit{abstract clones}, first described in~\cite{bodirsky2015graph}, see also~\cite{juhrich2025abstract}.

\begin{definition}[\cite{bodirsky2015graph}]
\label{defn_abstract_clone}
    An \textit{(abstract) clone} $\minion C$ consists of the disjoint union $\bigcup_{n\in\N}\minion C^{(n)}$ of nonempty sets $\minion C^{(n)}$, equipped with two types of operations: 
    \begin{itemize}
        \item a $0$-ary operation (i.e., a constant) $\pi_i^k\in\minion C^{(k)}$ for each $1\leq i\leq k$, and
        \item a $(1+k)$-ary operation $\comp_\ell^k:\minion C^{(k)}\times(\minion C^{(\ell)})^k\to\minion C^{(\ell)}$ for each $k,\ell\in\N$,
    \end{itemize}
    such that the following three requirements are met:
\begin{align*}
    \begin{array}{lllll}
         (i)\quad\comp_k^k(x,\pi_1^k,\dots,\pi_k^k)=x \mbox{ for each }  k\in\N \mbox{ and each } x\in\minion C^{(k)};\\[5pt]
        (ii)\quad\comp_\ell^k(\pi_i^k,x_1,\dots,x_k)=x_i  \mbox{ for each }  k,\ell\in\N, \mbox{ each } x_1,\dots,x_k\in\minion C^{(\ell)}, \mbox{ and each } i\in[k];\\[5pt]
        (iii)\quad\comp_\ell^k(x,\comp_\ell^m(y_1,z_1,\dots,z_m),\dots,\comp_\ell^m(y_k,z_1,\dots,z_m))\\
        \quad\quad=\comp_\ell^m(\comp_m^k(x,y_1,\dots,y_k),z_1,\dots,z_m) \mbox{ for each }k,\ell,m\in\N, x\in\minion C^{(k)},\\ 
        \quad\quad y_1,\dots,y_k\in\minion C^{(m)}, z_1,\dots,z_m\in \minion C^{(\ell)}.
    \end{array}
\end{align*}%
\end{definition}
 It is not hard to see that any function clone is an abstract clone, by interpreting the $k$-ary projection on coordinate $i$ as the constant $\pi_i^k$, and letting the $\comp_\ell^k$ operations be the standard function composition in the function clone. 

 While there exists a natural notion of (abstract) clone homomorphism as an operation- and arity-preserving map between two clones (see~\cite{bodirsky2015graph}), the equivalence relation over the class of polymorphism clones of relational structures under clone homomorphic equivalence is known to be strictly finer than the equivalence relation coming from interreducibility via pp-constructions (see, for example,~\cite{BOP18}). On the other hand, the latter is captured by a relaxation of clone homomorphism known as \textit{minion homomorphism}. 

\begin{definition}
    Let $\minion C$ be a clone, and let $\tau:[n]\to[m]$ be a map, for $n,m\in\N$. The \textit{$\pi$-minor} of an element $x\in\minion C^{(n)}$ is the element $x_{/\pi}\in\minion C^{(m)}$ defined by
    \begin{equation*}
        x_{/\pi}
        =
        \comp^n_m(x,\pi^m_{\tau(1)},\dots,\pi^m_{\tau(n)}).
    \end{equation*}
\end{definition}

Note that, if $\minion C$ is a function clone, $x_{/\pi}$ is the function obtained, for a given argument $(y_1,\dots,y_m)$, by applying $x$ on the argument $(x_{\pi_1},\dots,x_{\pi_n})$.

\begin{definition}
    Let $\minion C$ and $\minion D$ be two clones. A \textit{minion homomorphism} is a map $\xi:\minion C\to\minion D$ such that
    \begin{itemize}
        \item $\xi$ preserves arities, i.e., $x\in\minion C^{(n)}\Rightarrow\xi(x)\in\minion D^{(n)}$;
        \item $\xi$ preserves minors, i.e., $\xi(x_{/\pi})=\xi(x)_{/\pi}$ for each $\pi\colon[n]\to [m]$ and each $x\in\minion C^{(n)}$.
    \end{itemize}
    We denote the existence of a minion homomorphism from $\minion C$ to $\minion D$ by the notation $\minion C\to\minion D$.
\end{definition}

Minion homomorphisms are also known can alternatively be defined as natural transformations between functors from the category of nonempty finite sets to the category of nonempty sets (an observation by Barto, see e.g.~\cite{hadek2025categorical} for a categorical treatment of the algebraic approach to constraint satisfaction). Such functors are known as \textit{(abstract) minions}, justifying the name of \emph{minion} homomorphism.
Recently, minions and minion homomorphisms have been used in various contexts in the theory of relaxations of CSPs and \textit{promise} CSPs, see for example~\cite{BG18,ciardo_quantum_minion,Ciardo22:soda,BrakensiekGS23,cz23soda:minions,bgwz20,BBKO21,brakensiek2025richness,larrauri2025ineffectiveness}.
We will need the following result.
\begin{theorem}[\cite{BOP18}]
\label{minion_homo_q_construction}
    Let $\A$ and $\B$ be relational structures. Then the following are equivalent:
    \begin{itemize}
        \item $\Pol(\A)\to\Pol(\B)$;
        \item $\A$ pp-constructs $\B$.
    \end{itemize}
\end{theorem}

}

\subsection{Quantum CSPs}
\label{subsec_prelimns_quantum_CSPs}
Let $H$ be a non-trivial finite-dimensional Hilbert space. We say that a linear endomorphism $p$ of $H$ is an \emph{orthogonal projector} (or simply a \textit{projector}) if it is self-adjoint and idempotent (i.e., if $p^2=p=p^*$).  We denote by $\id_H$ (or simply by $\id$ when the Hilbert space is clear) the identity projector over $H$. 
A \emph{projective measurement} (PVM) over a Hilbert space $H$ with outcome set $B$ is a family $(Q_{b})_{b\in B}$ of projectors on $H$ with the property that $\sum_{b\in B}Q_b = \id$.
A \emph{quantum function} from $A$ to $B$ is a family of PVMs $(Q_{a,b})_{b\in B}$, one PVM for each $a\in A$. We shall denote this by the notation $Q\colon A\qto B$ or $Q\colon A\qto[H] B$ if we want to emphasise the Hilbert space.
We say that two PVMs $(Q_{b})_{b\in B}$ and $(Q'_{b'})_{b'\in B'}$ over the same Hilbert space \textit{commute} if $[Q_b,Q'_{b'}]=0$ for each $b\in B$ and each $b'\in B'$.  
Given a quantum function $Q\colon A\qto B$ for two sets $A,B$, we say that $Q$ is \textit{non-contextual} if $[Q_{a,b},Q_{a',b'}]=0$ for each $a,a'\in A$ and each $b,b'\in B$. Otherwise, we say that $Q$ is \textit{contextual}. Note that each classical function $f\colon A\to B$ can be seen as a non-contextual quantum function by setting $Q_{a,b}=\id$ if $f(a)=b$ and $Q_{a,b}=0$ otherwise.

Let $\rel A,\rel B$ be relational structures with the same signature $\sigma$.
A \emph{quantum homomorphism} $Q\colon \rel A\qto\rel B$, first defined at this level of generality in~\cite{abramsky2017quantum}, is given by a non-trivial finite-dimensional Hilbert space $H$ and a quantum function $Q\colon A\qto B$ 
such that the following conditions hold:
\begin{itemize}
    \item[($\mathbf{QH_1}$)\labeltext{$\mathbf{QH_1}$}{quantum_homo_1}]$\displaystyle\prod_{i\in [\ar(R)]}Q_{a_i,b_i}=0\quad\forall R\in\sigma,\ba\in R^\A,\bb\in B^{\ar(R)}\setminus R^\B$;
    \item[($\mathbf{QH_2}$)\labeltext{$\mathbf{QH_2}$}{quantum_homo_2}]$\displaystyle[Q_{a,b},Q_{a',b'}]=0\quad\forall (a,a')\in E(\Gaif(\A)),(b,b')\in B^2$.
\end{itemize}

It was proved in~\cite{abramsky2017quantum} that a quantum homomorphism $\A\qto\B$ exists if and only if the \textit{constraint-variable game} associated with $\A$ and $\B$ admits a winning finite-dimensional tensor-product strategy. In the literature on non-local games, this is often expressed in terms of the existence of
a perfect correlation in the set $C_q$ of quantum correlation matrices, see for example~\cite{slofstra2019set,harris2024universality,ji2021mip,levene2026unique}. Equivalently, the notion of a quantum homomorphism can be phrased as a representation of a certain finitely generated $*$-algebra~\cite{CleveM14, paulsen2016estimating,kim2018synchronous,mastel2024two,paddock2025satisfiability}.
Condition~\eqref{quantum_homo_2} is sometimes phrased as $Q$ expressing perfect \textit{oracular} correlations; in~\Cref{sec_nonoracular} we shall extend our results to the non-oracular case.

\begin{definition}[Quantum CSP]
Fix a $\sigma$-structure $\A$.
    $\qCSP(\rel A)$ is the following computational problem: given as input a $\sigma$-structure $\rel X$, determine whether there exists a quantum homomorphism $\rel X\qto\rel A$.
\end{definition}

We will use the next observation on the length of walks in $\Gaif(\A)$ and $\Gaif(\B)$ for two structures $\A$ and $\B$ that are quantum homomorphic.
This result is a slight modification of~\cite[Lemma~4.12]{MancinskaR16} (which concerns the case where $\A$ and $\B$ are undirected graphs), and its proof closely follows that in~\cite{MancinskaR16}. 

\begin{definition}
\label{defn_essentially_binary}
    Let $\A$ be a relational structure with signature $\sigma$. We say that $\A$ is \textit{essentially binary} if, for every $R\in\sigma$ and every $\ba\in R^\A$, $\ba$ has at most two distinct entries.
\end{definition}

\begin{proposition}
\label{prop_basic_walks_essentially_binary}
    Let $Q\colon \rel A\qto\rel B$ be a quantum homomorphism between two structures $\rel A$ and $\rel B$ such that $\A$ is essentially binary. If $a,a'\in A$ and $b,b'\in B$ are such that there exists an $\ell$-walk from $a$ to $a'$ in $\Gaif(\A)$ but there is no $\ell$-walk from $b$ to $b'$ in $\Gaif(\B)$, then $Q_{a,b}Q_{a',b'}=0$.
\end{proposition}
\begin{proof}
We use induction over $\ell$. For $\ell=0$, the result is clear. 
    Suppose $\ell=1$; i.e., $\{a,a'\}\in E(\Gaif(\A))$ but $\{b,b'\}\not\in E(\Gaif(\B))$. Assume, by contradiction, that $Q_{a,b}Q_{a',b'}\neq 0$.  Since $\A$ is essentially binary, there exist some symbol $R\in\sigma$ and some tuple $\bz=(z_1,\dots,z_r)\in R^\A$ such that $\{z_1,\dots,z_r\}=\{a,a'\}$. Consider now the tuple $\bw=(w_1,\dots,w_r)\in B^r$ such that, for each $i\in[r]$, $w_i=b$ if $z_i=a$, and $w_i=b'$ if $z_i=a'$. 
    Using~\eqref{quantum_homo_2} and the idempotency of projectors, we find that
    \begin{align*}
        \prod_{i\in[r]}Q_{z_i,w_i}
        =
        Q_{a,b}Q_{a',b'}\neq 0.
    \end{align*}
    We then deduce from~\eqref{quantum_homo_1} that $\bw\in R^\B$, so $\{b,b'\}\in E(\Gaif(\B))$; a contradiction.
    Take now $\ell\geq 2$, and suppose the result is true for all smaller values of $\ell$. Let   $a_0,a_1,\dots,a_\ell$ be an $\ell$-walk in $\Gaif(\A)$ such that $a_0=a$ and $a_\ell=a'$. We have
    \begin{align*}
        Q_{a,b}Q_{a',b'}
        =
        \sum_{z\in B}Q_{a,b}Q_{a_{\ell-1},z}Q_{a',b'}.
    \end{align*}
    Applying the inductive hypothesis to the $(\ell-1)$-walk $a_0,\dots,a_{\ell-1}$ and the $1$-walk $a_{\ell-1},a_\ell$ in $\Gaif(\A)$, we deduce that, in order for $z$ to yield a nonzero contribution in the summation above, $z$ must be connected to $b$ via an $(\ell-1)$-walk in $\Gaif(\B)$, and to $b'$ via a $1$-walk in $\Gaif(\B)$. Concatenating the two walks yields an $\ell$-walk in $\Gaif(\B)$ connecting $b$ to $b'$. Hence, if no such walk exists, the product $Q_{a,b}Q_{a',b'}$ must be zero.
\end{proof}

\subsection{Operations on quantum homomorphisms}
\label{subsec_body_quantum_operations}
In this subsection, we describe certain constructions taking as input two quantum homomorphisms and returning another quantum homomorphism as the output.

Given a linear operator $Q$ on $H$ and a linear operator $Q'$ on $H'$, we let $Q\oplus Q'$ be the linear operator on $H\oplus H'$ defined by $(Q\oplus Q')(v+v')=Q(v)+Q(v')$.
We extend this definition to quantum functions $Q\colon X\to_H A$, $Q'\colon X\to_{H'} A$
by letting $Q\oplus Q'\colon X\to_{H\oplus H'} A$ be defined by $(Q\oplus Q')_{x,a}:= Q_{x,a}\oplus Q'_{x,a}$.
Note that this is well defined as the set $\{Q_{x,a}\oplus Q'_{x,a}\}_{a\in A}$ is a PVM over the Hilbert space $H\oplus H'$. Indeed, the fact that its members are projectors onto $H\oplus H'$ immediately follows from the definition. Moreover, it is easy to verify that the projector $\sum_{a\in A}(Q_{x,a}\oplus Q'_{x,a})$ acts like the identity on both $H$ and $H'$---and, hence, on the whole $H\oplus H'$ by linearity.
We now show that this operation preserves quantum homomorphisms.

\begin{lemma}\label{sum-homomorphisms}
    Let $Q\colon\rel X\qto[H]\rel A$ and $Q'\colon\rel X\qto[H']\rel A$ be quantum homomorphisms.
    Then $Q\oplus Q'$ is a quantum homomorphism $\rel X\qto[H\oplus H']\rel A$.
\end{lemma}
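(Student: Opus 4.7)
The plan is to verify the two defining conditions \eqref{quantum_homo_1} and \eqref{quantum_homo_2} for $Q\oplus Q'$ with respect to the signature and Gaifman graph of $\rel X$ (and target $\rel A$). The paragraph preceding the lemma already establishes that $Q\oplus Q'\colon X\qto[H\oplus H']A$ is a well-defined quantum function, so only the two homomorphism conditions remain to be checked.

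The entire argument rests on two block-diagonal identities. For operators $P_1,\dots,P_k$ on $H$ and $P'_1,\dots,P'_k$ on $H'$, one has
\[
\prod_{i}(P_i\oplus P'_i) \;=\; \Bigl(\prod_i P_i\Bigr)\oplus\Bigl(\prod_i P'_i\Bigr),
\qquad
[P\oplus P',\,R\oplus R'] \;=\; [P,R]\oplus[P',R'],
\]
both of which are immediate from the definition of the direct sum of operators. Applying the first identity to the projectors defining $Q\oplus Q'$: for every $R\in\sigma$, every $\bx\in R^{\rel X}$, and every $\ba\in A^{\ar(R)}\setminus R^{\rel A}$, the product $\prod_i (Q\oplus Q')_{x_i,a_i}$ splits as $\bigl(\prod_i Q_{x_i,a_i}\bigr)\oplus\bigl(\prod_i Q'_{x_i,a_i}\bigr)$, and both summands vanish because $Q$ and $Q'$ individually satisfy \eqref{quantum_homo_1}. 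An analogous application of the commutator identity to pairs $(x,x')\in E(\Gaif(\rel X))$ and arbitrary $(a,a')\in A^2$ reduces \eqref{quantum_homo_2} for $Q\oplus Q'$ to the same property for $Q$ and $Q'$.

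The main obstacle is essentially nonexistent: this is a routine verification whose correctness is entirely encoded in the block-diagonal algebra of the direct sum. The only point worth flagging is that since $Q$ and $Q'$ share the same source structure $\rel X$ and the same target structure $\rel A$, the conditions to verify for $Q\oplus Q'$ coincide term-by-term with those already known for $Q$ and $Q'$, so no reindexing or compatibility between different Gaifman graphs is required.
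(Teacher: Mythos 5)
Your proposal is correct and follows exactly the paper's own argument: the product of direct sums splits blockwise to give~\eqref{quantum_homo_1}, and the commutator of direct sums splits blockwise to give~\eqref{quantum_homo_2}, with the quantum-function property already established in the discussion preceding the lemma. Nothing further is needed.
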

\begin{proof}
    Let $(x_1,\dots,x_r)\in R^{\rel X}$ and $(a_1,\dots,a_r)\not\in R^{\rel A}$.
    Then $R_{x_1,a_1}\ldots R_{x_r,a_r} = Q_{x_1,a_1}\ldots Q_{x_r,a_r} \oplus Q'_{x_1,a_1}\ldots Q'_{x_r,a_r} = 0$, so~\eqref{quantum_homo_1} holds.

    For~\eqref{quantum_homo_2}, take a symbol $R\in\sigma$ of arity $r$ and
    let $(x_1,\dots,x_r)\in R^{\rel X}$, $i,j\in[r]$, and $a,a'\in A$.
    Then $[R_{x_i,a},R_{x_j,a'}] = [Q_{x_i,a},Q_{x_j,a'}] \oplus [Q'_{x_i,a},Q'_{x_j,a'}] = 0$, so~\eqref{quantum_homo_2} holds.
\end{proof}

Observe that classical homomorphisms $h\colon A\to B$ can be identified with quantum homomorphisms $h\colon A\to_H B$ over a Hilbert space $H$ of dimension $1$. We will implicitly use this identification in the following.
Now, if $Q\colon \rel A\qto\rel B$ is such that $[Q_{a,b},Q_{a',b'}]=0$ for all $a,a'\in A$ and $b,b'\in B$, then $Q = h_1\oplus\dots\oplus h_d$ for some $d\geq 1$ and $h_1,\dots,h_d\colon \rel A\to\rel B$ classical homomorphisms (where $d$ can be taken to be the dimension of $Q$).
This is a simple consequence of the spectral theorem.

\begin{definition}\label{quantum-closure}
    Let $\rel A,\rel B$ be two $\sigma$-structures.
    The \emph{quantum closure} of $\Hom(\rel A,\rel B)$ is the set $\qc{\Hom(\rel A,\rel B)}$ of all quantum homomorphisms of the form $h_1\oplus\dots\oplus h_d$, where $h_1,\dots,h_d\colon\rel A\to\rel B$.
\end{definition}

Explicitly, the quantum closure of $\Hom(\rel A,\rel B)$ consists of the following quantum homomorphisms.
Let $H$ be a (finite-dimensional) Hilbert space, and let $e_1,\dots,e_d$ be an orthonormal basis thereof.
We denote by $\langle\underline{\hspace{0.2cm}},\underline{\hspace{0.2cm}}\rangle$ the inner product of $H$.
Let $h_1,\dots,h_d\colon\rel A\to\rel B$.
Then one can define a quantum homomorphism by
\begin{align}
\label{eqn_1240_0710}
    Q_{a,b}(x) = \sum_{i=1}^d \mathbf 1_{[h_i(a)=b]}\langle x,e_i\rangle e_i
\end{align}
for all $x\in H$, where $\mathbf 1_{[h_i(a)=b]}$ is $1$ if $h_i(a)=b$ and $0$ otherwise.\\

Suppose now that $Q$ is a linear operator on $H$ %
and $Q'$ is a linear operator on $H'$, 
and let $Q\otimes Q'$ be the linear operator on $H\otimes H'$ defined by setting $(Q\otimes Q')(v\otimes v') = Q(v)\otimes Q'(v')$ for all $v\in H$ and $v'\in H'$ and extending by linearity.
As for the direct sum, we extend this definition to quantum functions $Q\colon X\qto[H] A$ and $Q'\colon X\qto[H'] B$ by $(Q\otimes Q')_{x,(a,b)}:=Q_{x,a}\otimes Q'_{x,b}$ for all $a\in A$ and $b\in B$.
This is a quantum function $Q\otimes Q'\colon X\qto[H\otimes H'] A\times B$.
Like for the direct sum, we now show that the tensor product preserves quantum homomorphisms.

\begin{lemma}\label{tensor-homomorphisms}
    Let $Q\colon\rel X\qto[H]\rel A$ and $Q'\colon\rel X\qto[H']\rel B$ be quantum homomorphisms.
    Then $Q\otimes Q'$ is a quantum homomorphism $\rel X\qto[H\otimes H']\rel A\times\rel B$.
\end{lemma}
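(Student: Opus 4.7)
The plan is to show that $Q\otimes Q'$ satisfies the two defining conditions \eqref{quantum_homo_1} and \eqref{quantum_homo_2} of a quantum homomorphism into the product structure $\rel A\times\rel B$. The fact that $Q\otimes Q'$ is a quantum function on $H\otimes H'$ was already recorded in the definition of the tensor product of quantum functions; what remains is purely relational. The argument will rest on a single algebraic fact about tensor products of operators, namely the multiplicativity identity $(S_1\otimes T_1)(S_2\otimes T_2)=(S_1S_2)\otimes(T_1T_2)$, together with the hypothesis that $Q$ and $Q'$ are themselves quantum homomorphisms on the same source $\rel X$.

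For condition \eqref{quantum_homo_1}, I would fix a symbol $R\in\sigma$ of arity $r$, a tuple $(x_1,\dots,x_r)\in R^{\rel X}$, and a tuple $((a_1,b_1),\dots,(a_r,b_r))\notin R^{\rel A\times\rel B}$. By the standard definition of the product structure, this means that either $(a_1,\dots,a_r)\notin R^{\rel A}$ or $(b_1,\dots,b_r)\notin R^{\rel B}$. Applying the multiplicativity identity iteratively,
\begin{align*}
\prod_{i=1}^r (Q\otimes Q')_{x_i,(a_i,b_i)} = \Bigl(\prod_{i=1}^r Q_{x_i,a_i}\Bigr)\otimes \Bigl(\prod_{i=1}^r Q'_{x_i,b_i}\Bigr),
\end{align*}
and in either of the two cases one of the two factors vanishes by \eqref{quantum_homo_1} applied to $Q$ or $Q'$.

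For condition \eqref{quantum_homo_2}, fix an edge $(x,x')\in E(\Gaif(\rel X))$ and elements $(a,b),(a',b')\in A\times B$. By the same multiplicativity identity,
\begin{align*}
(Q\otimes Q')_{x,(a,b)}\,(Q\otimes Q')_{x',(a',b')} = (Q_{x,a}Q_{x',a'})\otimes(Q'_{x,b}Q'_{x',b'}),
\end{align*}
and the analogous expression for the reverse product. Since $Q$ and $Q'$ both satisfy \eqref{quantum_homo_2}, we have $[Q_{x,a},Q_{x',a'}]=0$ and $[Q'_{x,b},Q'_{x',b'}]=0$, so the two products agree and the commutator vanishes.

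I do not foresee any serious obstacle: both verifications are essentially one line once the multiplicativity of tensor products of operators is invoked, and the condition on $\rel A\times\rel B$ is engineered precisely to make the argument for \eqref{quantum_homo_1} go through by case distinction. The only small care needed is to make sure $\rel A\times\rel B$ is understood as the direct product of relational structures (same signature, coordinate-wise interpretation), which is implicit in the statement.
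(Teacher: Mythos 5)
Your proof is correct and follows essentially the same route as the paper's: both verifications reduce to the multiplicativity identity $(S_1\otimes T_1)(S_2\otimes T_2)=(S_1S_2)\otimes(T_1T_2)$, with the case distinction on which coordinate fails to lie in the corresponding relation handling~\eqref{quantum_homo_1} and the componentwise commutators handling~\eqref{quantum_homo_2}. The only difference is that the paper carries out the (easy) verification that $Q\otimes Q'$ is a quantum function inside the proof of the lemma, whereas you defer to the assertion made alongside the definition; since that verification is a one-line computation using the same identity, this is harmless.
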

\begin{proof}
    Let $R=Q\otimes Q'$.
    Let $x\in X, a\in A, b\in B$.
    Then
    \begin{align*}
        R_{x,(a,b)}(R_{x,(a,b)}(v\otimes v')) &= R_{x,(a,b)}(Q_{x,a}(v)\otimes Q'_{x,b}(v'))\\
        &= Q_{x,a}(Q_{x,a}(v))\otimes Q'_{x,b}(Q'_{x,b}(v'))\\
        &= Q_{x,a}(v)\otimes Q'_{x,b}(v')\\
        &= R_{x,(a,b)}(v\otimes v'),
    \end{align*}
    so each $R_{x,(a,b)}$ is a projector onto $H\otimes H'$. 
    Moreover, if $x\in X$, then
    \begin{align*}
        \sum_{a,b} R_{x,(a,b)} &= \sum_{a,b} Q_{x,a}\otimes Q_{x',b}
        = \sum_a Q_{x,a}\otimes\sum_b Q_{x',b}
        = \id_H\otimes\id_{H'}= \id_{H\otimes H'},
    \end{align*}
    so $R$ is a quantum function.
    The verification that~\eqref{quantum_homo_1} holds is similar.

    For~\eqref{quantum_homo_2}, let $x,y$ appear together in a tuple in a relation of $\rel X$, and let $a,a'\in A, b,b'\in B$.
    Then
    \begin{align*}
        R_{x,(a,b)}R_{y,(a',b')} &= (Q_{x,a}\otimes Q'_{x,b})(Q_{y,a'}\otimes Q'_{y,b'})\\
        &= (Q_{x,a}Q_{y,a'}) \otimes(Q'_{x,b}Q'_{y,b'})\\
        &= (Q_{y,a'}Q_{x,a})\otimes (Q'_{y,b'}Q'_{x,b})\\
        &= R_{y,(a',b')}R_{x,(a,b)},
    \end{align*}
    so~\eqref{quantum_homo_2} holds.
\end{proof}

We shall also make use of the following \textit{composition} operation for quantum homomorphisms.
\begin{definition}\label{composition}
Let $Q\colon A\to_H B$ and $R\colon B\to_{H'} C$ be quantum functions.
We define \[R\bullet Q\colon A\to_{H\otimes H'} C\] by $(R\bullet Q)_{a,c}=\sum_{b\in B} R_{b,c}\otimes Q_{a,b}$ for each $a\in A$, $c\in C$.
\end{definition}

The fact that the composition of quantum homomorphisms is a quantum homomorphism was proved in~\cite{abramsky2017quantum}, and it corresponds to the fact that this operation can be expressed as the composition of Kleisli arrows in the Kleisli category for the quantum monad. (See also~\cite{karamlou2025quantum} for a recent extension of the Kleisli framework to different quantum relaxations of classical morphisms.) 
For the convenience of the reader, we give here a self-contained proof following~\cite{abramsky2017quantum}.
\begin{lemma}[\cite{abramsky2017quantum}]\label{composition-homomorphisms}
    The composition of quantum homomorphisms $Q\colon \rel A\to_H\rel B$ and $R\colon\rel B\to_{H'}\rel C$ is a quantum homomorphism $\rel A\to_{H\otimes H'}\rel C$.
\end{lemma}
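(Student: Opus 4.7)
The plan is to verify that $R\bullet Q$ meets all requirements in the definition of a quantum homomorphism: being a bona fide quantum function (each $(R\bullet Q)_{a,c}$ is an orthogonal projector and $\sum_c(R\bullet Q)_{a,c}=\id$), satisfying~\eqref{quantum_homo_1}, and satisfying~\eqref{quantum_homo_2}.

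First I would establish the quantum-function structure. The key ingredient is that the projectors inside a PVM are pairwise orthogonal, so $Q_{a,b}Q_{a,b'}=\delta_{b,b'}Q_{a,b}$. Squaring $(R\bullet Q)_{a,c}=\sum_{b}R_{b,c}\otimes Q_{a,b}$ therefore annihilates the $b\neq b'$ cross terms in the second tensor factor and leaves $\sum_b R_{b,c}\otimes Q_{a,b}=(R\bullet Q)_{a,c}$, yielding idempotency; self-adjointness is immediate from $(R_{b,c}\otimes Q_{a,b})^{*}=R_{b,c}\otimes Q_{a,b}$. Completeness follows by exchanging the order of summation: $\sum_c(R\bullet Q)_{a,c}=\sum_b\bigl(\sum_c R_{b,c}\bigr)\otimes Q_{a,b}=\sum_b\id\otimes Q_{a,b}=\id$.

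For~\eqref{quantum_homo_1}, given a relation symbol $S\in\sigma$, $\ba\in S^{\rel A}$, and $\bc\in C^{\ar(S)}\setminus S^{\rel C}$, I would expand $\prod_i(R\bullet Q)_{a_i,c_i}$ via multiplicativity of the tensor product as $\sum_{\bb\in B^{\ar(S)}}\bigl(\prod_i R_{b_i,c_i}\bigr)\otimes\bigl(\prod_i Q_{a_i,b_i}\bigr)$. A dichotomy kills every summand: if $\bb\in S^{\rel B}$, the left factor vanishes by~\eqref{quantum_homo_1} applied to $R$ (since $\bc\notin S^{\rel C}$); otherwise the right factor vanishes by~\eqref{quantum_homo_1} applied to $Q$ (since $\ba\in S^{\rel A}$).

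The step I expect to be the main obstacle is~\eqref{quantum_homo_2}, which does not factor transparently through the tensor product. The heart of the argument is the auxiliary claim that if $(a,a')\in E(\Gaif(\rel A))$ and $Q_{a,b_1}Q_{a',b_2}\neq 0$, then $(b_1,b_2)\in E(\Gaif(\rel B))$. To prove this claim, I would fix a tuple $\ba\in S^{\rel A}$ (for some $S\in\sigma$) witnessing the edge $(a,a')$, with $a,a'$ at positions $i,j$, and insert $\id=\sum_{b_k}Q_{a_k,b_k}$ at each $k\neq i,j$ between $Q_{a,b_1}$ and $Q_{a',b_2}$. Since~\eqref{quantum_homo_2} for $Q$ ensures that factors $Q_{a_k,\cdot}$ with indices inside the same tuple pairwise commute, the expression rearranges to $\sum_{\bb\colon b_i=b_1,\,b_j=b_2}\prod_k Q_{a_k,b_k}$; summands with $\bb\notin S^{\rel B}$ vanish by~\eqref{quantum_homo_1} for $Q$, so nonvanishing forces the existence of a tuple in $S^{\rel B}$ containing $b_1,b_2$ at the prescribed positions, witnessing $(b_1,b_2)\in E(\Gaif(\rel B))$. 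Granting this, expanding $[(R\bullet Q)_{a,c},(R\bullet Q)_{a',c'}]$ and invoking~\eqref{quantum_homo_2} for $Q$ to swap the $Q$-factors reduces the commutator to $\sum_{b_1,b_2}[R_{b_1,c},R_{b_2,c'}]\otimes Q_{a,b_1}Q_{a',b_2}$; every surviving term has $(b_1,b_2)\in E(\Gaif(\rel B))$, so~\eqref{quantum_homo_2} for $R$ kills the bracket.
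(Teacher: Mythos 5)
Your proposal is correct and follows the same overall route as the paper's proof: the same verification that $R\bullet Q$ is a quantum function via orthogonality of the $Q_{a,b}$ within a PVM, the same dichotomy over $\bb\in B^{\ar(S)}$ for~\eqref{quantum_homo_1}, and the same term-by-term swap of the two tensor factors for~\eqref{quantum_homo_2}. The one place where you genuinely add something is the auxiliary claim that $Q_{a,b_1}Q_{a',b_2}\neq 0$ forces $b_1,b_2$ to appear together in a tuple of a relation of $\rel B$ (or $b_1=b_2$): the paper's proof performs the exchange $\sum_{b,b'}(R_{b,c}R_{b',c'})\otimes(Q_{a,b}Q_{a',b'})=\sum_{b,b'}(R_{b',c'}R_{b,c})\otimes(Q_{a',b'}Q_{a,b})$ citing only~\eqref{quantum_homo_2}, but for pairs $(b,b')$ outside $E(\Gaif(\rel B))$ the $R$-factors need not commute, and the step is only valid because the corresponding $Q$-factor vanishes — precisely your claim, proved correctly by inserting resolutions of identity along the witnessing tuple and invoking~\eqref{quantum_homo_1}. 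So your write-up is, if anything, more complete than the paper's at the crucial step; the only cosmetic caveat is that for $b_1=b_2$ the conclusion should be read as commutativity of $R_{b_1,c}$ and $R_{b_1,c'}$ within a single PVM rather than membership in the (loopless) Gaifman graph.
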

\begin{proof}
	We first verify that $S=R\bullet Q$ is a quantum function.
	For $a\in A, c\in C$, we get
	\begin{align*}
	S_{a,c}S_{a,c} &= \sum_{b,b'} (R_{b,c}\otimes Q_{a,b})(R_{b',c}\otimes Q_{a,b'})\\
	&= \sum_{b,b'} (R_{b,c}R_{b',c})\otimes (Q_{a,b}Q_{a,b'})\\
	&= \sum_b R_{b,c}\otimes Q_{a,b} &\text{(since $Q_{a,b}Q_{a,b'}=0$ for $b\neq b'$)}\\
	&= S_{a,c}.
	\end{align*}
    Similarly, 
    \begin{align*}
    S_{a,c}^*=(\sum_b R_{b,c}\otimes Q_{a,b})^*=\sum_b R_{b,c}^*\otimes Q_{a,b}^*=\sum_b R_{b,c}\otimes Q_{a,b}=S_{a,c},
    \end{align*}
    so $S_{a,c}$ is a projector.
	We also have
	\[
	\sum_c S_{a,c} = \sum_{b,c} R_{b,c}\otimes Q_{a,b} = \sum_b \id\otimes Q_{a,b} = \id_H\otimes\id_{H'}=\id_{H\otimes H'}.
	\]
    This means that $S$ is indeed a quantum function. Let us prove it is in fact a quantum homomorphism.
	
	If $(a_1,\dots,a_r)\in R^{\rel A}$ and $(c_1,\dots,c_r)\not\in R^{\rel C}$ for some symbol $R\in\sigma$ of arity, say, $r$, then
	\begin{align*}
		S_{a_1,c_1}\dots S_{a_r,c_r} &= \sum_{b_1,\dots,b_r\in B} (R_{b_1,c_1}\otimes Q_{a_1,c_1})\dots (R_{b_r,c_r}\otimes Q_{a_r,b_r})\\
		&= \sum_{b_1,\dots,b_r\in B} (R_{b_1,c_1}\dots R_{b_r,c_r})\otimes (Q_{a_1,b_1}\dots Q_{a_r,b_r}).
	\end{align*}
	Note that for every $(b_1,\dots,b_r)\in B^r$, either $(b_1,\dots,b_r)\in R^{\rel B}$---in which case $R_{b_1,c_1}\dots R_{b_r,c_r}=0$ since $R$ satisfies~\eqref{quantum_homo_1}---or $(b_1,\dots,b_r)\not\in R^{\rel B}$---and then $Q_{a_1,b_1}\dots Q_{a_r,b_r}=0$ since $Q$ satisfies~\eqref{quantum_homo_1}.
	Thus, the sum above evaluates to $0$ and~\eqref{quantum_homo_1} holds.

    If $a,a'$ appear together in a tuple in a relation of $\rel A$ and $c,c'\in C$,
	then
	\begin{align*}
		S_{a,c}S_{a',c'} &= \sum_{b,b'} (R_{b,c}\otimes Q_{a,b})(R_{b',c'}\otimes Q_{a',b'})\\
		&= \sum_{b,b'} (R_{b,c}R_{b',c'})\otimes (Q_{a,b}Q_{a',b'})\\
		&=\sum_{b,b'} (R_{b',c'}R_{b,c})\otimes (Q_{a',b'}Q_{a,b})&\text{by~\eqref{quantum_homo_2}}\\
		&= S_{a',c'}S_{a,c},
	\end{align*}
	so~\eqref{quantum_homo_2} holds.
\end{proof}

We also generalise the composition of quantum functions to functions with arity greater than 1.
\begin{definition}\label{operad-composition}
Fix positive integers $k_1,\dots,k_n$, and let $W_1,\dots, W_n$ be quantum functions, where $W_i\colon A^{k_i}\qto B$ for all $i\in[n]$. Take also a quantum function $Q\colon B^n\qto C$.
    We define
    \[ Q*(W_1,\dots,W_n)\colon A^{\sum k_i}\qto C\]
    by $(Q*(W_1,\dots,W_n))_{(\tuple a^{(1)},\dots,\tuple a^{(n)}),c} = \sum_{\tuple b\in B^n} Q_{\tuple b,c}\otimes\bigotimes_{i=1}^n (W_i)_{\tuple a^{(i)},b_i}$.
\end{definition}

Note that this is a well-defined quantum function, since
\begin{align*}
    \sum_{c\in C}(Q*(W_1,\dots,W_n))_{(\tuple a^{(1)},\dots,\tuple a^{(n)}),c}
    &=\sum_{c\in C}
    \sum_{\tuple b\in B^n} Q_{\tuple b,c}\otimes\bigotimes_{i=1}^n (W_i)_{\tuple a^{(i)},b_i}\\
    &=
    \id\otimes\bigotimes_{i=1}^n\sum_{b\in B}(W_i)_{\tuple a^{(i)},b}=\id.
\end{align*}

\begin{lemma}\label{operad-composition-homomorphism}
If\/ $W_1,\dots,W_n$ are homomorphisms $\A^{k_i}\qto[H_i]\B$ and $Q\colon\B^n\qto[H]\C $, then $Q*(W_1,\dots,W_n)$ is a homomorphism $\A^{\sum k_i}\qto[H\otimes\bigotimes H_i]\C$.
\end{lemma}
\begin{proof}
    Consider the quantum function $W:A^{\sum k_i}\qto B^n$ defined as follows. For $\ba^{(1)}\in A^{k_1},\dots,\ba^{(n)}\in A^{k_n}$ and $\bb\in B^n$, we set
    \begin{align*}
        W_{((\ba^{(1)},\dots,\ba^{(n)}),\bb)}=\bigotimes_{i\in[n]}(W_i)_{\ba^{(i)},b_i}.
    \end{align*}
    Reasoning as in the proof of~\Cref{tensor-homomorphisms}, we readily check that $W$ is a quantum homomorphism $\A^{\sum k_i}\to_{\bigotimes H_i} \B^n$, and that
    \begin{align*}
        Q*(W_1,\dots,W_n)=Q\bullet W.
    \end{align*}
    Then, the result follows from~\Cref{composition-homomorphisms}.
\end{proof}

We will also make use of the following simple result.
\begin{lemma}
\label{lem_ess_classical_endo_automorphism}
    Let $\rel X$ be a core and let $Q\colon\X\qto\X$ be a quantum homomorphism. If $Q$ is non-contextual, then 
    \begin{align*}
        \sum_{x\in X}Q_{x,y}=\id
    \end{align*}
    for each vertex $y$ of $\X$.
\end{lemma}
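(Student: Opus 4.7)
The plan is to decompose the non-contextual quantum endomorphism $Q$ into a direct sum of classical endomorphisms via the spectral theorem, and then exploit the core property of $\X$ to conclude that each of these classical endomorphisms is bijective, which gives the desired identity.

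More concretely, since $Q\colon\X\qto\X$ is non-contextual, all projectors $Q_{x,y}$, $Q_{x',y'}$ pairwise commute. As noted in the paragraph following \Cref{quantum-closure}, a quantum homomorphism whose entries pairwise commute lies in the quantum closure of $\Hom(\X,\X)$; hence there exist classical endomorphisms $h_1,\dots,h_d\colon\X\to\X$ (acting on the summands of an orthogonal decomposition $H=H_1\oplus\cdots\oplus H_d$) such that $Q=h_1\oplus\cdots\oplus h_d$. Explicitly, using the formula~\eqref{eqn_1240_0710}, we have
\begin{align*}
    Q_{x,y}\;=\;\bigoplus_{i=1}^{d}\mathbf{1}[h_i(x)=y]\cdot\id_{H_i}.
\end{align*}

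Next, I would invoke the standing fact recalled in the preliminaries that, since $\X$ is a core, every endomorphism of $\X$ is an automorphism (and, in particular, a bijection on $X$). Applying this to each $h_i$, we deduce that for every fixed vertex $y\in X$ and every $i\in[d]$ there is exactly one $x\in X$ with $h_i(x)=y$. Therefore
\begin{align*}
    \sum_{x\in X}Q_{x,y}\;=\;\bigoplus_{i=1}^{d}\left(\sum_{x\in X}\mathbf{1}[h_i(x)=y]\right)\id_{H_i}\;=\;\bigoplus_{i=1}^{d}\id_{H_i}\;=\;\id,
\end{align*}
as required.

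There is no real obstacle here: the only nontrivial ingredients are the spectral-theorem reduction of non-contextual quantum functions to direct sums of classical ones (already stated in the excerpt) and the core property that turns endomorphisms into automorphisms. The proof is essentially a bookkeeping combination of these two facts.
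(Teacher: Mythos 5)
Your proof is correct and follows exactly the same route as the paper's: decompose the non-contextual $Q$ as $\bigoplus_{i\in[d]}h_i$ with each $h_i$ a classical endomorphism, invoke the core property to make each $h_i$ an automorphism, and sum the indicator functions over the unique preimage of $y$. No gaps.
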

\begin{proof}
Since $Q$ is non-contextual, we can write it as $Q=\bigoplus_{i\in[d]}h_i$ for some $d\in\N$ and some classical homomorphisms $h_i:\X\to\X$. By the definition of a core, each $h_i$ is an automorphism of $\X$. In particular, for each $i\in[d]$ there exists a unique $x_i$ such that $h_i(x_i)=y$. It follows that
\begin{align*}
    \sum_{x\in X}Q_{x,y}
    =
    \bigoplus_{i\in [d]}\sum_{x\in X}(h_i)_{x,y}
    =\id,
\end{align*}
as required.
\end{proof}

\section{Quantum algebraic reductions}
\label{sec_quantum_algebraic_constructions}

In this section, we start our analysis of reductions between quantum CSPs by lifting~\Cref{thm_classical_minion_homo_reduction} to the quantum setting via the notion of quantum polymorphisms. First of all, in the next subsection, we collect information on the algebraic structure of the quantum polymorphism set of a constraint language.

\subsection{The algebraic structure of quantum polymorphisms}
\label{subsec_body_algebraic_structure}

The set of classical polymorphisms of a relational structure $\A$ can be endowed with the algebraic structure of a \textit{function minion} over $A$.
A function minion over some set $S$ is a subset  $\minion M$ of the set of functions $\{f:S^n\to S:n\in\N\}$ that is closed under identifying variables, permuting variables, and adding dummy variables. This can be expressed compactly as follows: For each $n,m\in\N$, each map $\pi:[n]\to[m]$, and each $f\in \minion M$ of arity $n$ (meaning that $f$ is a function $S^n\to S$), the function $f_{/\pi}:S^m\to S$ defined by
\begin{align*}
    f_{/\pi}(x_1,\dots,x_m)=f(x_{\pi(1)},x_{\pi(2)},\dots,x_{\pi(n)})
\end{align*}
belongs to $\minion M$.
It is clear from the definition of classical polymorphisms that the set $\Pol(\A)$ is invariant under this type of operations (known as \textit{minors}), so $\Pol(\A)$ is indeed a function minion over $A$. In fact, one can endow $\Pol(\A)$ with the richer structure of a (function) clone, due to the fact that polymorphisms are closed under composition (and composition is associative). Nevertheless, it was observed in~\cite{BOP18} that the complexity of CSPs is entirely determined by the \textit{height-one} or \textit{minor identities} holding in $\Pol(\A)$; i.e., identities of the form $f=g_{/\pi}$. In particular, in order to capture reducibility of CSPs via pp-definitions and pp-constructions, it is sufficient to compare their polymorphism clones in terms of minor-preserving maps; i.e., so-called \textit{minion homomorphisms}. The minion framework proved robust enough to capture reducibility and complexity of variants of the CSP; in particular, the \textit{promise CSP}~\cite{AGH17,BG18,BBKO21}, whose polymorphisms do not compose and, thus, do not form clones. In addition, minions were shown to be well-suited to capture the power of \textit{relaxations} of (promise) CSPs via multiple algorithmic models~\cite{BBKO21,cz23sicomp:clap,cz23soda:minions,CiardoZassociation,cz23soda:aip,BrakensiekGS23,ciardo_quantum_minion,beikmohammadi2025discrete,larrauri2025ineffectiveness}; see also~\cite{brakensiek2025richness} for a recent application to CSP sparsification and non-redundancy.
In these works, the notion of function minions was gradually lifted to the categorical concept of \textit{abstract minions}, retaining the algebraic structure of function minions without the restriction of being applied to sets of functions; see, in particular,~\cite{bgwz20,barto2025multisorted}.

\begin{definition}
\label{defn_abstract_minion}
    An \textit{abstract minion} (or simply a \textit{minion}) $\minion M$ is a functor from the  skeleton category of non-empty
finite sets to the category of non-empty sets.\footnote{The fact that abstract minions can be described as functors was first observed by Barto; see also~\cite{hadek2025categorical} for a categorical treatment of the algebraic approach to constraint satisfaction.} In other words, $\minion M$ is the disjoint union of non-empty sets $\minion M^{(n)}$ for $n\in\N$, equipped with operations $(\cdot)_{/\pi}\colon\minion M^{(n)}\to \minion M^{(m)}$ (the \emph{minor operations}) for each $n,m\in\N$ and each map
$\pi\colon[n]\to [m]$, such that
\begin{itemize}
    \item[$(i)$] $M_{/\id}=M$ 
    for each $n\in\N$ and $M\in\minion M^{(n)}$, where $\id$ is the identity map on the set $[n]$;
    \item[$(ii)$] $(M_{/\pi})_{/\tilde\pi}=M_{/\tilde\pi\circ\pi}$
     for each $n,m,p\in\N$, $M\in\minion M^{(n)}$, $\pi\colon[n]\to[m]$,
    and $\tilde\pi\colon[m]\to[p]$.
\end{itemize} 
\end{definition}
Recall now the notion of quantum polymorphisms (\Cref{defn_quantum_polymorphisms}).
Quantum polymorphisms are not functions over a set, but rather \textit{quantum functions}, i.e., systems of projective measurements. Hence, in order to lift the algebraic structure of $\Pol(\A)$ to the set $\qPol(\rel A)$ of quantum polymorphisms of $\rel A$, it is crucial to adopt the formalism of abstract, as opposed to function, minions.
To prove the following result, we define the minor operations in $\qPol(\A)$ in terms of compositions and tensor products of quantum homomorphisms, as described in~\Cref{subsec_body_quantum_operations}.

 \begin{proposition}
 \label{prop_qPol_abstract_minion}
     For each relational structure $\A$, $\qPol(\A)$ is an abstract minion.
 \end{proposition}
 \begin{proof}
     First, observe that $\qPol(\A)$ is the disjoint union of the non-empty sets $\qPol(\A)^{(n)}$ for $n\in\N$, where 
         $\qPol(\A)^{(n)}$ is the set of $n$-ary quantum polymorphisms of $\A$; i.e., quantum homomorphisms $\A^n\qto\A$. 
         For $1\leq i\leq k$, we let $\pi_i^k$ be the $k$-ary projection $\pi_i^k\colon\A^k\to\A$ defined by $\ba\mapsto a_i$ for each $\ba=(a_1,\dots,a_k)\in A^k$. Note that this is a classical polymorphism and, thus, a quantum polymorphism.
         We define minor operations as follows. For $n,m\in\N$, $\tau\colon [n]\to[m]$, and $Q\colon\A^n\qto\A$, we set
         \begin{align}
         \label{eqn_1148_2603}
             Q_{/\tau}=Q\bullet
             \bigotimes_{j\in[n]}
             \pi_{\tau(j)}^m.
         \end{align}

         We need to check that this definition satisfies the two conditions in~\Cref{defn_abstract_minion}. Unfolding the definitions of composition and tensor product of quantum homomorphisms given in~\Cref{subsec_body_quantum_operations}, we see that, for each $\ba\in A^m$ and $b\in A$,
         \begin{align*}
             (Q_{/\tau})_{\ba,b}
             &=
             \left(Q\bullet
             \bigotimes_{j\in[n]}
             \pi_{\tau(j)}^m\right)_{\ba,b}
             =
             \sum_{\bc\in A^n}Q_{\bc,b}\otimes\left(\bigotimes_{j\in[n]} \pi_{\tau(j)}^m\right)_{\ba,\bc}
             =
             \sum_{\bc\in A^n}Q_{\bc,b}\otimes\bigotimes_{j\in[n]} (\pi_{\tau(j)}^m)_{\ba,c_j}\\
             &=Q_{\tau(\ba),b}\otimes \id^{\otimes n},
         \end{align*}
         where $\tau(\ba)=(a_{\tau(1)},\dots,a_{\tau(n)})$. 
         
        Therefore, up to the canonical identification of the Hilbert space $H$ associated with $Q$ with $H\otimes \mathbb K^{\otimes n}$ (where $\mathbb K$ is the base field), we obtain that $(Q_{/\tau})_{\tuple a,b} = Q_{\tau(\ba),b}$.
        From this, it follows immediately that $Q_{/\id}=Q$ and that $(Q_{/\tau})_{/\tau'} = Q_{\tau'\circ\tau}$. Therefore, the minor operations in~\cref{eqn_1148_2603} meet the conditions $(i)$ and $(ii)$ of~\Cref{defn_abstract_minion}, and it follows that $\qPol(\A)$ is a well-defined abstract minion.
 \end{proof}

We now introduce a notion that shall be important for the classification of languages admitting commutativity gadgets (see~\Cref{subsec_body_quantum_pol_comm_gadgets}).

\begin{definition}
Let $\A$ be a relational structure. The \textit{quantum closure} of $\Pol(\rel A)$, denoted by $\qcPol(\rel A)$, is the union of the quantum closures of the set of $n$-ary classical polymorphisms $\Hom(\rel A^n,\rel A)$, for all $n\in\N$.
\end{definition}
We observe the following simple fact. 
\begin{proposition}\label{observation}
A quantum polymorphism of a structure $\A$ belongs to $\qcPol(\rel A)$ if, and only if, it is non-contextual.
\end{proposition}
\begin{proof}
    This is an immediate consequence of the spectral theorem.
\end{proof}

Since every classical polymorphism is non-contextual, and every non-contextual polymorphism is in particular a quantum polymorphism, we have the following inclusions for each structure $\A$: 
\begin{align}
\label{eqn_2116_22_03}
    \Pol(\A)\subseteq\qcPol(\rel A)\subseteq\qPol(\A).
\end{align}

Note also that, since the projection maps $\pi_i^k\colon\A^k\to\A$ used in the proof of~\Cref{prop_qPol_abstract_minion} are classical and, thus, non-contextual, $\qcPol(\A)$ is a minion with the same minor operations as the ones for $\qPol(\A)$; i.e., those defined in~\cref{eqn_1148_2603}. 
We next define minion homomorphisms.

\begin{definition}[\cite{bgwz20,barto2025multisorted}]
    A \emph{minion homomorphism} $\xi:\minion M\to\minion N$ between two minions $\minion M$ and $\minion N$ is a natural transformation from $\minion M$ to $\minion N$. I.e., $\xi$ is a map from the underlying set of $\minion M$ to the underlying set of $\minion N$ that
\begin{itemize}
    \item preserves the arity---i.e., $\xi(M)\in\minion N^{(n)}$ if $M\in\minion M^{(n)}$;
    \item preserves the minors---i.e., $\xi(M_{/\pi})=\xi(M)_{/\pi}$ if $M\in\minion M^{(n)}$ and  $\pi:[n]\to[m]$.
\end{itemize}
\end{definition}
\noindent We denote the existence of a minion homomorphism from $\minion M$ to $\minion N$ by the notation $\minion M\to\minion N$.
The following result shows that minion homomorphisms between polymorphism minions of CSPs exactly capture reducibility via pp-constructions.
\begin{theorem}[\cite{BOP18}]
\label{minion_homo_q_construction}
    Let $\A$ and $\B$ be relational structures. Then the following are equivalent:
    \begin{itemize}
        \item $\Pol(\A)\to\Pol(\B)$;
        \item $\A$ pp-constructs $\B$.
    \end{itemize}
\end{theorem}

In particular, a minion homomorphism $\Pol(\A)\to\Pol(\B)$ gives rise to a complexity reduction from $\CSP(\B)$ to $\CSP(\A)$.

Since inclusions between minions are in particular homomorphisms, we immediately obtain from~\eqref{eqn_2116_22_03} that $\Pol(\A)\to\qcPol(\rel A)\to\qPol(\A)$ for each $\A$. In addition, we observe the next fact.  

\begin{proposition}
\label{prop_pol_hom_means_nc_hom}
    Let $\A, \B$ be relational structures such that $\Pol(\A)\to\Pol(\B)$.
    Then,\\ $\qcPol(\A)\to\qcPol(\B)$.
\end{proposition}
\begin{proof}
    Let $\xi\colon\Pol(\A)\to\Pol(\B)$.
    Define $\zeta\colon\qcPol(\A)\to\qcPol(\B)$ as follows.
    For any $Q\in\qcPol(\A)$, there are by definition 1-dimensional Hilbert spaces $H_1,\dots,H_d$ and classical polymorphisms $f_1,\dots,f_d\in\Pol(\A)$ such that $Q=\oplus_{i=1}^d f_i$. 
    Then define $\zeta(Q)$ as $\oplus_{i=1}^d\xi(f_i)$, in the same basis.
    The verification that $\zeta$ is a minion homomorphism is immediate since $Q_{/\pi}=\oplus_{i=1}^d (f_i)_{/\pi}$.
\end{proof}

Although this bears no consequences for the present work, we remark here that $\qPol(\A)$ can also be endowed with the algebraic structure of an \emph{operad}.
The $n$-ary elements of the operad are the $n$-ary quantum polymorphisms of $\A$, the identity is the identity map $\id\colon\A\to\A$, and the composition is given by the $*$ operation of~\Cref{operad-composition}.
\begin{proposition}\label{qPol-operad}
    For each relational structure $\A$, $\qPol(\A)$ is an operad.
\end{proposition}
\begin{proof}
    We first need to check that $Q*(\id,\dots,\id)=Q=\id{}* Q$.
    For the first equality, we get $(Q*(\id,\dots,\id))_{\tuple a,c} = \sum_{\tuple b\in B^n}Q_{\tuple b,c}\otimes\id_{a_1,b_1}\otimes\dots\otimes\id_{a_n,b_n}=Q_{\tuple a,c}\otimes\id^{\otimes n} = Q_{\tuple a,c}$.
    (Note that, as in the proof of~\Cref{prop_qPol_abstract_minion}, we work here with the identification of $H$ and $H\otimes\mathbb K^{\otimes n}$.)
The verification for the right-hand side is similar.

    We now prove operad-associativity; i.e., we prove that if $Q$ is $n$-ary, $W_i$ is $k_i$-ary for $i\in[n]$, and $V_{i,j}$ is $\ell_{i,j}$-ary for $i\in[n]$ and $j\in[k_i]$, then 
    \begin{align*}
        &(Q*(W_1,\dots,W_n))*(V_{1,1},\dots,V_{1,k_1},\dots,V_{n,1},\dots,V_{n,k_n})
        \intertext{and}
        &Q*(W_1*(V_{1,1},\dots,V_{1,k_1}),\dots,W_n*(V_{n,1},\dots,V_{n,k_n}))
    \end{align*}
    coincide.
    The verification is slightly tedious and we omit it to the benefit of the reader; it suffices to remark that, for both sides, the projectors corresponding to the tuples $\tuple a^{i,j}\in A^{\ell_{i,j}}$ (for $i\in[n]$ and $j\in[k_i]$) and $d\in A$, are equal to
    \[ \sum_{\tuple c\in A^n,\tuple b^1\in A^{k_1},\dots,\tuple b^n\in A^{k_n}}Q_{\tuple c,d}\otimes\bigotimes_{i\in[n]} (W_i)_{\tuple b^i,c_i}\otimes\bigotimes_{i\in[n], j\in [k_i]} (V_{i,j})_{\tuple a^{i,j},b^i_j}.\qedhere \]
\end{proof}

It would seem tempting to believe that $\qPol(\A)$, being simultaneously an operad and a minion, is a clone. %
The natural definition of an (abstract) clone from an operad-minion is the following.
Let  $Q$ be $n$-ary, and let $W_1,\dots,W_n$ be $k$-ary.
Define $\sigma\colon[nk]\to[k]$ by $\sigma(i)=i\bmod k$, where we take here the convention that the $\bmod$ operation takes values in $\{1,\dots,k\}$.
Then define $Q\circ(W_1,\dots,W_n)$ to be the $k$-ary operation $(Q*(W_1,\dots,W_n))_{/\sigma}$.
While this definition is valid, and gives rise to the quantum function with projectors $\sum_{\tuple b}Q_{\tuple b,c}\otimes\bigotimes_{i=1}^n (W_i)_{\tuple a,b_i}$,
it fails to be associative.%

\subsection{Complexity reductions via minion homomorphisms}
\label{subsec_body_reductions_via_minion_homo}

The goal of this subsection is to prove that the existence of a minion homomorphism between the quantum polymorphism minions of two structures gives rise to a logspace reduction between the corresponding quantum CSPs. 

\thmMainqPolHomoReductions*
To prove the above result, we quantise the completeness and soundness analysis for the reduction coming from the Long-Code testing of classical constraint systems~\cite{bellare1998free}, see for example~\cite[\S3.3]{BBKO21} and~\cite{chen2017asking}. We observe that,  while in the classical case the completeness of the reduction is readily proved, in the quantum setting both the completeness and the soundness cases are quite technical.
\begin{proof}
    Let $\sigma$ and $\rho$ be the signatures of $\A$ and $\B$, respectively. 
    Without loss of generality, we take the domain of $B$ to be $[n]$.
    Take an instance $\X$ of $\qCSP(\B)$ (i.e., $\X$ is an arbitrary $\rho$-structure). Consider the $\sigma$-structure $\Y$ defined via the following process. 
    \begin{itemize}
        \item[$(i)$] For each $x\in X$, add in $\Y$ a copy of $\A^{n}$. Given a tuple $\ba\in A^{n}$, we denote the corresponding variable of $\Y$ by ${(x,\ba)}$.%
        \item[$(ii)$] For each symbol $R\in\rho$ and each tuple $\bx\in R^\X$, add in $\Y$ a copy of $\A^{|R^\B|}$. Given a tuple $\bc\in A^{|R^\B|}$, we denote the corresponding variable of $\Y$ by ${(R,\bx,\bc)}$. 
        \item[$(iii)$] 
        We identify variables in $\Y$ as follows. For each symbol $R\in\rho$ (of arity, say, $r$), each tuple $\bx\in R^\X$, each index $i\in[r]$, and each tuple $\ba\in A^{n}$ we identify the variable ${(x_i,\ba)}$ with the variable ${(R,\bx,\bc)}$, where $\bc\in A^{|R^\B|}$ is the tuple obtaining by composing $\ba$ with the $i$-th projection $\pi_i\colon R^\B\to B$ mapping each $(b_1,\dots,b_r)\in R^\B$ to $b_i$. 
        In more details, $\bc$ is the tuple $(a_{\tuple r_1(i)},\dots,a_{\tuple r_N(i)})$, where $R^{\B}=\{\tuple r_1,\dots,\tuple r_N\}$ is a fixed enumeration of $R^{\B}$.
    \end{itemize}
    Note that the structure $\Y$ resulting from this process is a $\sigma$-structure and, thus, a well-defined instance of $\qCSP(\A)$.
    Given some $y\in Y$, we shall say that $y$ is a variable of type $(i)$ if it comes from step $(i)$, and we shall say that $y$ is a variable of type $(ii)$ if it comes from step $(ii)$; observe that, because of step $(iii)$, $y$ may be of type $(i)$ and type $(ii)$, simultaneously.
    We claim that the reduction $\X\mapsto\Y$ is complete and sound.
    Henceforth in this proof, with a slight abuse of notation, we identify tuples in $S^{|T|}$ with maps $T\to S$.\\

(\underline{\textit{Completeness.}}) Take a quantum homomorphism $Q\colon\X\qto\B$.
We define a quantum function $W\colon Y\qto A$ as follows.
For each $x\in X$, each $\ba\in A^{n}$, and each $p\in A$, we set
\begin{align}
\label{eqn_1133_2703_1}
    W_{{(x,\ba)},p}= \sum_{\substack{b\in B\\\ba(b)=p}}Q_{x,b}.
\end{align}
Moreover, for each $R\in\rho$, each $\bx\in R^\X$, each $\bc\in A^{|R^\B|}$, and each $p\in A$, we set
\begin{align}
\label{eqn_1133_2703_2}
    W_{{(R,\bx,\bc)},p}= \sum_{\substack{\bb\in R^\B\\\bc(\bb)=p}}\prod_{j\in[\ar(R)]}Q_{x_j,b_j}.
\end{align}

\begin{claim}
    $W$ preserves the identification step $(iii)$.
\end{claim}
\begin{proof}
    Suppose that the variable ${(R,\bx,\bc)}$ is identified with the variable ${(x_i,\ba)}$ in $\Y$; i.e., $\bc=\ba\circ\pi_i$. We find
    \begin{align*}
        W_{{(R,\bx,\bc)},p}
        &=
        \sum_{\substack{\bb\in R^\B\\\ba(b_i)=p}}\prod_{j\in[\ar(R)]}Q_{x_j,b_j}
        =
        \sum_{\substack{\bb\in B^{\ar(R)}\\\ba(b_i)=p}}\prod_{j\in[\ar(R)]}Q_{x_j,b_j}\\
        &=
        \left(\sum_{b_1\in B}Q_{x_1,b_1}\right)\cdot
        \left(\sum_{b_2\in B}Q_{x_2,b_2}\right)\dots
    \left(\sum_{b_{i-1}\in B}Q_{x_{i-1},b_{i-1}}\right)\cdot
    \left(\sum_{\substack{b_{i}\in B\\\ba(b_i)=p}}Q_{x_{i},b_{i}}\right)\\
    &\cdot
    \left(\sum_{b_{i+1}\in B}Q_{x_{i+1},b_{i+1}}\right)\dots
    \left(\sum_{b_{\ar(R)}\in B}Q_{x_{\ar(R)},b_{\ar(R)}}\right)\\
    &=
    \sum_{\substack{b_{i}\in B\\\ba(b_i)=p}}Q_{x_{i},b_{i}}
    =
    W_{{(x_i,\ba)},p}
    \end{align*}
    as required. Note that the second equality crucially uses that $Q$ satisfies the constraint $\bx$, in the sense of~\eqref{quantum_homo_1}.
\end{proof}
\begin{claim}
    $W$ is a quantum map from $Y$ to $A$.
\end{claim}
\begin{proof}
First note that the summands in~\cref{eqn_1133_2703_1} are mutually orthogonal and, thus, their sum is a projector. Similarly, each product $\prod_{j\in[\ar(R)]}Q_{x_j,b_j}$ in~\cref{eqn_1133_2703_1} consists of mutually commuting projectors (by~\eqref{quantum_homo_2}) and, thus, the product is a projector. Moreover, products associated with distinct $\bb\in R^\B$ are orthogonal by~\eqref{quantum_homo_1} and~\eqref{quantum_homo_2}, so the sum is a well-defined projector. 

For variables of type $(i)$, we have 
    \begin{align*}
        \sum_{p\in A}W_{{(x,\ba)},p}= \sum_{p\in A} \sum_{\substack{b\in B\\\ba(b)=p}}Q_{x,b}
        =
        \sum_{b\in B}Q_{x,b}
        =
        \id.
    \end{align*}
    For variables of type $(ii)$, we have 
    \begin{align*}
    \sum_{p\in A}W_{{(\bx,\bc)},p}= \sum_{p\in A}\sum_{\substack{\bb\in R^\B\\\bc(\bb)=p}}\prod_{j\in[\ar(R)]}Q_{x_j,b_j}
    =
    \sum_{\bb\in R^\B}\prod_{j\in[\ar(R)]}Q_{x_j,b_j}
    =
    \sum_{\bb\in B^{\ar(R)}}\prod_{j\in[\ar(R)]}Q_{x_j,b_j}=\id,
\end{align*}
where we are again using~\eqref{quantum_homo_1}. 
\end{proof}

We now proceed to show that the quantum map $Q\colon Y\qto A$ is in fact a quantum homomorphism $\Y\qto\A$. To that end, we need to check that $W$ satisfies the conditions~\eqref{quantum_homo_1} and~\eqref{quantum_homo_2}.

\begin{claim}
    $W$ satisfies~\eqref{quantum_homo_1}.
\end{claim}
\begin{proof}
    Let $S$ be a symbol in $\sigma$ of some arity $s$. Take a tuple $\by\in S^\Y$ and a tuple $\ba\in A^s\setminus S^\A$.
    We need to prove that
    \begin{align}
    \label{eqn_1220_2703}
        \prod_{k\in[s]}W_{y_k,a_k}=0.
    \end{align}
    We now distinguish between two cases.
    \begin{itemize}
        \item Case 1: All variables appearing in $\by$ are of type $(i)$. In this case, there exists some $x\in X$ and some tuples $\bz^{(1)},\dots,\bz^{(s)}\in A^{n}$ such that
        $y_k={(x,\bz^{(k)})}$ for each $k\in[s]$, and        
        the tuple $(\ba^{(1)},\dots,\ba^{(s)})$ belongs to $S^{\A^{n}}$ (where, recall, $S^{\A^{n}}$ is the interpretation of the symbol $S\in\sigma$ in the $n$-wise categorical power $\A^{n}$ of $\A$). In other words, for each $\ell\in B$ it holds that $(z^{(1)}_\ell,\dots,z^{(s)}_\ell)\in S^\A$. In this case, we find
        \begin{align*}
            \prod_{k\in[s]}W_{y_k,a_k}
            =
            \prod_{k\in[s]}W_{{(x,\bz^{(k)})},a_k}
            =
            \prod_{k\in[s]}\sum_{\substack{b_k\in B\\\bz^{(k)}(b_k)=a_k}}Q_{x,b_k}
            =
            \sum_{\substack{\bb\in B^s\\ \bz^{(k)}(b_k)=a_k\;\forall k\in[s]}}\prod_{k\in[s]}Q_{x,b_k}.
        \end{align*}
        Now, because of the idempotency of projectors and the fact that projectors in a PVM having different outcomes are mutually orthogonal, the above expression trivialises unless all $b_k$'s are equal. Hence, we find
        \begin{align*}
            \prod_{k\in[s]}W_{y_k,a_k}
            =
            \sum_{\substack{b\in B\\ \bz^{(k)}(b)=a_k\;\forall k\in[s]}}Q_{x,b}.
        \end{align*}
        We now argue that the above sum is zero. Indeed, for any $b\in B$, the requirement that $\bz^{(k)}(b)=a_k$  $\forall k\in[s]$ implies that the tuple $(z^{(1)}_b,\dots,z^{(s)}_b)$ does not belong to $S^\A$, since we are assuming $\ba\not\in S^\A$. This, however, is a contradiction. It follows that~\cref{eqn_1220_2703} holds in this case.
        \item Case 2: All variables appearing in $\by$ are of type $(ii)$. The argument is similar to the previous case. There exists some symbol $R\in\rho$, some tuple $\bx\in R^\X$, and some tuples $\bz^{(1)},\dots,\bz^{(s)}\in A^{|R^\B|}$ such that
        $y_k={(R,\bx,\bz^{(k)})}$ for each $k\in[s]$, and        
        the tuple $(\ba^{(1)},\dots,\ba^{(s)})$ belongs to $S^{\A^{|R^\B|}}$. Unfolding the definition of relations in categorical powers of structures, this means that, for each $\bb\in R^\B$, it holds that $(z^{(1)}_\bb,\dots,z^{(s)}_\bb)\in S^\A$. Letting $r$ be the arity of $R$, we find
        \begin{align*}
            \prod_{k\in[s]}W_{y_k,a_k}
            &=
            \prod_{k\in[s]}W_{{(R,\bx,\bz^{(k)})},a_k}
            =
            \prod_{k\in[s]}\sum_{\substack{\bb^{(k)}\in R^\B\\\bz^{(k)}(\bb^{(k)})=a_k}}\prod_{j\in[r]}Q_{x_j,b^{(k)}_j}\\
            &=
            \sum_{\substack{(\bb^{(1)},\dots,\bb^{(k)})\in(R^\B)^k\\\bz^{(k)}(\bb^{(k)})=a_k\;\forall k\in[s]}}\prod_{k\in[s]}\prod_{j\in[r]}Q_{x_j,b^{(k)}_j}.
        \end{align*}
        We now observe that the order in which the products are performed in the expression above is irrelevant, due to the fact that, by assumption, $\bx$ belongs to $R^\X$ and $Q$ satisfies~\eqref{quantum_homo_2}.
        In particular, we have
        \begin{align*}
            \prod_{k\in[s]}W_{y_k,a_k}
            &=
            \sum_{\substack{(\bb^{(1)},\dots,\bb^{(k)})\in(R^\B)^k\\\bz^{(k)}(\bb^{(k)})=a_k\;\forall k\in[s]}}\prod_{j\in[r]}\prod_{k\in[s]}Q_{x_j,b^{(k)}_j}.
        \end{align*}
        We now proceed similarly to case 1, by observing that, since each $Q_{x_j}$ is a PVM, the product $\prod_{k\in[s]}Q_{x_j,b^{(k)}_j}$ trivialises unless all $b_j^{(k)}$'s are the same (for any fixed $j\in[r]$). This means that
        \begin{align*}
            \prod_{k\in[s]}W_{y_k,a_k}
            &=
            \sum_{\substack{\bb\in R^\B\\\bz^{(k)}(\bb)=a_k\;\forall k\in[s]}}\prod_{j\in[r]}Q_{x_j,b_j}.
        \end{align*}
        For each $\bb\in R^\B$, we know that the tuple $(z^{(1)}_\bb,\dots,z^{(s)}_\bb)$ must belong to $S^\A$. Hence, such tuple cannot equal $\ba$, which is assumed to be in $A^s\setminus S^\A$. It follows that the expression above is zero, thus proving~\cref{eqn_1220_2703} in this case, too. This concludes the proof of the claim.\qedhere
    \end{itemize}
    \end{proof}
    \begin{claim}
    $W$ satisfies~\eqref{quantum_homo_2}.
\end{claim}
\begin{proof}
    Let $y,y'\in Y$ be such that $\{y,y'\}\in E(\Gaif(\Y))$. This means that there exists some symbol $S\in\sigma$, some tuple $\by\in S^\Y$, and two distinct indices $i,j\in[\ar(S)]$ such that $y=y_i$ and $y'=y_j$. If all entries of $\by$ are variables of type $(i)$, we see from~\cref{eqn_1133_2703_1} that the PVMs $W_{y}$ and $W_{y'}$ are both linear combinations of projectors in the PVM $Q_x$ for some common $x\in X$, and thus they commute.  If all entries of $\by$ are variables of type $(ii)$, we see from~\cref{eqn_1133_2703_2} that both $W_y$ and $W_{y'}$ are linear combinations of projectors in the PVM $Q_\bx$ with outcome set $B^{\ar(R)}$ for some $R\in\rho$ and some $\bx\in R^\X$, where $Q_\bx$ is defined by
    \begin{align*}
        Q_{\bx,\bb}=\prod_{j\in[\ar(R)]}Q_{x_j,b_j}
    \end{align*}
    for each $\bb\in B^{\ar(R)}$. Hence, $W_y$ and $W_{y'}$ commute in this case, too.
\end{proof}
In summary, we have shown that, whenever $\X\qto\B$, it holds that $\Y\qto\A$, which proves completeness of the reduction.\\

(\underline{\textit{Soundness.}}) Let $W\colon\Y\qto\A$ be a quantum homomorphism.
For $x\in X$, let $W^{(x)}$ be the quantum polymorphism of $\A$ defined by $W^{(x)}_{\tuple a,b} := W_{{(x,\tuple a)},b}$.
Similarly, for $R\in\rho$ and $\tuple x\in R^{\X}$, let $W^{(R,\tuple x)}_{\tuple c,b} := W_{{(R,\tuple x,\tuple c)},b}$.

Recall that for $R\in\rho$, we have fixed an enumeration of $R^{\B}$ as $\{\tuple r_1,\dots,\tuple r_N\}$.
Note that if $R$ is $r$-ary and $\pi_i\colon [|R^{\B}|]\to[n]$ is the map $j\mapsto r_j(i)$ for  $i\in[r]$, then $W^{(x_i)} = W^{(R,\tuple x)}_{/\pi_i}$.
Indeed, this follows from the equalities
\begin{align*}
    (W^{(R,\tuple x)}_{/\pi_i})_{\tuple a,b} = W_{{(R,\tuple x,(a_{\tuple r_1(i)},\dots,a_{\tuple r_N(i)}))},b} = W_{(\tuple x,\tuple a),b} = W^{(x)}_{\tuple a,b},
\end{align*}
where the second equality follows from the fact that the variables $(R,\tuple x,a_{\tuple r_1(i),\dots,\tuple r_N(i)})$ and $(x_i,\tuple a)$ of $\Y$ have been identified.

Define $Q\colon X\qto B$ to be the quantum function $Q_{x,b}:=\xi(W^{(x)})_{(1,\dots,n),b}$, where $\xi$ is a minion homomorphism $\qPol(\A)\to\qPol(\B)$ whose existence we are assuming.
We now check that $Q$ is a quantum homomorphism $\C\qto\B$.

First, let $(x_1,\dots,x_r)\in R^{\X}$ and $(b_1,\dots,b_r)\not\in R^{\B}$.
We compute
\begin{align*}
    Q_{x_1,b_1}\cdots Q_{x_r,b_r} &= \prod_{i=1}^r \xi(W^{(x_i)})_{(1,\dots,n),b_i}= \prod_{i=1}^r \left(\xi(W^{(R,\tuple x)})_{/\pi_i}\right)_{(1,\dots,n),b_i}\\
    &= \prod_{i=1}^r \xi(W^{(R,\tuple x)})_{(\tuple r_1(i),\dots,\tuple r_N(i)),b_i} =0,
\end{align*}
where the last equality follows from the fact that $\xi(W^{(R,\tuple x)})$ is a quantum polymorphism of $\B$, and that the $r$-tuple whose entries are $(\tuple r_1(i),\dots,\tuple r_N(i))$ for $i\in[r]$ is in $R^{\B^N}$, while $\tuple b\not\in R^{\B}$ by assumption.
Thus, $Q$ satisfies~\eqref{quantum_homo_1}.

To check~\eqref{quantum_homo_2}, let $(x_1,\dots,x_r)\in R^{\B}$ for some $R\in\rho$ and let $i,j\in[r]$.
Then for all $b,b'\in B$, we have that 
\begin{align*}[Q_{x_i,b},Q_{x_j,b}] &= \left[\left(\xi(W^{(R,\tuple x)})_{/\pi_i}\right)_{(1,\dots,n),b}, \left(\xi(W^{(R,\tuple x)})_{/\pi_j}\right)_{(1,\dots,n),b'}\right]\\
&=\left[\xi(W^{(R,\tuple x)})_{(\tuple r_1(i),\dots,\tuple r_N(i)),b},\xi(W^{(R,\tuple x)})_{(r_1(j),\dots,r_N(j)),b'}\right]=0,
\end{align*}
where the last equality is because $\xi(W^{(R,\tuple x)})$ satisfies~\eqref{quantum_homo_2}.
\end{proof}

\begin{remark}
For the readers familiar with the complexity theory of promise CSPs (in particular, with the reduction in~\cite{BBKO21}), we remark that 
the reduction in~\Cref{thm_main_qPol_homo_reductions} can be cast as a reduction from $\qCSP(\B)$ to the ``quantum minor-condition problem'' $\textsc{qMC}(\qPol(\B), N)$, where $N$ is a constant that depends on $\A$. If there is a minion homomorphism $\qPol(\A)\to\qPol(\B)$, $\textsc{qMC}(\qPol(\B),N)$ trivially reduces to $\textsc{qMC}(\qPol(\A), N)$(where the reduction does nothing).
Finally, $\textsc{qMC}(\qPol(\A), N)$ reduces to $\qCSP(\A)$ by replacing every symbol $f$ of arity $k\leq N$ by a copy of $\A^k$ (steps (i) and (ii) of the reduction above), and identifying vertices according to the minor identities given in the instance (step (iii) of the reduction above).%
\end{remark}

\begin{remark}
We point out that for the reduction in~\Cref{thm_main_qPol_homo_reductions} to be implementable in logspace (and, similarly, for all other logspace reductions we give in this work), we implicitly assume that the quantum CSP instances are given via a \textit{non-succinct} presentation (following the terminology, for example, of~\cite{fu2025succinct}, see also~\cite[\S1.1]{culf2025quantum}).
The fact that the identification of vertices in step (iii) can be achieved in logarithmic space is due to~\cite{Reingold}. 
We remark that in our applications, these reductions are used solely to transfer undecidability, for which it suffices that they be computable. Consequently, the distinction between succinct and non-succinct presentations is immaterial for our purposes. 
\end{remark}

\section{Quantum relational constructions}
\label{sec_quantum_relational_constructions}

In this section, we introduce quantised notions of pp-definitions and pp-constructions that we call \textit{q-definitions} and \textit{q-constructions}, respectively, and we link them to the quantum polymorphisms. Moreover, we show how these quantum constructions can be systematically built by enriching their classical counterparts with the \textit{commutativity gadgets} introduced in~\cite{Ji}. Then, we prove that the existence of commutativity gadgets is entirely determined by a closure property of the quantum polymorphisms of the given language.

\subsection{q-definitions and q-constructions}
\label{subsec_body_quantum_reductions}
Recall the notion of pp-definitions given in~\Cref{subsec_prelimns_relational_structures}. We now lift it to a quantum version. Recall from~\Cref{subsec_prelimns_relational_structures} that, for a finite set $A$ and a subset $S$ of $A^r$, by the notation $(A;S)$ we mean the relational structure having domain $A$ and a single, $r$-ary relation $S$.

\begin{definition}\label{qgadget}
    Let $\rel A$ be a $\sigma$-structure    and let $S\subseteq A^r$ be a set of $r$-tuples of elements of $A$ for some $r\in\N$. Let $R$ be the $r$-ary symbol of the signature of $(A;S)$, and 
    consider the $\{R\}$-structure
    $\rel R$ having domain $[r]$ and relation $R^{\rel R}=\{(1,\dots,r)\}$.
    A \emph{q-definition} for $\rel A$ and $S$ is a $\sigma$-structure $\GG$ with distinguished vertices $g_1,\dots,g_r$ satisfying the following properties:
    \begin{itemize}
        \item[$(\mathbf{q_1})$\labeltext{$\mathbf{q_1}$}{itm:qgadget-extension}] For every quantum homomorphism $Q\colon\rel R\qto (A;S)$, 
        there exists a quantum homomorphism $Q'\colon\GG\qto\rel A$ such that $Q'_{g_i,a} = Q_{i,a}$ for all $i\in[r]$ and all $a\in A$.
        \item[$(\mathbf{q_2})$\labeltext{$\mathbf{q_2}$}{itm:qgadget-restriction}] For every quantum homomorphism $Q\colon\GG\qto\rel A$, 
        the quantum function $Q'$ defined by $Q'_{i,a}:=Q_{g_i,a}$ for all $i\in[r]$ is a quantum homomorphism $\rel R\qto (A;S)$.
    \end{itemize}
\end{definition}
Thus, a q-definition is similar to a classical gadget where we impose (in~\eqref{itm:qgadget-restriction}) the additional property that quantum homomorphisms $Q\colon\GG\qto\A$ must be non-contextual over the special vertices $\{g_1,\dots,g_r\}$ of the gadget.
In particular, in the case that $r=1$, then q-definitions and pp-definitions coincide.

Just as in the classical setting, we have a correspondence between the set of relations with a q-definition over a structure $\A$ and the set of relations $R$ such that $\qPol(\A)\subseteq\qPol(A;R)$.
\begin{theorem}\label{galois-connection-oracular}
    Let $\A$ be a relational structure and let $R\subseteq A^r$.
    Then $R$ has a q-definition over $\A$ if, and only if, $\qPol(\A)\subseteq\qPol(A;R)$.
\end{theorem}
\begin{proof}
    Suppose first that $R$ has a q-definition over $\A$.
    Let $(\GG, g_1,\dots,g_r)$ be a q-definition, and pick a quantum polymorphism $Q\colon\A^n\qto\A$ of some arity $n$. We need to show that $Q$ is an $n$-ary quantum polymorphism of $(A;R)$.

    We first show~\eqref{quantum_homo_1}.
    Let $\tuple a^1,\dots,\tuple a^n\in R$ and $\tuple b\not\in R$.
    For every $i\in[n]$, the classical map $h_i\colon [r]\to A$ defined by $j\mapsto a^i_j$ is a homomorphism $\R\to (A;R)$.
    By~\eqref{itm:qgadget-extension}, there exists for all $i\in[n]$ a homomorphism $H_i\colon\GG\to\A$ such that $H_i(g_j)=h_i(j)$ for all $j\in[r]$.
    Define a quantum function $U\colon [r]\qto A$ by $U_{j,b}:= (Q\bullet(H_1,\dots,H_n))_{g_j,b}$.
    By~\eqref{itm:qgadget-restriction}, this is a quantum homomorphism $\R\qto(A;R)$
    and therefore $\prod_{j=1}^r U_{j,b_j} = 0$.
    But since $U_{j,b_j} = Q_{(a^1_j,\dots,a^n_j),b_j}\otimes\id^{\otimes n}$, we obtain that
    $\prod_{j=1}^r Q_{(a^1_j,\dots,a^n_j),b_j}=0$ and therefore $Q$ satisfies~\eqref{quantum_homo_1}.

    We now check~\eqref{quantum_homo_2}.
    Let $\tuple a^1,\dots,\tuple a^n\in R$, and let $i,j\in[r]$.
    Let $a,b\in A$.
    Using the same notation as above, we have $[Q_{(a^1_i,\dots,a^n_i),a},Q_{(a^1_j,\dots,a^n_j),b}]\otimes\id^{\otimes n}=[U_{i,a},U_{j,b}]=0$.

    We now prove the converse implication. Suppose that $\qPol(\A)\subseteq\qPol(A;R)$.
    Let $\tuple a^1,\dots,\tuple a^n$ be an enumeration of all the tuples in $R$.
    We define the q-definition $\GG$ to be the structure $\A^n$, and its $r$ distinguished vertices to be $(a^1_i,\dots,a^n_i)$ for $i\in[r]$.

    We prove that~\eqref{itm:qgadget-restriction} holds.
    Let $Q\colon\A^n\qto \A$ and define $Q'_{i,b}=Q_{(r^1_i,\dots,r^n_i),b}$, which is a quantum function $[r]\to A$.
    Let $(b_1,\dots,b_r)\not\in R$.
    Then $\prod_{i=1}^r Q'_{i,b_i} = \prod_{i=1}^r Q_{(r^1_i,\dots,r^n_i),b_i}=0$ since $Q\in\qPol(A;R)$ by assumption and by~\eqref{quantum_homo_1}.
    Thus,~\eqref{quantum_homo_1} holds for $Q'$.

    Similarly, if $i,j\in[r]$, $a,b\in A$, then $[Q'_{i,a},Q'_{j,b}]=[Q_{(a^1_i,\dots,a^n_i),a},Q_{(a^1_j,\dots,a^n_j),b}]=0$, since~\eqref{quantum_homo_2} holds for $Q$ as a quantum polymorphism of $(A;R)$.

    Finally, we check that~\eqref{itm:qgadget-extension} holds.
    Let $Q\colon\R\qto (A;R)$ be a quantum homomorphism.
    Since $\R$ consists of a single tuple and $Q$ satisfies~\eqref{quantum_homo_2}, one can write $Q$ as $\oplus_{i=1}^d h_i$, where each $h_i$ is a homomorphism $\R\qto (A;R)$.
    For each $i$, there exists $i(j)\in[n]$ such that $(h_i(1),\dots,h_i(r))=\tuple a^{i(j)}$.
    Define $g_i\colon \A^n\to \A$ to be the $n$-ary projection $\pi_{i(j)}$, and define $Q'\colon\A^n\qto\A$ to be $\oplus g_i$, concluding the proof of~\eqref{itm:qgadget-extension}.
\end{proof}

\begin{definition}
    Let $\A$ and $\B$ be structures on the same domain and potentially different signatures. We say that $\A$ \textit{q-defines} $\B$
    if there exists a q-definition for $\A$ and $R^\B$ for every relation $R^\B$ of $\B$.
\end{definition}

As an immediate corollary of~\Cref{galois-connection-oracular}, we obtain the following. 

\propgaloisoracular*
\begin{proof}
    It is enough to apply~\Cref{galois-connection-oracular} to each of the relations of $\B$.
\end{proof}

Classical gadgets for CSPs that can be defined one from the other via pp-definitions provide logspace reductions between CSPs of different languages.
It follows from the results above that the same holds for q-definitions.

 \begin{proposition}\label{qgadget-reduction}
Let $\A$ and $\B$ be structures such that $\A$ q-defines $\B$.
    Then $\qCSP(\rel B)$ reduces to $\qCSP(\rel A)$ in logspace.
\end{proposition}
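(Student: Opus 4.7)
The plan is to mirror the classical logspace reduction afforded by pp-definitions. Given an input $\rel X$ of $\qCSP(\rel B)$ with constraints $C_j = R_j(x_{j,1},\dots,x_{j,r_j})$, I would build an input $\rel X'$ of $\qCSP(\rel A)$ by replacing each $C_j$ with a fresh copy $\rel G_j$ of the q-definition for $\rel A$ and $R_j^{\rel B}$, identifying its distinguished vertices with $x_{j,1},\dots,x_{j,r_j}$. The size of each $\rel G_j$ depends only on $\rel A$ and $\rel B$, so this transformation is computable in logspace, and it remains to show that $\rel X\qto\rel B$ if and only if $\rel X'\qto\rel A$.

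For the forward direction, assume $Q\colon\rel X\qto[H]\rel B$ is a quantum homomorphism. For each constraint $C_j$, I would first observe that the quantum function $Q^{(j)}\colon\rel R\qto[H]\rel A_{R_j^{\rel B}}$ given by $Q^{(j)}_{i,a}=Q_{x_{j,i},a}$ is in fact a quantum homomorphism: condition~\eqref{quantum_homo_1} follows from $(x_{j,1},\dots,x_{j,r_j})\in R_j^{\rel X}$, while~\eqref{quantum_homo_2} holds because every pair of variables in $C_j$ is Gaifman-adjacent in $\rel X$. Applying~\eqref{itm:qgadget-extension} to $Q^{(j)}$ yields a quantum homomorphism ${Q'}^{(j)}\colon\rel G_j\qto[H]\rel A$ whose PVMs at distinguished vertices are literally equal to those of $Q^{(j)}$, so all ${Q'}^{(j)}$ act on the same Hilbert space $H$ and coincide on identified distinguished vertices with the corresponding PVMs of $Q$. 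I would then glue them into a single quantum function $Q'$ on $\rel X'$; verifying~\eqref{quantum_homo_1} and~\eqref{quantum_homo_2} for $Q'$ reduces to the same property of a single ${Q'}^{(j)}$, since by construction every tuple of $\rel X'$---and thus every Gaifman edge---lies entirely within some gadget $\rel G_j$.

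For the backward direction, assume $Q'\colon\rel X'\qto[H]\rel A$ is a quantum homomorphism, and define $Q_{x,a}:=Q'_{x,a}$ on the variables $x$ of $\rel X$, viewed as distinguished vertices of the gadgets. Restricting $Q'$ to each $\rel G_j$ is a quantum homomorphism $\rel G_j\qto[H]\rel A$, and applying~\eqref{itm:qgadget-restriction} to it yields a quantum homomorphism $\rel R\qto[H]\rel A_{R_j^{\rel B}}$, which is precisely the assertion that~\eqref{quantum_homo_1} and~\eqref{quantum_homo_2} hold on the tuple and the Gaifman edges contributed by $C_j$. Running over all constraints certifies that $Q\colon\rel X\qto[H]\rel B$ is a quantum homomorphism.

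The main technical concern is the gluing step in the forward direction: a priori, different applications of~\eqref{itm:qgadget-extension} might produce quantum homomorphisms on incompatible Hilbert spaces with inconsistent values on distinguished vertices shared between gadgets. This dissolves once one reads~\eqref{itm:qgadget-extension} as requiring an \emph{equality of operators}, which forces every ${Q'}^{(j)}$ to act on the original input Hilbert space $H$ and pins its values on any shared distinguished vertex to those of the single quantum homomorphism $Q$ on $\rel X$.
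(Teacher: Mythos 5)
Your proposal is correct and follows essentially the same route as the paper: replace each constraint by a fresh copy of the corresponding q-definition with distinguished vertices glued to the constraint's variables, use~\eqref{itm:qgadget-extension} for completeness and~\eqref{itm:qgadget-restriction} for soundness. Your discussion of the gluing step (that the operator equality in~\eqref{itm:qgadget-extension} forces all gadget homomorphisms onto the same Hilbert space with consistent PVMs on shared distinguished vertices, and that every tuple of the output instance lies in a single gadget) is a point the paper glosses over with ``overall, this is a quantum homomorphism,'' so your version is, if anything, slightly more careful.
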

\begin{proof}
    Applying~\Cref{prop_galois_oracular}, we know that $\qPol(\A)\subseteq\qPol(\B)$. In particular, the inclusion map gives a minion homomorphism $\qPol(\A)\to\qPol(\B)$. Then, the result follows from~\Cref{thm_main_qPol_homo_reductions}.
\end{proof}

Next, we show that q-definitions are compatible with quantum homomorphisms, in the following sense.
\begin{proposition}\label{qgadget-composition}
    Let $\rel A,\rel B$ be structures, and let $R_A\subseteq A^r$ and $R_B\subseteq B^r$ be relations that admit a common q-definition\/ $\GG$ over $\rel A$ and $\rel B$, respectively. 
    Then every quantum homomorphism $Q\colon\rel A\qto\rel B$ is a quantum homomorphism $(A;R_A)\qto (B;R_B)$.
\end{proposition}
\begin{proof}
    Let $g_1,\dots,g_r$ be the distinguished vertices of $\GG$.

    We prove~\eqref{quantum_homo_1}. Let $(a_1,\dots,a_r)\in R_A$ and $(b_1,\dots,b_r)\not\in R_B$.
    By~\eqref{itm:qgadget-extension}, there exists a quantum homomorphism $Q'\colon\GG\qto\rel A$ such that $Q'_{g_i,a_i}=\id$ for all $i\in[r]$.
    Let $S=Q \bullet Q'$, which is a quantum homomorphism $\GG\qto\rel B$.
    By~\eqref{itm:qgadget-restriction}, we have that the quantum function 
    $Q''$ defined by $Q''_{i,b}:= S_{g_i,b}$ for all $i\in[r]$ and all $b\in B$
    is a quantum homomorphism $\rel R\qto (B;R_B)$.
    Thus, $\prod_{i=1}^rQ''_{i,b_i}=0$ and therefore $\prod_{i=1}^r S_{g_i,b_i}=0$.
    Note that $S_{g_i,b_i}$ is by definition $\sum_{a\in A} Q_{a,b_i}\otimes Q'_{g_i,a} = Q_{a_i,b_i}\otimes\id$.
    Thus, $(\prod_{i=1}^r Q_{a_i,b_i})\otimes\id = 0$, from which we obtain that $\prod_{i=1}^r Q_{a_i,b_i}=0$.

    We now turn to~\eqref{quantum_homo_2}.
    Let $a_1,a_2$ appear together in a tuple in $R_A$ at positions $i$ and $j$ and let $b_1,b_2\in B$ be arbitrary.
    Let $Q'\colon\GG\qto \rel A$ be such that $Q'_{g_i,a_1}=Q'_{g_j,a_2}=\id$, whose existence is asserted by~\eqref{itm:qgadget-extension}.
    Again, let $S\colon\GG\qto\rel B$ be the composition $Q\bullet Q'$ and $Q''\colon\rel R\qto (B;R_B)$ be the quantum homomorphism given by~\eqref{itm:qgadget-restriction}.
    We get 
    $0=[Q''_{i,b_1},Q''_{j,b_2}] = [S_{g_i,b_1},S_{g_j,b_2}] = [Q_{a_1,b_1}\otimes\id, Q_{a_2,b_2}\otimes \id]=[Q_{a_1,b_1},Q_{a_2,b_2}]\otimes\id$
    and therefore $[Q_{a_1,b_1},Q_{a_2,b_2}]=0$.
\end{proof}

The next step is to make q-definitions more expressive by allowing them to act on powers of the initial structure. This results in the notion of q-constructions, which is the quantum counterpart of classical pp-constructions.

\begin{definition}
\label{defn_q_construction}
Let $\A$ be a $\sigma$-structure and let $\B$ be a $\rho$-structure. We say that $\A$ \textit{q-constructs} $\B$ if there exists a $\rho$-structure $\C$ such that 
\begin{itemize}
    \item[$(i)$] the domain of $\C$ is $A^d$ for some $d\in\N$;
    \item[$(ii)$] each relation of $\rel C$ (of arity, say, $r$) is q-definable in $\A$ (identifying $C^r=(A^d)^r$ with $A^{dr}$);
    \item[$(iii)$] $\B$ and $\C$ are quantum-homomorphically equivalent.
\end{itemize}
\end{definition}

We now show that q-constructions give rise to homomorphisms between quantum polymorphism minions---and thus, via~\Cref{thm_main_qPol_homo_reductions}, to reductions between qCSPs. Namely, we prove the following.

\qconstructionimpliesqPolhom*

It turns out that the key to proving this is to show that the ``vectorisation step'' in the second condition of~\Cref{defn_q_construction} yields a minion homomorphism between the corresponding quantum polymorphism minions (and, thus, a reduction between the corresponding quantum CSPs). While for classical CSPs this fact is essentially obvious, in the case of $\qCSP$ one might expect that an additional commutativity gadget is needed to quantise this part of the classical reduction. We shall prove that, in fact, no commutativity gadget is needed.

We start by proving a result about abstract minion homomorphisms. 
Given a minion $\minion M$ and an $n$-ary element $Q\in\minion M^{(n)}$,
we say that a coordinate $j\in[n]$ is \emph{inessential} for $Q$ if there exists an integer $m\in\N$, a map $\sigma\colon[m]\to[n]$, and an element $Q'\in\minion M^{(m)}$ such that 
\begin{itemize}
    \item $Q=Q'_{/\sigma}$, and
    \item $j$ is not in the image of $\sigma$.
\end{itemize}
(cfr. e.g.~\cite[Remark~41]{cz23soda:minions}). A coordinate is \emph{essential} for $Q$ otherwise.
    An \emph{essential part} of $Q$ is a minor $Q_{/\tau}$ obtained via any function $\tau\colon[n]\to[n']$ that is a bijection when restricted to the essential coordinates of $Q$, and such that all coordinates of $Q_{/\tau}$ are essential. This implies in particular that $\tau$ must be surjective.
    Suppose that $Q_{/\tau}$ is an essential part of $Q$, and take a map $\tau'\colon[n']\to[n]$ such that $j=(\tau'\circ\tau)(j)$ holds for every essential coordinate $j$ of $Q$. One readily checks that, in this case, it must hold that $Q=(Q_{/\tau})_{/\tau'}$.
    Such a $\tau'$ exists given that $\tau$ is bijective when restricted to the essential coordinates of $Q$.

    \begin{lemma}\label{minion-essential}
    Let $\minion M,\minion N$ be minions and let $\xi\colon\minion M\to\minion N$ be a partial minion homomorphism defined on every $f\in\minion M$ that only has essential coordinates.
    Then $\xi$ extends to a minion homomorphism $\minion M\to\minion N$.
\end{lemma}
\begin{proof}
    Suppose $\xi$ is defined for elements with only essential coordinates, and let $f\in\minion M$ be arbitrary.
    Let $f_{/\tau}$ be an essential part of $f$, and let $\tau'$ be such that $\tau\circ\tau'=\id$.
    Let $\zeta$ be the extension of $\xi$ defined by $\zeta(f):=(\xi(f_{/\tau}))_{/\tau'}$. Since $f_{/\tau}$ only has essential coordinates, this definition is well posed.
    We prove simultaneously the following two claims: The definition of $\zeta$ does not depend on the choice of an essential part of $f$, and $\zeta$ is a minion homomorphism.
    To this end, suppose that $g=f_{/\sigma}$ holds.
    Let $g_{/\tau_g}$ be an essential part of $g$, and let $f_{/\tau_f}$ be an essential part of $f$.
    Let $\tau'_g$ and $\tau'_f$ be such that $\tau'_g\circ\tau_g=\id$ and $\tau'_f\circ\tau_f=\id$ hold when restricted to the essential coordinates of $g$ and $f$, respectively.
    Then we have that $g_{/\tau_g} = f_{/\tau_g\circ\sigma} = \left[(f_{/\tau_f})_{/\tau'_f}\right]_{/\tau_g\circ\sigma}=(f_{/\tau_f})_{/\tau_g\circ\sigma\circ\tau'_f}$.
    It follows that
    \begin{align*}
        \zeta(g) &= (\xi(g_{/\tau_g}))_{/\tau'_g} = \left[\xi(f_{/\tau_f})_{/\tau_g\circ\sigma\circ\tau'_f}\right]_{/\tau'_g}= \xi(f/_{\tau_f})_{/\sigma\circ\tau'_f}
        =(\xi(f_{/{\tau_f}})_{/\tau'_f})_{/\sigma} = \zeta(f)_{/\sigma}.
    \end{align*}
    This proves that $\zeta$ is a minion homomorphism, and that it is well defined by taking $g=f$ and $\sigma=\id$ in this claim, with potentially two different essential parts $\tau_f,\tau_g$.
\end{proof}

\begin{proof}[Proof of~\Cref{q_construction_implies_qPol_hom}]
    For ease of notation, we assume here that $\B$ has a single relation with symbol $R$; the proof in the case of multiple relations is entirely analogous.
    Let $r$ be the arity of $R$.
    Let $\C=(A^d;R^{\C})$ be as in the definition of a q-construction, and let $\tilde R\subseteq A^{dr}$ be its vectorisation.
    Thus, $(a_{1,1},\dots,a_{1,d},\dots,a_{r,1},\dots,a_{r,d})\in\tilde R$ if, and only if, $(\tuple a_1,\dots,\tuple a_r)\in R^{\C}$.
    We will establish the chain of minion homomorphisms \[\qPol(\A)\to\qPol(A;\tilde R)\to\qPol(\C)\to\qPol(\B).\]

    The first arrow in this chain follows from~\Cref{galois-connection-oracular}.
    For the last arrow in this chain, let $U\colon\C\qto\B$ and $W\colon\B\qto\C$.
    Defining $\xi\colon\qPol(\C)\to\qPol(\B)$ as $\xi(Q):=(U\bullet Q)*(W,\dots,W)$ gives the required minion homomorphism.

    The only non-trivial case is the arrow $\zeta\colon\qPol(A;\tilde R)\to\qPol(\C)$.
    We have shown in~\Cref{minion-essential} that, to define a minion homomorphism $\zeta\colon\qPol(A;\tilde R)\to\qPol(\C)$, it suffices to define it on those elements $\tilde Q$ that only have essential coordinates.
    Let $\tuple a_0$ be an arbitrary tuple in $A^d$.
    Given $\tilde Q\in\qPol(A;\tilde R)$ of arity $n$ and $\tuple a_1,\dots,\tuple a_n\in A^d$, we define $Q:=\zeta(\tilde Q)$ according to two cases.
    Suppose that there is no $i\in[r]$ such that all $\tuple a_1,\dots,\tuple a_n$ are in the projection of $R^{\C}$ onto its $i$th coordinate.
    Then define $Q_{(\tuple a_1,\dots,\tuple a_n),\tuple a_0}=\id$ (which defines the entire PVM for this input).
    Otherwise, define it to be $Q_{(\tuple a_1,\dots,\tuple a_n),\tuple b}=\prod_{k=1}^d \tilde Q_{(a_{1,k},\dots,a_{n,k}),b_k}$ for all $\tuple b\in A^d$.
    
    We now show that $Q$ consists of PVMs.
    If $\tuple a_1,\dots,\tuple a_n$ are as in the first case, then there is nothing to show.
    So let us assume that $\tuple a_1,\dots,\tuple a_n$ all lie in the $i$th projection of $R^{\C}$.
    Let $\GG$ be a q-definition for $\tilde R$ in $\A$, with its distinguished vertices $g_{i,k}$ for $i\in[r]$ and $k\in[d]$.
    By~\eqref{itm:qgadget-extension}, for $j\in[n]$, there exist homomorphisms $h_j\colon\GG\to\A$ such that $h_j(g_{i,k})=a_{j,k}$ for all $k\in[d]$.
    Let $U=\tilde Q\bullet(h_1\otimes\dots\otimes h_n)\colon\GG\to\A$, which can be explicitly computed by
    $U_{g_{i,k},b}=Q_{(a_{1,k},\dots,a_{n,k}),b}$ for all $b\in A$.
    By~\eqref{itm:qgadget-restriction}, we must have $[U_{g_{i,k},b},U_{g_{i,k'},c}]=0$ for all $k,k'\in[d]$ and $b,c\in A$.
    Thus, we get $[Q_{(a_{1,k},\dots,a_{n,k}),b},Q_{(a_{1,k'},\dots,a_{n,k'}),c}]=0$ for all $k,k'\in[d]$ and $b,c\in A$.
    It follows that all the factors in the definition of $Q_{(\tuple a_1,\dots,\tuple a_n),\tuple b}$ commute, so that the operator is indeed a projection.
    A simple computation shows that $\sum_{\tuple b} Q_{(\tuple a_1,\dots,\tuple a_n),\tuple b} = \id$ holds for all $\tuple a_1,\dots,\tuple a_n\in A^d$, so that $Q$ is indeed a quantum function.

    Next, we show that $Q$ is in $\qPol(\C)$, starting with~\eqref{quantum_homo_1}.
    To this end, let $(\tuple a_{1,j},\dots,\tuple a_{r,j})\in R^{\C}$ for all $j\in[n]$ and let $(\tuple b_1,\dots,\tuple b_r)\not\in R^{\C}$.
    Note that, for all $i\in[r]$, we have $Q_{(\tuple a_{i,1},\dots,\tuple a_{i,n}),\tuple b_i}=\prod_{k=1}^d\tilde Q_{(a_{i,1,k},\dots,a_{i,n,k}),b_k}$ (i.e., corresponding to the second case of the definition of $Q$), since $\tuple a_{i,1},\dots,\tuple a_{i,n}$ all belong to the $i$th projection of $R$.
    Thus, we obtain
    \begin{align*}
        \prod_{i=1}^r Q_{(\tuple a_{i,1},\dots,\tuple a_{i,n}),\tuple b_i} &= \prod_{i=1}^r\prod_{k=1}^d \tilde Q_{(a_{i,1,k},\dots,a_{i,n,k}),b_k}=0,
    \end{align*}
    where the second equality follows from~\eqref{quantum_homo_1} applied to $\tilde Q$, since every $dr$-tuple of the form
    $$(a_{1,j,1},\dots,a_{1,j,d},\dots,a_{r,j,1},\dots,a_{r,j,d})$$ is in $\tilde R$ for $j\in[n]$, while the $dr$-tuple $(b_{1,1},\dots,b_{r,d})$ is not.

    We now turn to~\eqref{quantum_homo_2}.
    Using the same notation as in the previous paragraph, we aim to show that
    \begin{align}
    \label{eqn_1531_0104}
    [Q_{(\tuple a_{i,1},\dots,\tuple a_{i,n}),\tuple b}, Q_{(\tuple a_{i',1},\dots,\tuple a_{i',n}),\tuple c}]=0 \end{align}
    for all $i,i'\in[r]$ and all $\tuple b,\tuple c\in A^d$.
    Using the same argument as above, we note that the PVMs for $Q$ are defined as in the second case, since all the tuples belong to the $i$th (resp.\ $i'$th) projection of $R^{\C}$.
    Thus,~\eqref{eqn_1531_0104} implies 
    \[ \left[\prod_{k=1}^d Q_{(a_{i,1,k},\dots,a_{i,n,k}),\tuple b}, \prod_{k'=1}^d Q_{{(a_{i',1,k'},\dots,a_{i',n,k'}),\tuple c}}\right]=0.\]
    We now show that any two factors in this expression commute, for $i,i'\in[r]$ and $k,k'\in[d]$.
    Indeed, by~\eqref{itm:qgadget-extension}, there is for every $j\in[n]$ a homomorphism $h_j\colon\GG\to\A$ such that, for all $i''\in[r]$ and all $k''\in[d]$, it holds that $h_j(g_{i'',k''}) = a_{i'',j,k''}$.
    Define $U:=\tilde Q\bullet(h_1\otimes\dots\otimes h_n)$, so that
    $U_{g_{i'',k''},c} = \tilde Q_{(a_{i'',1,k''},\dots,a_{i'',n,k''}),c}$.
    In particular, for $(i'',k'')=(i,k)$ and $(i'',k'')=(i',k')$, we obtain by~\eqref{itm:qgadget-restriction} the desired claim.
    This concludes the proof of~\eqref{quantum_homo_2}.

    We have now established that $\zeta$ takes values in $\qPol(\C)$, and it remains to prove that it is a minion homomorphism when restricted to polymorphisms having only essential coordinates.
    Suppose that $\tilde W=\tilde Q_{/\sigma}$ holds in $\qPol(A;\tilde R)$, where both $\tilde W$ and $\tilde Q$ only have essential coordinates.
    Let $n$ be the arity of $W$ and $m$ be the arity of $Q$.
    Take $\tuple a_1,\dots,\tuple a_n\in A^d$ and let $\tuple b\in A^d$.
    If there exists an index $i\in[r]$ such that all $\tuple a_1,\dots,\tuple a_n$ belong to the $i$th projection of $R^{\C}$, then the same is true in particular for $\tuple a_{\sigma(1)},\dots,\tuple a_{\sigma(m)}$.    
    Thus, the definition of $\zeta$ gives
    \begin{align*} \zeta(\tilde W)_{(\tuple a_1,\dots,\tuple a_n),\tuple b}=\prod_{k=1}^d \tilde W_{(a_{1,k},\dots,a_{n,k}),b_k}
    =\prod_{k=1}^d \tilde Q_{(a_{\sigma(1),k},\dots,a_{\sigma(m),k}),b_k}=\zeta(\tilde Q)_{(\tuple a_{\sigma(1)},\dots,\tuple a_{\sigma(m)}),\tuple b}.
    \end{align*}
    Suppose otherwise that $\tuple a_1,\dots,\tuple a_n$ are not all in the same projection of $R^{\C}$.
    Then the same must be true for $\tuple a_{\sigma(1)},\dots,\tuple a_{\sigma(m)}$, since $\sigma$ is surjective due to $\tilde W$ having only essential coordinates.
    Then $\zeta(\tilde W)_{(\tuple a_1,\dots,\tuple a_n),\tuple b}$ is the identity projector if $\tuple b=\tuple a_0$, and is $0$ otherwise.
    The same is true for $\zeta(\tilde Q)_{(\tuple a_{\sigma(1)},\dots,\tuple a_{\sigma(m)}),\tuple b}$, and therefore $\zeta(\tilde Q)_{/\sigma}$ is indeed equal to $\zeta(\tilde W)$.
\end{proof}

As an immediate consequence of the above result, we obtain the following.

\begin{corollary}
\label{lem_q_constructions_means_reudction}
    If $\A$ q-constructs $\B$, then $\qCSP(\rel B)$ reduces to $\qCSP(\rel A)$ in logspace.%
\end{corollary}

\begin{proof}
    The result directly follows from~\Cref{q_construction_implies_qPol_hom} and~\Cref{thm_main_qPol_homo_reductions}.
\end{proof}

\subsection{Commutativity gadgets}
\label{subsec_body_commutativity_gadgets}
It is not hard to show that every q-definition yields a classical pp-definition.
Indeed, considering~\eqref{itm:qgadget-extension} for quantum homomorphisms with a Hilbert space of dimension 1 (which corresponds to classical  homomorphisms), we find that the tuples in $R$ are in one-to-one correspondence with the restrictions of homomorphisms $\GG\to\rel A$ to $\{g_1,\dots,g_r\}$.
However, not every pp-definition is a q-definition. 
It turns out that one can build a q-definition out of a pp-definition in a ``uniform'' way by pairing it with an extra construction making sure that the process of replacing constraints with gadgets preserves the existence of quantum homomorphisms. This is the notion of commutativity gadgets from~\cite{Ji}, which we define below. Note that, in our terminology, a commutativity gadget is simply a q-definition for the complete binary relation. 

\begin{definition}\label{comm-gadget}
    Fix a $\sigma$-structure $\rel A$.
    A \emph{commutativity gadget} for $\rel A$ is a q-definition for the relation $A^2$.
    In other words, a commutativity gadget is a $\sigma$-structure $\GG$ with two distinguished elements $u,v\in G$ satisfying the following properties:
    \begin{itemize}
        \item[$(\mathbf{c_1})$\labeltext{$\mathbf{c_1}$}{itm:commgadget-extension}] For any quantum function $Q\colon\{u,v\}\qto A$ such that $Q_{u,a}$ and $Q_{v,b}$ commute for all $a,b\in A$, %
        there exists a quantum homomorphism $Q'\colon\GG\qto\rel A$ extending $Q$.
        \item[$(\mathbf{c_2})$\labeltext{$\mathbf{c_2}$}{itm:commgadget-restriction}] For every quantum homomorphism $Q\colon\GG\qto\rel A$ and all $a,b\in A$, it holds that $[Q_{u,a},Q_{v,b}]=0$.
    \end{itemize}
\end{definition}

\begin{remark}
\label{rem_algebraic_gadgets}
	In~\cite{Zeman}, an \emph{algebraic commutativity gadget} is defined by replacing~\eqref{itm:commgadget-extension} by the following condition:
    \begin{itemize}
        \item[$(\mathbf{c_1'})$\labeltext{$\mathbf{c_1'}$}{itm:commgadget-extension_zeman}] For every $a,b\in A$, there exists a quantum homomorphism $Q\colon\GG\qto\rel A$ such that $Q_{u,a}=\id$ and $Q_{v,b}=\id$.
    \end{itemize}
    Note that this is a special case of~\Cref{itm:commgadget-extension} and, thus, any commutativity gadget is an algebraic commutativity gadget.
	While this appears to be a strictly weaker requirement than the one in~\Cref{comm-gadget}, we shall prove below that the existence of an algebraic commutativity gadget in the sense of~\cite{Zeman} is equivalent to the existence of a commutativity gadget in our sense.
\end{remark}
If $\rel A$ has a commutativity gadget, then we have a natural way of obtaining a complete and sound reduction from $\qCSP(\rel B)$ to $\qCSP(\rel A)$ out of a classical gadget: For every pair of variables $x,y$ in  a relation in an instance of $\qCSP(\rel B)$, glue to $x,y$ a copy of the commutativity gadget, identifying $x$ with $u$ and $y$ with $v$. Condition~\eqref{itm:commgadget-restriction} in~\Cref{comm-gadget} forces the associated projectors to commute in every quantum homomorphism to $\rel A$, and condition~\eqref{itm:commgadget-extension} tells us that no additional constraint is forced upon $x,y$ by gluing the commutativity gadget.
We can view this fact at the relational level, as follows.
\begin{proposition}
\label{prop_pp_plus_comm_equals_q}
    Let $\rel A,\rel B$ be structures such that $(i)$ $\A$ has a commutativity gadget, and $(ii)$ $\A$ pp-defines $\B$.
    Then $\A$ q-defines $\B$.
\end{proposition}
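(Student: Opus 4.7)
The plan is to exhibit, for each relation $S := R^\B \subseteq A^r$ of $\B$, a structure $\rel G$ satisfying both conditions of~\Cref{qgadget}, by gluing many copies of the commutativity gadget of $\A$ onto the classical pp-definition gadget for $S$. Let $(\rel X_\varphi, x_1, \dots, x_r)$ be a pp-definition of $S$ over $\A$, and let $(\rel H, u, v)$ be a commutativity gadget for $\A$. Build $\rel G$ by starting from $\rel X_\varphi$ and, for every unordered pair $\{y, z\}$ of distinct vertices of $\rel X_\varphi$, gluing a fresh copy of $\rel H$ that identifies $u$ with $y$ and $v$ with $z$; set $g_i := x_i$ for $i\in[r]$.

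For~\eqref{itm:qgadget-restriction}, let $Q \colon \rel G \qto \A$ be a quantum homomorphism. Applying~\eqref{itm:commgadget-restriction} to each glued copy of $\rel H$ (whose restriction of $Q$ is itself a quantum homomorphism $\rel H \qto \A$) yields $[Q_{y,a}, Q_{z,b}] = 0$ for all vertices $y, z$ of $\rel X_\varphi$ and all $a, b \in A$. Hence $Q|_{\rel X_\varphi}$ is non-contextual, so by the spectral-theorem fact recalled in~\Cref{subsec_prelimns_quantum_CSPs} we may decompose $Q|_{\rel X_\varphi} = h_1 \oplus \dots \oplus h_d$ with each $h_k \colon \rel X_\varphi \to \A$ a classical homomorphism. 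The pp-duality between $S$ and $(\rel X_\varphi, x_1, \dots, x_r)$ then ensures $(h_k(x_1), \dots, h_k(x_r)) \in S$ for every $k$. Therefore $Q|_{\{g_1, \dots, g_r\}}$ is a direct sum of classical homomorphisms $[r] \to A$ whose image tuples lie in $S$, which is exactly a (non-contextual) quantum homomorphism $\rel R \qto \A_S$.

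For~\eqref{itm:qgadget-extension}, let $Q \colon \rel R \qto \A_S$. Since all $r$ vertices of $\rel R$ appear in its unique constraint tuple, $Q$ is non-contextual, so $Q = f_1 \oplus \dots \oplus f_d$ with each $f_k \colon [r] \to A$ satisfying $(f_k(1), \dots, f_k(r)) \in S$. By pp-duality, each $f_k$ extends to a classical homomorphism $e_k \colon \rel X_\varphi \to \A$; let $\tilde Q := e_1 \oplus \dots \oplus e_d$, which is a non-contextual quantum homomorphism $\rel X_\varphi \qto \A$ coinciding with $Q$ on $\{x_1, \dots, x_r\}$. Non-contextuality makes the hypothesis of~\eqref{itm:commgadget-extension} hold on each pair $\{y,z\}$, so we may invoke~\eqref{itm:commgadget-extension} on each glued copy of $\rel H$ on the same Hilbert space as $\tilde Q$. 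Since distinct copies share only vertices of $\rel X_\varphi$, their extensions assemble coherently into a single quantum homomorphism $Q' \colon \rel G \qto \A$ restricting to $Q$ on the distinguished vertices.

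The main technical point is~\eqref{itm:qgadget-restriction}: for a generic quantum homomorphism from the pp-gadget $\rel X_\varphi$ to $\A$, contextuality could in principle make the product of projectors at the distinguished vertices non-zero even for tuples outside $S$. The critical design choice of gluing a commutativity gadget between \emph{every} pair of vertices of $\rel X_\varphi$---not only the distinguished ones---forces $Q|_{\rel X_\varphi}$ to be globally non-contextual, after which the classical pp-correspondence takes over verbatim.
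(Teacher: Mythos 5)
Your proof is correct and follows essentially the same route as the paper's: glue a copy of the commutativity gadget onto every pair of vertices of the classical pp-gadget, use~\eqref{itm:commgadget-restriction} to force global non-contextuality and then invoke the classical pp-duality for~\eqref{itm:qgadget-restriction}, and conversely decompose the (automatically non-contextual) quantum homomorphism $\rel R\qto\A_S$ into classical homomorphisms, extend them classically, and patch on the commutativity gadgets via~\eqref{itm:commgadget-extension}. No substantive difference from the paper's argument.
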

\begin{proof}
Let $S=R^\B$ be a relation of $\B$ of some arity $r$, and let $\GG$ be a classical pp-definition for $R^\B$ in $\rel A$, with distinguished vertices $g_1,\dots,g_r$.
    Let $\rel H$ be a commutativity gadget for $\rel A$, with distinguished vertices $u,v$.
    We let $\GG'$ be the structure that initially consists of the disjoint union of $\GG$ and a copy $\rel H_{x,y}$ of $\rel H$ for each pair of distinct vertices $x,y$ in $G$, and in which we identify each such pair $x,y$ with the vertices $u,v$ from the copy $\rel H_{x,y}$ of $\rel H$.

    We show that $\GG'$ (with distinguished vertices $g_1,\dots,g_r$) is a q-definition for $S$ in $\rel A$.
    Let $Q\colon\rel R\qto (A;S)$ be a quantum homomorphism.
    In particular, $Q$ is non-contextual and can therefore be written as $h_1\oplus\dots\oplus h_d$
    for some homomorphisms $h_k\colon\rel R\to (A;S)$.
    Since $\GG$ is a pp-definition for $S$ in $\rel A$, it follows that there exist homomorphisms $f_k\colon \GG\to\rel A$ such that $f_k(g_i)=h_k(i)$ for all $i\in[r]$.
    Then $Q'=\bigoplus_{k\in[d]} f_k$ is a non-contextual quantum homomorphism $\GG\qto\rel A$.
    In particular, $Q'_{x}$ and $Q'_{y}$ commute for all $x,y\in G$, and it follows from~\eqref{itm:commgadget-extension} that for all $x,y$ there exists a quantum homomorphism
    $Q^{(x,y)}\colon \rel H\qto\rel A$ such that $Q^{(x,y)}_{u}=Q'_{x}$ and $Q^{(x,y)}_{v}=Q'_{y}$.
    We thus obtained a quantum homomorphism $\GG'\qto\rel A$, which proves~\eqref{itm:qgadget-extension}.

    Let now $Q\colon\GG'\qto\rel A$ be a quantum homomorphism.
    For each $x,y\in G$, we have by~\eqref{itm:commgadget-restriction} that $Q_x$ and $Q_y$ commute.
    Thus, $Q$ restricts to a quantum homomorphism $\GG\qto\rel A$ that is non-contextual and can thus be written as $h_1\oplus\dots\oplus h_d$, each $h_k$ being a classical homomorphism $\GG\to\rel A$.
    Since $\GG$ is a classical pp-definition for $R^\B$, we obtain that $(h_k(g_1),\dots,h_k(g_r))\in R^\B$
    for all $k\in[d]$.
    Thus, the quantum function $Q'$ defined by $Q'_{i,a} = \bigoplus_k (h_k)_{g_i,a}$
    for $i\in [r],a\in A$
    is a quantum homomorphism $\rel R\qto (A;S)$.
\end{proof}

Next, we prove the analogue of~\Cref{prop_pp_plus_comm_equals_q} in the more general case of q-constructions.

\begin{restatable}{theorem}{propppconstructionpluscommequalsqconstruction}
\label{prop_pp_construction_plus_comm_equals_q_construction}
    Let $\rel A,\rel B$ be structures such that $(i)$ $\A$ has a commutativity gadget, and $(ii)$ $\A$ pp-constructs $\B$.
    Then $\A$ q-constructs $\B$.
\end{restatable}

\begin{proof}
    It is well known that condition $(ii)$ is equivalent to the fact that $\B$ is homomorphically equivalent to a pp-power $\C$ of $\A$ (for instance, see~\cite{BKW17,BBKO21}). Let $\D$ be the structure with  domain $D=A$ obtained by identifying $(A^d)^r$ with $A^{dr}$ for each $r$-ary relation of $\C$, as in~\Cref{defn_q_construction}. Since $\C$ is a pp-power of $\A$, we have that $\D$ is pp-definable from $\A$. Using that $\A$ admits a commutativity gadget, it follows from~\Cref{prop_pp_plus_comm_equals_q} that $\D$ is in fact q-definable from $\A$. But this precisely means that $\A$ q-constructs $\B$, and we are done. 
\end{proof}

\subsection{Complexity reductions via non-contextual quantum polymorphisms}
\label{subsec_body_quantum_pol_comm_gadgets}

We now have all ingredients to classify the languages for which commutativity gadgets exist in terms of their quantum polymorphism minion---thus proving the following theorem, which is an extended version of~\Cref{gadget-characterization_overview_friendly}.

\begin{theorem}\label{gadget-characterization}
Let $\rel A$ be a relational structure. The following are equivalent:
\begin{enumerate}
    \item $\rel A$ has a commutativity gadget.
    \item $\rel A$ has an algebraic commutativity gadget.
    \item $\rel A^n$ is an (algebraic) commutativity gadget for all large enough $n\in\N$.
    \item $\qPol(\rel A)=\qcPol(\rel A)$.
\end{enumerate}
\end{theorem}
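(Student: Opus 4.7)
My plan is to close the cyclic chain $(2)\Rightarrow(4)\Rightarrow(3)\Rightarrow(1)\Rightarrow(2)$. Two of these implications are essentially tautological: $(1)\Rightarrow(2)$ is \Cref{rem_algebraic_gadgets}, and $(3)\Rightarrow(1)$ is immediate since $\rel A^n$ is itself a commutativity gadget for $\rel A$. The substance therefore lies in $(2)\Rightarrow(4)$ and $(4)\Rightarrow(3)$.

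For $(2)\Rightarrow(4)$, let $\rel G$ with distinguished vertices $g_1,g_2$ be an algebraic commutativity gadget for $\rel A$, let $R\colon\rel A^n\qto\rel A$ be a quantum polymorphism, and fix $s,t\in A^n$ and $a,b\in A$; the goal is $[R_{s,a},R_{t,b}]=0$. I would apply~\eqref{itm:commgadget-extension_zeman} coordinatewise: for each $i\in[n]$, pick a quantum homomorphism $Q^{(i)}\colon\rel G\qto\rel A$ with $Q^{(i)}_{g_1,s_i}=\id$ and $Q^{(i)}_{g_2,t_i}=\id$. The PVM condition then forces $Q^{(i)}_{g_1,c}=0$ for $c\neq s_i$, and analogously at $g_2$. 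Iterating \Cref{tensor-homomorphisms}, the product $Q:=Q^{(1)}\otimes\cdots\otimes Q^{(n)}$ is a quantum homomorphism $\rel G\qto\rel A^n$ with $Q_{g_1,s}=\id$, $Q_{g_2,t}=\id$, and vanishing elsewhere at $g_1,g_2$. Composing with $R$ via \Cref{composition-homomorphisms} yields a quantum homomorphism $R\bullet Q\colon\rel G\qto\rel A$ satisfying $(R\bullet Q)_{g_1,a}=R_{s,a}\otimes\id$ and $(R\bullet Q)_{g_2,b}=R_{t,b}\otimes\id$. Property~\eqref{itm:commgadget-restriction} applied to $R\bullet Q$ forces these to commute, so $[R_{s,a},R_{t,b}]=0$. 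Hence every quantum polymorphism of $\rel A$ is non-contextual and, by the spectral theorem, lies in $\qcPol(\rel A)$; the reverse inclusion is noted in the excerpt.

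For $(4)\Rightarrow(3)$, I claim $\rel A^n$ is a commutativity gadget whenever $n\geq|A|^2$. Pick $u,v\in A^n$ so that the pairs $(u_i,v_i)$, $i\in[|A|^2]$, enumerate $A^2$, with any remaining coordinates filled arbitrarily. Condition~\eqref{itm:commgadget-restriction} holds for free from $\qPol(\rel A)=\qcPol(\rel A)$, since a direct sum of classical polymorphisms is non-contextual. For~\eqref{itm:commgadget-extension}, suppose $Q\colon\{u,v\}\qto A$ satisfies $[Q_{u,a},Q_{v,b}]=0$ for all $a,b$. The joint spectral theorem supplies a PVM $\{P_{a,b}\}_{(a,b)\in A^2}$ on the Hilbert space $H$ of $Q$ with $\sum_b P_{a,b}=Q_{u,a}$ and $\sum_a P_{a,b}=Q_{v,b}$. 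For each $(a,b)$ with $P_{a,b}\neq 0$, choose a coordinate $i(a,b)$ with $(u_{i(a,b)},v_{i(a,b)})=(a,b)$, so that the coordinate projection $\pi_{i(a,b)}\colon\rel A^n\to\rel A$ is a classical polymorphism. Setting $Q'_{w,c}:=\sum_{(a,b)\colon w_{i(a,b)}=c}P_{a,b}$ realises $Q'$ as the direct sum of the classical polymorphisms $\pi_{i(a,b)}$ along the decomposition $H=\bigoplus_{(a,b)}\mathrm{Im}(P_{a,b})$. By \Cref{sum-homomorphisms}, $Q'$ is a quantum polymorphism, and a direct check yields $Q'_{u,c}=Q_{u,c}$ and $Q'_{v,c}=Q_{v,c}$.

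The main obstacle I anticipate is the spectral reconstruction in $(4)\Rightarrow(3)$: one has to promote each classical polymorphism $\pi_{i(a,b)}$---a priori a quantum homomorphism on a one-dimensional Hilbert space---to one acting on $\mathrm{Im}(P_{a,b})$, so that the direct sum really lives on the original Hilbert space of $Q$, and to handle degenerate blocks with $P_{a,b}=0$ cleanly. The compositional step in $(2)\Rightarrow(4)$ is conceptually clean and mainly a bookkeeping exercise with \Cref{tensor-homomorphisms} and \Cref{composition-homomorphisms}.
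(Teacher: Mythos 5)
Your proposal is correct and follows essentially the same route as the paper: the implication $(2)\Rightarrow(4)$ is the paper's \Cref{commgadget-implies-closure} (tensoring the gadget homomorphisms given by \eqref{itm:commgadget-extension_zeman}, composing with the quantum polymorphism, and invoking \eqref{itm:commgadget-restriction}), while $(4)\Rightarrow(3)$ is the paper's \Cref{obtaining-commgadget} specialised to the projection generators of $A^2$, with the same spectral-decomposition-and-extend-by-polymorphisms step. The remaining implications are handled exactly as in the paper.
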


%In the rest of this section, we provide the proofs of the results we stated.
%

%In the rest of this subsection, we give a proof of~\Cref{gadget-characterization}. 
%
The next definition shall be useful.

\begin{definition}
    Let $\A$ be a structure. The pairs $(a_1,b_1),\dots,(a_n,b_n)\in A^2$ are called \textit{generators of $A^2$ in $\Pol(\A)$} if for every pair $(c,d)\in A^2$, there exists $g\in\Pol(\rel A)$ of arity $n$ such that $g(a_1,\dots,a_n)=c$ and $g(b_1,\dots,b_n)=d$.
\end{definition}
Note that, if $A$ is finite (as is always the case in the current paper), $A^2$ admits a family of $n=|A|^2$ generators. Indeed, we can simply let $g$ be the $i$-th projection $\rel A^n\to \rel A$, with $i\in [n]$ being the index of the pair $(c,d)$. However, $n$ could be smaller for certain $\rel A$.
This is for example the case for complete graphs (where $n=2$ is enough) or for undirected cycles (where $m$ is enough for an odd cycle of length $2m+1$).
We point out that there are many cases of infinite structures $\rel A$ where $A^2$ is finitely generated under $\Pol(\rel A)$; this is for example the case for the class of $\omega$-categorical structures, which has been widely studied in the context of classical constraint satisfaction problems.
\begin{proposition}\label{obtaining-commgadget}
    Let $\rel A$ be a (not necessarily finite) structure such that $A^2$ admits a family of $n$ generators under $\Pol(\rel A)$.
    Suppose also that $\qPol(\rel A)=\qcPol(\rel A)$.
    Then $\rel A^n$ is a commutativity gadget for $\A$.
\end{proposition}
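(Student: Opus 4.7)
\medskip

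\textbf{Plan.} The plan is to exhibit distinguished vertices $u,v$ in $\rel A^n$ witnessing the two conditions of~\Cref{comm-gadget}. Let $(a_1,b_1),\dots,(a_n,b_n)\in A^2$ be the given generating family for $A^2$ under $\Pol(\rel A)$, and set $u=(a_1,\dots,a_n)$ and $v=(b_1,\dots,b_n)$, viewed as elements of $A^n$. Condition~\eqref{itm:commgadget-restriction} then follows immediately from the hypothesis: a quantum homomorphism $Q\colon\rel A^n\qto\rel A$ is precisely an element of $\qPol(\rel A)=\qcPol(\rel A)$, so by~\Cref{observation} it is non-contextual, hence $[Q_{u,a},Q_{v,b}]=0$ for every $a,b\in A$.

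The substance lies in~\eqref{itm:commgadget-extension}. Given a quantum function $Q\colon\{u,v\}\qto_H A$ whose PVMs $(Q_{u,a})_{a\in A}$ and $(Q_{v,b})_{b\in A}$ commute, form the \emph{joint PVM} indexed by $A^2$ by setting $P_{c,d}:=Q_{u,c}Q_{v,d}$. Commutativity and self-adjointness of the two PVMs make each $P_{c,d}$ a projector; pairwise orthogonality and $\sum_{(c,d)}P_{c,d}=\id$ follow from expanding $\sum_{c}Q_{u,c}\sum_d Q_{v,d}=\id$. For every $(c,d)\in A^2$, pick a classical polymorphism $g_{c,d}\in\Pol^{(n)}(\rel A)$ with $g_{c,d}(u)=c$ and $g_{c,d}(v)=d$ (which exists by the generator hypothesis), and define, for $x\in A^n$ and $e\in A$,
\begin{align*}
    Q'_{x,e}\;:=\;\sum_{(c,d)\in A^2}\mathbf 1[g_{c,d}(x)=e]\,P_{c,d}.
\end{align*}

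The verification that $Q'\colon\rel A^n\qto_H \rel A$ is a quantum homomorphism will proceed in four routine steps, all exploiting the orthogonality $P_{c,d}P_{c',d'}=\delta_{(c,d),(c',d')}P_{c,d}$. First, each $Q'_{x,e}$ is a sum of mutually orthogonal projectors, hence a projector; second, $\sum_e Q'_{x,e}=\sum_{(c,d)}P_{c,d}=\id$, so $Q'$ is a quantum function; third, for $R\in\sigma$ of arity $r$, $\tuple x\in R^{\rel A^n}$, and $\tuple e\notin R^{\rel A}$, orthogonality collapses $\prod_i Q'_{x_i,e_i}$ to $\sum_{(c,d)}\bigl(\prod_i\mathbf 1[g_{c,d}(x_i)=e_i]\bigr)P_{c,d}$, and since $g_{c,d}$ is a polymorphism the bracket is always zero, giving~\eqref{quantum_homo_1}; fourth, the same collapse shows that $Q'_{x,e}Q'_{y,e'}$ is symmetric in $((x,e),(y,e'))$, giving~\eqref{quantum_homo_2}. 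Finally, the extension property is direct: $Q'_{u,e}=\sum_{(c,d)}\mathbf 1[c=e]\,P_{c,d}=\sum_d P_{e,d}=Q_{u,e}$ using $\sum_d Q_{v,d}=\id$, and symmetrically $Q'_{v,e}=Q_{v,e}$.

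I expect no serious obstacle: the hypothesis $\qPol(\rel A)=\qcPol(\rel A)$ is used only to get~\eqref{itm:commgadget-restriction}, while~\eqref{itm:commgadget-extension} is handled purely by the classical generating property combined with the standard joint-PVM trick for commuting measurements. The only point requiring care is to carry the orthogonality $P_{c,d}P_{c',d'}=0$ through the product in~\eqref{quantum_homo_1}, but once this is noted the entire argument is essentially the quantum analogue of the tautology that a polymorphism of $\rel A^n$ obtained by evaluating a family of classical polymorphisms block-by-block preserves all relations.
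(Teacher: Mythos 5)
Your proof is correct and follows essentially the same route as the paper: the same distinguished vertices $\tuple u,\tuple v$ built from the generators, the hypothesis $\qPol(\rel A)=\qcPol(\rel A)$ used exactly for~\eqref{itm:commgadget-restriction}, and~\eqref{itm:commgadget-extension} obtained by extending block-by-block via classical polymorphisms supplied by the generator property. The only difference is presentational: the paper invokes the spectral theorem to write $Q=h_1\oplus\dots\oplus h_d$ and sets $Q'=g_1\oplus\dots\oplus g_d$, whereas you decompose into the $|A|^2$ blocks of the joint PVM $P_{c,d}=Q_{u,c}Q_{v,d}$ — the same direct-sum construction, just verified more explicitly.
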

\begin{proof}
    Let $(a_1,b_1),\dots,(a_n,b_n)$ be generators for $A^2$ under $\Pol(\A)$.
    Let $\GG$ be $\rel A^n$ and the two distinguished elements be $\tuple a=(a_1,\dots,a_n)$ and $\tuple b=(b_1,\dots,b_n)$.
    
    Every quantum homomorphism $\GG\qto\rel A$ is a quantum polymorphism of $\rel A$ and therefore~\eqref{itm:commgadget-restriction} holds by the assumption that $\qPol(\rel A)=\qcPol(\rel A)$.
    Moreover, by the spectral theorem, every quantum function $Q\colon\{\tuple a,\tuple b\}\qto A$ such that $Q(\tuple a)$ commutes with $Q(\tuple b)$ is of the form $h_1\oplus\cdots \oplus h_d$, where $h_i\colon\{\tuple a,\tuple b\}\to A$ is a classical function.
    Using that $(a_1,b_1),\dots,(a_n,b_n)$ are generators for $A^2$, we know that, for each $i$, there exists $g_i\in\Pol(\rel A)$ such that $g_i(\tuple a)=h_i(\tuple a)$ and $g_i(\tuple b)=h_i(\tuple b)$.
    Then $Q$ extends to a quantum homomorphism $Q':=g_1\oplus\cdots\oplus g_d$, thus proving~\eqref{itm:commgadget-extension}.
\end{proof}

We now prove that the existence of a commutativity gadget guarantees that all quantum polymorphisms are classical. In fact, we prove that the same fact holds even for the algebraic commutativity gadgets from~\Cref{rem_algebraic_gadgets}.

\begin{proposition}\label{commgadget-implies-closure}
    Let $\rel A$ be a structure.
    If $\rel A$ has an algebraic commutativity gadget, then $\qPol(\rel A)=\qcPol(\rel A)$.
\end{proposition}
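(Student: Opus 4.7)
The goal is to show that every $Q\in\qPol(\rel A)$ is non-contextual, which by~\Cref{observation} is equivalent to $Q\in\qcPol(\rel A)$. Fix an $n$-ary $Q\colon\rel A^n\qto\rel A$, tuples $\bx,\by\in A^n$, and elements $a,b\in A$; the problem thus reduces to establishing $[Q_{\bx,a},Q_{\by,b}]=0$.

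The key idea is to transport the commutativity condition from the two distinguished vertices $u,v$ of the gadget $\rel G$ to an arbitrary pair $\bx,\by\in A^n$, by composing $Q$ with a suitable quantum homomorphism $\rel G\qto\rel A^n$ that ``selects'' $\bx$ at $u$ and $\by$ at $v$. Concretely, for each coordinate $i\in[n]$, condition~\eqref{itm:commgadget-extension_zeman} supplies a quantum homomorphism $P^{(i)}\colon\rel G\qto\rel A$ with $P^{(i)}_{u,x_i}=\id$ and $P^{(i)}_{v,y_i}=\id$. I would then form the tensor product $P:=P^{(1)}\otimes\cdots\otimes P^{(n)}$ and verify, by iterated application of~\Cref{tensor-homomorphisms}, that $P$ is a quantum homomorphism $\rel G\qto\rel A^n$. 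Since each family $(P^{(i)}_{u,c})_{c\in A}$ is a PVM with $P^{(i)}_{u,x_i}=\id$, the remaining projectors at $u$ must vanish; hence $P_{u,\bc}=\id$ exactly when $\bc=\bx$ and $P_{u,\bc}=0$ otherwise, and the analogous statement holds at $v$ with $\by$ in place of $\bx$.

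With $P$ in hand,~\Cref{composition-homomorphisms} produces a quantum homomorphism $S:=Q\bullet P\colon\rel G\qto\rel A$. Unfolding~\Cref{composition} collapses the sum to
\begin{align*}
    S_{u,a}=\sum_{\bc\in A^n}Q_{\bc,a}\otimes P_{u,\bc}=Q_{\bx,a}\otimes\id,
\end{align*}
and, symmetrically, $S_{v,b}=Q_{\by,b}\otimes\id$. Since by~\Cref{rem_algebraic_gadgets} an algebraic commutativity gadget still satisfies~\eqref{itm:commgadget-restriction}, we obtain $[S_{u,a},S_{v,b}]=0$, whence $[Q_{\bx,a},Q_{\by,b}]\otimes\id=0$ and therefore $[Q_{\bx,a},Q_{\by,b}]=0$, as required.

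I do not expect a serious obstacle along this route: the conceptual content is concentrated in the tensor-product construction that packages $\bx$ and $\by$ simultaneously into a single quantum homomorphism to $\rel A^n$, and the remaining steps reduce to a bookkeeping verification using the composition and tensor formulas already developed in the paper. The only mild subtlety is that the weaker condition~\eqref{itm:commgadget-extension_zeman} is exactly what one needs here, so the argument applies verbatim to algebraic commutativity gadgets and not only to the stronger notion of~\Cref{comm-gadget}.
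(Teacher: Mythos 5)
Your proposal is correct and follows essentially the same route as the paper's proof: use \eqref{itm:commgadget-extension_zeman} coordinatewise, tensor the resulting gadget homomorphisms into a quantum homomorphism $\rel G\qto\rel A^n$ fixing $\bx$ at $u$ and $\by$ at $v$, compose with $Q$, and read off the commutator from \eqref{itm:commgadget-restriction}. The only cosmetic difference is that you spell out the vanishing of the off-diagonal projectors $P_{u,\bc}$ for $\bc\neq\bx$, which the paper leaves implicit.
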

\begin{proof}
    Let $\GG$ be an algebraic commutativity gadget with distinguished elements $u,v\in G$.
    Let $Q\colon\rel A^n\qto\rel A$ be a quantum polymorphism, and choose two tuples $\tuple a,\tuple b\in A^n$ and two elements $c,d\in A$.
    We want to show that $[Q_{\tuple a,c},Q_{\tuple b,d}]=0$, which is enough to conclude by~\Cref{observation}.

    Using~\eqref{itm:commgadget-extension_zeman} we know that, for each $i\in[n]$, there exists a quantum homomorphism $W^{(i)}\colon\GG\qto\rel A$ such that 
    \begin{align*}
    W^{(i)}_{u,a_i}=W^{(i)}_{v,b_i}=\id.
    \end{align*}
    Consider now the quantum homomorphism $W\colon\GG\qto\rel A^n$ given by the
    tensor product 
    \[W=\bigotimes_{i\in[n]}W^{(i)}\]
    and
    observe that $W_{u,\tuple a}=W_{v,\tuple b}=\id$. Take now the composition \[U=Q\bullet W\colon\GG\qto\A.\] Using~\eqref{itm:commgadget-restriction}, we observe that $U$ satisfies $[U_{u,c},U_{v,d}]=0$.
    However, expanding the composition as per~\Cref{composition},
    we have 
	\[
        U_{u,c} = \sum_{\tuple t\in A^n}Q_{\tuple t,c}\otimes W_{u,\tuple t}
        = Q_{\tuple a,c}\otimes\id
    \] and, similarly, $U_{v,d}=Q_{\tuple b,d}\otimes\id$.
    Thus, 
    \[[Q_{\tuple a,c},Q_{\tuple b,d}]\otimes\id = [Q_{\tuple a,c}\otimes\id, Q_{\tuple b,d}\otimes\id]=[U_{u,c},U_{v,d}]=0,\]
    which implies that $[Q_{\tuple a,c},Q_{\tuple b,d}]=0$, and we are done.
\end{proof}

\begin{proof}[Proof of~\Cref{gadget-characterization}]
    The result is a direct consequence of~\Cref{obtaining-commgadget,commgadget-implies-closure} and the fact that any commutativity gadget is an algebraic commutativity gadget (as observed in~\Cref{rem_algebraic_gadgets}).
\end{proof}

As a straightforward consequence of the above, we obtain the following result on reducibility between quantum CSPs.

\begin{restatable}{theorem}{thmcommgadgetsplusminionhomomorphismsgivesreduction}
\label{thm_comm_gadgets_plus_minion_homomorphisms_gives_reduction}
    Let $\A,\B$ be relational structures such that 
 $\qPol(\A)=\qcPol(\A)$ and $\Pol(\A)\to\Pol(\B)$.    
    Then $\qCSP(\rel B)$ reduces to $\qCSP(\rel A)$ in logspace.
\end{restatable}

\begin{proof}
The fact that $\qPol(\A)=\qcPol(\A)$ is equivalent to $\A$ admitting a commutativity gadget by~\Cref{gadget-characterization}, while the fact that $\Pol(\A)\to\Pol(\B)$ is equivalent to $\A$ pp-constructing $\B$ by~\Cref{minion_homo_q_construction}. Using~\Cref{prop_pp_construction_plus_comm_equals_q_construction}, we deduce that $\A$ q-constructs $\B$. Hence, there exists a minion homomorphism $\qPol(\A)\to\qPol(\B)$ by~\Cref{q_construction_implies_qPol_hom}, and the result follows from~\Cref{thm_main_qPol_homo_reductions}.
\end{proof}

We immediately derive from the above theorem the following useful undecidability criterion for quantum CSPs. As we shall see later, this generalises the known undecidability criteria for quantum CSPs.

\begin{corollary}
\label{cor_np_complete_plus_no_contextual_means_undecidable_ALGEBRAIC}
    Let $\A$ be a structure such that $\Pol(\A)\to\Pol(\operatorname{3SAT})$ and all quantum polymorphisms of $\A$ are non-contextual. Then, $\qCSP(\A)$ is undecidable.
\end{corollary}
\begin{proof}
   Since all quantum polymorphisms of $\A$ are non-contextual, we deduce that $\qPol(\A)=\qcPol(\A)$. 
    Hence, by~\Cref{{thm_comm_gadgets_plus_minion_homomorphisms_gives_reduction}}, there is a logspace (in particular, computable) reduction from $\qCSP(\operatorname{3SAT})$ to $\qCSP(\A)$. The result then follows from the undecidability of $\qCSP(\operatorname{3SAT})$~\cite{ji2021mip,mastel2024two}. 
\end{proof}

The result above can be rephrased in the following, convenient form.

\cornpcompleteplusnocontextualmeansundecidable*
\begin{proof}
    Since $\CSP(\A)\not\in\mathsf{P}$, it follows from the algebraic CSP dichotomy theorem~\cite{Bulatov17:focs,Zhuk17_FOCS} that $\A$ pp-constructs $\operatorname{3SAT}$ and thus, by~\Cref{minion_homo_q_construction}, $\Pol(\A)\to\Pol(\operatorname{3SAT})$. To conclude, we apply~\Cref{cor_np_complete_plus_no_contextual_means_undecidable_ALGEBRAIC}.
\end{proof}

Note that~\Cref{cor_np_complete_plus_no_contextual_means_undecidable} is in fact equivalent to~\Cref{cor_np_complete_plus_no_contextual_means_undecidable_ALGEBRAIC} if $\mathsf{P}\neq\NP$. Indeed, under such hypothesis, $\Pol(\A)\to\Pol(\operatorname{3SAT})$ if, and only if, $\CSP(\A)\not\in\mathsf{P}$. On the other hand, if $\mathsf{P}=\NP$,~\Cref{cor_np_complete_plus_no_contextual_means_undecidable} becomes vacuous (unlike~\Cref{cor_np_complete_plus_no_contextual_means_undecidable_ALGEBRAIC}).

\begin{example}
Recall~\Cref{fig:reduction-C5} in~\Cref{sec_overview}. The figure illustrates the undecidability of $\qCSP(\rel C_5)$ by reducing from quantum 1-in-3-SAT; i.e., from $\qCSP(\A)$ where $\A$ is the language encoding 1-in-3-SAT formulae. Formally, $\A$ is the Boolean structure with a single, ternary relation
\begin{align*}
    R^\A=\{(0,0,1),(0,1,0),(1,0,0)\}.
\end{align*}
The undecidability of quantum 1-in-3-SAT can be obtained by reducing from quantum 3SAT, using for example that $\Pol(\A)$ does not contain the majority polymorphism, see~\cite{culf2024re} and~\Cref{sec_boolean}. 
\end{example}

\subsection{Comparison with known undecidability criteria}
\label{subsec_comparison_TVF}
The following undecidability criterion 
follows by combining~\cite[Lemma~10]{AtseriasKS19} with the undecidability of $\qCSP(\rel{XOR})$ from~\cite{slofstra2019set}, as pointed out in~\cite{paddock2025satisfiability}. (For the formal definition of $\rel{XOR}$, see~\Cref{sec_boolean}.)

\begin{theorem}[\protect{\cite{AtseriasKS19,slofstra2019set,paddock2025satisfiability}}]
\label{thm_atserias_undecidability}
    Let $\A$ be a Boolean relational structure such that $(i)$ its signature contains a binary symbol $R$ for which $R^\A=A^2$, and $(ii)$ $\CSP(\A)\not\in\mathsf{P}$. Then, $\qCSP(\A)$ is undecidable.
\end{theorem}

This result was generalised in~\cite{culf2024re} by replacing the requirement $(i)$ on the existence of a complete binary relation with a weaker structural condition called two-variable (non-)falsifiability, and relaxing the requirement on $\A$ being Boolean. 
Given any relation $S\subseteq A^r$ and two indices $i\neq j\in [r]$, we denote by $S_{i,j}$ the binary projection of $S$ onto the coordinates $i,j$; i.e., 
\begin{align*}
    S_{i,j}=
    \bigcup_{(a_1,\dots,a_r)\in S}\{(a_i,a_j)\}.
\end{align*}

\begin{definition}[\cite{culf2024re}]
    Let $\A$ be a relational structure with signature $\sigma$. We say that $\A$ is \textit{two-variable falsifiable (TVF)} if for each symbol $R\in\sigma$ and each pair $i\neq j\in[\ar(R)]$, it holds that $(R^\A)_{i,j}\neq A^2$. 
\end{definition}

\begin{theorem}[\protect{\cite[Theorem~4.14]{culf2024re}}]
\label{thm_culf_mastel_undecidability}
    Let $\A$ be a relational structure such that $(i)$ $\A$ is not TVF, and $(ii)$ $\CSP(\A)\not\in\mathsf{P}$. Then, $\qCSP(\A)$ is undecidable.\footnote{We point out that, just like for our~\Cref{cor_np_complete_plus_no_contextual_means_undecidable}, condition $(ii)$ in both~\Cref{thm_atserias_undecidability} and~\Cref{thm_culf_mastel_undecidability} can be replaced by the algebraic condition $\Pol(\A)\to\Pol(\operatorname{3SAT})$, which does not trivialise if $\mathsf{P}=\NP$.} 
\end{theorem} 

Clearly, any structure satisfying the requirements of~\Cref{thm_atserias_undecidability} also satisfies the requirements of~\Cref{thm_culf_mastel_undecidability}, so the latter is indeed a generalisation of the former.   
We now prove that our undecidability criterion~\Cref{cor_np_complete_plus_no_contextual_means_undecidable} extends the one in~\Cref{thm_culf_mastel_undecidability}. To that end, we prove the following.

\begin{proposition}
\label{prop_nonTVF_implies_non_contextual}
Let $\A$ be a relational structure. If $\A$ is not TVF, then $\qPol(\A)=\qcPol(\A)$.
\end{proposition}

Note that, in fact,~\Cref{cor_np_complete_plus_no_contextual_means_undecidable} is \textit{strictly} more general than~\Cref{thm_culf_mastel_undecidability}. Indeed, there are multiple examples of structures $\A$ such that $(i)$ $\CSP(\A)$ is NP-complete, $(ii)$ $\qPol(\A)=\qcPol(\A)$, but $(iii)$ $\A$ is TVF. For instance, as we shall see in later sections, the Siggers digraph, any odd cycle, and any clique give examples of such $\A$.

\begin{proof}[Proof of~\Cref{prop_nonTVF_implies_non_contextual}]
    Let $\A$ be a $\sigma$-structure that is not TVF, and let $Q\colon\A^n\qto\A$ be a quantum polymorphism of some arity $n\in\N$. Take two tuples $\ba,\bb\in A^n$. We claim that the PVMs $Q_\ba$ and $Q_\bb$ commute.
        Since $\A$ is not TVF, there exists some symbol $R\in\sigma$ (of arity, say, $r$) and two indices $i\neq j\in [r]$ such that $R^\A$ has full binary projection onto the coordinates $i,j$; i.e., $(R^\A)_{i,j}=A^2$. Therefore, we can assign to each $\ell\in [n]$ a tuple $\textbf{t}^{(\ell)}=(t_1^{(\ell)},\dots,t_r^{(\ell)})\in R^\A$ such that 
        \begin{align}
        \label{eqn_1800_1803}(t_i^{(\ell)},t_j^{(\ell)})=(a_\ell,b_\ell). 
        \end{align}
        Let now $T$ be the $n\times r$ matrix whose rows are $\textbf{t}^{(1)},\textbf{t}^{(2)},\dots,\textbf{t}^{(n)}$. By definition of the categorical power $\A^n$, the $r$-tuple consisting of the list of columns of such matrix belongs to $R^{\A^n}$. This means that any pair of columns of $T$ having distinct indices is adjacent in $\Gaif(\A^n)$. Note now that, because of~\eqref{eqn_1800_1803}, the tuples $\ba$ and $\bb$ occur in $T$ as the $i$-th and $j$-th columns, respectively. We deduce that $\ba$ and $\bb$ are adjacent in $\Gaif(\A^n)$. Using~\eqref{quantum_homo_2}, we conclude that $Q_\ba$ and $Q_\bb$ must commute, as claimed.
\end{proof}
We point out that~\Cref{prop_nonTVF_implies_non_contextual} can alternatively be proved by observing that any relation having a full binary projection (and, thus, witnessing that $\A$ is not TVF) can be used to directly give a commutativity gadget for $\A$; then, the result follows from~\Cref{gadget-characterization}.

\section{Quantum polymorphisms of cliques}
\label{sec_cliques}

In this section, we give a first simple example of using quantum polymorphisms to obtain commutativity gadgets, in the setting of cliques. We let $\rel K_m$ denote the $m$-clique. We prove the following.

\begin{theorem}\label{Kn-closure}
    For all $m\geq 3$,
    $\qPol(\rel K_m)=\qcPol(\rel K_m)$.
\end{theorem}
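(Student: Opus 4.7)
The plan is to show that every quantum polymorphism $Q\colon\K_m^n\qto\K_m$ is non-contextual, which by~\Cref{observation} is equivalent to $Q\in\qcPol(\K_m)$. Concretely, I aim to prove $[Q_{\ba,c},Q_{\bb,d}]=0$ for all $\ba,\bb\in[m]^n$ and $c,d\in[m]$. Two cases are immediate: when $\ba=\bb$ the two projectors belong to the same PVM and commute, and when $(\ba,\bb)$ is an edge of $\K_m^n$ (i.e.\ $a_i\neq b_i$ for all $i$) commutativity follows directly from~\eqref{quantum_homo_2}. The interesting case is $\ba\neq\bb$ with at least one coordinate of agreement.

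The key idea is a common-neighbour trick exploiting $m\geq 3$. For every $i$ we pick $c_i\in[m]\setminus\{a_i,b_i\}$, obtaining a tuple $\bc\in[m]^n$ adjacent to both $\ba$ and $\bb$ in $\K_m^n$. By~\eqref{quantum_homo_2} the PVM $Q_\bc$ commutes with both $Q_\ba$ and $Q_\bb$, so the orthogonal decomposition $H=\bigoplus_{x\in[m]}H_x$ with $H_x=\mathrm{Im}(Q_{\bc,x})$ is preserved, and $[Q_{\ba,c},Q_{\bb,d}]=\sum_x[Q_{\ba,c}\vert_{H_x},Q_{\bb,d}\vert_{H_x}]$. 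Moreover,~\eqref{quantum_homo_1} applied to the edges $(\ba,\bc)$ and $(\bb,\bc)$ together with the non-edge $(y,y)$ of $\K_m$ yields $Q_{\ba,y}Q_{\bc,y}=Q_{\bb,y}Q_{\bc,y}=0$ for every $y\in[m]$; in particular $Q_{\ba,c}\vert_{H_c}=0=Q_{\bb,d}\vert_{H_d}$, so the summands with $x\in\{c,d\}$ vanish automatically.

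For the remaining summands $x\notin\{c,d\}$, the plan is to induct on $m$. On $H_x$, the restrictions $Q_{\be,\cdot}\vert_{H_x}$, for $\be$ ranging over the neighbourhood $N(\bc)=\prod_i([m]\setminus\{c_i\})$, assemble into a quantum polymorphism $\K_{m-1}^n\qto\K_{m-1}$: the induced subgraph of $\K_m^n$ on $N(\bc)$ is isomorphic to $\K_{m-1}^n$; the orthogonalities $Q_{\be,x}\vert_{H_x}=0$ for $\be\in N(\bc)$ (again from~\eqref{quantum_homo_1}) reduce the target to $[m]\setminus\{x\}\cong[m-1]$; and both~\eqref{quantum_homo_1} and~\eqref{quantum_homo_2} pass to the restriction. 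For $m\geq 4$ the inductive hypothesis $\qPol(\K_{m-1})=\qcPol(\K_{m-1})$ then yields non-contextuality of the restriction, hence the required commutativity on each $H_x$.

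The main obstacle is the base case $m=3$: the recursion would land at $\qPol(\K_2)$, where the analogous equality provably fails (indeed, $\K_2$ admits the ternary majority, so by~\Cref{sec_boolean} it has no commutativity gadget). To close the base case I would exploit the extra rigidity of $\K_3$: whenever $a_i\neq b_i$ the mediator coordinate $c_i$ is forced to be the unique third colour, leaving freedom in $\bc$ only on the coordinates where $\ba$ and $\bb$ agree. Combined with the observation that the restricted~\eqref{quantum_homo_1}-orthogonalities collapse each PVM $Q_{\be,\cdot}\vert_{H_x}$ (for $\be\in N(\bc)$ and $x\notin\{c,d\}$) to a single complementary pair of projectors on $H_x$, this rigidity should let one combine several mediators $\bc$ to pin down commutativity on each $H_x$ directly, without invoking any generic statement about $\K_2$-polymorphisms. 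This base case is where I expect the main technical difficulty to lie.
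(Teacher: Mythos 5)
Your inductive step for $m\geq 4$ is correct and is essentially the paper's argument: a common neighbour $\bc$ of $\ba$ and $\bb$, the block decomposition of the commutator along the PVM $Q_{\bc}$, the vanishing of the blocks $x\in\{c,d\}$ via \eqref{quantum_homo_1} applied to the non-edges $(y,y)$, and the identification of the restriction to the neighbourhood of $\bc$ on each $H_x$ with a quantum polymorphism of $\rel K_{m-1}$. (The paper phrases the decomposition as $[Q_{\tuple u,i},Q_{\tuple v,j}]=\sum_{k,\ell}[Q_{\tuple u,i}Q_{\tuple w,k},Q_{\tuple v,j}Q_{\tuple w,\ell}]$ and kills the off-diagonal terms by $Q_{\tuple w,k}Q_{\tuple w,\ell}=0$, but this is the same computation as your block-diagonal splitting.)

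The genuine gap is the base case $m=3$, and you have correctly located it but not closed it. Your own reduction shows why it is not a routine verification: for $m=3$ the restriction on each $H_x$ is a quantum polymorphism of $\rel K_2$, and the paper exhibits, immediately after the theorem, an explicit contextual quantum polymorphism of $\rel K_2$ (so $\qPol(\rel K_2)\neq\qcPol(\rel K_2)$). Hence no argument that only uses a single mediator $\bc$ and the induced $\rel K_2$-structure can work; the claim that "combining several mediators should pin down commutativity" is exactly the nontrivial content, and as stated it is a hope rather than a proof. The paper does not attempt this direct computation either: it closes the base case by invoking Lemma 4 of Ji's paper, which establishes that $\rel K_3$ has a commutativity gadget (via the triangular prism construction), and then applies \Cref{gadget-characterization} (more precisely \Cref{commgadget-implies-closure}) to convert the existence of that gadget into $\qPol(\rel K_3)=\qcPol(\rel K_3)$. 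To complete your proof you would need either to cite that external result, or to carry out a concrete multi-mediator computation for $\rel K_3$ of the kind Ji's prism argument performs --- neither of which is present in your write-up.
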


\begin{proof}
    The proof goes by induction on $m$. 
    
    For the case of $m=3$, it follows from~\cite[Lemma 4]{Ji} that $\rel K_3$ has a commutativity gadget.
    Then,~\Cref{gadget-characterization} implies the result.
    
Take now $m\geq 4$.
    Let $Q\colon(\rel K_m)^n\qto\rel K_m$ be a quantum homomorphism over a Hilbert space $H$.
    Pick $\tuple u,\tuple v\in[m]^n$ and $i,j\in [m]$. Our goal is to show that $[Q_{\tuple u,i},Q_{\tuple v,j}]=0$.

Choose $\tuple w\in [m]^n$ in such a way that $(\tuple u,\tuple w)$ and $(\tuple v,\tuple w)$ are both edges of $(\rel K_m)^n$. (Note that this is possible as long as $m\geq 3$.)
Choose also $k\in[m]\setminus\{i,j\}$.
    We first claim that  
    \begin{align}
        \label{eqn_1431_0110}
        [Q_{\tuple u,i}Q_{\tuple w,k},Q_{\tuple v,j}Q_{\tuple w,k}]=0.
    \end{align}
    This is clearly the case if $Q_{\tuple w,k}=0$, so assume this is not the case.
    Define $\rel X$ to be the subgraph of $(\rel K_m)^n$ induced by the vertices $\tuple z$ such that $z_p\neq w_p$ for all $p\in[n]$, namely the subgraph induced by the neighborhood of $\tuple w$.
    Note that $\rel X$ is isomorphic to $(\rel K_{m-1})^n$.
    
    Define $R\colon \rel X\qto\rel K_{m-1}$ by setting $R_{\tuple z,\ell} = Q_{\tuple w,k}Q_{\tuple z,\ell}$ over the (non-trivial) space $Q_{\tuple w,k}(H)$.
    We check that this is a quantum homomorphism.
    Every $R_{\tuple z,\ell}$ is a projector since $[Q_{\tuple w,k},Q_{\tuple z,\ell}]=0$.
    For $\tuple z\in X$, we have $\sum_\ell R_{\tuple z,\ell}=\sum_\ell Q_{\tuple w,k}Q_{\tuple z,\ell}=Q_{\tuple w,k}(\sum_\ell Q_{\tuple z,\ell})=Q_{\tuple w,k}$, which is indeed the identity function on $Q_{\tuple w,k}(H)$ since $Q_{\tuple w,k}$ is a projector.
    Thus, $R$ is a quantum function.
    If $\tuple z,\tuple z'$ are adjacent in $\rel X$, then $\tuple z,\tuple z',\tuple w$ form a triangle in $(\rel K_m)^n$
    and therefore the projectors $Q_{\tuple w,k},Q_{\tuple z,\ell},Q_{\tuple z',\ell'}$ pairwise commute for $\ell,\ell'$ distinct from $k$.
    Thus,~\eqref{quantum_homo_2} holds.
    Finally, if $\tuple z,\tuple z'$ are adjacent in $\rel X$ and $\ell\neq k$, then $R_{\tuple z,\ell}R_{\tuple z',\ell} = Q_{\tuple w,k}Q_{\tuple z,\ell}Q_{\tuple w,k}Q_{\tuple z',\ell}=Q_{\tuple w,k}Q_{\tuple z,\ell}Q_{\tuple z',\ell} = 0$.
    Thus,~\eqref{quantum_homo_1} holds.
    
    By the inductive hypothesis, we have $\qPol(\rel K_{m-1})=\qcPol(\rel K_{m-1})$. Therefore, $[R_{\tuple z,\ell},R_{\tuple z',\ell'}]=0$ for all $\tuple z,\tuple z'\in X$ and all $\ell,\ell'\in [m]\setminus\{k\}$.
    For $\tuple z=\tuple u$ and $\tuple z'=\tuple v$, this gives the claimed~\cref{eqn_1431_0110}.

    Observe now that
    
        \[[Q_{\tuple u,i},Q_{\tuple v,j}] = [Q_{\tuple u,i}(\sum_{k}Q_{\tuple w,k}), Q_{\tuple v,j}(\sum_{\ell}Q_{\tuple w,\ell})]
        = \sum_{k,\ell} [ Q_{\tuple u,i}Q_{\tuple w,k},Q_{\tuple v,j}Q_{\tuple w,\ell}].\]
        All terms with $k=\ell$ in the sum vanish by~\cref{eqn_1431_0110}. Moreover,
    all terms with $k\neq \ell$ in the sum vanish as well, since
    \begin{align*}[ Q_{\tuple u,i}Q_{\tuple w,k},Q_{\tuple v,j}Q_{\tuple w,\ell}] &= Q_{\tuple u,i}Q_{\tuple w,k}Q_{\tuple v,j}Q_{\tuple w,\ell} - Q_{\tuple v,j}Q_{\tuple w,\ell}Q_{\tuple u,i}Q_{\tuple w,k}\\
    &= Q_{\tuple u,i}Q_{\tuple w,k}Q_{\tuple w,\ell}Q_{\tuple v,j}- Q_{\tuple v,j}Q_{\tuple w,\ell}Q_{\tuple w,k}Q_{\tuple u,i} &\\
    &= 0, &
    \end{align*}
    where the second equality is due to the fact that $Q_{\tuple w,k}$ and $Q_{\tuple w,\ell}$ commute with $Q_{\tuple u,i}$ and $Q_{\tuple v,j}$, and the third holds since $Q_{\tuple w,k}Q_{\tuple w,\ell}=0=Q_{\tuple w,\ell}Q_{\tuple w,k}$. This is enough to conclude.
\end{proof}

Note that $\rel K_2$ does have quantum polymorphisms that are not in the quantum closure of $\Pol(\rel K_2)$.
Indeed, let $n=2$ and $H=\C^2$, and define $Q\colon(\mathbb K_2)^2\qto\mathbb K_2$ by the following matrices:
\begin{itemize}
    \item $Q_{(0,0),0}=\begin{pmatrix}1&0\\0&0\end{pmatrix}, Q_{(0,0),1}=\begin{pmatrix}0&0\\0&1\end{pmatrix}$
    \item $Q_{(1,1),0}=\begin{pmatrix}0&0\\0&1\end{pmatrix}, Q_{(1,1),0}=\begin{pmatrix}1&0\\0&0\end{pmatrix}$
    \item $Q_{(1,0),0}=\frac12\begin{pmatrix}1&1\\1&1\end{pmatrix}, Q_{(1,0),1}=\frac12\begin{pmatrix}1&-1\\-1&1\end{pmatrix}$
    \item $Q_{(0,1),0} = \frac12\begin{pmatrix}1&-1\\-1&1\end{pmatrix}, Q_{(0,1),1}=\frac12\begin{pmatrix}1&1\\1&1\end{pmatrix}$.
\end{itemize}
It is easily checked that this is a quantum polymorphism and that $[Q_{(0,0),0},Q_{(1,0),0}]\neq 0$.
By~\Cref{gadget-characterization}, we deduce that $\rel K_2$ does not have a commutativity gadget. In~\Cref{sec_boolean}, we will see that this fact can be alternatively derived via a more general argument, by simply noting that $\K_2$ is preserved by the classical majority polymorphism and is not the full binary relation, see~\Cref{thm_boolean_comm_gadget_classification}.

\begin{proposition}\label{Km-gadget}
    Let $m\geq 3$.
    For every $n\geq 2$ and $\tuple u,\tuple v\in(\mathbb K_m)^n$ such that $u_i=v_i$ for at least one $i\in[n]$ and $u_j\neq v_j$ for at least one $j\in[n]$,  $(\mathbb K_m)^n$ with the distinguished vertices $\tuple u,\tuple v$ is a commutativity gadget for $\mathbb K_m$.
\end{proposition}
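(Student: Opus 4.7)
The plan is to derive Proposition~\ref{Km-gadget} as a direct application of \Cref{obtaining-commgadget}, fed with the closure property established in \Cref{Kn-closure}. Since the first statement (existence of a commutativity gadget for $\rel K_m$) is a special case of the second---any $n\geq 2$ and any pair $\tuple u,\tuple v$ meeting the two conditions witness it---I focus on the refined version.

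First I invoke \Cref{Kn-closure} to obtain $\qPol(\rel K_m)=\qcPol(\rel K_m)$, so that the closure hypothesis of \Cref{obtaining-commgadget} is fulfilled. It then remains to verify that, under the stated assumptions on $\tuple u,\tuple v$, the $n$ pairs $(u_1,v_1),\dots,(u_n,v_n)$ are generators of $[m]^2$ under $\Pol(\rel K_m)$: that is, for every $(c,d)\in[m]^2$ some $n$-ary polymorphism $g$ of $\rel K_m$ satisfies $g(\tuple u)=c$ and $g(\tuple v)=d$.

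The construction of $g$ uses only two ingredients: the projections $\pi_i\colon(\rel K_m)^n\to\rel K_m$, $\tuple z\mapsto z_i$, which are obviously homomorphisms and hence polymorphisms, and the automorphisms of $\rel K_m$, which are precisely the permutations of $[m]$. I split into two cases. If $c\neq d$, pick an index $j$ with $u_j\neq v_j$ (guaranteed by the second hypothesis) and a permutation $\sigma$ of $[m]$ with $\sigma(u_j)=c$ and $\sigma(v_j)=d$; such a $\sigma$ exists precisely because $u_j\neq v_j$ and $c\neq d$. Then $g:=\sigma\circ\pi_j$ does the job. If $c=d$, pick an index $i$ with $u_i=v_i$ (guaranteed by the first hypothesis) and a permutation $\sigma$ with $\sigma(u_i)=c$; then $g:=\sigma\circ\pi_i$ works. \Cref{obtaining-commgadget} now yields that $(\rel K_m)^n$ with distinguished vertices $\tuple u,\tuple v$ is a commutativity gadget for $\rel K_m$.

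There is no substantive obstacle once \Cref{Kn-closure} has been proven; the argument is a transparent instance of the general machinery. The only thing worth emphasizing is that the two hypotheses on $\tuple u,\tuple v$ are both indispensable in the conversion: if $\tuple u$ and $\tuple v$ agreed on every coordinate then no polymorphism could separate them, ruling out pairs with $c\neq d$; conversely, if they disagreed on every coordinate then $\tuple u,\tuple v$ would be adjacent in $(\rel K_m)^n$, forcing $g(\tuple u)\neq g(\tuple v)$ for every polymorphism $g$ and ruling out pairs with $c=d$.
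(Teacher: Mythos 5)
Your proposal is correct and follows exactly the paper's route: the paper also derives the statement by applying \Cref{obtaining-commgadget} (whose closure hypothesis is supplied by \Cref{Kn-closure}) and checks that $(u_1,v_1),\dots,(u_n,v_n)$ generate $[m]^2$ by composing a coordinate projection with a suitable permutation of $[m]$, splitting into the cases $c=d$ and $c\neq d$ precisely as you do. Your write-up is merely a more explicit rendering of the same two-line verification in the paper.
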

\begin{proof}
    We apply~\Cref{obtaining-commgadget}.
    The assumptions on $\tuple u,\tuple v$ imply that $(u_1,v_1),\dots,(u_n,v_n)$ generate $[m]^2$.
    Indeed, any pair $(c,c)$ is the image of $(u_i,v_i)$ under a permutation (which is in particular a unary polymorphism of $\rel K_m$), while any pair $(c,d)$ with $c\neq d$ is the image of $(u_j,v_j)$ under a permutation.
\end{proof}
\ignore{\begin{proof}[Original proof]
    Without loss of generality, we can assume that $u_1=v_1$ and $u_2\neq v_2$.
    Let $\{Q_{u,c}\mid c\in[m]\}$ and $\{Q_{v,c}\mid c\in[m]\}$ be pairwise commuting PVMs.
    We denote by $Q$ the quantum function $\{\tuple u,\tuple v\}\to[m]$.
    Thus, there exist $h_1,\dots,h_d\colon\{\tuple u,\tuple v\}\to[m]$ such that $Q = h_1\oplus\dots\oplus h_d$.

    Fix $i\in\{1,\dots,d\}$ and extend $h_i$ to a function $g_i\colon[m]^n\to[m]$ as follows.
    Let $\sigma_i$ be any permutation of $[m]$ such that:
    \begin{itemize}
        \item if $h_i(\tuple u)=h_i(\tuple v)$, then $\sigma_i(u_1)=h_i(\tuple u)$;
        \item if $h_i(\tuple u)\neq h_i(\tuple v)$, then $\sigma_i(u_2)=h_i(\tuple u)$ and $\sigma_i(v_2)=h_i(\tuple v)$.
    \end{itemize}
    Define $g_i(\tuple w)$ as $\sigma_i(w_1)$ if $h_i(\tuple u)=h_i(\tuple v)$, or as $\sigma_2(w_2)$ if $h_i(\tuple u)\neq h_i(\tuple v)$.

    Note that each $g_i$ is a polymorphism of $\mathbb K_m$.
    Indeed, if $\tuple w,\tuple z$ are adjacent in $(\mathbb K_m)^n$, then either $h_i(\tuple u)=h_i(\tuple v)$, in which case $g_i(\tuple w)=\sigma_i(w_1) \neq \sigma_i(z_1) = g_i(\tuple z)$ since $w_1\neq z_1$, or $h_i(\tuple u)\neq h_i(\tuple v)$, in which case $g_i(\tuple w)=\sigma_i(w_2)\neq \sigma_i(z_2)=g_i(\tuple z)$ since $w_2\neq z_2$.

    Thus, $Q':=g_1\oplus\dots \oplus g_d$ is a quantum homomorphism $(\rel K_m)^n\qto \rel K_m$ that extends $Q$.
    This shows that~\Cref{itm:commgadget-extension} holds.

    The fact that~\Cref{itm:commgadget-restriction} holds is by~\Cref{Kn-closure}, since $\qPol(\rel K_m)$ is the quantum closure of $\Pol(\rel K_m)$.
\end{proof}}

An immediate consequence of~\Cref{Km-gadget} is the undecidability of the quantum $m$-coloring problem for $m\geq 3$.
\begin{corollary}\label{cliques-undecidable}
    $\qCSP(\rel K_m)$ is undecidable for all $m\geq 3$.
\end{corollary}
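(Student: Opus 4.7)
The plan is to lift the classical NP-hardness of $m$-colouring into the quantum setting, producing a logspace reduction from a known undecidable entangled CSP to $\qCSP(\rel K_m)$ via the commutativity gadget supplied by~\Cref{Km-gadget}.

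Concretely, by the MIP$^*$=RE theorem~\cite{ji2021mip} there is a structure $\rel B$ (for example a $3$-SAT-style Boolean structure) whose entangled CSP is undecidable. The classical NP-hardness proof of $m$-colouring provides pp-definitions over $\rel K_m$ of the relations of $\rel B$, after a routine encoding bringing both sides onto the common $[m]$-domain (e.g.\ using pp-definable ``Boolean'' sub-relations of $\rel K_m$).

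With these pp-definitions in hand, I would invoke~\Cref{prop_pp_plus_comm_equals_q} together with the commutativity gadget from~\Cref{Km-gadget} to upgrade each pp-definition into a q-definition of the corresponding relation over $\rel K_m$; then~\Cref{qgadget-reduction} yields the desired logspace reduction $\qCSP(\rel B)\to\qCSP(\rel K_m)$, transferring undecidability.

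The main obstacle is the common-domain requirement built into~\Cref{qgadget}, since classical NP-hardness reductions typically manipulate the domain. A clean way around it is a two-step bootstrap: first dispatch $m=3$ directly (essentially recovering Ji's original argument~\cite{Ji}, where the $3$-SAT gadgets live naturally on the $[3]$-domain), and then handle $m>3$ by reducing from $\qCSP(\rel K_3)$ via the standard $\rel K_{m-3}$-join construction, which pp-defines the $\rel K_3$-edge relation inside $\rel K_m$ after fixing $m-3$ colours and lifts to a q-definition via~\Cref{prop_pp_plus_comm_equals_q} and~\Cref{Km-gadget}.
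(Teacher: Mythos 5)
Your overall strategy---take undecidability of $\qCSP(\rel K_3)$ as the base case and reduce $\qCSP(\rel K_3)$ to $\qCSP(\rel K_m)$ using the commutativity gadgets of~\Cref{Km-gadget}---is the same as the paper's. However, the specific mechanism you propose for the step $m>3$ does not work as stated. You want to ``pp-define the $\rel K_3$-edge relation inside $\rel K_m$ after fixing $m-3$ colours'' and then upgrade this to a q-definition via~\Cref{prop_pp_plus_comm_equals_q}. But pp-formulas over $\rel K_m$ cannot fix colours: every permutation of $[m]$ is a polymorphism of $\rel K_m$, so every pp-definable (a fortiori every q-definable, since a q-definition restricted to one-dimensional Hilbert spaces yields a pp-definition) relation over $\rel K_m$ is invariant under all permutations of $[m]$. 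The relation $\{(a,b)\in[3]^2 : a\neq b\}$ is not so invariant for $m>3$, hence it admits no pp-definition and no q-definition over $\rel K_m$; existentially quantifying a $\rel K_{m-3}$ joined to $x_1,x_2$ merely re-defines the full edge relation of $\rel K_m$. So~\Cref{prop_pp_plus_comm_equals_q} and~\Cref{qgadget-reduction} cannot be invoked here.

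The repair is the standard ``global palette'' trick, which is exactly what the paper does: rather than quantifying fresh vertices per constraint, adjoin $m-3$ vertices $y_4,\dots,y_m$ \emph{once}, joined to every vertex of the $\qCSP(\rel K_3)$-instance $\rel X$ and to each other, and attach a commutativity gadget between each $y_i$ and every other vertex. Completeness is immediate by setting $Q_{y_i,i}=\id$ and extending via~\eqref{itm:commgadget-extension}. Soundness is where the real work lies and where your proposal is silent: by~\eqref{itm:commgadget-restriction} the restriction of a quantum homomorphism to $\{x,y_4,\dots,y_m\}$ is non-contextual, hence a direct sum of classical homomorphisms, and after composing each summand with an automorphism of $\rel K_m$ one may assume $h_i(y_j)=j$; this forces $\sum_{i=1}^3 Q_{x,i}=\id$ for every $x\in X$ and yields the desired quantum homomorphism $\rel X\qto\rel K_3$. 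This normalisation-by-automorphism argument at the instance level is the ingredient that replaces the (impossible) pp-definition with constants.
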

\begin{proof}
    The result follows by combining~\Cref{Kn-closure} and~\Cref{cor_np_complete_plus_no_contextual_means_undecidable_ALGEBRAIC}.
\end{proof}

We point out that the case of $m=3$ was proved in~\cite{culf2024re} via the reduction of~\cite{Ji}, and the case of $m\geq 4$ was recently and independently proved in~\cite[Proposition 5.2, Lemma 4.6]{Zeman}.

\begin{figure}
\centering
\resizebox{.4\textwidth}{!}{%
\begin{circuitikz}[scale=.8]
\tikzstyle{every node}=[font=\Large]
\draw [short] (6.25,21.5) -- (12.5,19);
\draw [short] (12.5,19) -- (18.75,21.5);
\draw [short] (12.5,19) -- (6.25,14);
\draw [short] (12.5,19) -- (18.75,14);
\draw [short] (6.25,21.5) -- (18.75,21.5);
%\draw[domain=6.25:18.75,samples=100,smooth] plot (\x,{0.2*sin(16*\x r -5.25 r ) +21.5});
\draw [ fill={rgb,255:red,113; green,54; blue,186} ] (6.25,21.5) circle (0.25cm);
\draw [ fill={rgb,255:red,113; green,54; blue,186} ] (12.5,19) circle (0.25cm);
\draw [ fill={rgb,255:red,113; green,54; blue,186} ] (18.75,21.5) circle (0.25cm);
\draw [ fill={rgb,255:red,113; green,54; blue,186} ] (6.25,14) circle (0.25cm);
\draw [ fill={rgb,255:red,113; green,54; blue,186} ] (18.75,14) circle (0.25cm);
\draw [dashed] (9.25,24) -- (9.25,11.75);
\draw [dashed] (15.5,24) -- (15.5,11.5);
\node [font=\LARGE] at (6.25,22.75) {$Q_{x_0}$};
\node [font=\LARGE] at (12.5,22.75) {$Q_{x_1}$};
\node [font=\LARGE] at (18.75,22.75) {$Q_{x_2}$};
\node [font=\Large] at (6.25,20.75) {$a_0$};
\node [font=\Large] at (6.25,13.25) {$a_0'$};
\node [font=\Large] at (12.5,19.75) {$a_1$};
\node [font=\Large] at (18.75,13.25) {$a_2'$};
\draw [short] (6.25,14) -- (6.25,14);
\draw [short] (18.75,21.5) -- (6.25,14);
\draw [short] (6.25,21.5) -- (18.75,14);
\draw [ fill={rgb,255:red,113; green,54; blue,186} ] (6.25,14) circle (0.25cm);
\draw [ fill={rgb,255:red,113; green,54; blue,186} ] (18.75,21.5) circle (0.25cm);
\draw [ fill={rgb,255:red,113; green,54; blue,186} ] (18.75,14) circle (0.25cm);
\draw [ fill={rgb,255:red,113; green,54; blue,186} ] (6.25,21.5) circle (0.25cm);
\node [font=\Large] at (18.75,20.75) {$a_2$};
\end{circuitikz}
}%
\caption{A bifurcation of length $2$. Solid lines mean non-orthogonality.}
\label{fig_two_bifurcation}
\end{figure}

\section{Intermezzo: Contextuality and bifurcations}
\label{sec_bifurcations}
In this section, we present a general result on the emergence of a certain non-orthogonality pattern in any contextual quantum homomorphism. 
In later sections, we will use this result to analyse the quantum polymorphism minions of odd cycles and the Siggers digraph. We believe that the same ideas can be applied in a broader setting to prove non-contextuality of quantum polymorphisms (and, in particular, the existence of commutativity gadgets).

The high-level idea is that, whenever two PVMs do not commute in a quantum homomorphism $Q\colon\X\qto\A$,  a \textit{bifurcation} must appear along some walk in the product $X\times A$, in a sense that we formalise below. 
Recall that the Gaifman graph $\Gaif(\X)$ has vertex set $X$, and its edges are all pairs of vertices in $X$ simultaneously appearing in a tuple in some relation of $\X$. 
\begin{definition}
\label{defn_bifurcation}
    Let $\X,\A$ be relational structures, let $Q\colon \X\qto\A$ be a quantum homomorphism, and let $\omega$ be a positive integer. Take a $4$-tuple $\mathcal B=(\bx,\ba,a_0',a_\omega')$, where
    \begin{itemize}
        \item $\bx=(x_0,\dots,x_\omega)\in X^{\omega+1}$,
        \item $\ba=(a_0,\dots,a_\omega)\in A^{\omega+1}$, and \item $a_0',a_\omega'\in A$.
    \end{itemize}
We say that $\mathcal B$ is 
a \textit{(contextuality) bifurcation} for $Q$ if it meets the following five conditions:

\begin{enumerate}
        \item[$(i)$] $x_0,x_1,\dots,x_\omega$ is a walk in $\Gaif(\X)$;
        \item[$(ii)$] $a_0\neq a_0'$ and $a_\omega\neq a_\omega'$;
        \item[$(iii)$] 
        $Q_{x_i,a_i}Q_{x_j,a_j}\neq 0$ for each $i< j\in\{0,\dots,\omega\}$;
        \item[$(iv)$] 
        $Q_{x_0,a_0'}Q_{x_j,a_j}\neq 0$ for each $ j\in\{1,\dots,\omega\}$;
        \item[$(v)$] 
        $Q_{x_i,a_i}Q_{x_\omega,a_\omega'}\neq 0$ for each $ i\in\{0,\dots,\omega-1\}$.
    \end{enumerate}
    We call $\omega$ the \textit{length} of the bifurcation $\mathcal B$.
\end{definition}

\Cref{fig_two_bifurcation} and~\Cref{fig_omega_bifurcation} illustrate two examples of contextuality bifurcations.
The following is the main result of this section, showing that bifurcations occur in any contextual quantum homomorphism, provided that the left structure is connected.

\thmbifurcations*
We shall derive~\Cref{thm_bifurcations} as a consequence of the next, slightly stronger result. 

\begin{proposition}
\label{prop_bifurcation_detailed}
    Let $\X$ and $\A$ be relational structures  and let $Q\colon\X\qto\A$ be a quantum homomorphism. 
    Let $x_0,x_1,\dots,x_\omega$ be a walk in $\Gaif(\X)$, and choose two values $a,\tilde a\in A$ in a way that
    \begin{itemize}
        \item[$(I)$] $[Q_{x_0,a},Q_{x_\omega,\tilde a}]\neq 0$, but
        \item[$(II)$] for all $i,j\in\{0,\dots,\omega\}$ such that $\{i,j\}\neq\{0,\omega\}$,
        the PVMs $Q_{x_i}$ and $Q_{x_j}$ commute.
    \end{itemize} 
    Then $Q$ has a bifurcation $\mathcal{B}=(\bx,\ba,a_0',a_\omega')$ for some tuple $\ba=(a_0,\dots,a_\omega)\in A^{\omega+1}$ such that $a_0=a$, $a_\omega=\tilde a$ and two values $a_0',a_\omega'\in A$, where $\bx=(x_0,\dots,x_\omega)$.
\end{proposition}

In the proof of~\Cref{prop_bifurcation_detailed}, we shall repeatedly use the following fact about projectors.
\begin{fact}
\label{fact_commmuting_projectors}
    Let $P,Q$ be projectors onto some Hilbert space $H$. Then $[P,Q]=0$ if, and only if, $PQP=PQ=QP$.
\end{fact}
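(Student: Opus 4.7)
The plan is to verify both directions by direct algebraic manipulation, using only the defining properties $P^2=P=P^*$ and $Q^2=Q=Q^*$ of projectors. No spectral theory or geometric interpretation is needed; the whole content of the statement is that the single equation $PQ=QP$ automatically upgrades to the triple equation $PQP=PQ=QP$ once one multiplies on the appropriate side by $P$ and invokes idempotency.

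For the forward implication, I would assume $[P,Q]=0$, i.e.\ $PQ=QP$. Then I compute $PQP$ in two ways: grouping on the right gives
\begin{align*}
PQP \;=\; P(QP) \;=\; P(PQ) \;=\; P^{2}Q \;=\; PQ,
\end{align*}
while grouping on the left gives
\begin{align*}
PQP \;=\; (PQ)P \;=\; (QP)P \;=\; QP^{2} \;=\; QP.
\end{align*}
Combining the two yields $PQP=PQ=QP$, as desired.

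For the converse, the chain of equalities $PQP=PQ=QP$ already contains $PQ=QP$, which is precisely $[P,Q]=0$. (As a sanity check on the statement, one can notice that the single equation $PQP=PQ$ suffices for this direction: $PQP$ is automatically self-adjoint because $(PQP)^{*}=P^{*}Q^{*}P^{*}=PQP$, so $PQ=PQP$ forces $PQ$ to be self-adjoint, whence $PQ=(PQ)^{*}=Q^{*}P^{*}=QP$.) There is no real obstacle here; the only thing to be careful about is to apply idempotency on the correct side in the two computations of $PQP$ above, which is exactly what makes each of the two equalities in $PQP=PQ=QP$ come out.
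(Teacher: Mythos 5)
Your proof is correct, and it is the standard elementary verification; the paper in fact states this as a ``simple fact about projectors\dots without proof,'' so there is nothing to compare against. Your parenthetical observation that $PQP=PQ$ alone already forces commutativity (via self-adjointness of $PQP$) is a nice bonus, and it is exactly the form in which the fact gets used in the proof of Theorem~\ref{thm_bifurcations}.
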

\begin{proof}
    If $[P,Q]=0$, then $PQ=QP=QP^2=PQP$. The other direction is clear.
\end{proof}

\def\nn{2}
\begin{figure}
\centering
\resizebox{.85\textwidth}{!}{%
\begin{circuitikz}
\tikzstyle{every node}=[font=\Large]

\draw (0,2) arc (120:60:1);
\draw (0,2) arc (240:300:2);
\draw (1,2) arc (120:60:1);
\draw (5,2) arc (120:60:1);
\draw (5,2) arc (240:300:2);
\draw (6,2) arc (120:60:1);
\draw (0,2) arc (120:60:5);
\draw (0,2) arc (120:60:6);
\draw (0,2) arc (120:60:7);
\draw (1,2) arc (120:60:4);
\draw (1,2) arc (120:60:5);
\draw (1,2) arc (120:60:6);
\draw (2,2) arc (120:60:3);
\draw (2,2) arc (120:60:4);
\draw (2,2) arc (120:60:5);

\foreach \x in {0,...,\nn}
{
\draw [short] (-1,1) -- (\x,2);
\draw [short] (-1,-1) -- (\x,2);
\draw [short] (2*\nn+4,+1) -- (\x,2);
\draw [short] (2*\nn+4,-1) -- (\x,2);
\draw [short] (-1,1) -- (\x+\nn+3,2);
\draw [short] (-1,-1) -- (\x+\nn+3,2);
\draw [short] (2*\nn+4,+1) -- (\x+\nn+3,2);
\draw [short] (2*\nn+4,-1) -- (\x+\nn+3,2);
\draw [ fill={rgb,255:red,113; green,54; blue,186} ] (\x,2) circle (0.1cm);
\draw [ fill={rgb,255:red,113; green,54; blue,186} ] (\x+\nn+3,2) circle (0.1cm);
}
\node [font=\scriptsize] at (0,2+.2) {$a_{1}$};
\node [font=\scriptsize] at (1,2+.2) {$a_{2}$};
\node [font=\scriptsize] at (2,2+.2) {$a_{3}$};
\node [font=\scriptsize] at (5,2+.2) {$a_{\omega-3}$};
\node [font=\scriptsize] at (6,2+.2) {$a_{\omega-2}$};
\node [font=\scriptsize] at (7,2+.2) {$a_{\omega-1}$};
\draw [thick] (-1,1) -- (2*\nn+4,1);
\draw [thick] (-1,1) -- (2*\nn+4,-1);
\draw [thick] (-1,-1) -- (2*\nn+4,1);

\node [font=\Large] at (\nn+1.5,2) {$\dots$};

\node [font=\scriptsize] at (-1.1,.7) {$a_0$};
\node [font=\scriptsize] at (-1.1,-.7) {$a_0'$};
\node [font=\scriptsize] at (2*\nn+4+.1,.7) {$a_\omega$};
\node [font=\scriptsize] at (2*\nn+4+.1,-.7) {$a_\omega'$};
\node [font=\scriptsize] at (-1,3.3) {$Q_{x_0}$};
\node [font=\scriptsize] at (0,3.3) {$Q_{x_1}$};
\node [font=\scriptsize] at (1,3.3) {$Q_{x_2}$};
\node [font=\scriptsize] at (2,3.3) {$Q_{x_3}$};
\node [font=\scriptsize] at (5,3.3) {$Q_{x_{\omega-3}}$};
\node [font=\scriptsize] at (6,3.3) {$Q_{x_{\omega-2}}$};
\node [font=\scriptsize] at (7,3.3) {$Q_{x_{\omega-1}}$};
\node [font=\scriptsize] at (8,3.3) {$Q_{x_\omega}$};
\draw [dashed] (-.5,3.5) -- (-.5,-1.5);
\draw [dashed] (.5,3.5) -- (.5,-1.5);
\draw [dashed] (1.5,3.5) -- (1.5,-1.5);
\draw [dashed] (2.5,3.5) -- (2.5,-1.5);
\draw [dashed] (4.5,3.5) -- (4.5,-1.5);
\draw [dashed] (5.5,3.5) -- (5.5,-1.5);
\draw [dashed] (6.5,3.5) -- (6.5,-1.5);
\draw [dashed] (7.5,3.5) -- (7.5,-1.5);

\draw [ fill={rgb,255:red,113; green,54; blue,186} ] (-1,1) circle (0.1cm);
\draw [ fill={rgb,255:red,113; green,54; blue,186} ] (-1,-1) circle (0.1cm);
\draw [ fill={rgb,255:red,113; green,54; blue,186} ] (2*\nn+4,1) circle (0.1cm);
\draw [ fill={rgb,255:red,113; green,54; blue,186} ] (2*\nn+4,-1) circle (0.1cm);

\end{circuitikz}
}%
\caption{A bifurcation of length $\omega$. Solid lines mean non-orthogonality.}
\label{fig_omega_bifurcation}
\end{figure}

\begin{proof}[Proof of~\Cref{prop_bifurcation_detailed}]
Let $Q\colon\X\qto\A$ be a quantum homomorphism over some Hilbert space $H$ of dimension $h$, and consider a walk $x_0,x_1,\dots,x_\omega$ in $\Gaif(\X)$ and values $a,\tilde a\in A$ that satisfy conditions $(I)$ and $(II)$ of the proposition.
Note that condition $(I)$ forces $\omega\geq 2$, as $Q$ satisfies~\eqref{quantum_homo_2}.
Moreover, using condition $(II)$, we can
choose an orthonormal basis $\mathcal E=(e_1,\dots,e_h)$ for $H$ that diagonalises all PVMs $Q_{x_\alpha}$ for each $\alpha\in\{0,\dots,\omega-1\}$. 

Take the set $Z=\{i\in [h]:Q_{x_\omega,{\tilde a}}e_i\in\{0,e_i\}\}$, and consider the following endomorphisms of $H$: the projector $P_1$ onto the space $\Span(\{e_i:i\in Z\})$, and the projector $P_2$ onto the space $\Span(\{e_i:i\in [h]\setminus Z\})$. Observe that $P_1P_2=0$ and $P_1+P_2=\id$.  Moreover, calling $a_0=a$ and $a_\omega=\tilde a$, we have
\begin{align}
\label{eqn_1244_0511}
    [P_1,M]=[P_2,M]=0\quad\forall M\in\{Q_{x_\alpha,d}:\alpha\in\{0,\dots,\omega-1\},d\in A\}\cup\{Q_{x_\omega,a_\omega}\}.
\end{align}
Indeed, $P_1$ is diagonal in the same basis $\mathcal E$ that diagonalises each $Q_{x_\alpha}$, so it commutes with all projectors in such PVMs. Furthermore, 
\begin{align*}
    Q_{x_\omega,a_\omega}P_1=\sum_{i\in Z}Q_{x_\omega,a_\omega}e_ie_i^*
    =\sum_{i\in Z}e_ie_i^*Q_{x_\omega,a_\omega}=
    P_1Q_{x_\omega,a_\omega}
\end{align*}
since, for each $i\in Z$, $e_i$ is an eigenvector of $Q_{x_\omega,a_\omega}$. Hence, $P_1$ also commutes with $Q_{x_\omega,a_\omega}$. Finally, $[P_1,M]+[P_2,M]=[\id,M]=0$ for each $M$ because of the bilinearity of the commutator, so~\cref{eqn_1244_0511} holds. 
In particular, the operators $Q_{x_0,a_0}P_1$, $Q_{x_0,a_0}P_2$, $Q_{x_\omega,a_\omega}P_1$, and $Q_{x_\omega,a_\omega}P_2$ are all projectors. Observe now that
\begin{align}
\label{eqn_1548_0411}
\notag
    [Q_{x_0,a_0},Q_{x_\omega,a_\omega}]
    &=
    [Q_{x_0,a_0}P_1+Q_{x_0,a_0}P_2,Q_{x_\omega,a_\omega}P_1+Q_{x_\omega,a_\omega}P_2]\\
    \notag&=
    [Q_{x_0,a_0}P_1,Q_{x_\omega,a_\omega}P_1]+[Q_{x_0,a_0}P_1,Q_{x_\omega,a_\omega}P_2]+[Q_{x_0,a_0}P_2,Q_{x_\omega,a_\omega}P_1]\\&+[Q_{x_0,a_0}P_2,Q_{x_\omega,a_\omega}P_2].
\end{align}
Combining~\cref{eqn_1244_0511} and~\Cref{fact_commmuting_projectors}, we find that the second summand in the right-hand side of~\cref{eqn_1548_0411} is
\begin{align*}
    [Q_{x_0,a_0}P_1,Q_{x_\omega,a_\omega}P_2]
    =
    [P_1Q_{x_0,a_0}P_1,P_2Q_{x_\omega,a_\omega}P_2]
    =
    0
\end{align*}
since $P_1P_2=0$. Similarly, we obtain $[Q_{x_0,a_0}P_2,Q_{x_\omega,a_\omega}P_1]=0$.
Consider now the first summand in the right-hand side of~\cref{eqn_1548_0411}. Both projectors $Q_{x_0,a_0}P_1$ and $Q_{x_\omega,a_\omega}P_1$ are diagonal in the basis $\mathcal{E}$. Hence, they commute, too. We deduce that
\begin{align}
\label{eqn_1552_0411}
    [Q_{x_0,a_0}P_2,Q_{x_\omega,a_\omega}P_2]=[Q_{x_0,a_0},Q_{x_\omega,a_\omega}]\neq 0.
\end{align}

We now claim that the matrices in the set $\{Q_{x_\alpha,d}P_2:\alpha\in[\omega-1],d\in A\}\cup\{Q_{x_\omega,a_\omega}P_2\}$ pairwise commute. The fact that $[Q_{x_\alpha,d}P_2,Q_{x_\alpha',d'}P_2]=0$ for each $\alpha,\alpha'\in[\omega-1],d,d'\in A$ is clear, since $\mathcal E$ diagonalises both. Moreover,
\begin{align*}
    Q_{x_\alpha,d}P_2Q_{x_\omega,a_\omega}P_2
    =
    Q_{x_\alpha,d}Q_{x_\omega,a_\omega}P_2
    =
    Q_{x_\omega,a_\omega}Q_{x_\alpha,d}P_2
    =
    Q_{x_\omega,a_\omega}P_2Q_{x_\alpha,d}P_2,
\end{align*}
where the first and third equalities come from~\cref{eqn_1244_0511} and~\Cref{fact_commmuting_projectors}, while the second holds since the PVMs $Q_{x_\alpha}$ and $Q_{x_\omega}$ are assumed to commute by condition $(II)$.
So, the claim is true, and we can simultaneously diagonalise the matrices in the set $\{Q_{x_\alpha,d}P_2:\alpha\in[\omega-1],d\in A\}\cup\{Q_{x_\omega,a_\omega}P_2\}$ in some orthonormal basis $\mathcal F=(f_1,\dots, f_h)$ of $H$. 

For the next step, it is useful to define the sets $L,M\subseteq [h]$ such that $Q_{x_0,a_0}P_2=\sum_{\ell\in L}e_\ell e_\ell^*$ and $Q_{x_\omega,a_\omega}P_2=\sum_{m\in M}f_mf_m^*$. For $m\in M$, consider also the set $A_m=\{i\in[h]:\ang{e_i}{f_m}\neq 0\}$. Observe that, for any $m\in M$,
\begin{align*}
    Q_{x_0,a_0}P_2f_mf_m^*
    &=
    Q_{x_0,a_0}P_2\left(\sum_{i\in [h]}e_ie_i^*\right)f_mf_m^*\left(\sum_{j\in [h]}e_je_j^*\right)
    =
    \sum_{i,j\in[h]}Q_{x_0,a_0}P_2e_ie_i^*f_mf_m^*e_je_j^*\\
    &=
    \sum_{\substack{i\in L\cap A_m\\j\in A_m}}e_ie_i^*f_mf_m^*e_je_j^*
    \intertext{and}
    f_mf_m^*Q_{x_0,a_0}P_2
    &=
    \left(\sum_{i\in [h]}e_ie_i^*\right)f_mf_m^*\left(\sum_{j\in [h]}e_je_j^*\right)Q_{x_0,a_0}P_2
    =
    \sum_{i,j\in[h]}e_ie_i^*f_mf_m^*e_je_j^*Q_{x_0,a_0}P_2\\
    &=
    \sum_{\substack{i\in A_m\\j\in L\cap A_m}}e_ie_i^*f_mf_m^*e_je_j^*.
\end{align*}
Hence, if $A_m\subseteq L$ or if $L\cap A_m=\emptyset$, it holds that $[Q_{x_0,a_0}P_2,f_mf_m^*]=0$. 
On the other hand, we know from~\cref{eqn_1552_0411} that
\begin{align*}
    0\neq 
    [Q_{x_0,a_0}P_2,Q_{x_\omega,a_\omega}P_2]
    =
    \sum_{m\in M}[Q_{x_0,a_0}P_2,f_mf_m^*],
\end{align*}
whence we deduce that there must exist some $\bar m\in M$ for which $A_{\bar m}\not\subseteq L$ and $A_{\bar m}\not\subseteq [h]\setminus L$. 

We now choose $\xi_1\in A_{\bar m}\cap L$ and $\xi_2\in A_{\bar m}\cap ([h]\setminus L)$. We claim that 
\begin{align}
\label{eqn_0411_1849}
\xi_i\in [h]\setminus Z\quad \mbox{ for both }i=1 \mbox{ and }i=2.
\end{align}
Indeed, suppose for the sake of contradiction that $\xi_i\in Z$ for some $i\in [2]$. This would imply that $P_2e_{\xi_i}=0$, whence in turn we could deduce that 
\begin{align*}
    \sum_{m\in M}f_mf_m^*e_{\xi_i}=Q_{x_\omega,a_\omega}P_2e_{\xi_i}=0.
\end{align*}
It would follow that $\ang{e_{\xi_i}}{f_m}=0$ for each $m\in M$. In particular, $\xi_i\not\in A_{\bar m}$, which is a contradiction. Hence,~\eqref{eqn_0411_1849} is true.

Let now $a_0'\in A$ be such that $Q_{x_0,a_0'} e_{\xi_2}=e_{\xi_2}$. Notice that there exists exactly one such $a_0'$, since $Q_{x_0}$ is a PVM diagonalised by $\mathcal E$. We claim that $a_0'\neq a_0$. Suppose, for the sake of contradiction, that $a_0'=a_0$. Then, it would hold that $Q_{x_0,a_0}e_{\xi_2}=e_{\xi_2}$. On the other hand, we have 
\begin{align}
0
=
\sum_{\ell\in L}e_\ell e_{\ell}^*e_{\xi_2}
=
Q_{x_0,a_0}P_2e_{\xi_2}
=
P_2Q_{x_0,a_0}e_{\xi_2}
=
P_2e_{\xi_2}
=
e_{\xi_2},
\end{align}
where the first equality is true since $\xi_2\not\in L$, the second is true by definition of $L$, the third by~\cref{eqn_1244_0511}, the fourth by the observation above, and the fifth by~\cref{eqn_0411_1849}.
This is a contradiction, 
so the claimed disequality $a_0'\neq a_0$ must hold.

Fix now some $\alpha\in[\omega-1]$, and
recall that the PVM $Q_{x_\alpha}$ is diagonalised by $\mathcal E$. Hence, there exists exactly one element $c_1\in A$ such that $Q_{x_\alpha,c_1}e_{\xi_1}=e_{\xi_1}$, and exactly one element $c_2\in A$ such that $Q_{x_\alpha,c_2}e_{\xi_2}=e_{\xi_2}$. We claim that $c_1=c_2$. Suppose this is not the case. 
It is time to remember that we chose $\mathcal{F}$ in such a way that it diagonalises both $Q_{x_\omega,a_\omega}P_2$ and all the matrices $Q_{x_\alpha,d}P_2$ for $d\in A$.  
Hence, we can write $Q_{x_\alpha,c_1}P_2=\sum_{p\in V_1}f_pf_p^*$ and $Q_{x_\alpha,c_2}P_2=\sum_{p\in V_2}f_pf_p^*$ for some $V_1,V_2\subseteq [h]$. Since $Q_{x_\alpha}$ is a PVM, we have $Q_{x_\alpha,c_1}Q_{x_\alpha,c_2}=0$. Using~\cref{eqn_1244_0511} and~\Cref{fact_commmuting_projectors}, we deduce that
\begin{align*}
    0
    =
    Q_{x_\alpha,c_1}Q_{x_\alpha,c_2}P_2
    =
    Q_{x_\alpha,c_1}P_2Q_{x_\alpha,c_2}P_2,
\end{align*}
so $V_1\cap V_2=\emptyset$. On the other hand, using~\cref{eqn_0411_1849},  we have
\begin{align*}
    e_{\xi_1}&=Q_{x_\alpha,c_1}e_{\xi_1}=Q_{x_\alpha,c_1}P_2e_{\xi_1}=\sum_{p\in V_1}f_pf_p^*e_{\xi_1}\quad\mbox{and}\\
    e_{\xi_2}&=Q_{x_\alpha,c_2}e_{\xi_2}=Q_{x_\alpha,c_2}P_2e_{\xi_2}=\sum_{p\in V_2}f_pf_p^*e_{\xi_2}.
\end{align*}
Since we are assuming that $\{\xi_1,\xi_2\}\subseteq A_{\bar m}$, we must have $f_{\bar m}^*e_{\xi_1}\neq 0$ and $f_{\bar m}^*e_{\xi_2}\neq 0$. But this means that $\bar m\in V_1\cap V_2$, a contradiction.
This proves the claimed equality $c_1=c_2$. Call such element $a_\alpha$.\\

We now make the following claim.
\begin{claim}
\label{claim_1251_1503}
    There exists some $a_\omega'\in A$ such that $a_\omega'\neq a_\omega$ and
\begin{align}
   \label{eqn_1251_2511} \left(\prod_{0\leq i\leq\omega-1}Q_{x_i,a_i}\right)Q_{x_\omega,a_\omega'}\neq 0.
\end{align}
\end{claim}
The truth of the claim would be enough to conclude the proof of the proposition. Indeed, consider the 4-tuple $(\bx,\ba,a_0',a_\omega')$, where $\bx=(x_0,\dots,x_\omega)$ and $\ba=(a_0,\dots,a_\omega)$.

Observe first that
\begin{align*}
    Q_{x_0,a_0}e_{\xi_1}
    =
    Q_{x_0,a_0}P_2e_{\xi_1}
    =
    e_{\xi_1},
\end{align*}
where the first equality is due to~\cref{eqn_0411_1849} and the second to the fact that $\xi_1\in L$ and, for each $\ell\in L$, $e_{\ell}$ is an eigenvector of $Q_{x_0,a_0}P_2$ with eigenvalue $1$. It follows that, for each $i\in[\omega-1]$,
\begin{align*}
    Q_{x_0,a_0}Q_{x_i,a_i}e_{\xi_1}
    =
    Q_{x_0,a_0}e_{\xi_1}
    =
    e_{\xi_1},
\end{align*}
which means that $Q_{x_0,a_0}Q_{x_i,a_i}\neq 0$. Similarly, 
\begin{align*}
    Q_{x_0,a_0'}Q_{x_i,a_i}e_{\xi_2}
    =
    Q_{x_0,a_0'}e_{\xi_2}
    =
    e_{\xi_2},
\end{align*}
so $Q_{x_0,a_0'}Q_{x_i,a_i}\neq 0$.
Moreover, clearly, $Q_{x_i,a_i}Q_{x_j,a_j}\neq 0$ for each $i\neq j\in[\omega-1]$ since $e_{\xi_1}$ is an eigenvector of both with eigenvalue $1$.

Notice now that
\begin{align*}
    Q_{x_\omega,a_\omega}P_2f_{\bar m}=f_{\bar m}
\end{align*}
since $\bar m\in M$ and, for each $m\in M$, $f_{m}$ is an eigenvector of $Q_{x_\omega,a_\omega}P_2$ with eigenvalue $1$. Hence, for each $i\in[\omega-1]$,
\begin{align*}
    e_{\xi_1}^*Q_{x_i,a_i}Q_{x_\omega,a_\omega}P_2f_{\bar m}
    =
    e_{\xi_1}^*f_{\bar m}\neq 0
\end{align*}
since $\xi_1\in A_{\bar m}$. In particular, this means that $Q_{x_i,a_i}Q_{x_\omega,a_\omega}\neq 0$.  Analogously, 
\begin{align*}
     e_{\xi_2}^*Q_{x_0,a_0'}Q_{x_\omega,a_\omega}P_2f_{\bar m}
    =
    e_{\xi_2}^*f_{\bar m}\neq 0
\end{align*}
since $\xi_2\in A_{\bar m}$. Hence, $Q_{x_0,a_0'}Q_{x_\omega,a_\omega}\neq 0$.
Note also that $Q_{x_0,a_0}Q_{x_\omega,a_\omega}\neq 0$, as we know that $[Q_{x_0,a_0},Q_{x_\omega,a_\omega}]\neq 0$.

All in all, the above argument proves conditions $(iii)$ and $(iv)$ of~\Cref{defn_bifurcation}, while condition $(i)$ is clear from the initial assumptions and $(ii)$ follows from the choice of $a_0'$ and from~\Cref{claim_1251_1503}. Moreover, $(v)$ also follows from~\Cref{claim_1251_1503} by noting that condition $(II)$ of the current proposition guarantees that $Q_{x_{\omega},a_\omega'}$ commutes with $Q_{x_i,a_i}$ for each $1\leq i\leq \omega-1$ and, thus, in order for~\eqref{eqn_1251_2511} to hold it must be the case that $Q_{x_i,a_i}Q_{x_\omega,a_\omega'}\neq 0$ for each $i\in\{0,\dots,\omega-1\}$.  
As a consequence, $\mathcal B=(\bx,\ba,a_0',a_\omega')$ is a bifurcation for $Q$, as required.\\

We are left to prove~\Cref{claim_1251_1503}.
Suppose, for the sake of contradiction, that the claim is false. Then, the fact that $Q_{x_\omega}$ is a PVM would imply that
\begin{align*}
    \prod_{0\leq i\leq\omega-1}Q_{x_i,a_i}
    &=
    \prod_{0\leq i\leq\omega-1}Q_{x_i,a_i}\id
    =
    \sum_{d\in A}\prod_{0\leq i\leq\omega-1}Q_{x_i,a_i}Q_{x_\omega,d}\\
    &=
    \prod_{0\leq i\leq\omega-1}Q_{x_i,a_i}Q_{x_\omega,a_\omega}.
\end{align*}
But then, we would have
\begin{align*}
    \prod_{0\leq i\leq\omega-1}Q_{x_i,a_i}Q_{x_\omega,a_\omega}e_{\xi_1}
    =
    \prod_{0\leq i\leq\omega-1}Q_{x_i,a_i}e_{\xi_1}
    =
    e_{\xi_1}
\end{align*}
so, in particular, \[\|\prod_{0\leq i\leq\omega-1}Q_{x_i,a_i}Q_{x_\omega,a_\omega}e_{\xi_1}\|=1.\] However, we know from~\cref{eqn_0411_1849} that $\xi_1\not\in Z$, which by the definition of $Z$ implies that $Q_{x_\omega,a_\omega}e_{\xi_1}\neq e_{\xi_1}$. Since $Q_{x_\omega,a_\omega}$ is a projector, this means that $\|Q_{x_\omega,a_\omega}e_{\xi_1}\|<1$. Since \[\prod_{0\leq i\leq\omega-1}Q_{x_i,a_i}\] is also a projector, it follows that 
\begin{align*}
    \|\prod_{i\in[\omega-1]}Q_{x_i,a_i}Q_{x_\omega,a_\omega}e_{\xi_1}\|<1,
\end{align*}
a contradiction. Hence,~\Cref{claim_1251_1503} is true, and the proof of~\Cref{prop_bifurcation_detailed} is complete.
\end{proof}

Having established~\Cref{prop_bifurcation_detailed},
we can now prove the main result of this section.
\begin{proof}[Proof of~\Cref{thm_bifurcations}]
    Let $Q\colon\X\qto\A$ be a contextual quantum homomorphism. Consider the set $S\subseteq X^2$ containing all pairs $(x,\tilde x)$ such that the PVMs $Q_{x}$ and $Q_{\tilde x}$ do not commute. Since $Q$ is contextual, $S$ is non-empty. Choose $(x,\tilde x)\in S$ in a way that $\dist_{\Gaif(\X)}(x,\tilde x)$ is minimum over all pairs in $S$; call such distance $\omega$. (Observe that this is well defined since $\Gaif(\X)$ is assumed to be connected.)
    Take $a,\tilde a\in A$ so that $[Q_{x,a},Q_{\tilde x,\tilde a}]\neq 0$.
    Let $x=x_0,x_1,\dots,x_\omega=\tilde x$ be a shortest path in $\Gaif(\X)$ connecting $x$ and $\tilde x$. By the minimality of $\dist_{\Gaif(\X)}(x,\tilde x)$, the PVMs $Q_{x_i}$ and $Q_{x_j}$ commute for each $i,j\in\{0,\dots,\omega\}$ such that
    $\{i,j\}\neq\{0,\omega\}$. Hence, conditions $(I)$ and $(II)$ of~\Cref{prop_bifurcation_detailed} are both met. It follows that $Q$ has a bifurcation of length $\omega=\dist_{\Gaif(\X)}(x,\tilde x)\leq\operatorname{diam}(\Gaif(\X))$. 
\end{proof}

\begin{remark}
    Observe that the proofs of both~\Cref{prop_bifurcation_detailed} and~\Cref{thm_bifurcations} do not make use of the fact that the given quantum homomorphism $Q\colon\X\qto\A$ satisfies condition~\eqref{quantum_homo_1}. Hence, both statements hold for any quantum function $Q\colon X\qto A$ satisfying~\eqref{quantum_homo_2}.  
\end{remark}

As a simple application of~\Cref{thm_bifurcations}, we can prove the following non-contextuality result.

\begin{proposition}
\label{cor_non_contextual_quantum_homos_cycles}
    Let $\X,\Y$ be relational structures such that $\Gaif(\X)$ and $\Gaif(\Y)$ are odd cycles of the same length. Then every quantum homomorphism $\X\qto\Y$ is non-contextual.
\end{proposition}
In particular, it follows that quantum endomorphisms of odd cycles are non-contextual.
\begin{corollary}
\label{cor_quantum_endo_odd_cycles_ess_classical}
For each odd $m\geq 3$, every quantum homomorphism $Q\colon\C_m\qto\C_m$ is non-contextual.
\end{corollary}
We point out that a different proof of~\Cref{cor_quantum_endo_odd_cycles_ess_classical} was obtained in~\cite{Zeman} via a result from~\cite{banica2007quantum} on the quantum symmetry of vertex-transitive graphs.
\begin{proof}[Proof of~\Cref{cor_non_contextual_quantum_homos_cycles}]
Suppose  first that $\X$ is not essentially binary---i.e., recalling~\Cref{defn_essentially_binary}, $\X$ contains some relation $R^\X$ such that at least one tuple $\bx\in R^\X$ contains at least three distinct entries---call them $x,y,z$. Then the set $\{x,y,z\}\subseteq X$ induces a $\K_3$ in $\Gaif(\X)$, and the assumption on $\Gaif(\X)$ implies that $X$ cannot contain any other vertex. Hence, in this case, the fact that each quantum homomorphism from $\X$ to $\Y$ is non-contextual immediately follows from~\eqref{quantum_homo_2}.

Suppose now that $\X$ is essentially binary and assume, for the sake of contradiction, that $\X,\Y$ admit some contextual quantum homomorphism $Q\colon\X\qto\Y$. Let $m$  be the number of vertices in $\Gaif(\X)=\Gaif(\Y)$.
Applying~\Cref{thm_bifurcations}, we deduce that 
$Q$ has a bifurcation $\mathcal{B}=(\bx,\ba,a_0',a_\omega')$ of length $\omega\leq \frac{m-1}{2}$. Let as usual $\bx=(x_0,\dots,x_\omega)$ and $\ba=(a_0,\dots,a_\omega)$, and consider the pairs $(x_0,a_0),(x_0,a_0'),(x_1,a_1),(x_2,a_2)\in X\times Y$. By~\Cref{defn_bifurcation}, we know that $Q_{x_0,a_0}Q_{x_1,a_1}\neq 0$, $Q_{x_0,a'_0}Q_{x_1,a_1}\neq 0$, $Q_{x_1,a_1}Q_{x_2,a_2}\neq 0$, $Q_{x_0,a_0}Q_{x_2,a_2}\neq 0$, and $Q_{x_0,a_0'}Q_{x_2,a_2}\neq 0$. 
Using~\Cref{prop_basic_walks_essentially_binary}, we deduce that $\{a_0,a_1\}, \{a_0',a_1\}$, and $\{a_1,a_2\}$ are all edges of $\Gaif(\Y)$. 
Note also that $x_0$ and $x_2$ lie on an $(m-2)$-walk in $\Gaif(\X)$. Since no walk of such length connects a vertex to itself in $\Gaif(\Y)$,~\Cref{prop_basic_walks_essentially_binary} ensures that $a_0\neq a_2$ and $a_0'\neq a_2$. Finally $a_0\neq a_0'$ again by~\Cref{defn_bifurcation}. We deduce that $a_1$ has degree at least three in $\Gaif(\Y)$, a contradiction.
\end{proof}

\section{Quantum polymorphisms of the Siggers digraph}
\label{sec_siggers}
In this section, we investigate the complexity of the quantum CSP parameterised by the Siggers digraph---i.e., the digraph obtained by taking a directed triangle and making one of its directed edges undirected, see~\Cref{fig_siggers_overview}. The significance of this directed graph is that it is the smallest digraph whose classical CSP is NP-complete.
\begin{figure}[h!]
\centering
\begin{tikzpicture}[
    every node/.style={circle, draw, minimum size=8mm}
]

% Nodes
\node (A) at (0,0) {0};
\node (B) at (3,0) {2};
\node (C) at (1.5,2.6) {1};

% Bottom edge
\draw[->] (A) to[bend left=15] (B);
\draw[->] (B) to[bend left=15] (A);

% Other edges
\draw[->] (C) -- (B);
\draw[->] (A) -- (C);

\end{tikzpicture}
\caption{The Siggers digraph.}
\label{fig_siggers_overview}
\end{figure}
We prove the following result.

\thmsiggersundecidable*

Throughout this section, we denote the Siggers digraph by $\rel A$ (and, as usual, we denote its vertex set $\{0,1,2\}$ by $A$). In order to deduce the undecidability of $\qCSP(\A)$ from the algebraic machinery developed in the previous sections, we show that all quantum polymorphisms of the Siggers digraph are non-contextual by making use of contextuality bifurcations.

\begin{theorem}\label{thm_noncontectual_A}
    Every quantum polymorphism $Q\colon\rel A^n\qto \rel A$ is non-contextual.
\end{theorem}

The key to establishing this result is the analysis of  the properties of quantum homomorphisms from the path of length 5, which we shall denote $\rel P_5$, to $\rel A$. Throughout this section, we label the vertices of $\rel P_5$ as follows.
\[
\begin{tikzpicture}[
    every node/.style={circle, draw, minimum size=8mm}
]

% Nodes
\node (A) at (0,0) {u};
\node (B) at (2,0) {a};
\node (C) at (4,0) {b};
\node (D) at (6,0) {c};
\node (E) at (8,0) {d};
\node (F) at (10,0) {v};

% Edges
\draw[->] (A) -- (B);
\draw[->] (B) -- (C);
\draw[->] (C) -- (D);
\draw[->] (D) -- (E);
\draw[->] (E) -- (F);

\end{tikzpicture}
\]
We shall make use of the concept of ``modular shift'' of a PVM, as defined next.
\begin{definition}\label{defn_cyc_shift_A}
Let $Q=(Q_{i})_{i\in S}$ and $Q'=(Q'_{i})_{i\in S}$ be two PVMs whose outcome set is labelled as $S=\{0,1,\dots,n-1\}$. We say that $Q'$ is a \textit{$k$-modular shift} of $Q$ (for some $k\in S$) if 
\begin{align*}
    Q'_{i}=Q_{i+_n k}
\end{align*}
for each $i\in S$, where ``$+_n$'' denotes the addition mod $n$. Note that two PVMs $Q$ and $T$ commute if, and only if, any modular shifts $Q'$ and $T'$ also commute.
\end{definition}
For the Siggers digraph $\A$, the modular shifts of a PVM with outcome set $A$ are constructed by permuting the projectors associated with the vertices along the directed triangle underlying $\A$. 
The main thrust of our argument is captured by the following theorem.
\begin{theorem}\label{thm_A_cyc_shift_extension}
    Let $Q\colon \rel P_5\qto \rel A$ be a quantum homomorphism, and suppose that every modular shift of the PVMs associated with $u$ and $v$
    can be extended to a quantum homomorphism from $\rel P_5$ to $\rel A$. Then, $[Q_{u,i},Q_{v,j}]=0$ for each $i,j\in A$.
\end{theorem}

To prove~\Cref{thm_A_cyc_shift_extension}, it shall be useful to collect a few simple observations about contextuality bifurcations in quantum homomorphisms from $\rel P_5$ to $\rel A$. Henceforth, given a bifurcation
$\mathcal{B}=(\bx,\by, p,q)$ of length $\omega$, we shall denote $\bx=(x_0,\dots,x_\omega)$ and $\by=(y_0,\dots,y_\omega)$.
Moreover, given a tuple $\ba=(a_0,a_1,\dots,a_n)$, we let $\ba^R=(a_n,a_{n-1},\dots,a_0)$ be its reversed tuple.

\begin{lemma}
\label{lem_trivial_bifurcation_siggers_A}
    Let $Q\colon\rel P_5\qto\A$ be a quantum homomorphism, and let $\mathcal{B}=(\bx,\by, p,q)$ be a bifurcation of length $\omega$ for $Q$. Then
    there exists a bifurcation $\mathcal{B}'=(\bx',\by', p,q)$ of length $\omega'$ such that 
    \begin{itemize}
        \item $\bx'$ is either a directed path or a reversed path in $\rel P_5$;
        \item $\bx'$ is a subtuple of $\bx$;
        \item $x'_0=x_0$;
        \item $x'_{\omega'}=x_\omega$.
    \end{itemize}
\end{lemma}
\begin{proof}

    We first consider an arbitrary bifurcation $\mathcal{B}$ in a quantum homomorphism $Q\colon \rel P_5\qto \rel A$ and show that $\bx$ or $\bx^R$ is a path is $\rel P_5$ when $\bx$ contains no repeated vertices. 
    To see why, note that $\Gaif(\rel P_5)$ is the undirected path of length 5, so any walk $\bx$ that does not repeat vertices is a directed subpath in $\Gaif(\rel P_5)$. If we then consider $\bx$, we can see that either $(x_i,x_{i+1})$ is an edge of $\rel P_5$ for each $0\le i\le \omega-1$, or $(x_{i+1}, x_i)$ is an edge for each such $i$. It is thus the case that either $\bx$ or $\bx^R$ is a directed walk in $\rel P_5$.

    If $\mathcal{B}$ repeats some vertices, there exist $i<j$ (with $i\neq 0$ or $j\neq \omega$) such that $x_i=x_j$ but $x_k\neq x_\ell$ for each $k\in \{0,\dots, i-1\}$ and $\ell\in \{j+1,\dots, \omega\}$. We then define $\mathcal{B}'=(\bx',\by', p,q)$ by setting
    \begin{itemize}
        \item $\bx'=(x_0,\dots,x_{i-1},x_i,x_{j+1},\dots,x_\omega)$;
        \item $\by'=(y_0,\dots,y_{i-1},y_i,y_{j+1},\dots,y_\omega)$.
    \end{itemize}
    It is immediate that $\mathcal{B}'$ inherits properties $(ii)-(v)$ of~\Cref{defn_bifurcation} from $\mathcal{B}$. Furthermore, $\bx'$ is a walk in $\Gaif(\rel P_5)$ since the same holds for $\bx$.
    %, 
    The proof is then complete since $\bx'$ does not repeat any vertices, so either $\bx'$ or ${(\bx^R)}'$ is a directed walk in $\rel P_5$.
\end{proof}

\begin{lemma}
\label{lem_trivial_bifurcation_siggers_B}
    Let $Q\colon\rel \X\qto\A$ be a quantum homomorphism, and let $\mathcal{B}=(\bx,\by, p,q)$ be a bifurcation of length $\omega$ for $Q$. Then $\mathcal{B}'=(\bx^R,\by^R, q,p)$ is also a bifurcation for $Q$. 
\end{lemma}
\begin{proof}
    We show that $\mathcal{B}'$ satisfies the conditions in~\Cref{defn_bifurcation}. For $(i)$, simply recall that $\Gaif(\X)$ is an undirected graph; so, if $\bx$ is a walk in $\Gaif(\X)$, then the same holds for $\bx^R$. $(ii)$ is immediate from $\mathcal{B}$ being a bifurcation. $(iii)$, $(iv)$, and $(v)$ are also clear from the fact that, for $A,B$ self-adjoint, $AB=0$ if, and only if, $BA=0$.
\end{proof}
Consider the following property:
\begin{itemize}
    \item[$(\mathbf{BP})$] \labeltext{$(\mathbf{BP})$}{bifurcation_P5}Bifurcation $\mathcal{B}=(\bx,\by,q,p)$ is such that $\bx$ is a walk in $\rel P_5$
\end{itemize}

Combining~\Cref{lem_trivial_bifurcation_siggers_A} and~\Cref{lem_trivial_bifurcation_siggers_B}, we deduce that we can refine any bifurcation $\mathcal{B}$ for some quantum homomorphism $Q\colon\rel P_5\qto\A$ into a (possibly different) bifurcation $\mathcal{B}'$ for $Q$ that runs between the same $x_0$ and $x_\omega$, but has property~\ref{bifurcation_P5}. Conversely, if there is no \ref{bifurcation_P5} bifurcation on a given $\rel P_5$ path, then there is no bifurcation between its start and endpoint.

\begin{lemma}
\label{lem_trivial_bifurcation_siggers_C}
    Let $Q\colon\rel P_5\qto\A$ be a quantum homomorphism, and let $\mathcal{B}=(\bx,\by, p,q)$ be a bifurcation of length $\omega$ for $Q$ with property~\ref{bifurcation_P5}. Then 
    \begin{itemize}
        \item $y_1=2$ and $y_{\omega-1}=0$;
        \item both $y_0$ and $p$ are in $\{0,1\}$;
        \item both $y_\omega$ and $q$ are in $\{1,2\}$.
    \end{itemize}
    In particular, $\omega\geq 3$. 
\end{lemma}

\begin{proof}
    Consider $Q$ and $\mathcal{B}$ as described. By the definition of a bifurcation, $Q_{x_0,y_0}Q_{x_1,y_1}\neq 0$, $Q_{x_0,p}Q_{x_1,y_1}\neq 0$, and $p\neq y_0$. Hence, using~\eqref{quantum_homo_1}, $(y_0,y_1)$ and $(p,y_1)$ are edges in $\rel A$ since $(x_0,x_1)$ is an edge in $\rel P_5$ by \ref{bifurcation_P5}.
    It follows that $y_1=2$, since 2 is the only vertex in $\rel A$ with in-degree $\ge 2$. This also implies that $y_0,p\in \{0,1\}$ by the edge relation of $\rel A$. 
    Repeating the same argument for $x_{\omega-1}$ (this time considering the out-degree), we obtain that $y_{\omega-1}=0$ and $y_\omega,q\in\{1,2\}$.
    Finally, the fact that $y_1\neq y_{\omega-1}$ implies that $\omega\geq 3$.
\end{proof}

The following will also be useful.

\begin{lemma}
\label{lem_trivial_bifurcation_siggers_D}
    For any quantum homomorphism $Q\colon \rel P_5\qto \rel A$ and any two vertices $x,y\in P_5$ such that $\dist_{\Gaif{(\rel P_5)}}(x,y)\leq 2$, the PVMs $Q_x$ and $Q_y$ commute.
\end{lemma}
\begin{proof}
    Let $d=\dist_{\Gaif{(\rel P_5)}}(x,y)$. If $d\leq 1$, the fact is clear from~\eqref{quantum_homo_2}. Let then $d=2$, and suppose for the sake of contradiction that $Q_x$ and $Q_y$ do not commute. It would follow from~\Cref{prop_bifurcation_detailed} that $Q$ admits a bifurcation running on a path $x,\dots, y$ in $\Gaif(\rel P_5)$ which induces a \ref{bifurcation_P5} bifurcation of length $2$ (or $0$), thus contradicting~\Cref{lem_trivial_bifurcation_siggers_C}.
\end{proof}

The next step towards proving~\Cref{thm_A_cyc_shift_extension} is the following result.
\begin{lemma}\label{prop:P5_QB2_commutes}
    For any quantum homomorphism $Q\colon \rel P_5\qto \rel A$, the projector $Q_{b,2}$ commutes with every projector of $Q$.
\end{lemma}
\begin{proof}
Choose $\alpha\in P_5$ and $i\in A$, and 
assume, for the sake of contradiction, that $[Q_{b,2},Q_{\alpha,i}]\neq 0$. By~\Cref{lem_trivial_bifurcation_siggers_D}, it must be that $\alpha=v$. Observe that the walk $b,c,d,v$ in $\rel P_5$ and the values $2,i\in A$ satisfy the conditions of~\Cref{prop_bifurcation_detailed}. Therefore, $Q$ admits a \ref{bifurcation_P5} bifurcation $\mathcal B=(\bx,\by,p,q)$ such that $y_0=2$. By~\Cref{lem_trivial_bifurcation_siggers_C}, this is impossible.
\end{proof}

We can thus consider any quantum homomorphism $Q\colon \rel P_5\qto \rel A$, and split it into the direct sum $S\oplus T$, where, for each $x\in P_5$ and each $i\in A$, 
\begin{align*}
    S_{x,i}&= Q_{b,2}Q_{x,i}\quad\mbox{and}\\
    T_{x,i}&=(Q_{b,0}+Q_{b,1})Q_{x,i}=(\id-Q_{b,2})Q_{x,i}.
\end{align*}
(Note that, formally, the projector, say,\ $S_{x,i}= Q_{b,2}Q_{x,i}$ acts on the subspace given by the range of $Q_{b,2}$. We will leave this implicit to avoid adding extra notation for injections and projections onto such space.)

\begin{claim}\label{SiggersC1}
    $S_{u,1}=0$.
\end{claim}
\begin{proof}
Using~\Cref{prop:P5_QB2_commutes}, we can expand $S_{u,1}$ as 
    \[S_{u,1}=
    Q_{b,2}Q_{u,1}
    =
    Q_{u,1}Q_{b,2}=\sum_{i\in A}Q_{u,1}Q_{a,i}Q_{b,2}=0+0+0=0,\]
    where the penultimate equality follows from applying~\eqref{quantum_homo_1} to the products $Q_{u,1}Q_{a,0}$, $Q_{u,1}Q_{a,1}$, and $Q_{a,2}Q_{b,2}$.
\end{proof}
\begin{claim}\label{SiggersC2}
$T_{v,0}T_{u,2}=0$.
\end{claim}

\begin{proof} 
Using~\Cref{prop:P5_QB2_commutes}, we can expand the product as
    \begin{align*}
    T_{v,0}T_{u,2}
    &=
    (\id-Q_{b,2})Q_{v,0}(\id-Q_{b,2})Q_{u,2}
    =
    Q_{v,0}(\id-Q_{b,2})^2Q_{u,2}
    =
    Q_{v,0}(\id-Q_{b,2})Q_{u,2}\\
    &=Q_{v,0}(Q_{b,0}+Q_{b,1})Q_{u,2}=Q_{v,0}Q_{b,0}Q_{u,2}+Q_{v,0}Q_{b,1}Q_{u,2}.
    \end{align*}
    We now argue that both terms of the above sum are zero. First,
    \[Q_{b,0}Q_{u,2}=\sum_{i\in A}Q_{b,0}Q_{u,2}Q_{a,i}=Q_{b,0}Q_{u,2}Q_{a,0}=Q_{b,0}Q_{a,0}Q_{u,2}=Q_{a,0}Q_{b,0}Q_{u,2}=0,\]
    where the third and fourth equalities follow from~\eqref{quantum_homo_2};
    therefore, $Q_{v,0}Q_{b,0}Q_{u,2}=0$. Second, using again~\eqref{quantum_homo_2},
    \begin{align*}Q_{v,0}Q_{b,1}&=\sum_{i,j\in A}Q_{d,i}Q_{v,0}Q_{b,1}Q_{c,j}=Q_{d,2}Q_{v,0}Q_{b,1}Q_{c,2}=Q_{v,0}Q_{d,2}Q_{c,2}Q_{b,1}\\&=Q_{v,0}Q_{c,2}Q_{d,2}Q_{b,1}=0,
    \end{align*}
    so $Q_{v,0}Q_{b,1}Q_{u,2}=0$. 
    It follows that $T_{v,0}T_{u,2}=0$, as required.
\end{proof}
The main conclusion that we draw from these observations is that both $S$ and $T$ satisfy a specific orthogonality property: Letting $W\in\{S,T\}$, it holds that $W_{v,0}W_{u,i}=0$ for some $i\in A$. In particular, there exists some modular shift $W'_u$ of $W_u$ such that $W_{v,0}W'_{u,0}=0$.

We now proceed to prove the following.

\begin{proposition}\label{prop:P5_v0u0=0} 
 Let $Q\colon \rel P_5\qto \rel A$ be a quantum homomorphism, and assume that $Q_{u,0}Q_{v,0}=0$. Then, $[Q_{\alpha,i},Q_{v,0}]=0$ for each $\alpha\in P_5$ and each $i\in A$.
\end{proposition}
\begin{proof}
First of all, using~\eqref{quantum_homo_1} we note that
\[\label{eqn:Q_x2->Q_y0}Q_{d,2}=Q_{d,2}\sum_{i\in A}Q_{v,i}=Q_{d,2}Q_{v,0}=\sum_{i\in A}Q_{d,i}Q_{v,0}=Q_{v,0}.\]
Hence, it is sufficient to prove that $[Q_{\alpha,i},Q_{d,2}]=0$.  By~\Cref{lem_trivial_bifurcation_siggers_D}, $[Q_\alpha, Q_{d,2}]=0$ for all $\alpha\in\{b,c,d,v\}$. Therefore, we only need to consider the cases of $\alpha=u$ and $\alpha=a$.
Observe that
\begin{align*}
    Q_{a,1}Q_{d,2}&=Q_{a,1}Q_{v,0}=\sum_{i\in A}Q_{u,i}Q_{a,1}Q_{v,0}
    \\&=Q_{u,0}Q_{a,1}Q_{v,0} =Q_{a,1}Q_{u,0}Q_{v,0}=0,
\end{align*}
where the last equality is true by the assumption of this proposition. Again, from~\eqref{quantum_homo_1}, we find
$Q_{a,2}=Q_{b,0}$,
which commutes with $Q_{d,2}$ by~\Cref{lem_trivial_bifurcation_siggers_D}. Thus, the PVM $Q_a$ commutes with $Q_{d,2}$, since $Q_{a,0}=\id-Q_{a,1}-Q_{a,2}$.

The only remaining case is when
$\alpha$ equals $u$.
$Q_u$ commutes with $Q_{d,2}$ since $Q_{u,0}Q_{d,2}=Q_{u,0}Q_{v,0}=0$ and $Q_{u,2}=Q_{a,0}$, which commutes with $Q_{v,0}$. The proof is then concluded since $Q_{u,1}=\id-Q_{u,0}-Q_{u,2}$ and so, $[Q_u,Q_{v,0}]=0$.\qedhere 
\end{proof}

Using~\Cref{prop:P5_v0u0=0}, we can now prove~\Cref{thm_A_cyc_shift_extension}, conditioned on $Q_{v,0}Q_{u,0}=0$.

\begin{proposition}\label{prop:P5_contextuality_disallowance} 
Let $Q\colon \rel P_5\qto \rel A$ be a quantum homomorphism such that $Q_{v,0}Q_{u,0}=0$, and suppose that every modular shift of the PVMs associated with $u$ and $v$ can be extended to a quantum homomorphism from $\rel P_5$ to $\rel A$. Then, $[Q_{u,i},Q_{v,j}]=0$ for each $i,j\in A$.
\end{proposition}

\begin{proof}
    Since $Q_{v,0}$ commutes with every projector in $Q$ (from~\Cref{prop:P5_v0u0=0}), we can decompose $Q$ as $S\oplus T$ as above, by defining for each $x\in P_5$ and each $i\in A$ 
    \begin{align*}
        S_{x,i}&= Q_{v,0}Q_{x,i}\quad\mbox{and}\\
        T_{x,i}&=(\id-Q_{v,0})Q_{x,i}.
    \end{align*} 
    The argument then follows by showing that, in each component, the $u$- and $v$-indexed PVMs commute when their modular shifts can be extended as stated in the proposition. 

    Firstly, $S_{u,i}$ always commutes with $S_{v,i}$ (since $S_{v,0}=\id$), so the implication vacuously holds. Now we focus on $T$. Assign $T_{u,i}=U_i$ and $T_{v,i}=V_i$, and assume that every assignment $T'\colon \{u,v\}\qto A$ defined by the modular shifts $T'_{u,i}=U_{i+_3k}$, $T'_{v,i}=V_{i+_3 l}$ can be completed to a quantum homomorphism $T'\colon \rel P_5\qto \rel A$ for every $k,l\in A$.

    Remember that $V_0=(\id-Q_{v,0})Q_{v,0}=0$. To show that $U_k$ must commute with $V_1$ and $V_2$, consider the modular shifts defined by $T'_{u,i}=U_{i+_3 (2-k)}$ and $T'_{v,i}=V_{i+_3 1}$. By our assumption, $T'_u$ and $T'_v$ can be extended to a quantum homomorphism $T'\colon \rel P_5\qto \rel A$. We will show that there can be no bifurcation between $U_k$ and any $v$-indexed projector in this homomorphism. 
    
    Firstly, note that $U_k=T'_{u,2}$, while 
    $T'_{u,2}=T'_{a,0}$
using~\eqref{quantum_homo_1}. Hence, we consider the path $a,b,c,d,v$ and show $T'_{a,0}$ commutes with $T'_v$. By~\Cref{lem_trivial_bifurcation_siggers_D}, the PVMs $T'_b$, $T'_c$, and $T'_d$ commute with each other since $\dist_{\Gaif{(\rel P_5)}}(x,y)\leq 2$ for all $x,y\in\{b,c,d\}$.
    
    Moreover, $T'_a$, $T'_b$, and $T'_c$ commute by~\Cref{lem_trivial_bifurcation_siggers_D}, and 
    \[T'_{d,1}=T'_{d,1}\sum_{i\in A}T'_{v,i}=0+0+T'_{d,1}T'_{v,2}=0\]
    since $T'_{v,2}=V_0=0$. 
By~\Cref{lem_trivial_bifurcation_siggers_C} and~\Cref{prop_bifurcation_detailed}, $T'_{a,y_0}T'_{d,1}\neq 0$ for any bifurcation along $a,b,c,d$. We conclude that there is no such bifurcation by the fact that $T'_{d,1}=0$. Thus, there is no \ref{bifurcation_P5} bifurcation between $T'_a$ and $T'_d$ and, so, they commute by~\Cref{prop_bifurcation_detailed}.
Similarly, $T'_c$, $T'_d$, and $T'_v$ commute by~\Cref{lem_trivial_bifurcation_siggers_D}, and $T'_{v,2}=V_0=0$. Thus, $T'_b$ and $T'_v$ commute by the same argument.

    In summary, we have established that $T'_i$ and $T'_j$ commute    for all $i,j\in\{a,b,c,d,v\}$ with $\{i,j\}\neq \{a,v\}$. \Cref{lem_trivial_bifurcation_siggers_C} shows that there cannot be a bifurcation using $a,b,c,d,v$ since $T'_{v,2}=0$. Hence, by~\Cref{prop_bifurcation_detailed}, there is no \ref{bifurcation_P5} bifurcation using the walk $u,a,b,c,d,v$ and thus $[T'_{u,2},T'_{v,1}]=0$. 
    Observe now that $T'_{v,2}=0$ implies $T'_{v,0}=\id-T'_{v,1}$. Therefore, $T'_{u,2}$ commutes with the entire PVM $T'_v$, and the proof is complete. 
\end{proof}

We can now prove the full statement of~\Cref{thm_A_cyc_shift_extension}.

\begin{proof}[Proof of~\Cref{thm_A_cyc_shift_extension}]
    Consider any quantum homomorphism $Q\colon \rel P_5 \qto \rel A$ such that the modular shifts of the $u$ and $v$-indexed PVMs extend.
    Using~\Cref{prop:P5_QB2_commutes}, we can decompose $Q= S\oplus T$ and by this construction, the modular shifts of the $u$ and $v$ PVMs in $S$ and $T$ can also be extended to quantum homomorphisms.  
    
    Letting $V\in \{S,T\}$, we know by~\Cref{SiggersC1} and~\Cref{SiggersC2} that there is some $W\colon \rel P_5\qto \rel A$ where
    \begin{enumerate}
        \item $W$ is an extension of some modular shifts of $V_u$ and $V_v$ (so, $W_u$ and $W_v$ are modular shifts of $V_u$ and $V_v$) and
        \item $W_{v,0}W_{u,0}=0$.
    \end{enumerate}

    Any modular shifts $W'_u$ and $W'_v$ of $W_u$ and $W_v$ are, in particular, modular shifts of $V_u$ and $V_v$. Thus, they can be extended to a quantum homomorphism $W'\colon \rel P_5\qto \A$ and we can apply~\Cref{prop:P5_contextuality_disallowance} to $W$.  We conclude that $W_u$ and $W_v$ commute. Thus,
    $V_u$ and $V_v$ commute for $V\in\{S,T\}$, and it follows that $Q_u$ and $Q_v$ commute.
\end{proof}

Now consider any quantum polymorphism $Q\colon \rel A^n\qto \rel A$, and fix two PVMs $Q_u$ and $Q_v$, associated with arbitrary elements $u,v\in A^n$. %
We show that for any modular shifts $Q'_u$ and $Q'_v$, there exists an appropriate restriction $\overline{Q}\colon \rel P_5\qto \rel A$ of $Q$ that extends $Q'_u$ and $Q'_v$.

\begin{proposition}\label{prop_directed_triangle_to_A}
    For any quantum polymorphism $Q\colon\rel A^n\qto \rel A$ and $x\in A^n$, there exist $y$, $z$ in $A^n$ such that $Q_y$ and $Q_z$ are modular shifts of $Q_x$.
\end{proposition}
\begin{proof}
    Take some $x\in A^n$ and let $y=x+_3\{1\}^n$ and  $z=x+_3\{2\}^n$, 
    where $+_3$ is the componentwise addition modulo 3 and $\{i\}^n$ is the $n$-tuple that with $i$ in each entry. Then the directed triangle $\rel T_x=(\{x,y,z\},\{(x,y),(y,z),(z,x)\})$ is a subgraph of $\rel A$, so every $Q\colon \rel A^n\qto \rel A$ restricts to $\overline{Q}\colon \rel T_x\qto \rel A$. 
    
    We now show that, in any quantum homomorphism $S\colon \rel T_x\qto \rel A$, $S_y$ and $S_z$ are the non-trivial modular shifts of $S_x$. To see this, note that
\begin{align*}
    S_{x,1}&=\sum_{i\in A}S_{z,i}S_{x,1}\sum_{i\in A}S_{y,i}=S_{z,0}S_{x,1}S_{y,2},\\
    S_{y,2}&=S_{y,2}\sum_{i\in A}S_{z,i}=S_{y,2}S_{z,0}=\sum_{i\in A}S_{y,i}S_{z,0}=Q_{z,0},
    \intertext{and}
    S_{y,2}S_{z,0}&=S_{z,0}S_{y,2}=S_{z,0}\sum_{i\in A}S_{x,i}S_{y,2}=S_{z,0}S_{x,1}S_{y,2},
    \end{align*}
    from which it follows that $S_{x,1}=S_{y,2}=S_{z,0}$. 
    To conclude, we simply repeat the argument for $S_{y,1}$ and $S_{z,1}$ and notice that $S_x$, cannot have more than 2 distinct, non-trivial, modular shifts (since $|A|=3$).
\end{proof}
\begin{proof}[Proof of~\Cref{thm_noncontectual_A}]
    We begin by noting that one can find a path of length 5 between any two vertices in $\rel A$. Thus, one can find a path of length 5 between any two vertices in $\rel A^n$.

    Fix any $u,v\in A^n$ and $Q\colon \rel A^n\qto \rel A$; we consider the PVMs $Q_u$ and $Q_v$. We can find $a,b,c,d\in A^n$ such that $u,a,b,c,d,v$ is a walk in $\rel A^n$ thus, we can restrict $Q$ to $\overline{Q}\colon \rel P_5 \qto \rel A$ and obtain a quantum homomorphism such that $\overline{Q}_u=Q_u$ and $\overline{Q}_v=Q_v$.

    Now consider any modular shifts $Q'_u$ and $Q'_v$ of $Q_u$ and $Q_v$, respectively. By~\Cref{prop_directed_triangle_to_A}, there exist $u',v'\in A^n$ such that $Q'_u=Q_{u'}$ and $Q'_v=Q_{v'}$. Hence, we can, again, restrict $Q$ to find $\overline{Q'}\colon \rel P_5 \qto \rel A$ such that $Q_{u'}=\overline{Q'}_u$ and $Q_{v'}=\overline{Q'}_v$.
    
    An application of~\Cref{thm_A_cyc_shift_extension} then yields that $Q_u$ and $Q_v$ commute for every $Q\colon \rel A^n \qto \rel A$ and $u,v\in A^n$, which means that $\qPol(\rel A)=\qcPol(\rel A)$. 
\end{proof}

We can finally conclude the proof of the main result of this section.

\begin{proof}[Proof of~\Cref{thm_siggers_undecidable}]
It is known that $\A$ pp-constructs 3SAT~\cite{BartoKozikNiven}.
Moreover, all quantum polymorphisms of $\A$ are non-contextual by~\Cref{thm_noncontectual_A}.
It follows from~\Cref{cor_np_complete_plus_no_contextual_means_undecidable_ALGEBRAIC} that $\qCSP(\A)$ is undecidable.
\end{proof}

\section{Quantum polymorphisms of odd cycles}

\label{sec_odd_cycles}
The goal of this section is to prove that the quantum CSP parameterised by an odd cycle $\rel C_m$ is undecidable. This resolves an open question raised in~\cite{Zeman}.
\thmquantumoddcyclesisundecidable*
To prove this result, we will show that all quantum polymorphisms of odd cycles are non-contextual---and, thus, odd cycles admit commutativity gadgets.

\begin{theorem}
\label{thm_odd_cycles_qPol_equals_qcPol}
   $\qPol(\rel C_m)=\qcPol(\rel C_m)$ for each odd $m\geq 3$. 
\end{theorem}
We start with the following observation. Henceforth in this section, we fix an odd integer $m\in\mathbb{N}$.
\begin{lemma}
    \label{lem_07101220}
    Let $Q$ be a quantum endomorphism of $\rel C_m$ and let $x,y,z$ be a $2$-walk in $\rel C_m$. Then $Q_{x,a}Q_{y,b}Q_{z,c}=Q_{y,b}Q_{z,c}$ for each $a,b,c\in\Z_m$.
\end{lemma}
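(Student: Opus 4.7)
The plan is to reduce the identity to a computation on classical automorphisms, via Corollary~\ref{cor_quantum_endo_odd_cycles_ess_classical}. That corollary asserts $\qHom(\rel C_m,\rel C_m)=\qcHom(\rel C_m,\rel C_m)$ for odd $m\geq 3$, so $Q$ is non-contextual and can be written as
\begin{align*}
Q=\bigoplus_{i\in[d]} h_i
\end{align*}
with each $h_i\colon\rel C_m\to\rel C_m$ a classical homomorphism. Since odd cycles are cores, each $h_i$ is in fact an automorphism of $\rel C_m$. Under this decomposition, $Q_{v,w}$ is the block-diagonal projector whose $i$-th block is $\id$ when $h_i(v)=w$ and $0$ otherwise.

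Next, I would invoke the rigidity of cycle automorphisms: for the 2-walk $x,y,z$ with $x\neq z$, the vertex $x$ is the unique neighbor of $y$ distinct from $z$, so for any automorphism $h$ the value $h(x)$ is forced to be the unique neighbor of $h(y)$ different from $h(z)$. In particular, the pair $(h_i(y),h_i(z))$ determines $h_i(x)$ for each $i\in[d]$.

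Combining the two, the verification is block-by-block. The $i$-th block of $Q_{y,b}Q_{z,c}$ equals $\id$ iff $h_i(y)=b$ and $h_i(z)=c$; the $i$-th block of $Q_{x,a}Q_{y,b}Q_{z,c}$ equals $\id$ iff additionally $h_i(x)=a$. By the rigidity observation, once $h_i(y)=b$ and $h_i(z)=c$ hold the value of $h_i(x)$ is already fixed, so the extra constraint contributed by $Q_{x,a}$ is redundant and the two block-diagonal projectors agree. The degenerate case $x=z$ follows directly from $Q_{z,a}Q_{z,c}=\delta_{a,c}Q_{z,a}$ together with the commutativity of $Q_{y,b}$ with $Q_{z,a}$ guaranteed by~\eqref{quantum_homo_2}.

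The main potential obstacle is the invocation of Corollary~\ref{cor_quantum_endo_odd_cycles_ess_classical}, which itself rests on the bifurcation machinery of Section~\ref{sec_bifurcations}; after granting it, the argument reduces to elementary combinatorics on $\rel C_m$ and book-keeping within the direct-sum decomposition, neither of which I expect to cause difficulty.
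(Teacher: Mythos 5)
Your argument follows the paper's own proof essentially verbatim: decompose $Q=\bigoplus_{i\in[d]}h_i$ via \Cref{cor_quantum_endo_odd_cycles_ess_classical}, observe that each $h_i$ is an automorphism of the core $\rel C_m$, and invoke the rigidity of $2$-paths in a cycle. So in terms of strategy there is nothing separating the two.

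However, the final step --- ``the extra constraint contributed by $Q_{x,a}$ is redundant'' --- does not hold for \emph{each} $a,b,c\in\Z_m$, and this is a genuine gap (one which, for what it is worth, the paper's own proof shares). Writing $Q_{v,w}=\sum_{i:h_i(v)=w}e_ie_i^*$, one has
\begin{align*}
Q_{x,a}Q_{y,b}Q_{z,c}=\sum_{\substack{i\in[d]\\ h_i(x)=a,\ h_i(y)=b,\ h_i(z)=c}}e_ie_i^*,
\qquad
Q_{y,b}Q_{z,c}=\sum_{\substack{i\in[d]\\ h_i(y)=b,\ h_i(z)=c}}e_ie_i^*.
\end{align*}
Your rigidity observation shows that every $i$ contributing to the right-hand sum has $h_i(x)$ equal to the unique neighbour of $b$ in $\rel C_m$ distinct from $c$ --- a value $a^*(b,c)$ that does not depend on $i$. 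The two sums therefore coincide when $a=a^*(b,c)$, but for any other $a$ the left-hand side is $0$ while the right-hand side can be a nonzero projector. Concretely, take $Q$ to be the identity endomorphism ($d=1$), $x,y,z=0,1,2$, and $(a,b,c)=(3,1,2)$: then $Q_{0,3}Q_{1,1}Q_{2,2}=0$ whereas $Q_{1,1}Q_{2,2}=\id$. The same issue affects your degenerate case $x=z$, where the identity fails whenever $a\neq c$ and $Q_{y,b}Q_{z,c}\neq 0$. So the statement as quantified is false, and the ``redundancy'' step silently assumes that $a$ is the value forced by $(b,c)$. A correct version must restrict the triple (e.g.\ require $a$ adjacent to $b$ and $a\neq c$, or sum over $a$), and one should then verify that the restricted identity still supports its use in the proof of \Cref{thm_odd_cycles_qPol_equals_qcPol}, where it is applied with a fixed $a$ and all $c_1,c_2$.
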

\begin{proof}
    If $x=z$ the result is trivial, so we may assume that $x\neq z$. We know from~\Cref{cor_quantum_endo_odd_cycles_ess_classical} that there exist $d\in\N$ and $h_1,\dots, h_d\in\Hom(\rel C_m,\rel C_m)$ such that $Q=\bigoplus_{i\in[d]}h_i$. Using~\cref{eqn_1240_0710}, we can write
    \begin{align*}
        Q_{x,a}Q_{y,b}Q_{z,c}
        &=
        \left(\sum_{\substack{i\in[d]\\h_i(x)=a}}\be_i\be_i^*\right)
        \left(\sum_{\substack{j\in[d]\\h_j(y)=b}}\be_j\be_j^*\right)
        \left(\sum_{\substack{k\in[d]\\h_k(z)=c}}\be_k\be_k^*\right)
        =
        \sum_{\substack{i\in[d]\\h_i(x)=a\\ h_i(y)=b\\ h_i(z)=c}}\be_i\be_i^*.
    \end{align*}
    Let $i\in[d]$ give a non-trivial contribution to the sum above. Since $x,y,z$ form a path in $\rel C_m$, any choice of $(b,c)\in\Z_m^2$ for which $h_i(y)=b$ and $h_i(z)=c$ gives precisely one choice of $a\in\Z_m$ such that $h_i(x)=a$. It follows that
    \begin{align*}
        Q_{x,a}Q_{y,b}Q_{z,c}
        &=
        \sum_{\substack{i\in[d]\\ h_i(y)=b\\ h_i(z)=c}}\be_i\be_i^*
        =
        Q_{y,b}Q_{z,c},
    \end{align*}
    as needed.
\end{proof}

Observe that the non-contextuality of \textit{unary} quantum polymorphisms of odd cycles---i.e., quantum homomorphisms $\rel C_m\qto\rel C_m$---is clear from~\Cref{cor_quantum_endo_odd_cycles_ess_classical}. Hence, in order to prove~\Cref{thm_odd_cycles_qPol_equals_qcPol}, we need to extend non-contextuality to higher arities.
Henceforth in this section, we fix some arity $n\geq 2$. Furthermore, whenever a plus or minus sign is applied to an element of $\Z_m$ or to a tuple of such elements, it shall be meant modulo $m$.

\begin{lemma}
\label{lemma_commutator_cycle_same_answer}
    Fix two vertices $\bx,\by$ of $\rel C_m^n$ such that $\dist_{\rel C_m^n}(\bx,\by)=2$. For any $a\in \Z_m$, it holds that
    \begin{align*}
        [Q_{\bx,a},Q_{\by,a}]=0.
    \end{align*}
\end{lemma}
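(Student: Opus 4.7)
My plan is to reduce the commutativity of $Q_{\bx,a}$ and $Q_{\by,a}$ to the non-contextuality of quantum endomorphisms of $\rel C_m$ established in \Cref{cor_quantum_endo_odd_cycles_ess_classical}, by composing $Q$ with suitable classical homomorphisms $g\colon\rel C_m\to\rel C_m^n$.

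First, fix a common neighbour $\bz$ of $\bx$ and $\by$, which exists by the distance-$2$ assumption. By~\eqref{quantum_homo_2}, the PVM $Q_{\bz}$ commutes with both $Q_{\bx}$ and $Q_{\by}$, so the Hilbert space decomposes as $H=\bigoplus_{b\in\Z_m}H_b$ with $H_b=\mathrm{range}(Q_{\bz,b})$, and it suffices to verify $[Q_{\bx,a},Q_{\by,a}]=0$ on each $H_b$. Property~\eqref{quantum_homo_1} further implies that $Q_{\bx,a}|_{H_b}$ and $Q_{\by,a}|_{H_b}$ both vanish unless $a\in\{b-1,b+1\}$, so only two values of $a$ per eigenspace need to be considered.

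Since $\dist_{\rel C_m^n}(\bx,\by)=2$ forces $y_i-x_i\in\{-2,0,2\}$ for every coordinate $i$, I then define $g\colon\rel C_m\to\rel C_m^n$ coordinate-wise by letting $g_i$ be the unique automorphism of $\rel C_m$ with $g_i(0)=x_i$ and $g_i(1)=z_i$. Then $\bw:=g(2)$ agrees with $\by$ on the coordinates where $y_i\ne x_i$, while $w_i=x_i\pm2\ne y_i$ on the \emph{trivial} coordinates where $y_i=x_i$. By \Cref{composition-homomorphisms}, $Q\circ g$ is a quantum endomorphism of $\rel C_m$, hence non-contextual by \Cref{cor_quantum_endo_odd_cycles_ess_classical}, and this immediately yields $[Q_{\bx,a},Q_{\bw,a}]=0$. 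If no trivial coordinates are present then $\bw=\by$ and the proof is complete; otherwise, we must still bridge $\bw$ and $\by$.

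For this last step, I would apply \Cref{lem_07101220} to $Q\circ g$ and to analogous quantum endomorphisms obtained by varying the common neighbour $\bz$ (whose choices differ only in trivial coordinates). This produces rigid identities of the shape $Q_{\bx,a}Q_{\bz,b}Q_{\bw,c}=Q_{\bz,b}Q_{\bw,c}$ whenever $(a,b,c)$ is a straight $2$-walk in $\rel C_m$, and these identities pin down the action of $Q_{\bx,a}$ within each eigenspace $H_b$ precisely enough to conclude its commutation with $Q_{\by,a}|_{H_b}$. The main obstacle I anticipate is exactly this bridging: commutativity is not transitive, so one cannot simply chain $[Q_{\bx,a},Q_{\bw,a}]=0$ with an analogous relation between $\bw$ and $\by$. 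The identities coming from \Cref{lem_07101220} must instead be used as exact structural constraints inside each $H_b$, and one must verify that combining the different choices of common neighbour cancels the contributions coming from the trivial coordinates.
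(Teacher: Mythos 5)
Your proposal correctly sets up the problem and handles the easy case, but the essential step is missing. Writing $\bu=\bz-\bx$ and $\bv=\by-\bz$, your construction of $g$ gives $\bw=g(2)=\bx+2\bu$, so the non-contextuality of the quantum endomorphism $Q\bullet g$ (via \Cref{cor_quantum_endo_odd_cycles_ess_classical}) only yields $[Q_{\bx,a},Q_{\bw,a}]=0$. This coincides with the desired statement exactly when $\bu=\bv$, i.e.\ when $\bx$ and $\by$ happen to lie on a common induced $m$-cycle of $\rel C_m^n$ --- which is precisely the case the lemma is \emph{not} about (the paper even flags in a footnote that for $\dist=2$ one cannot in general arrange $\bz_1-\bx=\bz_2-\bz_1$). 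For the genuine case $\bu\neq\bv$ you only offer the hope that the identities from \Cref{lem_07101220}, taken over the various common neighbours $\bz$, ``pin down'' $Q_{\bx,a}$ inside each eigenspace $H_b$; no computation is given, and you yourself name the obstruction (non-transitivity of commutation) without resolving it. As written, this is a gap, not a proof.

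For comparison, the paper's argument bridges $\bx$ and $\by$ by an entirely different mechanism. It restricts $Q$ to the induced $m$-cycle $\by,\by+\bv,\dots,\by+(m-1)\bv$ to get $\sum_{i}Q_{\by+i\bv,a}=\id$ (using \Cref{cor_quantum_endo_odd_cycles_ess_classical} together with \Cref{lem_ess_classical_endo_automorphism}), kills all but two terms against $Q_{\bx,a}$ by the walk-parity obstruction of \Cref{prop_MR_quantum_homomorphisms_preserve_walks}, and deduces $Q_{\bx,a}=(Q_{\by,a}+Q_{\by-2\bv,a})Q_{\bx,a}$, hence $[Q_{\bx,a},Q_{\by,a}+Q_{\by-2\bv,a}]=0$. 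A reflection automorphism fixing $\bx$ and $\by-2\bv$ but sending $\by$ to $2\bx-\by$ then gives $[Q_{\bx,a},Q_{\by,a}]=[Q_{\bx,a},Q_{2\bx-\by,a}]$, and iterating translations $m$ times, using that $m$ is odd, yields $[Q_{\bx,a},Q_{\by,a}]=[Q_{\by,a},Q_{\bx,a}]$, forcing the commutator to vanish. Some such global symmetry/parity input seems unavoidable; if you want to salvage your eigenspace-by-eigenspace strategy, you would at minimum need to prove a concrete identity relating $Q_{\by,a}$ to the operators $Q_{\bw,a}$ you control, which is exactly what is absent.
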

\begin{proof}
In this proof, we shall abbreviate $Q_{\ba,a}$ to $Q_{\ba}$ for each $\ba\in [m]^n$.
Let $\bx\to\bz\to\by$ be a $2$-path connecting $\bx$ and $\by$ in $\rel C_m^n$. Let also $\bu=\bz-\bx$ and $\bv=\by-\bz$, and notice that $\bu,\bv\in\{\pm 1\}^n$. 
\ignore{
If $\bu=\bv$, notice that the vertices $\bx,\bx+\bu,\bx+2\bu=\by,\bx+3\bu,\dots,\bx+(m-1)\bu$ form a $\rel C_m$ within $\rel C_m^n$, and thus the result follows from~\Cref{prop_zeman_endo_auto_odd_cycles}. Hence, we can safely assume that $\bu\neq \bv$, so there exists some coordinate $\alpha\in [n]$ such that $u_\alpha\neq v_\alpha$.}
Observe that the vertices $\by,\by+\bv,\by+2\bv,\dots,\by+(m-1)\bv$ induce a $\rel C_m$ within $\rel C_m^n$. The restriction of $Q$ to such vertices is a quantum endomorphism of $\rel C_m$. Therefore, combining~\Cref{cor_quantum_endo_odd_cycles_ess_classical} and~\Cref{lem_ess_classical_endo_automorphism}, we obtain that
\begin{align*}
    \sum_{i\in\Z_m}Q_{\by+i\bv}=\id.
\end{align*}
Consider the two paths 
\begin{align*}
&\bx\to\bx+\bu\to\by\to\by+\bv\to\by+2\bv\to\dots\to\by+(m-4)\bv,\\
&\bx\to\bx+\bu\to\by-2\bv\to\by-3\bv\to\dots\to\by-(m-2)\bv,
\end{align*}
and notice that both have length $m-2$. Since there is no closed walk in $\rel C_m$ of odd length less than $m$, it follows from~\Cref{prop_basic_walks_essentially_binary} that $Q_\bx Q_{\by+i\bv}=0$ for each $i\in\Z_m\setminus\{0,-2\}$. 
Therefore,
\begin{align*}
    0&=
    Q_{\by+\bv}Q_{\bx}
    =
    \left(\id-\sum_{i\in\Z_{m}\setminus\{1\}}Q_{\by+i\bv}\right)Q_{\bx}
    =
    Q_{\bx}-(Q_{\by}+Q_{\by-2\bv})Q_{\bx}.
\end{align*}
Hence, $Q_{\bx}=(Q_{\by}+Q_{\by-2\bv})Q_{\bx}$. Since projectors are self-adjoint, it follows that
\begin{align*}
    [Q_{\bx},Q_{\by}+Q_{\by-2\bv}]=0.
\end{align*}
Consider now the automorphism $\xi$ of $\rel C_m^n$ that acts as the reflection around $x_\alpha$ for each coordinate $\alpha\in [n]$ for which $u_\alpha=v_\alpha$, and acts as the identity for all other coordinates. Note that both $\bx$ and $\bx+\bu-\bv=\by-2\bv$ are fixed points of $\xi$, while $\xi(\by)=2\bx-\by$. Since the previous part of the argument is invariant under (classical) automorphisms of $\rel C_m^n$, we deduce that 
\begin{align*}
[Q_{\bx},Q_{2\bx-\by}+Q_{\by-2\bv}]=0.
\end{align*}
By the bilinearity of the commutator, it follows that
\begin{align*}
    [Q_{\bx},Q_{\by}]=[Q_{\bx},Q_{2\bx-\by}].
\end{align*}
We now use the same argument after applying the automorphism of $\rel C_m^n$ given by $\ba\mapsto \ba+\by-\bx$. We obtain
\begin{align*}
     [Q_{\bx},Q_{\by}]=[Q_{\bx},Q_{2\bx-\by}]
     &=
     [Q_{3\bx-2\by},Q_{2\bx-\by}]
     =
     [Q_{3\bx-2\by},Q_{4\bx-3\by}]
     =
     \dots\\
     &=
     [Q_{m\bx-(m-1)\by},Q_{(m+1)\bx-m\by}]
     =
     [Q_{\by},Q_{\bx}],
\end{align*}
whence we conclude that $[Q_\bx,Q_\by]=0$, as required.
\end{proof}

\begin{lemma}
\label{lem_1146_0710}
    Fix two vertices $\bx,\by$ of $\rel C_m^n$ such that $\dist_{\rel C_m^n}(\bx,\by)=2$. For any $a,b\in \Z_m$, it holds that
    \begin{align*}
        [Q_{\bx,a},Q_{\by,b}]=0.
    \end{align*}
\end{lemma}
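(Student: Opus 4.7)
My plan is to extend \Cref{lemma_commutator_cycle_same_answer} from the diagonal case $a=b$ to arbitrary values $a,b \in \Z_m$. First, by the walk-preservation principle (\Cref{prop_MR_quantum_homomorphisms_preserve_walks}), since $\bx,\by$ lie at distance $2$ in $\rel C_m^n$, the product $Q_{\bx,a}Q_{\by,b}$ and its reverse both vanish unless $b \in \{a-2,a,a+2\}$. The case $b=a$ is exactly \Cref{lemma_commutator_cycle_same_answer}, and the case $b=a-2$ follows from $b=a+2$ by taking adjoints and swapping the roles of $\bx$ and $\by$. So I focus on the single case $b = a+2$. Pick a $2$-path $\bx \to \bz \to \by$, set $\bu := \bz-\bx,\ \bv := \by-\bz \in \{\pm 1\}^n$.

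The heart of the argument reduces the commutator to a single algebraic identity. Since $\bx\sim\bz$ and $\bz\sim\by$ in $\rel C_m^n$, axiom~\eqref{quantum_homo_1} gives $Q_{\bx,a}Q_{\bz,c}=0$ for $c\notin\{a-1,a+1\}$ and $Q_{\bz,c}Q_{\by,a+2}=0$ for $c\notin\{a+1,a+3\}$, so inserting $\id = \sum_c Q_{\bz,c}$ collapses to the single term
\begin{align*}
    Q_{\bx,a}Q_{\by,a+2}=Q_{\bx,a}Q_{\bz,a+1}Q_{\by,a+2}.
\end{align*}
Next I apply \Cref{lem_07101220} to the restriction of $Q$ to the axis cycle $\{\bx+i\bu : i\in\Z_m\}$ (which is a quantum endomorphism of $\rel C_m$ by \Cref{cor_quantum_endo_odd_cycles_ess_classical}) along the $2$-walk $\bx,\bz,\bx+2\bu$. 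Summing over the endpoint yields the substitution identity
\begin{align*}
    Q_{\bx,a}Q_{\bz,a+1}=Q_{\bz,a+1}Q_{\bx+2\bu,a+2},
\end{align*}
which plugs into the previous display to give $Q_{\bx,a}Q_{\by,a+2}=Q_{\bz,a+1}Q_{\bx+2\bu,a+2}Q_{\by,a+2}$.

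Finally, I take the adjoint to obtain $Q_{\by,a+2}Q_{\bx,a}=Q_{\by,a+2}Q_{\bx+2\bu,a+2}Q_{\bz,a+1}$, and I rearrange it into the same expression as above. The crucial observation is that $\by-(\bx+2\bu)=\bv-\bu$ has entries in $\{-2,0,2\}$, so $\bx+2\bu$ and $\by$ are at distance at most $2$ in $\rel C_m^n$; the already-proven \Cref{lemma_commutator_cycle_same_answer} therefore gives $[Q_{\bx+2\bu,a+2},Q_{\by,a+2}]=0$. Combining this with the adjacency commutations $[Q_{\bz},Q_{\by}]=0$ and $[Q_{\bz},Q_{\bx+2\bu}]=0$ (both $\by$ and $\bx+2\bu$ are neighbours of $\bz$) lets me rearrange the adjoint expression as $Q_{\bz,a+1}Q_{\bx+2\bu,a+2}Q_{\by,a+2}$, matching $Q_{\bx,a}Q_{\by,a+2}$, whence $[Q_{\bx,a},Q_{\by,a+2}]=0$. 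The main obstacle is conceptual rather than technical: identifying the correct auxiliary projector $Q_{\bx+2\bu,a+2}$ that both arises from an axis-cycle endomorphism identity and lies close enough to $\by$ to invoke the same-answer lemma. Once this bridging vertex is found, the chain of substitutions and commutations is essentially routine.
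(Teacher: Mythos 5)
Your proof is correct and follows the same skeleton as the paper's: reduce to the case $b=a+2$ via \Cref{prop_MR_quantum_homomorphisms_preserve_walks}, insert resolutions of identity along the $2$-path, use the non-contextuality of an axis cycle (\Cref{cor_quantum_endo_odd_cycles_ess_classical}) to produce an auxiliary projector lying within distance $2$ of one endpoint and carrying the \emph{same} answer as that endpoint, invoke \Cref{lemma_commutator_cycle_same_answer}, and finish with an adjoint argument. The genuine difference is a mirror-image choice of bridging vertex: the paper works on the $\bv$-axis through $\by$ and uses $Q_{\by-2\bv,a}$ (matching $\bx$'s answer), which forces two insertions of the identity and an extra range-intersection argument, via $\sum_{i}Q_{\by+i\bv,a+2}=\id$, to kill the competing term $Q_{\by-2\bv,a+2}$; you work on the $\bu$-axis through $\bx$ and use $Q_{\bx+2\bu,a+2}$ (matching $\by$'s answer), which lets you collapse with a single insertion at $\bz$ and avoids that step entirely, so your version is arguably a bit cleaner. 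Two points to tighten. First, the substitution identity $Q_{\bx,a}Q_{\bz,a+1}=Q_{\bz,a+1}Q_{\bx+2\bu,a+2}$ does not follow by literally ``summing over the endpoint'' in \Cref{lem_07101220} (that manipulation would yield the false identity $Q_{\bx,a}Q_{\bz,a+1}=Q_{\bz,a+1}$); you need the underlying determination argument: writing the restriction of $Q$ to the axis cycle as $\bigoplus_i h_i$ with each $h_i$ an automorphism of $\rel C_m$, the index sets $\{i:h_i(\bx)=a,\ h_i(\bz)=a+1\}$ and $\{i:h_i(\bz)=a+1,\ h_i(\bx+2\bu)=a+2\}$ coincide, which gives the identity directly. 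Second, \Cref{lemma_commutator_cycle_same_answer} is stated for distance exactly $2$, so when applying it to the pair $(\bx+2\bu,\by)$ you should dispose of the degenerate cases separately: if $\bu=\bv$ the two vertices coincide, and for $m=3$ they may be adjacent, in which case~\eqref{quantum_homo_2} gives the commutation; both are immediate.
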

\begin{proof}
    We can assume without loss of generality that $b=a\pm 2$ $\mod m$, as otherwise $Q_{\bx,a}Q_{\by,b}=0$ by~\Cref{prop_basic_walks_essentially_binary}. Up to symmetry, we can further suppose that $b=a+2$ $\mod m$. Let $\bu$ and $\bv$ be as in the proof of~\Cref{lemma_commutator_cycle_same_answer}, and observe that
    \begin{align*}
        Q_{\bx,a}Q_{\by,a+2}
        &=
        Q_{\bx,a}\left(\sum_{b\in\Z_m}Q_{\by-\bv,b}\right)\left(\sum_{c\in\Z_m}Q_{\by-2\bv,c}\right)Q_{\by,a+2}\\
        &=
        \sum_{b,c\in\Z_m}Q_{\bx,a}Q_{\by-\bv,b}Q_{\by-2\bv,c}Q_{\by,a+2}.
    \end{align*}
    Since $[Q_{\by-\bv,b},Q_{\by-2\bv,c}]=0$,
    for a product in the summation above to be non-zero, it must be the case that $b$ is a neighbour of both $a$ and $a+2$; i.e., $b=a+1$. Consider now $Q_{\by-2\bv,c}$. For the product to be non-zero, since $\by-2\bv$ is a $\rel C_m^n$-neighbour of $\by-\bv$, it must hold that $c$ is a $\rel C_m$-neighbour of $b$; i.e., $c=a$ or $c=a+2$. Suppose $Q_{\by-2\bv,a+2}\neq 0$. We know from~\Cref{lemma_commutator_cycle_same_answer} that $[Q_{\by-2\bv,a+2},Q_{\by,a+2}]=0$ and, hence, the two projectors can be diagonalised simultaneously. Hence, if it was the case that $Q_{\by-2\bv,a+2}Q_{\by,a+2}\neq 0$, the ranges of $Q_{\by-2\bv,a+2}$ and of $Q_{\by,a+2}$ would have non-trivial intersection. But this would imply that $\sum_{i\in\Z_m}Q_{\by+i\bv,a+2}\neq\id$, contradicting~\Cref{cor_quantum_endo_odd_cycles_ess_classical} and~\Cref{lem_ess_classical_endo_automorphism}. It follows that the only choice for $c$ giving a non-trivial contribution is $c=a$, so
    \begin{align*}
    Q_{\bx,a}Q_{\by,a+2}
        =
        Q_{\bx,a}Q_{\by-\bv,a+1}Q_{\by-2\bv,a}Q_{\by,a+2}.
    \end{align*}
    By a similar argument, we have
    \begin{align*}
        Q_{\by-\bv,a+1}Q_{\by-2\bv,a}
        &=
        Q_{\by-\bv,a+1}Q_{\by-2\bv,a}Q_{\by,a+2}.
    \end{align*}
    Hence, we obtain
    \begin{align*}
         Q_{\by,a+2}Q_{\bx,a}Q_{\by,a+2}
        &=
        Q_{\by,a+2}Q_{\bx,a}Q_{\by-\bv,a+1}Q_{\by-2\bv,a}Q_{\by,a+2}\\
        &=Q_{\by,a+2}Q_{\bx,a}Q_{\by-\bv,a+1}Q_{\by-2\bv,a}\\
        &=Q_{\by,a+2}Q_{\by-\bv,a+1}Q_{\by-2\bv,a}Q_{\bx,a}
        \intertext{(where we are using that $[Q_{\bx,a},Q_{\by-\bv,a+1}]=0$ as $\bx\sim\by-\bv$ and $[Q_{\bx,a},Q_{\by-2\bv,a}]=0$ by~\Cref{lemma_commutator_cycle_same_answer})}
        &=Q_{\by,a+2}Q_{\by-2\bv,a}Q_{\by-\bv,a+1}Q_{\bx,a}\\
        &=
        (Q_{\bx,a}Q_{\by,a+2})^*
        =
        Q_{\by,a+2}Q_{\bx,a}.
    \end{align*}
    In particular, we deduce that $Q_{\by,a+2}Q_{\bx,a}$ is self-adjoint, so
    \begin{align*}
       [Q_{\bx,a},Q_{\by,a+2}]=0, 
    \end{align*}
    as needed.
\end{proof}

We can now finalise the proof of~\Cref{thm_odd_cycles_qPol_equals_qcPol}.
\begin{proof}[Proof of~\Cref{thm_odd_cycles_qPol_equals_qcPol}]
Let $\bx$ and $\by$ be two vertices of $\rel C_m^n$ and let $a,b$ be two vertices of $\rel C_m$.
We need to show that $[Q_{\bx,a},Q_{\by,b}]=0$. We shall use induction on $\delta=\dist_{\rel C_m^n}(\bx,\by)$. If $\delta\in\{0,1\}$, the result is clear from the definition of quantum homomorphisms. If $\delta=2$, the result is~\Cref{lem_1146_0710}. Let then $\delta\geq 3$, and suppose that the result holds for all pairs of vertices at mutual distance at most $\delta-1$. Let $\bx,\bz_1,\bz_2,\dots,\bz_{\delta-1},\by$ be a shortest path from $\bx$ to $\by$ in $\rel C_m^n$. Since $\delta\geq 3$, we can assume without loss of generality that the vectors $\bz_1-\bx$ and $\bz_2-\bz_1$ are equal.\footnote{Note that this is false in general when $\dist_{\rel C_m^n}(\bx,\by)=2$. For instance, consider the case that $n=2$, $m\geq 5$, $\bx=(0,0)$, and $\by=(0,2)$.} We now expand the product $Q_{\bx,a}Q_{\by,b}$ as follows:
\begin{align*}
    Q_{\bx,a},Q_{\by,b}
    &=
    Q_{\bx,a}\left(\sum_{c_1\in\Z_m}Q_{\bz_1,c_1}\right)\left(\sum_{c_2\in\Z_m}Q_{\bz_2,c_2}\right)\dots\left(\sum_{c_{\delta-1}\in\Z_m}Q_{\bz_{\delta-1},c_{\delta-1}}\right)Q_{\by,b}\\
    &=
    \sum_{\bc\in\Z_m^{\delta-1}}Q_{\bx_\ba}Q_{\bz_1,c_1}Q_{\bz_2,c_2}\dots Q_{\bz_{\delta-1}c_{\delta-1}}Q_{\by,b}.
\end{align*}
Note now that $\dist(\bz_i,\by)\leq \delta-1$ for each $i\in[\delta-1]$. Hence, the inductive hypothesis allows us to freely permute the corresponding projectors in the expression above, thus yielding
\begin{align}
\label{eqn_1221_0710}
    Q_{\bx,a}Q_{\by,b}
    &=\sum_{\bc\in\Z_m^{\delta-1}}Q_{\bx_\ba}Q_{\by,b}Q_{\bz_1,c_1}Q_{\bz_2,c_2}\dots Q_{\bz_{\delta-1}c_{\delta-1}}.
\end{align}
Multiplying both terms by $Q_{\bx,a}$ on the right and further rearranging gives
\begin{align*}
    Q_{\bx,a}Q_{\by,b}Q_{\bx,a}
    &=\sum_{\bc\in\Z_m^{\delta-1}}Q_{\bx_\ba}Q_{\by,b}Q_{\bz_1,c_1}Q_{\bz_2,c_2}\dots Q_{\bz_{\delta-1}c_{\delta-1}}Q_{\bx,a}\\
    &=\sum_{\bc\in\Z_m^{\delta-1}}Q_{\bx_\ba}Q_{\by,b}Q_{\bx,a}Q_{\bz_1,c_1}Q_{\bz_2,c_2}\dots Q_{\bz_{\delta-1}c_{\delta-1}}.
\end{align*}
Observe now that $\bx$, $\bz_1$, and $\bz_2$ lie on a common $m$-cycle within $\rel C_m^n$ since $\bz_1-\bx=\bz_2-\bz_1$.
By~\Cref{lem_07101220}, it holds that $Q_{\bx,a}Q_{\bz_1,c_1}Q_{\bz_2,c_2}=Q_{\bz_1,c_1}Q_{\bz_2,c_2}$ for each $c_1,c_2\in\Z_m$. We deduce that 
\begin{align*}
    Q_{\bx,a}Q_{\by,b}Q_{\bx,a}
    &=\sum_{\bc\in\Z_m^{\delta-1}}Q_{\bx_\ba}Q_{\by,b}Q_{\bz_1,c_1}Q_{\bz_2,c_2}\dots Q_{\bz_{\delta-1}c_{\delta-1}}
    =
    Q_{\bx,a}Q_{\by,b},
\end{align*}
where the second equality is~\cref{eqn_1221_0710}. From this, we conclude that $Q_{\bx,a}$ and $Q_{\by,b}$ commute, as required.
\end{proof}

The undecidability of $\qCSP(\rel C_m)$ is now a simple consequence of the machinery developed in~\Cref{sec_quantum_algebraic_constructions} and~\Cref{sec_quantum_relational_constructions}.

\begin{proof}[Proof of~\Cref{thm_quantum_odd_cycles_is_undecidable}]
    $\rel K_m$ is pp-definable in $\rel C_m$ with the gadget consisting of a path of length $m-2$; hence, $\Pol(\rel C_m)\to\Pol(\K_m)\to\Pol(\operatorname{3SAT})$, and the result follows by combining~\Cref{thm_odd_cycles_qPol_equals_qcPol} and~\Cref{cor_np_complete_plus_no_contextual_means_undecidable_ALGEBRAIC}.
\end{proof}

We conclude this section by proving that quantum homomorphisms between different odd cycles can be contextual.
\begin{proposition}
\label{prop_contextual_homomorphisms_C7_to_C5}
    There is a quantum homomorphism $\rel C_7\qto\rel C_5$ over a 2-dimensional Hilbert space that is not in $\qc{\Hom(\rel C_7,\rel C_5)}$.
\end{proposition}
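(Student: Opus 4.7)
The plan is to construct $Q$ explicitly. By~\Cref{observation} and the discussion preceding~\Cref{quantum-closure}, a quantum homomorphism lies in $\qc{\Hom(\rel C_7,\rel C_5)}$ if and only if all of its PVMs pairwise commute; hence it suffices to exhibit a quantum homomorphism $Q\colon \rel C_7\qto[\mathbb C^2]\rel C_5$ with two PVMs that do not commute.

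The main step, and the only non-routine one, is to choose two classical homomorphisms whose disagreement set in $\rel C_7$ is disconnected. Identifying the vertex set of $\rel C_5$ with $\mathbb Z_5$, I would take
\[
h_1 = (0, 1, 0, 1, 2, 3, 4) \quad\text{and}\quad h_2 = (0, 4, 0, 4, 3, 2, 1),
\]
where the tuples list the images of vertices $0,1,\ldots,6$ of $\rel C_7$ in order. Both are homomorphisms ($h_2$ arises from $h_1$ by post-composition with the reflection $x\mapsto -x$ of $\rel C_5$), and they agree precisely at vertices $0$ and $2$. Consequently the disagreement set $\{1,3,4,5,6\}$ has two connected components in $\rel C_7$: the singleton $\{1\}$ and the path $\{3,4,5,6\}$. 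This disconnectedness is what will create room to rotate one component independently and thereby introduce non-commutativity.

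Next I would form the direct sum $Q^{(0)} = h_1\oplus h_2$ on $H=\mathbb C^2$ with orthonormal basis $(e_1,e_2)$, as in~\Cref{quantum-closure}: explicitly $Q^{(0)}_{0,0}=Q^{(0)}_{2,0}=\id$, while at every other vertex $x$ the PVM $Q^{(0)}_x$ consists of the rank-one diagonal projectors $e_1 e_1^*$ on color $h_1(x)$ and $e_2 e_2^*$ on color $h_2(x)$. I would then define $Q$ by replacing $Q^{(0)}_1$ with the rotated PVM
\[
Q_{1,1} = \tfrac{1}{2}(e_1+e_2)(e_1+e_2)^*, \qquad Q_{1,4} = \tfrac{1}{2}(e_1-e_2)(e_1-e_2)^*,
\]
and leaving $Q_x=Q^{(0)}_x$ at all other vertices $x$.

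To verify that $Q$ is a quantum homomorphism, I would observe that the only edges of $\rel C_7$ affected by the modification are $(0,1)$ and $(1,2)$, at which the other endpoints carry PVMs valued in $\{0,\id\}$. On these edges condition~\eqref{quantum_homo_2} is automatic, while~\eqref{quantum_homo_1} reduces to the support requirement that $Q_{1,b}=0$ whenever $b$ is not a neighbor of $0$ in $\rel C_5$ (i.e., $b\notin\{1,4\}$), which holds by construction. All other edges are inherited from $Q^{(0)}$. Finally, $Q$ is contextual: $Q_{1,1}$ and $Q_{3,1}=e_1 e_1^*$ are rank-one projectors onto distinct non-orthogonal lines in $\mathbb C^2$, and a direct computation shows $[Q_{1,1},Q_{3,1}]\neq 0$.
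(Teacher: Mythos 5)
Your construction is correct, and it is essentially the paper's own proof: both take a direct sum of two classical homomorphisms that agree (deterministically) on two vertices at distance two, and then replace the diagonal PVM at the vertex sandwiched between them by a rotated rank-one PVM supported on the two neighbours of the common image, which breaks commutativity with a PVM at distance two. Your choice of $h_1$ and $h_2=-h_1$ is a cleaner, fully explicit instance of the same idea.
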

\begin{proof}
    Consider for this proof $\rel C_5$ to have domain $\{0,1,2,5,6\}$ with edges between adjacent numbers, and $\rel C_7$ to have domain $[7]$.
    Consider the two classical endomorphisms $h,g\colon \rel C_5\to\rel C_5$ such that
    $h(2)=6,h(5)=0$ and $g(5)=0,g(2)=1$.
    Then $h\oplus g$ is a quantum homomorphism $Q\colon \rel C_5\qto\rel C_5$ over a 2-dimensional Hilbert space.
    Extend $Q$ by setting $Q_{3,a}=Q_{5,a}$ for $a\in\{0,1,2,5,6\}$.
    Choose a basis in which all the projectors are diagonal, and in this basis define
    $Q_{4,1}=\frac12\begin{pmatrix}1 & 1\\1 &1\end{pmatrix}$ and $Q_{4,6}=\frac12\begin{pmatrix}1&-1\\-1&1\end{pmatrix}$.
    Then $Q\colon\rel C_7\qto\rel C_5$ and $[Q_{4,1},Q_{2,1}]\neq 0$ since $Q_{2,1}=\begin{pmatrix}
        0&0\\0&1
    \end{pmatrix}$.
\end{proof}

\section{Non-oracular polymorphisms} 
\label{sec_nonoracular}

In this section, we consider the \textit{non-oracular} version of the results described in~\Cref{sec_quantum_algebraic_constructions} and~\Cref{sec_quantum_relational_constructions}. A \textit{non-oracular quantum homomorphism} between two $\sigma$-structures $\A$ and $\B$, denoted by $Q\colon\A\qnoto\B$, is a quantum function $Q\colon A\qto B$ 
satisfying the condition~\eqref{quantum_homo_1} but possibly not~\eqref{quantum_homo_2}; see~\Cref{subsec_prelimns_quantum_CSPs}.
The following is the non-oracular version of the quantum CSP.
\begin{definition}[Non-oracular quantum CSP]
Fix a  $\sigma$-structure $\A$.
    $\qnoCSP(\rel A)$ is the following computational problem: Given as input a $\sigma$-structure $\rel X$, determine whether there exists a non-oracular quantum homomorphism $\rel X\qnoto\rel A$. 
\end{definition}

For CSPs parameterised by digraphs, non-oracular quantum homomorphisms capture perfect strategies for the variant of the 2-player CSP game where both cooperating players receive some variable ($x_A$ and $x_B$, respectively) from the verifier and reply with assignments to their variables with some values $a_A$ and $a_B$, respectively. The answers are deemed correct by the verifier if $a_A=a_B$ whenever $x_A=x_B$ (i.e., the game is synchronous), and $(a_A,a_B)$ is an edge in the template digraph whenever $(x_A,x_B)$ is an edge in the instance. The relation between this game and the type of quantum homomorphisms given above was first described in~\cite{MancinskaR16} (in the case of undirected graphs) as a generalisation of the quantum colouring game of~\cite{CameronMNSW07}, see also~\cite{MancinskaR16_oddities}.
Various other works have considered this setting for specific types of binary constraint systems; see, e.g., 
the non-oracular strategies for Unique Projection Games~\cite{Khot02stoc} considered in~\cite{kempe2010unique}.

We next define quantum polymorphisms in the non-oracular setting.

\begin{definition}
    Let $\rel A$ be a relational structure.
    A \emph{non-oracular quantum polymorphism} is a non-oracular quantum homomorphism $Q\colon\rel A^n\qnoto\rel A$.
    We denote by $\qnoPol(\rel A)$ the set of non-oracular quantum polymorphisms of $\rel A$.
\end{definition}
As the main result of this section, we shall prove that, just like in the oracular case~\Cref{prop_galois_oracular}, the Galois connection between classical polymorphism clones and pp-definitions~\cite{bodnarchuk1969galois,geiger1968closed} can be lifted to a Galois connection for non-oracular quantum CSPs. As a consequence, we will prove the following result.

\begin{restatable}{theorem}{mainthmgaloisconditionreductionsnonoracular}
\label{main_thm_galois_condition_reductions_nonoracular}
    Let $\rel A,\rel B$ be structures on potentially different signatures such that
    \begin{align*}
        \qcPol(\rel A)=\qnoPol(\rel A)\subseteq\qnoPol(\rel B).
    \end{align*} 
    Then $\qCSP(\rel B)$ reduces to $\qnoCSP(\rel A)$ in logspace.
\end{restatable}

It should be observed that, in this case, we obtain a reduction from the \textit{oracular} version of the quantum CSP parameterised by $\B$ to the \textit{non-oracular} version of the quantum CSP parameterised by $\A$. As we shall see, this is a consequence of how commutativity in the contexts is handled by the reduction.

We begin by defining the non-oracular version of q-definitions.

\begin{definition}\label{qgadget_non_oracular}
    Let $\rel A$ be a $\sigma$-structure    and let $S\subseteq A^r$ be a set of $r$-tuples of elements of $A$ for some $r\in\N$. Let $R$ be the $r$-ary symbol of the signature of $(A;S)$ (i.e., $R^{(A;S)}=S$), and 
    consider the $\{R\}$-structure
    $\rel R$ having domain $[r]$ and relation $R^{\rel R}=\{(1,\dots,r)\}$.
    A \emph{non-oracular q-definition} (in short, a \textit{q-no-definition}) for $\rel A$ and $S$ is a $\sigma$-structure $\GG$ with distinguished vertices $g_1,\dots,g_r$ satisfying the following properties:
    \begin{enumerate}
        \item[$(\mathbf{noq_1})$\labeltext{$\mathbf{noq_1}$}{itm:qgadget-extension_non_oracular}] For every quantum homomorphism $Q\colon\rel R\qto (A;S)$, 
        there exists a non-oracular quantum homomorphism $Q'\colon\GG\qnoto\rel A$ such that $Q'_{g_i,a} = Q_{i,a}$ for all $i\in[r]$ and all $a\in A$.
        \item[$(\mathbf{noq_2})$\labeltext{$\mathbf{noq_2}$}{itm:qgadget-restriction_non_oracular}] For every non-oracular quantum homomorphism $Q\colon\GG\qnoto\rel A$, 
        the quantum function $Q'$ defined by $Q'_{i,a}:=Q_{g_i,a}$ for all $i\in[r]$ is a non-oracular quantum homomorphism $\rel R\qnoto (A;S)$.
    \end{enumerate}
\end{definition}

Note that, in the first item of the definition above, $Q$ is required to be oracular (i.e., to satisfy~\eqref{quantum_homo_2}), while $Q'$ need not.
\begin{definition}
    Let $\A$ and $\B$ be structures on the same domain and potentially different signatures. We say that  $\A$ \textit{q-no-defines} $\B$ if there exists a non-oracular q-definition for $\A$ and $R^\B$ for every relation $R^\B$ of $\B$.
\end{definition}

In the following, it shall be useful to observe that~\Cref{sum-homomorphisms}, \Cref{lem_ess_classical_endo_automorphism},~\Cref{tensor-homomorphisms}, and~\Cref{composition-homomorphisms} also hold in the non-oracular setting. 
Furthermore, the definition of quantum closure of $\Hom(\A,\B)$ (\Cref{quantum-closure}) does not change in the non-oracular setting.
The next result is the non-oracular version of~\Cref{galois-connection-oracular}, and is the key to the quantisation of the relation/operation Galois connection in the non-oracular quantum setting.
\begin{theorem}\label{galois}
    Let $\rel A$ be a relational structure and let $S\subseteq A^r$ for some $r\in\N$.
    Then there exists a non-oracular q-definition for $\rel A$ and $S$ if, and only if, $\qnoPol(\rel A)\subseteq\qnoPol((A;S))$.
\end{theorem}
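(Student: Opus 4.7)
The plan is to prove the equivalence by mirroring the classical Galois correspondence, with the explicit gadget construction on the $(\Leftarrow)$ side and a composition-of-quantum-homomorphisms argument on the $(\Rightarrow)$ side.

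For the $(\Rightarrow)$ direction, assume a non-oracular q-definition $\rel G$ with distinguished vertices $g_1,\dots,g_r$ exists and let $Q\colon\rel A^n\qnoto\rel A$ be a non-oracular quantum polymorphism. Fix $\ba^{(1)},\dots,\ba^{(r)}\in A^n$ such that $(a^{(1)}_i,\dots,a^{(r)}_i)\in S$ for every coordinate $i\in[n]$, and fix $(b_1,\dots,b_r)\in A^r\setminus S$; I must show $\prod_j Q_{\ba^{(j)},b_j}=0$. For each coordinate $i\in[n]$, the classical map $j\mapsto a^{(j)}_i$ is a homomorphism $\rel R\to\A_S$, viewed as a $1$-dimensional oracular quantum homomorphism. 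By~\eqref{itm:qgadget-extension_non_oracular}, this extends to a classical homomorphism $P^{(i)}\colon\rel G\to\rel A$ with $P^{(i)}(g_j)=a^{(j)}_i$. Taking the tensor product $P=\bigotimes_i P^{(i)}\colon\rel G\to\rel A^n$ yields a classical homomorphism sending $g_j\mapsto\ba^{(j)}$, and the composition $R=Q\bullet P\colon\rel G\qnoto\rel A$ is a non-oracular quantum homomorphism (by inspecting the proof of~\Cref{composition-homomorphisms}, which uses~\eqref{quantum_homo_2} only to establish~\eqref{quantum_homo_2}). A direct expansion of~\Cref{composition} gives $R_{g_j,a}=Q_{\ba^{(j)},a}$. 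Applying~\eqref{itm:qgadget-restriction_non_oracular} to $R$ yields a non-oracular quantum homomorphism $\rel R\qnoto\A_S$, whose condition~\eqref{quantum_homo_1} is exactly $\prod_j Q_{\ba^{(j)},b_j}=0$.

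For the $(\Leftarrow)$ direction, assume $\qnoPol(\rel A)\subseteq\qnoPol(\A_S)$ and enumerate $S=\{\bs^{(1)},\dots,\bs^{(m)}\}$. Let $\bc_j\in A^m$ denote the column tuple with $(\bc_j)_i:=s^{(i)}_j$, and set $\rel G:=\rel A^m$ with distinguished vertices $g_j:=\bc_j$. To verify~\eqref{itm:qgadget-extension_non_oracular}, note that any quantum homomorphism $Q\colon\rel R\qto\A_S$ is automatically non-contextual, because all $r$ variables of $\rel R$ occur in the single tuple $(1,\dots,r)$; hence by the spectral theorem $Q=h_1\oplus\cdots\oplus h_d$, where each $h_k\colon\rel R\to\A_S$ is classical and therefore identifiable with some tuple $\bs^{(i_k)}\in S$. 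Letting $\pi_i\colon\rel A^m\to\rel A$ be the $i$-th coordinate projection (a classical polymorphism), the direct sum $Q':=\pi_{i_1}\oplus\cdots\oplus\pi_{i_d}$ is a non-oracular quantum homomorphism $\rel G\qnoto\rel A$ satisfying $Q'_{g_j,a}=\bigoplus_k\mathbf 1[s^{(i_k)}_j=a]=Q_{j,a}$. For~\eqref{itm:qgadget-restriction_non_oracular}, given a non-oracular quantum homomorphism $Q\colon\rel G\qnoto\rel A$ (i.e.\ an element of $\qnoPol^{(m)}(\rel A)$), the hypothesis gives $Q\in\qnoPol^{(m)}(\A_S)$; since the rows of the $m\times r$ matrix $(\bc_1,\dots,\bc_r)$ are precisely the tuples $\bs^{(1)},\dots,\bs^{(m)}$, condition~\eqref{quantum_homo_1} for $Q$ applied to $\A_S$ reads $\prod_j Q_{\bc_j,b_j}=0$ for all $(b_1,\dots,b_r)\notin S$, which is exactly what~\eqref{itm:qgadget-restriction_non_oracular} demands.

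The only delicate point is recognising that~\eqref{itm:qgadget-extension_non_oracular} starts from an \emph{oracular} quantum homomorphism, and that on the source structure $\rel R$ oracularity forces global non-contextuality; this is what lets one pass back and forth between quantum data on $\rel R$ and classical homomorphism data. Everything else is bookkeeping with tensor products, direct sums, and Kleisli composition, all of which remain valid in the non-oracular setting because the proofs of~\Cref{sum-homomorphisms,tensor-homomorphisms,composition-homomorphisms} never use~\eqref{quantum_homo_2} in order to establish~\eqref{quantum_homo_1}.
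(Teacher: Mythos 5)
Your proof is correct and follows essentially the same route as the paper's: the same gadget $\rel A^{|S|}$ with column-tuples as distinguished vertices and dictator polymorphisms for the $(\Leftarrow)$ direction, and the same tensor-product-of-extensions followed by Kleisli composition with $Q$ for the $(\Rightarrow)$ direction. The only cosmetic difference is that you note the extensions $P^{(i)}$ are forced to be one-dimensional (hence classical) because they must agree with a classical $Q$ on the distinguished vertices, whereas the paper keeps them as general non-oracular quantum homomorphisms $W^{(i)}$ and lets the determinism on $g_1,\dots,g_r$ collapse the sum in the composition; both computations yield the same conclusion.
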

\begin{proof}
Henceforth in this proof, we shall denote $(A;S)$ by $\D$ for the sake of readability.
\begin{itemize}
    \item[$(\Rightarrow)$]
Let $\GG$ be a non-oracular q-definition for $\A$ and $S$. 
    This means that every quantum homomorphism $\rel R\qto\D$ extends to a non-oracular quantum homomorphism $\GG\qnoto\rel A$, and every non-oracular quantum homomorphism $\GG\qnoto\rel A$ restricts to a non-oracular quantum homomorphism $\rel R\qnoto\D$.

    Fix now a non-oracular quantum polymorphism  $Q$ of $\A$; i.e.,  $Q\colon\rel A^n\qnoto\rel A$. 
    We want to prove that $Q\in\qnoPol(\D)$.
    Note that $Q$ is a quantum function $A^n\qto A$, so we only need to prove~\eqref{quantum_homo_1} applied to the relation of $\D^n$; i.e., to the relation $R^{\D^n}$
    containing all $r$-tuples $(\bb^{(1)},\dots,\bb^{(r)})$ where each $\bb^{(i)}$ is an element of $A^n$, and for each $j\in[n]$ it holds that $(b^{(1)}_j,\dots,b^{(r)}_j)\in R^{\D}=S$.  

    Fix then such a tuple $(\bb^{(1)},\dots,\bb^{(r)})\in R^{\D^n}$, and choose a tuple $\bc\in A^r\setminus S$. For every $i\in [n]$, consider the deterministic homomorphism $f_i:\rel R\to\D$ defined by $f_i(j)=b^{(j)}_i$ for each $j\in[r]$. In particular, $f_i$ is a quantum homomorphism. Hence,~\eqref{itm:qgadget-extension_non_oracular} implies that there exists a non-oracular quantum homomorphism $W^{(i)}\colon\GG\qnoto\A$ extending $f_i$. Consider now the tensor product 
    \begin{align*}
        W=\bigotimes_{i\in[n]}W^{(i)}\colon\GG\qnoto \A^n,
    \end{align*}
    and compose it with $Q$ to obtain $Q\bullet W\colon\GG\qnoto\A$.
    Fix $\ell\in [r]$ and $a\in A$. Expanding the composition  and the tensor product  yields
    \begin{align*}
        (Q\bullet W)_{g_\ell,a}
        =
        \sum_{\ba\in A^n}Q_{\ba,a}\otimes W_{g_{\ell},\ba}=
        \sum_{\ba\in A^n}Q_{\ba,a}\otimes\bigotimes_{i\in[n]} W^{(i)}_{g_{\ell},a_i}.
    \end{align*}
    We know from~\eqref{itm:qgadget-extension_non_oracular} that $W^{(i)}_{g_{\ell},a_i}=(f_i)_{\ell,a_i}$. Since $f_i$ is deterministic, this means that in order for the $\ba$-th summand in the sum above to be non-zero it must be that $a_i=f_i(\ell)=b_i^{(\ell)}$ for each $i\in[n]$; i.e., $\ba=\bb^{(\ell)}$. Hence, 
    \begin{align*}
        (Q\bullet W)_{g_\ell,a}
        =
        Q_{\bb^{(\ell)},a}\otimes\id.
    \end{align*}
    On the other hand,~\eqref{itm:qgadget-restriction_non_oracular}
    implies that the assignment $(\ell,a)\mapsto (Q\bullet W)_{g_\ell,a}$ yields a non-oracular quantum homomorphism from $\rel R$ to $\D$. Since we are assuming that $\bc\not\in S=R^{\D}$, we deduce that
    \begin{align*}
        0=\prod_{\ell\in[r]}(Q\bullet W)_{g_\ell,c_\ell}
        =
        \prod_{\ell\in[r]}(Q_{\bb^{(\ell)},a}\otimes\id)
        =
        \left(\prod_{\ell\in[r]}Q_{\bb^{(\ell)},a}\right)\otimes\id.
    \end{align*}
    Therefore, we find $\prod_{\ell\in[r]}Q_{\bb^{(\ell)},a}=0$. It follows that $Q$ is a non-oracular quantum homomorphism from $\D^n$ to $\D$; i.e., an element of $\qnoPol(\D)$, as desired.\\

    \item[$(\Leftarrow)$] Suppose now that $\qnoPol(\rel A)\subseteq\qnoPol(\D)$. Let us enumerate the tuples in $S$ as $\ba^{(1)},\dots,\ba^{(n)}$, where $n=|S|$. We will show that $\A^n$ is a non-oracular quantum gadget for $\A$ and $S$, with distinguished vertices $g_1,\dots,g_r$ defined by $g_\ell=(a^{(1)}_\ell,\dots,a^{(n)}_\ell)$ for $\ell\in[r]$.
    We show that the two conditions in~\Cref{itm:qgadget-extension_non_oracular,itm:qgadget-restriction_non_oracular} hold.

    Let $Q\colon\rel R\qto\D$ be a quantum homomorphism over some Hilbert space $H$, and let $d$ be the dimension of $H$. Since $Q$ satisfies~\eqref{quantum_homo_2}, $Q$ must be classical, so by the spectral theorem we can decompose it as $Q=\bigoplus_{i\in[d]} h_i$, where $h_i\colon\rel R\to\D$ is a deterministic homomorphism for each $i\in[d]$. By the definition of $\rel R$, it must be that $(h_i(1),\dots,h_i(r))\in S$ for each $i\in [d]$.
    In other words, there exists a function $f:[d]\to[n]$ satisfying
    \begin{align}
    \label{eqn_1208_2111}
        (h_i(1),\dots,h_i(r))=\ba^{(f(i))}
    \end{align}
    for each $i\in[d]$.
    Consider now the classical projection polymorphisms $\pi^n_{j}:\A^n\to\A$ defined by $\pi^n_j(\ba)=a_j$ for each $j\in[n]$ and each $\ba\in A^n$.
    Define 
    \begin{align*}
    P=\bigoplus_{i\in[d]}\pi^n_{f(i)}.
    \end{align*}
    Note that $P$ is a classical (so, a fortiori, non-oracular quantum) homomorphism from $\A^n$ to $\A$. Take now $\ell\in[r]$ and $a\in A$, and observe that
    \begin{align*}
        P_{g_\ell,a}
        =
        \bigoplus_{i\in[d]}(\pi^n_{f(i)})_{g_\ell,a}
        =
        \bigoplus_{\substack{i\in[d]\\\pi^n_{f(i)}(g_\ell)=a}}\id
        =
        \bigoplus_{\substack{i\in[d]\\a^{f(i)}_\ell=a}}\id
        =
        \bigoplus_{\substack{i\in[d]\\h_i(\ell)=a}}\id
        =
        \bigoplus_{\substack{i\in[d]}}(h_i)_{\ell,a}
        =
        Q_{\ell,a}
    \end{align*}
    (where the fourth equality follows from~\Cref{eqn_1208_2111}). 
Hence, $P$ witnesses that~\eqref{itm:qgadget-extension_non_oracular} holds.

    Finally, let $Q\colon\rel \A^n\qnoto\rel A$ be a non-oracular quantum polymorphism, and consider the quantum function $Q'$ given by $Q'_{\ell,a}=Q_{g_\ell,a}$ for $\ell\in [r]$ and $a\in A$.
    Recall now that each $\ba^{(i)}$ is an element of $S=R^\D$ and, thus, the tuple $(g_1,\dots,g_r)$ belongs to $R^{\D^n}$ by the definition of categorical power of structures. Since $Q$ is a non-oracular quantum polymorphism of $\rel A$, by assumption it is also an $n$-ary non-oracular quantum polymorphism of $\D$.     
    Hence, for each $\bc\in A^r\setminus S$, we must have 
    \begin{align*}
        0=\prod_{\ell\in[r]}Q_{g_\ell,c_\ell}=\prod_{\ell\in[r]}Q'_{\ell,c_\ell},
    \end{align*}
    which means that $Q'$ is a non-oracular quantum homomorphism from $\R$ to $\D$. This shows that~\eqref{itm:qgadget-restriction_non_oracular} is also satisfied, and concludes the proof of the theorem.\qedhere 
    \end{itemize}
\end{proof}

The following result is an immediate consequence of~\Cref{galois}.

\corgaloisnonoracular*

Unlike in the oracular setting, non-oracular q-gadgets alone are not strong enough to guarantee reductions. To that end, they need to be strengthened via a non-oracular commutativity gadget, which we now define.
\begin{definition}\label{comm-gadget_non_oracular}
    Fix a $\sigma$-structure $\rel A$.
    A \emph{non-oracular commutativity gadget} for $\rel A$ is a $\sigma$-structure $\GG$ with two distinguished elements $u,v\in G$ satisfying the following properties:
    \begin{enumerate}
        \item[$(\mathbf{noc_1})$\labeltext{$\mathbf{noc_1}$}{itm:commgadget-extension_nonoracular}] For any quantum function $Q\colon\{u,v\}\qto A$ such that $Q_{u,a}$ and $Q_{v,b}$ commute for all $a,b\in A$, %
        there exists a quantum non-oracular homomorphism $Q'\colon\GG\qnoto\rel A$ extending $Q$.
        \item[$(\mathbf{noc_2})$\labeltext{$\mathbf{noc_2}$}{itm:commgadget-restriction_nonoracular}] For every quantum non-oracular homomorphism $Q\colon\GG\qnoto\rel A$ and all $a,b\in A$, it holds that $[Q_{u,a},Q_{v,b}]=0$.
    \end{enumerate}
\end{definition}
Note that, unlike in the oracular case~\Cref{comm-gadget}, a non-oracular commutativity gadget is not the same as a q-no-definition for $A^2$ in $\A$.
We now give the non-oracular analogue of~\Cref{gadget-characterization}.
Just like in the oracular setting, the characterisation of commutativity gadgets via quantum polymorphisms (stated in the result below) also holds for the weaker \textit{algebraic} non-oracular commutativity gadgets of~\cite{Zeman}, which are defined by replacing~\eqref{itm:commgadget-extension_nonoracular} in~\Cref{comm-gadget_non_oracular}  with the weaker condition stated below:
\begin{enumerate}
\item[$(\mathbf{noc_1'})$\labeltext{$\mathbf{noc_1'}$}{itm:commgadget-extension_nonoracular_zeman}] For every $a,b\in A$, there exists a quantum non-oracular homomorphism $Q\colon\GG\qnoto\rel A$ such that $Q_{u,a}=\id$ and $Q_{v,b}=\id$.
\end{enumerate}

\begin{theorem}\label{gadget-characterization_non_oracular}
Let $\rel A$ be a relational structure. The following are equivalent:
\begin{enumerate}
    \item $\rel A$ has a non-oracular commutativity gadget.
    \item $\rel A$ has an algebraic non-oracular commutativity gadget.
    \item $\rel A^n$ is an (algebraic) non-oracular commutativity gadget for all large enough $n\in\N$.
    \item $\qnoPol(\rel A)=\qcPol(\rel A)$.
\end{enumerate}
\end{theorem}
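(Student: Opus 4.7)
The plan is to prove the four equivalences along the cycle $(1) \Rightarrow (2) \Rightarrow (4) \Rightarrow (3) \Rightarrow (1)$, mirroring the structure of the proof of~\Cref{gadget-characterization}. The implications $(1) \Rightarrow (2)$ and $(3) \Rightarrow (1)$ are immediate: for the former, applying~\eqref{itm:commgadget-extension_nonoracular} to the classical (hence commuting) quantum function sending $u \mapsto a$ and $v \mapsto b$ yields a witness for~\eqref{itm:commgadget-extension_nonoracular_zeman}; for the latter, $(3)$ exhibits $\rel A^n$ as an explicit non-oracular commutativity gadget.

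For $(2) \Rightarrow (4)$, I would adapt the argument of~\Cref{commgadget-implies-closure} to the non-oracular setting. Given an algebraic non-oracular commutativity gadget $\rel G$ with distinguished vertices $u, v$, a non-oracular quantum polymorphism $Q \colon \rel A^n \qnoto \rel A$, tuples $\tuple a, \tuple b \in A^n$, and elements $c, d \in A$, I would use~\eqref{itm:commgadget-extension_nonoracular_zeman} to obtain, for each $i \in [n]$, a non-oracular quantum homomorphism $W^{(i)} \colon \rel G \qnoto \rel A$ with $W^{(i)}_{u,a_i} = W^{(i)}_{v,b_i} = \id$. Then I would form the tensor product $W = \bigotimes_{i \in [n]} W^{(i)} \colon \rel G \qnoto \rel A^n$ and compose to obtain $U = Q \bullet W \colon \rel G \qnoto \rel A$. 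By~\eqref{itm:commgadget-restriction_nonoracular}, $[U_{u,c}, U_{v,d}] = 0$, and expanding the composition exactly as in~\Cref{commgadget-implies-closure} yields $[Q_{\tuple a, c}, Q_{\tuple b, d}] = 0$. Since $\tuple a, \tuple b, c, d$ were arbitrary, $Q$ is non-contextual and therefore lies in $\qcPol(\rel A)$ by the spectral theorem (the non-oracular analogue of~\Cref{observation}).

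For $(4) \Rightarrow (3)$, I would follow~\Cref{obtaining-commgadget}. Take $n$ large enough that $A^2$ admits $n$ generators under $\Pol(\rel A)$---for instance $n = |A|^2$ works via projections---and declare the generating pair $\tuple a, \tuple b$ to be the distinguished vertices of $\rel A^n$. Condition~\eqref{itm:commgadget-restriction_nonoracular} then holds because every non-oracular quantum homomorphism $\rel A^n \qnoto \rel A$ is by definition a non-oracular quantum polymorphism, hence in $\qcPol(\rel A)$ by $(4)$, hence non-contextual. Condition~\eqref{itm:commgadget-extension_nonoracular} would follow by spectrally decomposing any commuting quantum function on $\{\tuple a, \tuple b\}$ as $h_1 \oplus \dots \oplus h_d$, invoking the generator property to extend each classical $h_i$ to an $n$-ary polymorphism $g_i$ of $\rel A$, and taking $g_1 \oplus \dots \oplus g_d$ as the required classical (a fortiori, non-oracular quantum) extension.

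The main subtlety will be verifying that the tensor product and composition used in $(2) \Rightarrow (4)$ remain well defined in the non-oracular setting, i.e. that they preserve~\eqref{quantum_homo_1} when the input homomorphisms are only required to satisfy~\eqref{quantum_homo_1} rather than both conditions. A quick inspection of the proofs of~\Cref{tensor-homomorphisms} and~\Cref{composition-homomorphisms} shows that the verification of~\eqref{quantum_homo_1} in both cases relies solely on~\eqref{quantum_homo_1} of the input homomorphisms---the oracular condition~\eqref{quantum_homo_2} is only ever invoked to propagate~\eqref{quantum_homo_2} to the output---so these operations transfer without change to the non-oracular setting, closing the argument.
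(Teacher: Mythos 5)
Your proposal is correct and follows essentially the same route as the paper: the same cycle of implications, with $(2)\Rightarrow(4)$ adapting~\Cref{commgadget-implies-closure} via the tensor-product-plus-composition trick and $(4)\Rightarrow(3)$ adapting~\Cref{obtaining-commgadget} via generators of $A^2$. Your closing check that~\Cref{tensor-homomorphisms} and~\Cref{composition-homomorphisms} preserve~\eqref{quantum_homo_1} without invoking~\eqref{quantum_homo_2} is exactly the observation the paper records before stating the theorem.
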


\begin{proof}
We prove the following implications.
    \begin{itemize}
        \item[$(4.\Rightarrow 3.)$] The proof follows exactly the same argument as for~\Cref{obtaining-commgadget}.
        \item[$(3.\Rightarrow 1.)$] Obvious.
        \item[$(1.\Rightarrow 2.)$] This is clear, since each non-oracular commutativity gadget is in particular an algebraic non-oracular commutativity gadget (via the same argument as in~\Cref{rem_algebraic_gadgets}).
        \item[$(2.\Rightarrow 4.)$] Take an algebraic non-oracular commutativity gadget $\GG$ with distinguished elements $u,v$, a non-oracular quantum polymorphism $Q\colon\A^n\qnoto\A$, two tuples $\ba,\bb\in A^n$, and two elements $c,d\in A$. We proceed as in~\Cref{commgadget-implies-closure}. We first build a non-oracular quantum polymorphism $W^{(i)}\colon\GG\qnoto\A$ for each $i\in[n]$ with the property that $W^{(i)}_{u,a_i}=W^{(i)}_{v,b_i}=\id$, using~\eqref{itm:commgadget-extension_nonoracular_zeman}. Then, we take the tensor product $W=\bigotimes_{i\in [n]}W^{(i)}$ of such homomorphisms, and we define $U\colon\GG\qnoto\A$ as the composition of $Q$ and $W$. On the one hand,~\eqref{itm:qgadget-restriction_non_oracular} guarantees that $[U_{u,c},U_{v,d}]=0$. On the other hand, expanding the composition yields $U_{u,c}= Q_{\tuple a,c}\otimes\id$ and $U_{v,d}=Q_{\tuple b,d}\otimes\id$. Combining the two facts, we conclude that $[Q_{\tuple a,c},Q_{\tuple b,d}]=0$, and we deduce that $Q$ is classical.\qedhere
    \end{itemize}
\end{proof}

The next result shows that non-oracular q-definitions, combined with non-oracular commutativity gadgets, give logspace reductions from $\qCSP$ to $\qnoCSP$.

\begin{proposition}\label{qgadget-reduction_non_oracular}
    Let $\rel A$ be a $\sigma$-structure and let $\rel B$ be a $\rho$-structure having the same domain (but potentially distinct signatures $\sigma$ and $\rho$), and suppose that 
    \begin{itemize}
        \item $\rel A$ has a non-oracular commutativity gadget, and
        \item $\A$ q-no-defines $\B$.  
    \end{itemize}
    Then $\qCSP(\rel B)$ reduces to $\qnoCSP(\rel A)$ in logspace.
\end{proposition}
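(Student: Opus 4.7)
The plan is to adapt the construction from~\Cref{qgadget-reduction} by interposing copies of the non-oracular commutativity gadget between every pair of vertices that appears together in a constraint of the input. Concretely, given an instance $\rel X$ of $\qCSP(\rel B)$, I would build an instance $\rel Y$ of $\qCSPno(\rel A)$ as the disjoint union of $X$, a fresh copy $\rel G_R^{(\tuple x)}$ of a non-oracular q-definition for $\rel A$ and $R^\rel B$ for every $R\in\rho$ and every $\tuple x\in R^\rel X$, and a fresh copy $\rel H^{(x,y)}$ of the non-oracular commutativity gadget for every unordered pair of distinct vertices $x,y$ lying together in some tuple of some relation of $\rel X$; in each gadget, its distinguished vertices are identified with the corresponding vertices of $\rel X$. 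This construction is plainly logspace computable.

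For completeness, I would start from an oracular quantum homomorphism $Q\colon\rel X\qto\rel B$ and extend it to $\rel Y$ gadget by gadget. For each $R\in\rho$ and each $\tuple x\in R^\rel X$, the assignment $(i,a)\mapsto Q_{x_i,a}$ defines an oracular quantum homomorphism $\rel R\qto\rel A_{R^\rel B}$---the commutativity condition~\eqref{quantum_homo_2} being supplied by the Gaifman-adjacency of the $x_i$'s in $\rel X$---which extends by~\eqref{itm:qgadget-extension_non_oracular} to a non-oracular quantum homomorphism $\rel G_R^{(\tuple x)}\qnoto\rel A$. Similarly, for every Gaifman-adjacent pair $x,y$ in $\rel X$, the PVMs $Q_x,Q_y$ commute by~\eqref{quantum_homo_2}, so~\eqref{itm:commgadget-extension_nonoracular} supplies a non-oracular quantum homomorphism $\rel H^{(x,y)}\qnoto\rel A$ extending them. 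Since all these extensions agree on their common restriction to $X$, they assemble into a non-oracular quantum homomorphism $\rel Y\qnoto\rel A$.

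For soundness, I would start from a non-oracular quantum homomorphism $Q\colon\rel Y\qnoto\rel A$ and show that its restriction $Q'$ to $X$ is an oracular quantum homomorphism $\rel X\qto\rel B$. Condition~\eqref{quantum_homo_2} for $Q'$ is handed over directly by~\eqref{itm:commgadget-restriction_nonoracular}: whenever $x,y$ are Gaifman-adjacent in $\rel X$, restricting $Q$ to $\rel H^{(x,y)}$ forces $[Q'_x,Q'_y]=0$. Condition~\eqref{quantum_homo_1} for $Q'$ is obtained from~\eqref{itm:qgadget-restriction_non_oracular}: the restriction of $Q$ to $\rel G_R^{(\tuple x)}$ is a non-oracular quantum homomorphism to $\rel A$, whose induced quantum function on $\rel R$ is by definition a non-oracular quantum homomorphism $\rel R\qnoto\rel A_{R^\rel B}$, which directly yields $\prod_i Q'_{x_i,b_i}=0$ for every $\tuple b\notin R^\rel B$.

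I do not foresee any genuine technical obstacle: the construction closely mirrors~\Cref{qgadget-reduction}, and the sole novelty is the inclusion of the commutativity gadgets. These are essential here precisely because a non-oracular q-definition, unlike its oracular counterpart from~\Cref{qgadget}, does not by itself force the PVMs at distinct variables of a common constraint to commute, and so cannot on its own recover condition~\eqref{quantum_homo_2} for $Q'$ in the soundness direction; this is exactly what the commutativity gadgets supply.
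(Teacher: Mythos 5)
Your proposal is correct and follows essentially the same route as the paper's proof: the same instance $\rel Y$ (q-definition copies per constraint plus commutativity-gadget copies per Gaifman-adjacent pair), the same use of \eqref{itm:qgadget-extension_non_oracular} and \eqref{itm:commgadget-extension_nonoracular} for completeness, and the same use of \eqref{itm:commgadget-restriction_nonoracular} to recover \eqref{quantum_homo_2} and \eqref{itm:qgadget-restriction_non_oracular} to recover \eqref{quantum_homo_1} for soundness. Your closing remark on why the commutativity gadgets are indispensable in the non-oracular setting matches the paper's own motivation exactly.
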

\begin{proof}
For every symbol $R\in\rho$ of some arity $r$, let $\GG_R$ be a non-oracular q-definition for $\rel A$ and $R^{\rel B}$ (with its distinguished vertices $g_1,\dots,g_r$).
    Let also $\rel H$ be a non-oracular commutativity gadget for $\rel A$ with its distinguished vertices $u,v$.

Let $\rel X$ be an instance of $\qCSP(\rel B)$. %
Henceforth in this proof, we shall view $\Gaif(\X)$ as the directed graph obtained by viewing each edge $\{x,y\}$ as two directed edges $(x,y)$ and $(y,x)$.
We define an instance $\rel Y$ of $\qnoCSP(\rel A)$ as follows.
    Let the domain of $\rel Y$ originally consist of the disjoint union of $X$, together with a copy $\GG_{R}^{(\bx)}$ of $\GG_R$ for each symbol $R\in\rho$ and each 
    tuple $\tuple x\in R^{\rel X}$, as well as a copy $\rel H^{(x,y)}$ of $\rel H$ for each edge $(x,y)$ of $\Gaif(\X)$. 
    For $R\in\rho$ of some arity $r\in\N$, $\bx\in R^\X$, and $i\in [r]$, we denote by $g_i^{(\bx)}$ the copy of $g_i$ in $\GG_{R}^{(\bx)}$.
    Similarly,
    we denote by $u^{(x,y)}$ (resp. $v^{(x,y)}$) the copy of $u$ (resp. $v$) in $\rel H^{(x,y)}$.
    For every edge $(x,y)$ of $\Gaif(\X)$, glue $u^{(x,y)}$ with $x$ and glue $v^{(x,y)}$ with $y$.
    For each $R\in\rho$ of some arity $r$, each $\tuple x\in R^{\rel X}$, and each $i\in [r]$, glue $g_i^{(\bx)}$ with $x_i$.
    Call this instance $\rel Y$.\\

    (\underline{\textit{Completeness.}})
    Suppose that $Q\colon\rel X\qto\rel B$ is a quantum homomorphism.
    For every symbol $R\in\rho$ of some arity $r\in\N$ and every
    $\tuple x\in R^{\rel X}$, consider the $\{R\}$-structure $\A'=(A;R^\B)$ with domain $A$ and relation $R^{\A'}=R^\B$. Recall also the $\{R\}$-structure $\rel R$ described in~\Cref{qgadget_non_oracular}.
    Setting $W_{i,a}=Q_{x_i,a}$ for each $i\in[r]$ and each $a\in A=B$ yields a quantum homomorphism $W\colon\rel R\qto \A'$.
    (Observe that this would not be guaranteed if $Q$ was a non-oracular quantum homomorphism.)
    Then,~\eqref{itm:qgadget-extension_non_oracular} guarantees the existence of a non-oracular quantum homomorphism $Q^{(\tuple x,R)}\colon\GG_{R}^{(\bx)}\qnoto\rel A$ such that \[Q^{(\tuple x,R)}_{g_i^{(\bx)},a}=Q_{x_i,a}\] for each $i\in [r]$ and each $a\in A$.
    Fix now an edge $(x,y)$ of $\Gaif(\rel X)$.
    Using again the fact that $Q$ is oracular, we know that $[Q_{x,a},Q_{y,b}]=0$ for every $a,b\in A$.
    Hence, we can apply~\eqref{itm:commgadget-extension_nonoracular} to find a non-oracular quantum homomorphism \[Q^{(x,y)}\colon\rel H^{(x,y)}\qnoto\rel A\] extending $Q$ in $x$ and $y$.
    Overall, this is a non-oracular quantum homomorphism $\rel Y\qnoto\rel A$.\\

    (\underline{\textit{Soundness.}})
    Suppose that $Q\colon\rel Y\qnoto\rel A$ is a non-oracular quantum homomorphism. Consider the quantum function $Q':X\to B=A$ obtained by restriction of $Q$ to $X\subseteq Y$. We claim that $Q'$ is an (oracular) quantum homomorphism from $\X$ to $\B$. To that end, take a symbol $R\in\rho$ of some arity $r\in\N$ and a tuple $\bx\in R^\X$.
    Take $i,j\in [r]$. We claim that the PVMs $Q'_{x_i}$ and $Q'_{x_j}$ commute. This is easily true if $x_i=x_j$ (as in this case the PVMs coincide), so suppose that $x_i\neq x_j$. In this case, $(x_i,x_j)$ is an edge in $\Gaif(\X)$. Note now that, by construction of $\Y$, the vertices $x_i$, $u^{(x_i,x_j)}$, and $g_i^{(\bx)}$ are identified in $Y$. Similarly, the vertices $x_j$, $v^{(x_i,x_j)}$, and $g_j^{(\bx)}$ are identified in $Y$. Since $Q$ induces a non-oracular quantum homomorphism over $\rel H^{(x_i,x_j)}$, we deduce from~\eqref{itm:commgadget-restriction_nonoracular} that the PVMs $Q'_{x_i}$ and $Q'_{x_j}$ commute also in this case.
    Hence, $Q'$ satisfies~\eqref{quantum_homo_2}.
    We now apply~\eqref{itm:qgadget-restriction_non_oracular} to deduce that the quantum function $U$ defined by $(i,a)\mapsto Q_{g_i^{(\bx)},a}=Q'_{x_i,a}$ is a non-oracular quantum homomorphism $\rel R\qnoto\A'$, where, as before, $\A'=(A;R^\B)$.
    Consider now a tuple $\bb\in A^r\setminus R^\B$.
    Since $(1,\dots,r)\in R^{\rel R}$ and $\bb\not\in R^\B= R^{\A'}$, we conclude that 
     \begin{align}
        \label{eqn_2011_2025}
        \prod_{i\in[r]}Q_{x_i,b_i}=0.
    \end{align}
(Note that for this to be true we do not need $U$ to be oracular.)
This means that $Q'$ satisfies~\eqref{quantum_homo_1} and, thus, it is indeed a quantum (oracular) homomorphism from $\X$ to $\B$.
\end{proof}

We can finally complete the proof of~\Cref{main_thm_galois_condition_reductions_nonoracular}.
\begin{proof}[Proof of~\Cref{main_thm_galois_condition_reductions_nonoracular}]
  Since $\qcPol(\rel A)=\qnoPol(\rel A)$,~\Cref{gadget-characterization_non_oracular} guarantees that $\A$ has a non-oracular commutativity gadget. Moreover, we know from $\qnoPol(\rel A)\subseteq\qnoPol(\rel B)$ and~\Cref{cor_1308_2111} that
  $\A$ q-no-defines $\B$.
  Note also that $\A$ and $\B$ have the same domain since $\qnoPol(\rel A)\subseteq\qnoPol(\rel B)$. Hence, the result directly follows from~\Cref{qgadget-reduction_non_oracular}.
\end{proof}

We now show that \emph{tree gadgets} can be used to produce q-no-definitions.
A  relational structure $\rel X$ is a \emph{tree} if for every sequence of tuples $\tuple t^1,\dots,\tuple t^m$ that all belong to (potentially different) relations $R_i$ of $\rel X$ with arity $r_i$,  the total number of elements appearing in $\tuple t^1,\dots,\tuple t^m$ is at least $1+\sum_{i=1}^m(r_i-1)$.

\begin{lemma}\label{tree-homomorphisms}
    Let $\rel T$ be a tree structure, and let $x,y$ be elements of\/ $\rel T$.
    Let $Q\colon\rel T\qnoto\rel A$ be a non-oracular quantum homomorphism.
    If $Q_{x,a}Q_{y,b} \neq 0$, there exists a homomorphism $h\colon\rel T\to\rel A$ such that $h(x)=a$ and $h(y)=b$.
\end{lemma}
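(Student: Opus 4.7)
My plan is to strengthen the lemma by additionally asserting that $Q_{v,h(v)}\neq 0$ for every $v\in T$, and to prove the strengthened statement by induction on the number $m$ of tuples of $\rel T$; I allow $x=y$ throughout (in which case the PVM property forces $a=b$) so that the inductive argument can be applied with a single distinguished vertex whenever needed. The base case $m=0$ is immediate: $\rel T$ consists of isolated vertices, and the PVM completeness $\sum_c Q_{v,c}=\mathrm{id}$ produces, for every such $v$, some $c\in A$ with $Q_{v,c}\neq 0$.

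For the inductive step, the tree condition on $\rel T$ yields a leaf tuple $\tuple t=(w_1,\ldots,w_r)\in R^{\rel T}$ with $w_1$ the only vertex of $\tuple t$ potentially shared with the rest of $\rel T$; removing $w_2,\ldots,w_r$ and $\tuple t$ gives a tree $\rel T'$ with $m-1$ tuples, on which $Q$ restricts to a non-oracular quantum homomorphism. The core move is to insert the resolution of identity $\mathrm{id}=\sum_{c_1}Q_{w_1,c_1}$ into the nonzero product $Q_{x,a}Q_{y,b}$ at a position depending on where $x,y$ sit relative to $\tuple t$, and to extract some $c_1$ for which the resulting summand is nonzero. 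From this triple product I will simultaneously read off both a nonzero two-projector subproduct supported on $T'$, which feeds the inductive hypothesis applied to $\rel T'$ (promoting $w_1$ to a distinguished vertex whenever $w_1\notin\{x,y\}$, so that the strengthened inductive conclusion forces $Q_{w_1,h'(w_1)}\neq 0$), and a nonzero product of the factors $Q_{w_1,c_1}$, $Q_{w_{i_0},a}$ (when $x=w_{i_0}$ is a leaf of $\tuple t$), and $Q_{w_{j_0},b}$ (when $y=w_{j_0}$ is a leaf of $\tuple t$) aligned with the natural index ordering of $\tuple t$. Expanding this latter product via further resolutions of identity $\mathrm{id}=\sum_{c_j}Q_{w_j,c_j}$ for the remaining leaf coordinates and appealing to~\eqref{quantum_homo_1} then yields a tuple $(c_1,\ldots,c_r)\in R^{\rel A}$ matching the prescribed leaf values, so that setting $h(w_i):=c_i$ for $i\geq 2$ and letting $h$ agree with $h'$ on $T'$ produces the required extension. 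The strengthened conclusion for $h$ propagates: the nonvanishing of the chosen $r$-fold product ensures $Q_{w_i,c_i}\neq 0$ at every leaf vertex.

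The main obstacle I anticipate is the order-sensitivity of~\eqref{quantum_homo_1} in the non-oracular setting: the nonvanishing of $Q_{w_{i_0},a}Q_{w_1,c_1}Q_{w_{j_0},b}$ does not in general imply the nonvanishing of $Q_{w_1,c_1}Q_{w_{i_0},a}Q_{w_{j_0},b}$, whereas only the latter, aligned with the natural tuple order, can be expanded into a sum of full natural-order products to which~\eqref{quantum_homo_1} applies. The saving grace is that the lemma only involves two input projectors, so Hermitian conjugation gives $Q_{x,a}Q_{y,b}=0\iff Q_{y,b}Q_{x,a}=0$; combined with the three positions available for inserting $\mathrm{id}=\sum_{c_1}Q_{w_1,c_1}$ (to the left of, between, or to the right of $Q_{x,a}$ and $Q_{y,b}$), this will suffice, in a case analysis on whether $0$, $1$, or $2$ of the vertices $x,y$ lie in $\{w_2,\ldots,w_r\}$, to always bring the triple product into the strictly increasing tuple ordering needed to invoke~\eqref{quantum_homo_1}.
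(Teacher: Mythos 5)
Your argument is correct, and it rests on the same core mechanism as the paper's proof---inserting resolutions of identity $\id=\sum_c Q_{v,c}$ into a nonzero product so that~\eqref{quantum_homo_1} annihilates every summand whose tuple is not in $R^{\rel A}$---but the induction is organised differently. The paper inducts on the size of $\rel T$, treating a single constraint as the base case and, in the inductive step, decomposing $\rel T$ into subtrees meeting at one cut vertex $z$; inserting $\sum_c Q_{z,c}$ between $Q_{x,a}$ and $Q_{y,b}$ splits the hypothesis into $Q_{x,a}Q_{z,c}\neq 0$ and $Q_{z,c}Q_{y,b}\neq 0$, which feed the inductive hypothesis on the subtrees. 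You instead induct on the number of constraints, peel off a leaf constraint, and strengthen the inductive statement with the pointwise condition $Q_{v,h(v)}\neq 0$. This buys two things. First, the strengthening is exactly what lets a two-point hypothesis suffice in the case where neither $x$ nor $y$ lies on the removed constraint: you take $c_1:=h'(w_1)$ from the homomorphism already produced on $\rel T'$ and use $Q_{w_1,h'(w_1)}\neq 0$ to complete the constraint---note that in this case the value $c_1$ must indeed come from $h'$ and not from a nonzero summand of $Q_{x,a}Q_{w_1,c_1}Q_{y,b}$, since the two-point hypothesis cannot prescribe $a$, $b$ and $c_1$ simultaneously; your parenthetical shows you intend exactly this. (The paper's cut-vertex step is slightly strained at the analogous point, when $x$, $y$ and $z$ all land in the same subtree.) Second, your explicit handling of the order-sensitivity of~\eqref{quantum_homo_1}---reversing two-factor products by taking adjoints and choosing the insertion position of $\sum_{c_1}Q_{w_1,c_1}$ according to the coordinate order of the constraint---makes precise a point the paper leaves implicit (its base case silently assumes $i<j$). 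Both routes are sound; yours trades the cut-vertex decomposition for a leaf-edge peeling plus a bookkeeping strengthening, and is the more robust of the two in the non-commutative setting.
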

\begin{proof}
    We prove this by induction on the size of $\rel T$.
    Assume first that $\rel T$ consists of a single constraint $(z_1,\dots,z_r)\in R^{\rel T}$ with $x=z_i$ and $y=z_j$ for some $i,j\in[r]$.
    For the argument below, we can assume without loss of generality that $i<j$ up to swapping $x$ and $y$.
    Then we have
    \begin{align*}
        Q_{x,a}Q_{y,b} &= \sum_{c_1,\dots,c_r\in A: c_i=a, c_j=b}\prod_{k=1}^r Q_{z_k,c_k}\\
        &= \sum_{(c_1,\dots,c_r)\in R^{\rel A}: c_i=a,c_j=b}\prod_{k=1}^r Q_{z_k,c_k} & \text{by~\eqref{quantum_homo_1}}
    \end{align*}
    and since $Q_{x,a}Q_{y,b}\neq 0$, it must be that there exists $(c_1,\dots,c_r)\in R^{\rel A}$ such that $c_i=a$ and $c_j=b$.
    We obtain a homomorphism $h\colon\rel T\to\rel A$ by setting $h(z_k)=c_k$ for all $k\in[r]$.

    Otherwise, $\rel T$ consists of the union of some trees $\rel T_1,\dots,\rel T_m$ that share a single vertex $z$.
    Since $\sum_c Q_{x,a}Q_{z,c}Q_{z,c}Q_{y,b} = Q_{x,a}(\sum_{c} Q_{z,c})Q_{y,b} = Q_{x,a}Q_{y,b}\neq 0$, there exists a $c$ such that $Q_{x,a}Q_{z,c}\neq 0$ and $Q_{z,c}Q_{y,b}\neq 0$.
    By the induction hypothesis on the subtrees $\rel T_1,\dots,\rel T_m$, there exist homomorphisms $h_i\colon \rel T_i\to\rel A$ such that $h_i(z)=c$ for all $i$, and such that if $x\in T_i$ (resp.\ $y\in T_i$) then $h_i(x)=a$ (resp.\ $h_i(y)=b$).
    The union of these homomorphisms gives a homomorphism $h\colon\rel T\to \rel A$ with $h(x)=a$ and $h(y)=b$.
\end{proof}

\begin{proposition}\label{pp-def-gadget}
    Let $\rel A$ be a structure, and let $R\subseteq A^2$ be a relation with a classical gadget $\GG$ that is a tree.
    Then $\GG$ is a non-oracular q-definition for $R$.
\end{proposition}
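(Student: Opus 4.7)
The plan is to verify the two conditions \eqref{itm:qgadget-extension_non_oracular} and \eqref{itm:qgadget-restriction_non_oracular} of~\Cref{qgadget_non_oracular} separately, letting $g_1,g_2$ be the distinguished vertices of the gadget $\rel G$ and letting $\rel R$ denote the $2$-element structure with the single tuple $(1,2)$ in its unique binary relation.

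For~\eqref{itm:qgadget-extension_non_oracular}, I will start from a quantum (oracular!) homomorphism $Q\colon\rel R\qto\A_R$. Since all vertices of $\rel R$ lie in a common constraint, $Q$ must be non-contextual, so by the spectral theorem (as formalised in~\Cref{observation}) it decomposes as $Q=h_1\oplus\cdots\oplus h_d$, where each $h_i$ is a classical homomorphism $\rel R\to\A_R$, i.e.\ $(h_i(1),h_i(2))\in R$. Using the fact that $\rel G$ is a classical gadget for $R$, each $h_i$ extends to a classical homomorphism $f_i\colon\rel G\to\rel A$ with $f_i(g_1)=h_i(1)$ and $f_i(g_2)=h_i(2)$. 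Taking $Q'=f_1\oplus\cdots\oplus f_d$ gives, via the non-oracular analogue of~\Cref{sum-homomorphisms}, the required extension $Q'\colon\rel G\qnoto\rel A$ with $Q'_{g_i,a}=Q_{i,a}$.

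For~\eqref{itm:qgadget-restriction_non_oracular}, I start from an arbitrary non-oracular quantum homomorphism $Q\colon\rel G\qnoto\rel A$ and set $Q'_{i,a}:=Q_{g_i,a}$. The only condition to verify is~\eqref{quantum_homo_1}, that is, $Q_{g_1,a}Q_{g_2,b}=0$ whenever $(a,b)\notin R$. Here is where the crucial ingredient is the preceding~\Cref{tree-homomorphisms}: since $\rel G$ is a tree, if $Q_{g_1,a}Q_{g_2,b}\neq 0$ then there exists a classical homomorphism $h\colon\rel G\to\rel A$ with $h(g_1)=a$ and $h(g_2)=b$. But $\rel G$ being a classical gadget for $R$ forces $(a,b)\in R$, contradicting the choice of $(a,b)$. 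Hence $Q'$ is a non-oracular quantum homomorphism $\rel R\qnoto\A_R$, as required.

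The only non-routine step is the use of~\Cref{tree-homomorphisms} in the second direction; everything else is a matter of combining the spectral decomposition of non-contextual quantum homomorphisms with the defining property of classical gadgets. In fact the main conceptual content has already been isolated in~\Cref{tree-homomorphisms}, so this proposition follows very cleanly once that lemma is in place; no further case analysis or combinatorial argument on the tree $\rel G$ should be required in the write-up.
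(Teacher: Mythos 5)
Your proof is correct and follows essentially the same route as the paper's: the extension direction via the spectral decomposition of the non-contextual $Q$ plus the classical gadget property, and the restriction direction via~\Cref{tree-homomorphisms} combined with the gadget property. No gaps.
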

\begin{proof}
    Suppose that $Q\colon\rel R\qto (A;R)$ is a quantum homomorphism.
    Then we have $Q=h_1\oplus\dots\oplus h_d$, where each $h_i$ is a map $[r]\to A$ such that $(h_i(1),h_i(2))\in R$.
    Thus, each $h_i$ can be extended to a homomorphism $g_i\colon \GG\to\rel A$ since $\GG$ is a classical gadget for $R$.
    It follows that $Q':= g_1\oplus\dots\oplus g_d$ is a (non-oracular) quantum homomorphism $\GG\qnoto\rel A$ satisfying~\eqref{itm:qgadget-extension_non_oracular}.

    To prove~\eqref{itm:qgadget-restriction_non_oracular}, let $Q\colon\GG\qnoto\rel A$ be a non-oracular quantum homomorphism.
    Let $x_1,x_2$ be the distinguished vertices of $\GG$.
    Suppose that $Q_{x_1,a_1} Q_{x_2,a_2}\neq 0$.
    Since $\GG$ is a tree, there exists by~\Cref{tree-homomorphisms} a homomorphism $h\colon\GG\to\rel A$ such that $h(x_i)=a_i$ for $i\in\{1,2\}$.
    Thus, $(a_1,a_2)\in R$, and therefore the quantum function $Q'\colon \{1,2\}\qto A$ defined by $Q'_{i,a}=Q_{x_i,a}$ satisfies~\eqref{quantum_homo_1}
    and is a non-oracular quantum homomorphism $\rel R\qnoto (A;R)$.
\end{proof}

The next result is the non-oracular version of~\Cref{qgadget-composition}.
\begin{proposition}\label{qgadget-composition_no}
    Let $\rel A,\rel B$ be structures, and let $R_A\subseteq A^r$ and $R_B\subseteq B^r$ be relations that admit a common non-oracular q-definition\/ $\GG$ over $\rel A$ and $\rel B$, respectively. 
    Then every non-oracular quantum homomorphism $Q\colon\rel A\qnoto\rel B$ is a non-oracular quantum homomorphism $(A;R_A)\qnoto (B;R_B)$.
\end{proposition}
\begin{proof}
    Let $g_1,\dots,g_r$ be the distinguished vertices of $\GG$.
    We only need to prove~\eqref{quantum_homo_1}. Let $(a_1,\dots,a_r)\in R_A$ and $(b_1,\dots,b_r)\not\in R_B$.
    By~\eqref{itm:qgadget-extension_non_oracular}, there exists a non-oracular quantum homomorphism $Q'\colon\GG\qnoto\rel A$ such that $Q'_{g_i,a_i}=\id$ for all $i\in[r]$.
    Let $S=Q \bullet Q'$, which is a non-oracular quantum homomorphism $\GG\qnoto\rel B$.
    By~\eqref{itm:qgadget-restriction_non_oracular}, we have that the quantum function $Q''$ defined by $Q''_{i,b}:= S_{g_i,b}$ for all $i\in[r]$ and all $b\in B$ is a non-oracular quantum homomorphism $\rel R\qnoto (B;R_B)$.
    Thus, $\prod_{i=1}^rQ''_{i,b_i}=0$ and therefore $\prod_{i=1}^r S_{g_i,b_i}=0$.
    Note that $S_{g_i,b_i}$ is by definition $\sum_{a\in A} Q_{a,b_i}\otimes Q'_{g_i,a} = Q_{a_i,b_i}\otimes\id$.
    Thus, $(\prod_{i=1}^r Q_{a_i,b_i})\otimes\id = 0$, from which we obtain that $\prod_{i=1}^r Q_{a_i,b_i}=0$.
    \end{proof}

\begin{remark}
Let $Q\colon \rel X\qnoto\rel Y$ be a non-oracular quantum homomorphism between two digraphs $\rel X$ and $\rel Y$. Suppose that $x,x'\in X$ and $y,y'\in Y$ are such that there exists a directed $\ell$-walk from $x$ to $x'$ in $\rel X$ but there is no directed $\ell$-walk from $y$ to $y'$ in $\rel Y$.
Consider the structure $\GG$ consisting of a path of length $\ell$, and call $u,v$ the two endpoints of this path.
    Since this structure is a tree, \Cref{pp-def-gadget} implies that $(\GG,u,v)$ is a q-no-definition for certain binary relations $R_X\subseteq X^2,R_Y\subseteq Y^2$ in $\rel X$ and $\rel Y$, respectively.
    It follows from~\Cref{qgadget-composition_no} that $Q$ is a non-oracular quantum homomorphism $(X;R_X)\qnoto(Y;R_Y)$.
    By assumption, $R_X$ contains $(x,x')$ and $R_Y$ does not contain $(y,y')$,
    and therefore $Q_{x,y}Q_{x',y'}=0$ by~\eqref{quantum_homo_1}.
    Hence,~\Cref{pp-def-gadget} provides an alternative proof of~\cite[Lemma~4.12]{MancinskaR16}.
\end{remark}

We now show that our undecidability result for odd cycles,~\Cref{thm_quantum_odd_cycles_is_undecidable}, also holds in the non-oracular setting, by virtue of a result from~\cite{Zeman}.

\begin{proposition}
For every odd $m\geq 3$, $\qnoCSP(\rel C_m)$ is undecidable.
\end{proposition}
\begin{proof}
It was observed in~\cite[Proposition~5.27 \& Lemma~4.9]{Zeman} that, if an undirected graph $\A$ having no 4-cycle admits a commutativity gadget, then it also admits a non-oracular commutativity gadget. (The result is phrased therein in terms of \textit{algebraic} commutativity gadgets; however, as noted in~\Cref{gadget-characterization} and~\Cref{gadget-characterization_non_oracular}, this is equivalent to our definitions.) Hence,~\Cref{thm_odd_cycles_qPol_equals_qcPol}  directly implies that $\rel C_m$ has a non-oracular commutativity gadget.
Note now that $\rel C_m$ trivially pp-defines itself via a tree gadget. Hence,~\Cref{pp-def-gadget} implies that $\rel C_m$ q-no-defines itself. Using~\Cref{qgadget-reduction_non_oracular}, we deduce that $\qCSP(\rel C_m)$ reduces to $\qnoCSP(\rel C_m)$ in logspace. Hence, it follows from~\Cref{thm_quantum_odd_cycles_is_undecidable} that $\qnoCSP(\rel C_m)$ is undecidable.
\end{proof}

We conclude this section by observing that the notion of q-no-definitions can be extended to a notion of \textit{q-no-constructions} in a similar way as in the oracular case considered in~\Cref{subsec_body_quantum_reductions}. 

\begin{definition}
\label{defn_q_construction_nonoracular}
Let $\A$ be a $\sigma$-structure and let $\B$ be a $\rho$-struture. We say that $\A$ \textit{q-no-constructs} $\B$ if there exists a $\rho$-structure $\C$ such that 
\begin{itemize}
    \item[$(i)$] the domain of $\C$ is $A^d$ for some $d\in\N$;
    \item[$(ii)$] each relation of $\rel C$ (of arity, say, $r$) is q-no-definable in $\A$ (identifying $C^r=(A^d)^r$ with $A^{dr}$);
    \item[$(iii)$] $\B\qto\C$ and $\C\qto\B$.
\end{itemize}
\end{definition}

Note that, in part $(iii)$, we require $\B$ and $\C$ to be quantum \textit{oracular} homomorphically equivalent. As we shall see, non-oracular homomorphic equivalence is not sufficient in order to obtain a q-no-construction version of~\Cref{qgadget-reduction_non_oracular}.

\begin{proposition}
\label{thm_q_constructions_means_reductions_nonoracular}
Let $\A$ and $\B$ be structures (with potentially different domains and signatures), and suppose that
\begin{itemize}
    \item $\rel A$ has a non-oracular commutativity gadget, and
    \item $\A$ q-no-constructs $\B$.
\end{itemize}
Then $\qCSP(\B)$ reduces to $\qnoCSP(\rel A)$ in logspace.
\end{proposition}
\begin{proof}

Let $\sigma$ and $\rho$ be the signatures of $\A$ and $\B$, respectively, and let $\C$ be the $\rho$-structure with domain $A^d$ witnessing that $\A$ q-no-constructs $\B$, as per~\Cref{defn_q_construction_nonoracular}.
We now construct a structure $\rel D$ in the same way as in the proof of~\Cref{q_construction_implies_qPol_hom} in~\Cref{subsec_body_quantum_reductions}.

Given a tuple $\ba=((a_{1,1},\dots,a_{1,d}),\dots,(a_{r,1},\dots,a_{r,d}))\in  (A^d)^r$, let $\bar\ba\in A^{dr}$ be the tuple defined by 
\[\bar\ba=(a_{1,1},\dots,a_{1,d},\dots,a_{r,1},\dots,a_{r,d}).\]

Consider the signature $\tau$ having the same symbols as $\rho$ such that, if a symbol has arity $r$ in $\rho$, it has arity $dr$ in $\tau$. We let $\rel D$ be the $\tau$-structure with domain $D=A$ and relations defined as follows: For any $R\in\rho$, $R^{\rel D}=\{\bar\bc:\bc\in R^{\rel C}\}$. 

Using part $(ii)$ of~\Cref{defn_q_construction_nonoracular}, we know that $\A$ q-no-defines $\D$. Since, by assumption, $\A$ has a non-oracular commutativity gadget, we deduce from~\Cref{qgadget-reduction_non_oracular} that $\qCSP(\D)$ reduces to $\qnoCSP(\A)$ in logspace. On the other hand, we know from the proof of~\Cref{q_construction_implies_qPol_hom} that $\qPol(\D)\to\qPol(\C)$ and thus, by~\Cref{thm_main_qPol_homo_reductions}, $\qCSP(\C)$ reduces to $\qCSP(\D)$ in logspace. In addition, by part $(iii)$ of~\Cref{defn_q_construction_nonoracular}, $\qCSP(\B)$ and $\qCSP(\C)$ are trivially interreducible in logspace. 
Combining the three reductions yields the required logspace reduction from $\qCSP(\B)$ to $\qnoCSP(\A)$. 
\end{proof}

\begin{remark}
We point out that the Galois connection for q-no-definitions of~\Cref{cor_1308_2111} does not appear to lift to q-no-constructions. In particular,~\Cref{q_construction_implies_qPol_hom} does not immediately extends to the non-oracular setting. The reason is that the order we choose while identifying the elements of $A^{dr}$ with elements of $C^r$ (see the proof of~\Cref{thm_q_constructions_means_reductions_nonoracular}) is now important, as in non-oracular homomorphisms we are not allowed to permute projectors associated with the same constraint. This, in turns, prevents from building  a minion homomorphism $\qnoPol(\D)\to\qnoPol(\C)$, which is needed in order to conclude that q-no-constructions yield minion homomorphisms between the minions of quantum non-oracular polymorphisms.
\end{remark}

\section{Quantum polymorphisms of Boolean structures}
\label{sec_boolean}

In this section, we consider the case of quantum polymorphisms of Boolean structures, and we prove the following result.

\thmbooleancommgadgetclassification*

As a direct consequence of this result, we obtain a $\mathsf{P}$ vs. undecidable complexity dichotomy for quantum CSPs parameterised by Boolean structures.
We let $\rel{XOR}$ be the Boolean structure having two ternary relation symbols $R_+$ and $R_{-}$, whose interpretations are
\begin{align*}
    R_+^{\rel{XOR}}&=\{(1,0,0),(0,1,0),(0,0,1),(1,1,1)\},\\
    R_-^{\rel{XOR}}&=\{(0,1,1),(1,0,1),(1,1,0),(0,0,0)\}.
\end{align*}
Note that $\CSP(\rel{XOR})$ encodes the satisfiability problem for systems of linear equations mod $2$.

\corcomplexitydichotomyboolean*
\begin{proof}
If $\A$ does not pp-define $\rel{XOR}$, it was shown in~\cite{AtseriasKS19} that $\qCSP(\A)=\CSP(\A)$, and the latter problem is solvable in polynomial time via the bounded-width algorithm~\cite{Barto14:jacm}.
Suppose now that $\A$ pp-defines $\rel{XOR}$. This implies that $\Pol(\A)\subseteq\Pol(\rel{XOR})$.
Since $\rel{XOR}$ is not preserved by majority, it follows that $\A$ has no majority polymorphism. Then, it follows from~\Cref{thm_boolean_comm_gadget_classification} that $\A$ admits a commutativity gadget.
Applying~\Cref{prop_pp_plus_comm_equals_q}, we deduce that $\A$ q-defines $\rel{XOR}$. This implies via~\Cref{qgadget-reduction} that $\qCSP(\rel{XOR})$ reduces to $\qCSP(\A)$ in logspace. Since $\qCSP(\rel{XOR})$ is undecidable (as proved in~\cite{slofstra2019set}), we conclude that $\qCSP(\A)$ is undecidable.
\end{proof}

\begin{remark}
\label{remark_comparison_boolean_paddock_slofstra}
    \Cref{cor_complexity_dichotomy_boolean} should be compared with~\cite[Theorem~5.11 (a)]{paddock2025satisfiability}. The result established therein concerns Boolean constraint systems \textit{with contexts}, where each context is a subset of variables of the CSP instance whose associated projectors are required to commute, see~\cite[Definition~3.1]{paddock2025satisfiability}. 
By modifying the contexts in the reductions used to prove undecidability, classical pp-definitions can be employed to obtain sound and complete reductions between such problems.
Syntactically, this can be enforced by requiring that the language admits a full binary relation, see for example~\cite[\S6.1]{AtseriasKS19}. To extend the result to arbitrary Boolean languages, the use of commutativity gadgets appears to be necessary; see also~\cite{culf2024re}. On the other hand, we point out that the dichotomy for Boolean constraint systems with contexts in~\cite{paddock2025satisfiability} is incomparable to ours, in that it also addresses other variants of quantum satisfiability. In contrast,~\Cref{cor_complexity_dichotomy_boolean} applies only to finite-dimensional quantum strategies in the tensor-product model (see also the discussion in~\Cref{subsec_overview_outlook}).
\end{remark}

\subsection{The case of structures without majority polymorphism}

In this subsection, we study the quantum polymorphism minions of Boolean structures that are not preserved under majority.
As a consequence, we give an alternative proof of the fact, originally proved in~\cite[Section 7]{culf2024re}, that any Boolean relational structure that is not invariant under the majority polymorphism has a commutativity gadget. 

If a quantum homomorphism $Q$ has range $\{0,1\}$, then $Q_{x,0}$ is determined by $Q_{x,1}$ since $Q_{x,0}=\id-Q_{x,1}$.
Moreover, the set $\{0,1\}^n$ is naturally in bijection with the powerset of $[n]$.
Given $S\subseteq[n]$, we let $s\in\{0,1\}^n$ be the tuple such that $s_i=1$ if, and only if, $i\in S$.
In the following, if $\rel A$ is a Boolean structure, we  write the projectors of a quantum homomorphism $\rel A^n\qto\rel A$ by $Q_S$ where $S\subseteq[n]$.

We begin our analysis by focussing on the following class of Boolean structures.
\begin{definition}
    Let $t\in\{0,1\}^k$ be an arbitrary Boolean tuple. Then the $t$-translate of $1$-in-$k$ is the relational structure $\rel O_t=(\{0,1\};\{x+_2 t\mid x\in R_{1/k}\})$, where
    \begin{enumerate}
        \item $R_{1/k}=\{x\in \{0,1\}^k\mid \sum_{i=1}^kx_i=1\}$ is the relation of $1$-in-$k$ SAT and,
        \item $+_2$ is componentwise addition modulo 2.

    \end{enumerate}
    The relation of $\rel O_t$ is named $R_t$.
\end{definition}
One can easily see that the ternary majority function on $\{0,1\}$ does not preserve $R_t$; the majority of any three distinct tuples in the relation yields some $t$ that is not in $R_t$, witnessing the non-preservation. 

We show that if we translate $1$-in-$k$ by a tuple $t\in R_{1/k}$, then $\qcPol(\rel O_t)=\qnoPol(\rel O_t)$. First consider the case where $k=3$ and $t=(1,0,0)$:
\begin{proposition}\label{prop:polys-100}
    For every non-oracular polymorphism $Q\colon \rel O_{(1,0,0)}^n\qnoto \rel O_{(1,0,0)}$, the following properties hold:
\begin{enumerate}
    \item If $S\cap T=\emptyset$, then $Q_SQ_T=0$.
    \item If $S\cap T=\emptyset$, then $Q_{S\cup T}=Q_{S\cup T}(Q_S+Q_T)$.
    \item $Q_S=Q_{S\cup T}Q_S$ for all $S$
    \item\label{itm:uniqueness} $Q_S=\sum_{i\in S} Q_{\{i\}}$.
\end{enumerate}
\end{proposition}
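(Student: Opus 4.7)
The plan is to exploit the specific shape of the target structure. Writing $R_{(1,0,0)}=\{(0,0,0),(1,1,0),(1,0,1)\}$, the five forbidden outputs in $\{0,1\}^3\setminus R_{(1,0,0)}$ are $(1,0,0),(0,1,0),(0,0,1),(0,1,1),(1,1,1)$. For any disjoint $S,T\subseteq[n]$, I will use that the triple of characteristic tuples of $(S\cup T,S,T)$ belongs to the relation of $\rel O_{(1,0,0)}^n$ since, coordinate-wise, it equals $(0,0,0)$, $(1,1,0)$, or $(1,0,1)$ according to whether $i\notin S\cup T$, $i\in S$, or $i\in T$. Plugging this input and each of the five forbidden outputs into axiom~\eqref{quantum_homo_1} yields five identities relating $Q_{S\cup T},Q_S,Q_T$ and their complements, from which the four items will follow by elementary manipulation.

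For items~(1)--(3): the outputs $(1,1,1)$ and $(0,1,1)$ give respectively $Q_{S\cup T}Q_SQ_T=0$ and $(\id-Q_{S\cup T})Q_SQ_T=0$, whose sum is item~(1). The output $(1,0,0)$ gives $Q_{S\cup T}(\id-Q_S)(\id-Q_T)=0$, and expanding while using item~(1) to kill the $Q_SQ_T$ cross-term yields item~(2). The output $(0,1,0)$ gives $(\id-Q_{S\cup T})Q_S(\id-Q_T)=0$, which collapses via item~(1) to $Q_S=Q_{S\cup T}Q_S$ for disjoint $S,T$; for arbitrary $T$, item~(3) reduces to the disjoint case via $S\cup T=S\cup(T\setminus S)$.

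Item~(4) I would prove by induction on $|S|$. The base case $|S|=0$ asks for $Q_\emptyset=0$, which follows from~\eqref{quantum_homo_1} applied to the all-zero input and the forbidden output $(1,1,1)$: this gives $Q_\emptyset^3=0$, hence $Q_\emptyset=0$ as $Q_\emptyset$ is a projector. The case $|S|=1$ is trivial. For the inductive step with $|S|\geq 2$, set $P_S:=\sum_{i\in S}Q_{\{i\}}$. For each $i\in S$, item~(3) applied to the pair $(\{i\},S\setminus\{i\})$ yields $Q_{\{i\}}=Q_SQ_{\{i\}}$; summing over $i\in S$ gives $P_S=Q_SP_S$. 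On the other hand, item~(2) applied to the disjoint decomposition $S=\{i\}\sqcup(S\setminus\{i\})$, combined with the inductive hypothesis $Q_{S\setminus\{i\}}=P_{S\setminus\{i\}}$, gives
\[Q_S \;=\; Q_S\bigl(Q_{\{i\}}+Q_{S\setminus\{i\}}\bigr) \;=\; Q_SP_S \;=\; P_S.\]
The whole argument is essentially a routine unpacking of axiom~\eqref{quantum_homo_1}; the only step requiring attention is the pairing of the outputs $(1,1,1)$ and $(0,1,1)$ to obtain item~(1), which then powers every subsequent reduction.
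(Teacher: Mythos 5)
Your proof is correct and follows essentially the same route as the paper's: the same four commutation-free identities obtained by feeding the triple $(S\cup T,S,T)$ into axiom~\eqref{quantum_homo_1} against the forbidden outputs $(1,1,1)$, $(0,1,1)$, $(1,0,0)$, $(0,1,0)$, combined in the same way, followed by the same induction for item~(4). Your explicit treatment of the base case $Q_\emptyset=0$ and of the non-disjoint case in item~(3) fills in details the paper leaves implicit, but does not change the argument.
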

\begin{proof}
    Consider $S,T\subseteq [n]$ with $S\cap T=\emptyset$ and let $R:=S\cup T$.
    Then $(r,s,t)$ is in $R_{(1,0,0)}$, which gives the following identities by property~\eqref{quantum_homo_1} of $Q$:
    \begin{itemize}
        \item $Q_{S\cup T}Q_SQ_T = 0$, since $(1,1,1)$ is not in $R_{(1,0,0)}$,
        \item $(\id-Q_{S\cup T})Q_SQ_T = 0$, since $(0,1,1)$ is not in $R_{(1,0,0)}$,
        \item $Q_{S\cup T}(\id-Q_S)(\id-Q_T) = 0$, since $(1,0,0)$ is not in $R_{(1,0,0)}$.
        \item $(\id-Q_{S\cup T})Q_S(\id-Q_T) = 0$, since $(0,1,0)$ is not in $R_{(1,0,0)}$.
    \end{itemize}
    We can add the first two identities to get $Q_SQ_t=0$ so, (1) holds. Moreover, adding the first and third identities yields (2).

    To derive $Q_S=Q_{S\cup T}Q_S$ we add the first and fourth identities to get $Q_S= Q_{S\cup T}Q_S+Q_SQ_T$. Application of (1) then yields (3).

    Finally, from the equalities
    \[Q_{S\cup T}=Q_{S\cup T}(Q_S+Q_T)=Q_{S\cup T}Q_S+Q_{S\cup T}Q_T=Q_S+Q_T\]
    we can derive (4) by inductively building any $Q_S$ from sums of the individual elements of $S$. 
\end{proof}
In particular, it follows that for every non-oracular quantum polymorphism 
\[
Q\colon\rel O_{(1,0,0)}^n\qnoto \rel O_{(1,0,0)}\] and every $S,T\subseteq[n]$, we have $[Q_S,Q_T]=0$, which implies that $\qnoPol(\rel O_{(1,0,0)})$ is the quantum closure of $\Pol(\rel O_{(1,0,0)})$.
\begin{lemma}\label{lem: polys-10...0}
    For any natural number $k\ge 3$, the translation $\rel O_{(1,0,\dots,0)}$ of\/ $1$-in-$k$ by $(1,0,\dots,0)$ has the property that $\qnoPol(\rel O_{(1,0,\dots,0)})=\qcPol(\rel O_{(1,0,\dots,0)})$.
\end{lemma}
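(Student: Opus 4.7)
The plan is to reduce to the case $k=3$ already handled in~\Cref{prop:polys-100} by padding subsets with the empty set: tuples of the form $(S_1 \cup S_2, S_1, S_2, \emptyset, \ldots, \emptyset)$ with $S_1 \cap S_2 = \emptyset$ will embed the $3$-ary constraint structure inside the $k$-ary relation, yielding pairwise commutativity of all the projectors in any non-oracular quantum polymorphism.

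The first step is to show that $Q_\emptyset = 0$ for every $Q \in \qnoPol(\rel O_{(1,0,\dots,0)})$ of some arity $n$. Applying~\eqref{quantum_homo_1} to the all-empty tuple (which lies in the relation of $\rel O_{(1,0,\dots,0)}^n$ since $(0,\ldots,0) \in R_{(1,0,\dots,0)}$) with the forbidden evaluation $(1,\ldots,1)$---invalid because for $k \geq 3$ the tuples of $R_{(1,0,\dots,0)}$ have $0$ or $2$ ones---yields $Q_\emptyset = Q_\emptyset^k = 0$ by idempotency of projectors, so $Q_\emptyset(0) = \id$. With this in hand, for any disjoint $S_1, S_2 \subseteq [n]$ the padded tuple $(S_1 \cup S_2, S_1, S_2, \emptyset, \ldots, \emptyset)$ lies in the relation of $\rel O_{(1,0,\dots,0)}^n$, since at each index $j \in [n]$ its evaluation is one of $(0,0,0,0,\ldots,0)$, $(1,1,0,0,\ldots,0)$, $(1,0,1,0,\ldots,0)$, all of which belong to $R_{(1,0,\dots,0)}$. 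Moreover, the only forbidden evaluations producing nontrivial constraints under~\eqref{quantum_homo_1} are of the form $(b_1, b_2, b_3, 0, \ldots, 0)$ with $(b_1, b_2, b_3) \notin \{(0,0,0), (1,1,0), (1,0,1)\}$, because $Q_\emptyset(1)=0$ kills the product as soon as any padded coordinate is evaluated at $1$. These give exactly the four algebraic identities exploited in the proof of~\Cref{prop:polys-100}, so the identical manipulations yield its properties $(1)$--$(4)$ in the present setting; in particular $Q_S = \sum_{i \in S} Q_{\{i\}}$, and combined with property $(1)$ one obtains $[Q_S, Q_T] = 0$ for all $S, T \subseteq [n]$.

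Hence every $Q \in \qnoPol(\rel O_{(1,0,\dots,0)})$ is non-contextual, so the spectral theorem decomposes it as $\bigoplus_{i \in [d]} h_i$ for scalar-valued maps $h_i \colon \{0,1\}^n \to \{0,1\}$. The argument sketched for~\Cref{observation} goes through verbatim in the non-oracular setting (since the verification that each $h_i$ preserves the relation uses only~\eqref{quantum_homo_1}), and shows that each $h_i$ is a classical polymorphism of $\rel O_{(1,0,\dots,0)}$, placing $Q$ in $\qcPol(\rel O_{(1,0,\dots,0)})$. The step requiring the most care is checking that the padding reduction is really faithful: it is precisely because $t = (1,0,\dots,0)$ satisfies $(0,\dots,0) \in R_t$ that padding by $\emptyset$ is inert, so that the $k$-ary constraint set collapses cleanly to its $3$-ary counterpart from~\Cref{prop:polys-100}.
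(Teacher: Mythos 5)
Your proof is correct and follows essentially the same route as the paper's: both first establish $Q_{\mathbf{0},1}=0$ (equivalently $Q_{\mathbf{0},0}=\id$) from the forbidden evaluation $(1,\dots,1)$ of the all-zero tuple, and then reduce to the $k=3$ case of \Cref{prop:polys-100} by padding the trailing $k-3$ coordinates with $\emptyset$. The only difference is presentational: the paper packages the padding as a chain of pp-definitions ($\{0\}$, then $R_{(1,0,0)}$) preserved by non-oracular quantum polymorphisms, whereas you verify the resulting four algebraic identities directly on the padded tuples.
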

\begin{proof}
    Since $(0,0,\dots, 0)=(1,0,\dots, 0)+_2 (1,0,\dots,0)\in R_{(1,0,\dots,0)}$ and $(1,1,\dots,1)\notin R_{(1,0,\dots,0)}$, we can pp-define $0$ in $\rel O_{(1,0,\dots,0)}$ by
    \[a=0\Leftrightarrow R_{(1,0,\dots,0)}(a,\dots, a)\]
    This relation is preserved by every non-oracular quantum polymorphism of $R_{(1,0,\dots,0)}$ since,
    \[Q_{\ba,1}=Q_{\ba,1}\dots Q_{\ba,1}=0\]
    for every $\ba\in \{0,1\}^n$ and $Q\colon \rel O_{(1,0,\dots,0)}^n\to \rel O_{(1,0,\dots,0)}$.
    We then proceed to define $R_{(1,0,0)}$ in $(\{0,1\},R_{(1,0,\dots,0)},0)$ by
    \[R_{(1,0,0)}(x,y,z)\Leftrightarrow R_{(1,0,\dots,0)}(x,y,z,0,\dots,0)\]
    We prove that $R_{(1,0,0)}$ is preserved by every non-oracular $Q\colon \rel O_{(1,0,\dots,0)}^n\to \rel O_{(1,0,\dots,0)}$. Assume we have tuples $\ba,\bb,\bc\in \{0,1\}^n$ with $(\ba,\bb,\bc)$ in the $n$th direct power of $R_{(1,0,0)}$ and variables $x,y,z\in \{0,1\}$ such that $(x,y,z)\notin R_{(1,0,0)}$. 
    Then  $(\ba,\bb,\bc,\mathbf{0},\dots,\mathbf{0})$ is in the $n$th power $R_{(1,0,\dots,0)}$ and $(x,y,z,0,\dots,0)$ is not an element of $R_{(1,0,\dots,0)}$
    
    Now let $Q\colon \rel O_{(1,0,\dots,0)}^n\qnoto \rel O_{(1,0,\dots,0)}$ be a non-oracular quantum polymorphism and consider:
    \begin{align*}
        Q_{\ba,x}Q_{\bb,y}Q_{\bc,z} &=Q_{\ba,x}Q_{\bb,y}Q_{\bc,z}\sum_{w\in \{0,1\}}Q_{\mathbf{0},w}\\
        &=\sum_{w\in \{0,1\}}Q_{\ba,x}Q_{\bb,y}Q_{\bc,z}Q_{\mathbf{0},w}\\
        &=\sum_{w\in \{0,1\}}(Q_{\ba,x}Q_{\bb,y}Q_{\bc,z}Q_{\mathbf{0},w}\dots Q_{\mathbf{0},w})(Q_{\mathbf{0},w}\dots Q_{\mathbf{0},w})\\
        &=(Q_{\ba,x}Q_{\bb,y}Q_{\bc,z}Q_{\mathbf{0},0}\dots Q_{\mathbf{0},0})(Q_{\mathbf{0},0}\dots Q_{\mathbf{0},0})\\
        &+(Q_{\ba,x}Q_{\bb,y}Q_{\bc,z}Q_{\mathbf{0},1}\dots Q_{\mathbf{0},1})(Q_{\mathbf{0},1}\dots Q_{\mathbf{0},1})\\
        &= 0(Q_{\mathbf{0},0}\dots Q_{\mathbf{0},0})+(Q_{\ba,x}Q_{\bb,y}Q_{\bc,z}Q_{\mathbf{0},1}\dots Q_{\mathbf{0},1})0=0
    \end{align*}
    where the penultimate equality follows from a combination of~\eqref{quantum_homo_1} and the fact that $(1,\dots,1)\notin R_{(1,0,\dots, 0)}$.
    
   This proves that every non-oracular quantum polymorphism of $\rel O_{(1,0,\dots,0)}$ preserves $R_{(1,0,0)}$ and thus
       \[\qnoPol(\rel O_{(1,0,\dots,0)})\subseteq \qnoPol(\rel O_{(1,0,0)})=\qcPol(\rel O_{(1,0,0)}).\]
       It follows that every non-oracular quantum polymorphism of $\rel O_{(1,0,\dots,0)}$ is non-contextual, which completes the proof.
\end{proof}
For a generalisation of~\Cref{lem: polys-10...0} to an arbitrary $t\in R_{1/k}$, we can pp-define $R_t$ in $\rel O_{(1,0,0\dots,0)}$ by
\[R_t(x_1,\dots,x_k)\Leftrightarrow R_{(1,0,0\dots,0)}(x_l,x_1\dots,x_{l-1},x_{l+1},\dots,x_k)\]
where $1\le l\le k$ is the singular non-zero component of $t$.

\begin{corollary}\label{I1}
    If $k\ge 3$ and $t\in R_{1/k}$, the non-oracular polymorphisms of $\rel O_t$ are non-contextual.
\end{corollary}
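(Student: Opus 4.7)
My plan is to prove the result directly, without routing through the pp-definition suggested in the paragraph preceding the corollary, since that pp-definition only permutes the coordinates of the relation and so does not a priori transport non-oracular polymorphisms (the product in $(\mathbf{QH_1})$ is non-commutative, so a coordinate permutation of a relation changes the order of the projectors in the condition). The strategy is to mimic the argument of Proposition~\ref{prop:polys-100} directly inside $\rel O_t$, exploiting the fact that $t = e_l$ for some unique $l \in [k]$, so that $R_t$ is the all-zero tuple together with the tuples of Hamming weight exactly $2$ whose support contains $l$.

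Fix a non-oracular polymorphism $Q \colon \rel O_t^n \qnoto \rel O_t$ and write $Q_S := Q_{s,1}$ for $S \subseteq [n]$ (with indicator vector $s$). As a preliminary step, I would establish $Q_\emptyset = 0$: applying $(\mathbf{QH_1})$ to the tuple $(\mathbf{0}, \ldots, \mathbf{0}) \in R_t^n$ and the forbidden output $(1, \ldots, 1)$ (which lies outside $R_t$ since $k \geq 3$ exceeds the maximum weight of any tuple in $R_t$), one gets $Q_\emptyset^k = 0$, whence $Q_\emptyset = 0$ by idempotency, i.e.\ $Q_{\mathbf{0},0} = \id$. For the main step, given disjoint $S, T \subseteq [n]$ with $R := S \cup T$, I would pick three pairwise distinct indices $p, q, l \in [k]$ (which exist since $k \geq 3$) and form the $k$-tuple of inputs in $(\{0,1\}^n)^k$ that places $S$ at position $q$, $T$ at position $p$, $R$ at position $l$, and $\mathbf{0}$ at every other position. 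Each column of this tuple lies in $R_t$ (being all-zero, or supported at $\{q,l\}$ for $i\in S$, or at $\{p,l\}$ for $i \in T$), so the tuple lies in $R_t^n$. Applying $(\mathbf{QH_1})$ to the five forbidden outputs whose support is contained in $\{p,q,l\}$ (all others yield trivially zero products because $Q_\emptyset = 0$) then produces a system of identities from which the same elementary manipulations as in Proposition~\ref{prop:polys-100} yield $Q_S Q_T = 0$ and $Q_{S \cup T} = Q_S + Q_T$ for every pair of disjoint $S, T$.

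From these two identities, induction on $|A|$ gives $Q_A = \sum_{i \in A} Q_{\{i\}}$ for every $A \subseteq [n]$, and consequently all $Q_A$ pairwise commute (using $Q_{\{i\}} Q_{\{j\}} = 0$ for $i \neq j$). Hence $Q$ is non-contextual, so $\qnoPol(\rel O_t) = \qcPol(\rel O_t)$ by Proposition~\ref{observation}. The main obstacle I anticipate is the careful bookkeeping in the derivation of the two key identities: depending on the relative order of $p, q, l$ inside $[k]$, the projector $Q_R$ may appear in the middle of the product rather than at one end (as in Proposition~\ref{prop:polys-100}), so the elementary algebra that extracts $Q_S Q_T = 0$ and $Q_R = Q_S + Q_T$ from the five identities must be adapted case-by-case. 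In each case, however, the combination of self-adjointness of projectors, $Q_\emptyset = 0$, and the identity $Q_S Q_T = 0$ (once established) collapses the system to the desired equalities.
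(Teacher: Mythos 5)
Your proof is correct, and it takes a genuinely different route from the paper's. The paper obtains Corollary~\ref{I1} by first treating $t=(1,0,0)$ with $k=3$ (Proposition~\ref{prop:polys-100}), lifting to $t=(1,0,\dots,0)$ for arbitrary $k$ via an explicit projector computation (Lemma~\ref{lem: polys-10...0}), and then disposing of a general $t=e_l$ with a one-line pp-definition that merely permutes the arguments of the relation. You instead rerun the Proposition~\ref{prop:polys-100} argument directly inside $\rel O_{e_l}$ for arbitrary $k$, embedding the three active coordinates at positions $p,q,l$ and padding with $\mathbf 0$ after first checking $Q_{\mathbf 0,1}=0$. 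Your opening objection is well taken: condition~\eqref{quantum_homo_1} fixes the order of the projectors, so permuting the coordinates of a relation does not automatically preserve the set of \emph{non-oracular} polymorphisms, and the paper's final step is best read as an implicit relabelling of the entire preceding argument rather than as an application of a general transfer principle (the paper is clearly aware of the order-sensitivity, since Lemmas~\ref{lem: polys-10...0} and~\ref{I2} verify preservation by hand). What your route buys is that it never needs to reorder a product: for each of the three possible relative positions of $l$ among $p,q,l$, summing the instances of~\eqref{quantum_homo_1} for the forbidden outputs supported on $\{p,q\}$ and on $\{p,q,l\}$ gives $Q_SQ_T=0$, after which the outputs $\{p\}$, $\{q\}$, $\{l\}$ yield $Q_S=Q_{S\cup T}Q_S$, $Q_T=Q_{S\cup T}Q_T$ and $Q_{S\cup T}=Q_S+Q_T$ exactly as in Proposition~\ref{prop:polys-100}; I verified that the algebra closes in all three orderings. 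The cost is the case analysis you anticipate, which the paper's factorisation through the $k=3$ case avoids. One cosmetic remark: the remaining forbidden outputs are discarded because the corresponding instances of~\eqref{quantum_homo_1} are \emph{trivially satisfied} (each product contains the factor $Q_{\mathbf 0,1}=0$), i.e.\ they carry no information, which is the point your parenthetical is gesturing at.
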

Now for the when $t\notin R_{1/k}$. We reduce it to the case from~\Cref{I1}.
\begin{lemma}\label{I2}
    Assume $k\ge 3$, and neither $t$ nor $\overline{t} $ ($= (1,\dots,1)+_2 t$) are in $ R_{1/k}$. Then $\qnoPol(\rel O_t)=\qcPol(\rel O_t)$.
\end{lemma}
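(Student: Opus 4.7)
The plan is to reduce to \Cref{I1}: pp-define in $\rel O_t$ a relation of the form $R_{t'}$ with $t'\in R_{1/k'}$ and $k'\geq 3$, verify that this pp-definition yields a non-oracular q-definition (so that $\qnoPol(\rel O_t)\subseteq\qnoPol(\rel O_{t'})$), and then invoke \Cref{I1} to obtain $\qnoPol(\rel O_{t'})=\qcPol(\rel O_{t'})$. For any $Q\in\qnoPol(\rel O_t)$ this shows that $Q$ is non-contextual and, since $Q$ is itself a non-oracular polymorphism of $\rel O_t$, its spectral decomposition $Q=\bigoplus_i h_i$ has each $h_i\in\Pol(\rel O_t)$, giving $Q\in\qcPol(\rel O_t)$.

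Writing $m=\sum_i t_i$, a coordinate permutation allows us to assume $t=(1^m,0^{k-m})$. Since $m\notin\{1,k-1\}$, either $m\in\{2,\ldots,k-2\}$ or $m\in\{0,k\}$. In the main case $m\in\{2,\ldots,k-2\}$ (so $k\geq 4$), the central tool is variable identification inside the single constraint $R_t$. Since $R_t$ consists of the tuples at Hamming distance $1$ from $t$, identifying a $1$-position and a $0$-position of $t$ with a common new variable $y$ contributes $|y-1|+|y-0|=1$ to the Hamming distance, and therefore forces the remaining coordinates to coincide with $t$ at all other positions; after existentially quantifying the auxiliary variables, this pp-defines the unary constant relations $\{0\}$ and $\{1\}$ in $\rel O_t$.

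With the constants available, the pp-formula $R_t(x_1,c_1,\ldots,c_1,x_2,\ldots,x_{k'},c_0,\ldots,c_0)$, where $c_1$ is constrained to $1$ and $c_0$ to $0$ (with $m-1$ copies of $c_1$ and $k-m-k'+1$ copies of $c_0$), imposes exactly $|x_1-1|+\sum_{i=2}^{k'}x_i=1$, i.e.\ it pp-defines $R_{(1,0,\ldots,0)}$ of arity $k'$ for any $k'\in\{3,\ldots,k-m+1\}$; this range is non-empty thanks to $m\leq k-2$. \Cref{I1} then concludes the main case.

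The remaining cases $m\in\{0,k\}$ are identified with one another through the bit-flip automorphism $x\mapsto 1-x$ of $\{0,1\}$, and both amount to $\rel O_t$ being the $1$-in-$k$ relation $R_{1/k}$. For $k\geq 5$, the identification $\exists y\, R_{1/k}(y,x_1,\ldots,x_{k-2},y)$ pp-defines $R_{1/(k-2)}$, and we iterate; for the base cases $k\in\{3,4\}$ we verify non-contextuality of $\qnoPol(R_{1/k})$ directly, in the style of \Cref{prop:polys-100}, by using the constraints arising from $k$-partitions $(S_1,\ldots,S_k)$ of $[n]$ to derive $Q_{\emptyset,1}=0$, $Q_{[n],1}=\id$, and $Q_{S,1}=\sum_{i\in S}Q_{\{i\},1}$ for every $S\subseteq[n]$, from which all PVMs of $Q$ pairwise commute. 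The main obstacle throughout is to confirm that the pp-constructions above actually constitute non-oracular q-definitions: this is immediate for variable identifications, since identifying two positions of the input to a quantum polymorphism reuses the same PVM and the relational constraint on $R_t$ specialises via idempotency to the constraint defining the pp-formula, and the constant substitutions then compose coherently with the previous step through \Cref{composition-homomorphisms}.
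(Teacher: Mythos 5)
Your overall route is sound and genuinely different from the paper's: where the paper pp-defines disequality and flips the bits of $t$ one at a time (keeping the arity $k$ fixed) before invoking Lemma~\ref{lem: polys-10...0}, you pp-define the two constants, pin coordinates to reduce directly to a translate of $1$-in-$k'$ with $k'\ge 3$, and treat $t\in\{\mathbf{0},\mathbf{1}\}$ by an arity-reducing iteration plus explicit base cases. All of the formulas you write down are classically correct, and your base-case computations for $R_{1/3}$ and $R_{1/4}$ do go through in the style of \Cref{prop:polys-100}.

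The gap is in the step you yourself flag as ``the main obstacle'': verifying that these pp-definitions are preserved by \emph{non-oracular} quantum polymorphisms. Here~\eqref{quantum_homo_2} is unavailable, so the factors in the products of~\eqref{quantum_homo_1} cannot be reordered, and your blanket justification via idempotency is not valid: idempotency contracts two occurrences of the same projector only when they are \emph{adjacent} in the product, and in general $PXP=0$ does not imply $PX=0$. Concretely, your formula $\exists y\, R_{1/k}(y,x_1,\dots,x_{k-2},y)$ identifies the first and last positions, so~\eqref{quantum_homo_1} only yields $Q_{\bz,c}Q_{\bx_1,b_1}\cdots Q_{\bx_{k-2},b_{k-2}}Q_{\bz,c}=0$, from which the needed $Q_{\bx_1,b_1}\cdots Q_{\bx_{k-2},b_{k-2}}=0$ does not follow. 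The same issue affects your identification of ``a $1$-position and a $0$-position'' of $t$ unless you choose them adjacent (after the normalisation $t=(1^m,0^{k-m})$, positions $m$ and $m+1$ work). Moreover, \Cref{composition-homomorphisms} is not the right tool for the existential quantifications and constant substitutions; what is actually needed is the insertion of resolutions of identity $\sum_c Q_{\bz,c}=\id$ at the quantified positions, with each resulting term killed via~\eqref{quantum_homo_1} (and, for the pinning step, the fact that $Q_{\mathbf{1},e},Q_{\mathbf{0},e}\in\{0,\id\}$ once the constants are established). These repairs are all available for your specific formulas, but they must be carried out explicitly --- this operator-level verification is exactly where the paper's proof of the lemma spends its effort, and as written your argument does not supply it.
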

\begin{proof}
    Without loss of generality, assume that $t=t^l$ is the tuple with $l$ 1-entries at the start, one can do this from an arbitrary $t$ by just permuting the entries.
    
    If $l\notin \{1,k-1\}$ (i.e., neither $t$ nor $\overline{t}$ are in $ R_{1/k}$) then $(0,\dots,0)=t+_2 t$ and $(1,\dots,1)=t+_2 \overline{t}$  are not in $R_t$. We first pp-define $\neq$ in $\rel O_{t}$ via
    \[x\neq y \Leftrightarrow R_t(x,\dots,x,y,\dots,y)\]
    where we fill the first $l+1$ entries with $x$'s. This relation is preserved by every non-oracular $Q\colon\rel O_{t}^n\to \rel O_{t}$ since for every $\ba\neq\bb\in \{0,1\}^n$ and $x\in\{0,1\}$ we have
    \[Q_{\ba,x}Q_{\bb,x}= Q_{\ba,x}\dots Q_{\ba,x}Q_{\bb,x}\dots Q_{\bb,x}=0.\]
    Let $t'=(0,1\dots,1,0,\dots,0)$ ($t^l$ with the first entry switched).
    We define $R_{t'}$ by the formula
    \[R_{t'}(s_1,\dots,s_k)\Leftrightarrow\exists q\left(R_t(q,s_2,\dots,s_k)\wedge q\neq s_1 \right).\]
    Assume we have tuples $\bs_1,\dots, \bs_k\in \{0,1\}^n$ with $(\bs_1,\dots, \bs_k)$ in the $n$th direct power of $R_{t'}$ and variables $x_1,\dots,x_k\in \{0,1\}$ such that $(x_1,\dots,x_k)\notin R_{t'}$. Then we have $\bq\in \{0,1\}^n$ such that $(\bq,\bs_2,\dots,\bs_k)$ are elements of the $n$th power of $R_{t}$ and $\bq\neq\bs_1$ . Also, there is no $y\in \{0,1\}$ such that both $(y,x_2,\dots,x_k)\in R_{t}$ and $y\neq x_1$, i.e., we know that $(\neg x_1,x_2,\dots,x_k)\notin R_t$.

    Now let $Q\colon \rel O_{t}^n\qnoto \rel O_{t}$ be a non-oracular polymorphism of $\rel O_t$.
    Then we have
    \begin{align*}
        Q_{\bs_1,x_1}\dots Q_{\bs_k,x_k} &=  Q_{\bs_1,x_1}\left(\sum_{y\in\{0,1\}}Q_{\bq,y}\right)Q_{\bs_2,x_2}\dots Q_{\bs_k,x_k}\\
        &=  \sum_{y\in\{0,1\}}Q_{\bs_1,x_1}Q_{\bq,y}Q_{\bs_2,x_2}\dots Q_{\bs_k,x_k}\\
        &= \sum_{y\in\{0,1\}}(Q_{\bs_1,x_1}Q_{\bq,y})(Q_{\bq,y}Q_{\bs_2,x_2}\dots Q_{\bs_k,x_k})\\
        &= (Q_{\bs_1,x_1}Q_{\bq,x_1})(Q_{\bq,x_1}Q_{\bs_2,x_2}\dots Q_{\bs_k,x_k})+(Q_{\bs_1,x_1}Q_{\bq,\neg x_1})(Q_{\bq,\neg x_1}Q_{\bs_2,x_2}\dots Q_{\bs_k,x_k})\\
        &= (0)(Q_{\bq,x_1}Q_{\bs_2,x_2}\dots Q_{\bs_k,x_k})+(Q_{\bs_1,x_1}Q_{\bq,\neg x_1})(0)=0
    \end{align*}
   where the penultimate equality follows from~\eqref{quantum_homo_1}.  This proves that every non-oracular quantum polymorphism of $\rel O_{t^l}$ preserves $R_{t'}$ and thus
   \[\qnoPol(\rel O_{t^l})\subseteq \qnoPol(\rel O_{t'})=\qnoPol(\rel O_{t^{l-1}})\]
    by induction, we then get that
     \[\qnoPol(\rel O_{t})\subseteq \qnoPol(\rel O_{(1,0,\dots,0)})\]
    completing the proof after an application of~\Cref{lem: polys-10...0}.
    
    Note that this theorem also covers the case when $t=(0,\dots,0)$ i.e. $R_t=R_{1/k}$ (or $l=0$).  In this case $R_{t'}$ as pp-defined above is itself $R_{(1,0,\dots,0)}$ and we are done immediately.
\end{proof}

Thus $\qPol(\rel O_{t})=\qcPol(\rel O_{t})$ for every $k$-tuple $t$. We now argue that this result covers every Boolean structure that is of interest to us. Specifically, we establish the following result, using similar ideas as in~\cite[Proposition~7.19]{culf2024re}. 
\begin{theorem}\label{nonprojection<=>oneink}
    For any relation $R\subseteq \{0,1\}^k$ the following statements are equivalent:
    \begin{enumerate}
        \item\label{itm:} $R$ is not invariant under majority, majority preserves every proper projection of  $R$, and every binary projection of $R$ is not full.
        \item $R=R_t$ for some tuple $t$.
    \end{enumerate}
\end{theorem}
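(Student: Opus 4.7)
The plan is to prove the two directions of the equivalence separately, with $(2)\Rightarrow(1)$ being a routine direct verification and $(1)\Rightarrow(2)$ being the structural core of the argument, in which we reconstruct the translate $t$ from the hypotheses. Throughout, I assume $k\geq 3$; for $k\leq 2$ every Boolean relation is preserved by majority, so condition~(1) never holds, and one can check that the few relations $R_t$ in small arities do not satisfy~(1) either.

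For $(2)\Rightarrow(1)$, assume $R=R_t$. The failure of majority-preservation is witnessed by any three tuples $t\oplus_2 s^{(i_1)},t\oplus_2 s^{(i_2)},t\oplus_2 s^{(i_3)}$ for distinct $i_1,i_2,i_3\in[k]$ (where $s^{(i)}$ denotes the tuple with a single $1$ in position $i$): their componentwise majority is exactly $t$, which does not belong to $R_t$. Every proper projection of $R_t$ is of the form ``Hamming distance at most $1$ from $t$ restricted to the projected coordinates'', and a direct pigeonhole argument shows such ``at-most-one-error'' relations are majority-closed (two coordinates at which the majority differed from the center would each force at least two inputs to differ there, so by pigeonhole some input would differ at both). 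Finally, no binary projection of $R_t$ is full, because no tuple of $R_t$ differs from $t$ in two coordinates simultaneously, so the pair $(1-t_i,1-t_j)$ never appears.

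For $(1)\Rightarrow(2)$, the failure of majority-preservation yields $x,y,z\in R$ whose componentwise majority $t$ does not lie in $R$. For each $i\in[k]$, the projection of $R$ onto $[k]\setminus\{i\}$ is majority-closed by assumption, so $t|_{[k]\setminus\{i\}}$ lies in it, meaning some $w^{(i)}\in R$ agrees with $t$ away from position $i$; since $t\notin R$, we must have $w^{(i)}=t\oplus_2 s^{(i)}$, establishing $R\supseteq R_t$. For the reverse inclusion, suppose for contradiction some $u\in R$ lies outside $R_t$: since $t\notin R$, the tuple $u$ must have Hamming distance at least $2$ from $t$, say differing in positions $i$ and $j$. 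The binary projection of $R$ onto $\{i,j\}$ then contains $(1-t_i,1-t_j)$ from $u$, $(1-t_i,t_j)$ from $w^{(i)}$, $(t_i,1-t_j)$ from $w^{(j)}$, and $(t_i,t_j)$ from $w^{(m)}$ for any $m\in[k]\setminus\{i,j\}$, making it full and contradicting the hypothesis. The only delicate step is this last one, whose need for a third coordinate is precisely what forces the restriction $k\geq 3$.
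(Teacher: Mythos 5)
Your proof is correct, and the substantive direction $(1)\Rightarrow(2)$ --- taking $t$ to be the majority of a witnessing triple, extracting the tuples $t\oplus_2 s^{(i)}$ from the majority-closedness of the coordinate-deleted projections, and excluding any further tuple by exhibiting a full binary projection (which requires a third coordinate, hence $k\geq 3$) --- is essentially identical to the paper's argument. For $(2)\Rightarrow(1)$ you verify the three properties of $R_t$ directly (with a correct pigeonhole argument for majority-closedness of the Hamming-ball projections), whereas the paper reduces to $R_{1/k}$ via the observation that translation by $t$ commutes with majority and preserves projection sizes; both routes are fine, and your explicit flagging of the $k\geq 3$ assumption (which the theorem statement omits and without which the equivalence actually fails, e.g.\ $R_{(0,0)}=\{(0,1),(1,0)\}$ satisfies $(2)$ but is majority-closed) is, if anything, more careful than the paper, even though your parenthetical claim that the small-$k$ cases are unproblematic is not quite right.
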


\begin{proof}${}$

\begin{itemize}
    \item[$2.\Rightarrow 1.$]
    Note that translating a relation by a tuple does not change the size of the binary projections and does not change preservation by majority, since majority is a self-dual operation.
 It follows that it suffices to prove the implication in the case of $R_{1/k}$, for which all the properties in $1.$ are clear.
    \item[$1.\Rightarrow 2.$]
    Suppose \( R \subseteq \{0,1\}^k \) is not invariant under majority but majority does preserve every proper projection and no binary projection is full. Suppose $t\in \{0,1\}^k$ is a witness to the fact that $R$ is not invariant under majority. We first show that $\{x+_2 t\mid x\in R_{1/k}\}\subseteq R$.

        For $i\in[k]$, let $x\in R_{1/k}$ be the tuple with a 1 in the $i$th position.
        We know that $\pi_{[k]\setminus\{i\}}(R)$ is invariant under majority, so there is an $r_i\in R$ with $\pi_{[k]\setminus \{i\}}(r_i)=\pi_{[k]\setminus \{i\}}(t)$.
        Now, since $t\notin R$, we can conclude that $\pi_{\{i\}}(r_i)\neq\pi_{\{i\}}(t)$
        and therefore $x+_2 t = r_i$, so $x+_2 t\in R$.

        Finally, we use the non-fullness of binary projections to argue that any $r\in R$ must match $\overline{t}:=(1,\dots,1)+_2 t$ in exactly one argument, placing it in $\{x+_2 t\mid x\in R_{1/k}\}$.

        Assume that $r$ matches $\overline{t}$ in two arguments $1\le i<j\le k$. We know that $\pi_{i,j}(R)$ contains $\pi_{i,j}(t)$, $(\pi_i(t),\pi_j(\overline{t}))$, and $(\pi_i(\overline{t}),\pi_j(t))$; then $\pi_{i,j}(R)$ is full, a contradiction. Additionally, $r$ cannot match $\overline{t}$ on zero arguments since then it would equal $t$, which is not in $R$. Thus every $r$ is in $\{x+_2 t\mid x\in R_{1/k}\}$, completing the proof.\qedhere
\end{itemize}

\end{proof}
Since a structure is TVF if, and only if it doesn't have any full binary projections,~\Cref{prop_nonTVF_implies_non_contextual} allows us to apply~\Cref{nonprojection<=>oneink} and prove the following:
\begin{theorem}\label{thm_nomaj_contextual}
    If $\rel A$ is a Boolean relational structure such that $\Pol(\rel A)$ does not contain majority, then $\qPol(\rel A)=\qcPol(\rel A)$.
\end{theorem}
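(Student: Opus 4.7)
The plan is to combine the earlier lemmas about $\rel O_t$ structures (Corollary~\ref{I1} and Lemma~\ref{I2}) and about structures with a relation having a full binary projection (Lemma~\ref{lem: binproj}) with the classical characterisation in Theorem~\ref{nonprojection<=>oneink}. I will start by picking a relation $R$ pp-definable over $\rel A$ of minimum arity among those not preserved by the majority operation, using the hypothesis that $\Pol(\rel A)$ does not contain majority. By minimality, every proper projection of $R$ must be preserved by majority, since the projections are themselves pp-definable over $\rel A$ and have strictly smaller arity. Theorem~\ref{nonprojection<=>oneink} then leaves exactly two possibilities: either $R$ has a full binary projection, or $R = R_t$ for some tuple $t \in \{0,1\}^k$.

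Next, I would establish a pp-transfer lemma: every relation pp-definable from $\rel A$ is preserved by every non-oracular quantum polymorphism of $\rel A$. The argument proceeds by induction on the structure of the pp-formula. Conjunction is immediate, variable identification uses idempotence of projectors (as in the proof of Lemma~\ref{lem: polys-10...0}), and existential quantification is handled by expanding the shorter product through insertion of the resolution $\id = \sum_d Q_{\by,d}$ at the position in which the quantified variable appeared in the ambient relation. Crucially, inserting the identity in its natural position reassembles the product in the order prescribed by the larger relation, after which the non-oracular polymorphism condition forces each term of the resulting sum to vanish (mirroring the projector manipulation used in the proof of Lemma~\ref{I2}). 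This yields $\qnoPol(\rel A) \subseteq \qnoPol(\{0,1\};R)$.

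Finally, if $R = R_t$, Corollary~\ref{I1} or Lemma~\ref{I2} yields $\qnoPol(\{0,1\};R_t) = \qcPol(\{0,1\};R_t)$, so every non-oracular polymorphism of $\rel A$ is non-contextual; if instead $R$ has a full binary projection, Lemma~\ref{lem: binproj} applied to the expansion $(\rel A;R)$ combined with the pp-transfer and the classical identity $\Pol(\rel A)=\Pol(\rel A;R)$ delivers the same conclusion. In either case we obtain $\qnoPol(\rel A)=\qcPol(\rel A)$, as the reverse inclusion is trivial.

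The main obstacle I anticipate is the existential-quantification step of the pp-transfer. Without commutativity among non-oracular projectors, the insertion of the resolution of identity must line up exactly with the order of positions in the ambient relation, and subtle complications can arise when the pp-definition permutes arguments. I expect this to be resolvable by noting that the target relations $R_t$ and any full-binary-projection target can be chosen to be symmetric, and by restricting the intermediate steps of the pp-definition to preserve argument order (so that the insertion of identities is always compatible with the order of the ambient relation); any reordering can then be recovered at the end from the symmetry of the final target.
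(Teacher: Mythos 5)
Your skeleton coincides with the paper's: pick a minimal-arity relation not preserved by majority, apply Theorem~\ref{nonprojection<=>oneink} to conclude it is either some $R_t$ or has a full binary projection, and finish with Corollary~\ref{I1}, Lemma~\ref{I2}, and Lemma~\ref{lem: binproj}. But the step you flag as the ``main obstacle'' is a genuine gap, and your proposed repair does not close it. The pp-transfer lemma you want --- every relation pp-definable from $\rel A$ is preserved by every non-oracular quantum polymorphism of $\rel A$ --- is false in general; if it were true, classical pp-definability would already give $\qnoPol(\rel A)\subseteq\qnoPol(\rel B)$ whenever $\Pol(\rel A)\subseteq\Pol(\rel B)$, and the machinery of q-definitions, Theorem~\ref{galois}, and commutativity gadgets would be unnecessary. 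The failure is exactly where you suspect: conjunctions whose constraints list shared variables in incompatible orders (e.g.\ cyclically permuted), and existential variables occurring in several constraints, produce products of projectors that cannot be reassembled without commutativity. Your hedge (restrict to order-preserving intermediate steps and symmetric targets) cannot be invoked, because your $R$ is chosen by a minimality argument ranging over \emph{all} pp-definable relations, so you have no control over the shape of its defining formula. The fix is the one the paper uses: take the minimum only over the relations of $\rel A$ \emph{and their projections}. For a projection $\pi_J(R^{\rel A})$ the transfer does hold --- insert the resolutions $\id=\sum_d Q_{\by,d}$ at the quantified coordinates, which sit in their natural places inside the product, and apply~\eqref{quantum_homo_1} termwise --- and minimality over this smaller family still guarantees that every proper projection of the chosen $R$ is majority-preserved, which is all that Theorem~\ref{nonprojection<=>oneink} needs.

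A second, smaller gap: in the case $R=R_t$ you invoke ``Corollary~\ref{I1} or Lemma~\ref{I2}'', but these cover only $t\in R_{1/k}$ and the case where neither $t$ nor $\overline t$ lies in $R_{1/k}$. The remaining case $\overline t\in R_{1/k}$ requires the complementation duality ($\{Q_S\}$ is a non-oracular quantum polymorphism of $\rel O_t$ if and only if $\{\id-Q_{\overline S}\}$ is one of $\rel O_{\overline t}$), which reduces it to Corollary~\ref{I1}; the paper supplies this step and you would need to as well. You should also record that $k\ge 3$, since every binary Boolean relation is preserved by majority --- otherwise Corollary~\ref{I1} and Lemma~\ref{I2} do not apply.
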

\begin{proof}
     Consider a relational structure $\rel A$; without loss of generality, every projection of a relation of $\rel A$ is also a relation of $\rel A$, since this does not change the hypothesis or the conclusion of the statement.
     Let $R$ be a relation of $\rel A$ that is not preserved by majority and of minimal arity.
     By~\Cref{prop_nonTVF_implies_non_contextual}, if a binary projection of $R$ is full then we are done, so assume that this is not the case. Then~\Cref{nonprojection<=>oneink}
     implies that $R=R_t$ for some $k$-tuple $t$.

     Finally, we know that $k\ge 3$ since every binary relation of arity 2 is preserved under majority and so, most of the result follows from~\Cref{I1} and~\Cref{I2}. All that remains is for us to argue that~\Cref{I1} also covers the case where $\overline{t}\in R_{1/k}$.
     
    This is since, $\rel O_t$ has `essentially' the same polymorphisms as $\rel O_{\overline{t}}$ for any $t$. The formal fact we use is that $\{Q_S\mid S\subseteq [n]\}$ is a non-oracular quantum polymorphism of $\rel O_t$ if, and only if, $\{\id-Q_{\overline{S}}\mid S\subseteq [n]\}$ is a (non-oracular) quantum polymorphism of $\rel O_{\overline{t}}$. In particular $\rel O_t$ has contextual polymorphisms if, and only if, $\rel O_{\overline{t}}$ does. Thus we can assume that $\overline{t}\notin R_{1/k}$, completing the proof. 
\end{proof}
\subsection{The minimal contextual minion}
\label{subsec_body_minimal_contextual_clone}
In this section, we provide a minion with contextual polymorphisms that is contained within the quantum polymorphisms of any Boolean relational structure that has contextual polymorphisms.

In the previous section, we have seen that every Boolean relational structure with contextual polymorphisms has non-full binary projections (is TVF) and contains majority in the minion of (classical) polymorphisms. 

Let $\minimalClone$ be the structure $\minimalClone:=(\{0,1\}; S_{00}, S_{11}, S_{10})$ where $S_{ab}=\{0,1\}^2\setminus\{(a,b)\}$. In the classical setting, $\Pol(\minimalClone)$ is the minimal clone containing the majority operation.
We first show that this property extends to oracular quantum polymorphisms.
\begin{theorem}\label{thm_DM_quantum_minimality}
    Let $\rel A$ be a Boolean TVF structure that admits majority as a polymorphism.
    %, 
    Then $\qPol(\minimalClone)\subseteq\qPol(\rel A)$.
\end{theorem}
\begin{proof}
    Consider a relational structure $\rel A:=(\{0,1\}; R)$ that is preserved by majority, then it is well known that $R$ (of arity $r$) can be pp-defined in $\rel A_2:=(\{0,1\};\{\pi_{i,j}(R)\mid 1\le i,j \le r\})$ via the conjunction of its binary projections $\bigwedge_{1\le i,j\le r}\pi_{\{i,j\}}(R)(x_i,x_j)$.
    We show that $R$ is preserved by every quantum polymorphism $Q\colon\rel A_2^n\qto \rel A_2$. Assume that $(\ba_1,\dots,\ba_r)$ is in the $n$th direct power of $R$ while in $(x_1,\dots, x_r)$, we have indices $i,j$ such that $(x_i,x_j)\notin \pi_{\{i,j\}}(R)$. Then
    \[Q_{\ba_1,x_1}\dots Q_{\ba_r,x_r}= (Q_{\ba_i,x_i}Q_{\ba_j,x_j})Q_{\ba_1,x_1}\dots Q_{\ba_r,x_r}=(0)Q_{\ba_1,x_1}\dots Q_{\ba_r,x_r}=0\]
    where the second equality is by duplicating the $Q_{\ba_i,x_i}$ and $Q_{\ba_j,x_j}$ projectors and then using property~\eqref{quantum_homo_2} to move them to the front. Moreover, if $(\ba,\ba')\in E(\Gaif(\rel A^n))$ then  $(\ba,\ba')\in E(\Gaif(\rel A_2^n))$ so, for any $x,x'\in\{0,1\}$ $[Q_{\ba,x},Q_{\ba',x'}]=0$ by property~\eqref{quantum_homo_2}. $Q$ is then an oracular quantum polymorphism of $\rel{A}$ and so $\qPol(\rel A_2)\subseteq\qPol(\rel A)$. 

    By assumption, no binary projection of a relation of $\rel A$ is full, thus,
    each relation $R$ in $\rel A_2$ can be defined as the (non-empty) conjunction $\bigwedge_{(c,d)\not\in R} S_{cd}(x,y)$,
    where we use the shorthand $S_{01}(x,y)$ for the formula $S_{10}(y,x)$.
    We show that each relation $R$ of $\rel A_2$ is preserved by the quantum polymorphisms of $\minimalClone$.
    
    Let $Q\colon \minimalClone^n\qto \minimalClone$ be a quantum polymorphism, $(\ba,\bb)$ be in the $n$th direct power of $R$ and $(c,d)\notin R$.
    We then have $Q_{\ba,c}Q_{\bb,d}=0$ 
    by~\eqref{quantum_homo_1} since $(c,d)\not\in S_{cd}$ and $(\ba,\bb)$ is in the relation $S_{cd}$ in the $n$th power of $\minimalClone$.
    Thus,~\eqref{quantum_homo_1} holds.

    Finally, if $(\ba,\bb)$ is a pair in a relation $R$ in the $n$th power of $\rel A_2$, then it is necessarily a pair in one of the relations $S_{cd}$ in the $n$th power of $\minimalClone$.
    Thus, $Q_{\ba}$ and $Q_{\bb}$ commute, so that~\eqref{quantum_homo_2} holds.
\end{proof}

Now we show that the minion defined here, actually contains some contextual polymorphisms.

\begin{theorem}\label{thm_DM_contextuality}
For $n\le 3$, every quantum polymorphism $Q\colon\minimalClone^n\to \minimalClone$  is non-contextual. 
For all $n\geq 4$, $\minimalClone$ has contextual quantum polymorphisms of arity $n$.
\end{theorem}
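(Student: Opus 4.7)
The proof naturally splits at the threshold $n=4$. For $n\le 3$, the plan is to show that the Gaifman graph of $\rel B^n$ is complete, from which non-contextuality follows immediately via~\eqref{quantum_homo_2}. Concretely, distinct tuples $\ba,\bb\in\{0,1\}^n$ simultaneously fail to satisfy $(\ba,\bb)\in S_{00}^{\rel B^n}$, $(\ba,\bb)\in S_{11}^{\rel B^n}$, $(\ba,\bb)\in S_{10}^{\rel B^n}$, and $(\bb,\ba)\in S_{10}^{\rel B^n}$ precisely when some coordinate of $(\ba,\bb)$ exhibits each of the patterns $(0,0)$, $(1,1)$, $(1,0)$, and $(0,1)$ respectively; since these four patterns cannot coexist in fewer than four coordinates, for $n\le 3$ at least one containment holds, and hence $\{\ba,\bb\}$ is a Gaifman edge of $\rel B^n$.

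For $n=4$, I will build an explicit contextual polymorphism $Q$ over a 2-dimensional Hilbert space $H$. The crucial combinatorial observation is that the six weight-two tuples of $\{0,1\}^4$ partition into three complementary pairs $\{\ba,\bone-\ba\}$, and one checks directly that two distinct weight-two tuples lie in a common tuple of some $R^{\rel B^4}$ if and only if they are complementary. Pick then three PVMs $(\Pi^{(k)}_0,\Pi^{(k)}_1)$ for $k\in\{1,2,3\}$ on $H$, at least two of which do not pairwise commute, and define $Q_{\ba,1}:=0$ if $|\ba|\le 1$, $Q_{\ba,1}:=\id$ if $|\ba|\ge 3$, and on the $k$-th complementary pair set $Q_{\ba,1}:=\Pi^{(k)}_0$ and $Q_{\bone-\ba,1}:=\Pi^{(k)}_1$, with $Q_{\ba,0}:=\id-Q_{\ba,1}$. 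A case analysis on the Hamming weights of $\ba,\bb$ appearing in relational constraints verifies~\eqref{quantum_homo_1} and~\eqref{quantum_homo_2}: constraints involving weight $\le 1$ or $\ge 3$ trivialise through the values $0$ and $\id$, and all constraints of $\rel B^4$ involving two weight-two tuples force them to be either equal or complementary, in which case the requirements collapse to the PVM identities $\Pi^{(k)}_0\Pi^{(k)}_1=0$ and $\Pi^{(k)}_0+\Pi^{(k)}_1=\id$. Contextuality of $Q$ is then witnessed by choosing two weight-two tuples drawn from different complementary pairs whose PVMs do not commute.

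For arbitrary $n>4$, I would pull back $Q$ along the coordinate projection $\pi\colon\rel B^n\to\rel B^4$, which is a classical homomorphism; the quantum function $Q'_{\ba,c}:=Q_{\pi(\ba),c}$ is a quantum polymorphism since $\pi$ preserves the relations of $\rel B$, and it inherits contextuality from $Q$ via witnesses of the form $((\ba_0,0,\dots,0),(\bb_0,0,\dots,0))$ with $\ba_0,\bb_0$ a contextuality witness for $Q$. The main obstacle I anticipate is the weight-two subcase of the verification, where one must confirm that no relational constraint from $\rel B^4$ spuriously couples projectors indexed by distinct complementary pairs, since this is precisely where non-commutativity must be permitted.
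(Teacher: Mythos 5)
Your proposal is correct and follows essentially the same route as the paper: for $n\le 3$ the same counting of the four forbidden coordinate patterns shows every pair of tuples is coupled by some constraint, and for $n=4$ you build the same polymorphism (trivial projectors off Hamming weight two, non-commuting PVMs on the three complementary pairs of weight-two tuples, which are exactly the pairs not coupled by any relation of $\rel B^4$). The only difference is that you make the extension to $n>4$ explicit via pullback along a coordinate projection, which the paper leaves implicit; that step is valid.
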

\begin{proof}
    We start by proving that every quantum polymorphism $Q\colon\minimalClone^n\qto\minimalClone$ is non-contextual for $n\leq 3$.
    For this, it suffices by~\eqref{quantum_homo_2} to prove that for every pair $S,T$ of subsets of $\{1,\dots,n\}$, either $(S,T)$ or $(T,S)$ belongs to one of the relations of $\minimalClone^n$.
    If $(S,T)$ does not belong to the relation $S_{00}$ in $\minimalClone^n$, then there exists $i\in\{1,\dots,n\}$ such that $i\in \overline S\cap\overline T$.
    Similarly, for it to not belong to the relation $S_{11}$ in $\minimalClone^n$, there must exist $j\in\{1,\dots,n\}$ such that $j\in S\cap T$.
    If $n\leq 3$, there remains at one most other element $k$ in $\{1,\dots,n\}\setminus\{i,j\}$.
    If $k$ is not in $S\cap T$ nor in $\overline S\cap \overline T$, then it is the case that $(S,T)$ or $(T,S)$ belongs to the relation $S_{10}$ of $\minimalClone^n$, so we are done.

    We now build a concrete class of contextual polymorphisms of arity $n=4$. We provide~\Cref{fig:O_b^4} as a visual aid in this example. The figure is a simplified presentation of the relatedness of tuples in $\{0,1\}^4$ with the arrows indicating the poset $S_{10}$ and the colouring emphasizing that $S_{11}$ (and $S_{00}$) relate all opposite tuples. In particular, the only unrelated tuples are those $s,t$ such that $|S|=|T|=2$ and $S\neq \overline{T}$.
   
    We define $Q\colon \minimalClone^4\qto \minimalClone$ by listing the projectors. Let  $T\subseteq [4]$.
    If $|T|\geq 3$, define $Q_T=\id$.
    Pick now arbitrary matrices $A,B,C$.
    Define $Q_{\{1,2\}}:=A$, $Q_{\{1,3\}}=B$, and $Q_{\{1,4\}}=C$.
    In all the other cases, define $Q_T = \id - Q_{[4]\setminus T}$.
    Note that if at least two of the matrices $A,B,C$ do not commute, then $Q$ is contextual.

    It remains to prove that $Q$ preserves the relations of $\minimalClone^4$.  
    Assume we have $S,T\subseteq [4]$ with $(s,t)$ in the fourth power of $S_{11}$ and consider $Q_SQ_T$ (since $(1,1)$ is the only Boolean pair not in $S_{11}$). $S$ and $T$ are related if, and only if, $S\cap T=\emptyset$; hence, we have two cases: Either $|S|=|T|=2$ and $S=\overline{T}$, or one of $S,T$ has cardinality less than $2$.
    
    If $S\neq \overline{T}$, then $|S|\le 1$ or $|T|\le 1$ in this case at least  $Q_S$ or $Q_T$ equals $0$ and we are done immediately. If $S=\overline{T}$, then 
    $Q_SQ_T=Q_SQ_{\smash{\overline{S}}}=Q_S(\id-Q_S)=0$ and we are done.
    The case of $S_{00}$ is symmetric.

    Now assume we have $S,T\subseteq [4]$ such that the corresponding pair of tuples $(s,t)$ in the fourth power of $S_{10}$ ($S\subseteq T$).  If both $|S|$ and $|T|$ do not equal $2$, we know that all projectors $Q_S$, $Q_T$, $(\id-Q_S)$, and $(\id-Q_T)$ are either $0$ or $\id$; thus, we immediately find that $[Q_S,{\id-Q_T}]=0$.
    The final case is when at least one of $|S|$ or $|T|$ equals $2$. In this case, we deduce that either
    \[Q_S(\id-Q_T)=0(\id-Q_T)=0\]
    when $|T|=2$ (and $S\neq T$) or,
    \[Q_S(\id-Q_T)=Q_S(\id-\id)=0\]
     when $|S|=2$ (and $S\neq T$). Thus all relations are preserved, and $Q\colon \minimalClone^4\qto \minimalClone$ is a quantum polymorphism.

     Note that one can easily extend $Q$ to a contextual polymorphism of arbitrary arity $\ge 4$ by adding dummy variables.
    For example, define $\tau\colon [4]\to [n]$ by $\tau(i):=i$ for $i\in [4]$.
    By~\Cref{prop_qPol_abstract_minion},
    \[Q_{/\tau}=Q\bullet
             \bigotimes_{j\in[4]}
             \pi_{\tau(j)}^n\colon M^n\qto M\]
    is in $\qPol(\minimalClone)$. We just need to check that $Q_{/\tau}$ is non-contextual, but this is immediate from the definition of $Q$, since e.g. $(Q_{/\tau})_{\{1,2\}}=Q_{\{1,2\}}=A$, $(Q_{/\tau})_{\{1,3\}}=Q_{\{1,3\}}=B$, and $[A,B]\neq 0$.
\end{proof}

We can now conclude by proving the main result on this section, giving a full classification of the existence of commutativity gadgets for Boolean structures. As remarked in~\Cref{subsec_overview_boolean}, the $(ii)\Rightarrow (i)$ implication was first proved in~\cite{culf2024re}, which described explicit commutativity gadgets for $\A$ having no majority and for any $\A$ that is not TVF.

\begin{proof}[Proof of~\Cref{thm_boolean_comm_gadget_classification}]\mbox{}
\begin{itemize}
    \item [$(ii)\Rightarrow (i)$] 
    If $\rel A$ has no majority polymorphism, then it admits a commutativity gadget by~\Cref{thm_nomaj_contextual} and~\Cref{gadget-characterization_overview_friendly}. Otherwise, the result follows from~\Cref{prop_nonTVF_implies_non_contextual} and~\Cref{gadget-characterization_overview_friendly}.

    \item[$(i)\Rightarrow (ii)$] By contrapositive, if $\A$ admits a majority polymorphism and is TVF, then  it holds that $\qPol(\minimalClone)\subseteq\qPol(\rel A)$ by~\Cref{thm_DM_quantum_minimality}. Thus $\A$ has at least one contextual polymorphism by~\Cref{thm_DM_contextuality}. So, it has no commutativity gadget.\qedhere
\end{itemize}
\end{proof}

\section{Acknowledgments}

The first and last author are thankful for the attentive listening and comments by Peter Zeman, to whom they explained the definition of quantum polymorphisms and the main results in this paper concerning the characterisation of the existence of commutativity gadgets on the occasion of the CSP World Congress in September 2025.
The second and last author acknowledge funding from the project “Hamburg
Quantum Computing”, co-financed by ERDF of the European Union and by the Fonds of the Hamburg
Ministry of Science, Research, Equalities and Districts (BWFGB).

\addcontentsline{toc}{section}{References}
\printbibliography

@misc{Zeman,
      title={Existence and nonexistence of commutativity gadgets for entangled CSPs}, 
      author={Eric Culf and Josse van Dobben de Bruyn and Matthijs Vernooij and Peter Zeman},
      year={2025},
      eprint={2509.07835},
      archivePrefix={arXiv},
      primaryClass={quant-ph},
      url={https://arxiv.org/abs/2509.07835}, 
}

@article{Ji,
  author       = {Zhengfeng Ji},
  title        = {Binary Constraint System Games and Locally Commutative Reductions},
  journal      = {CoRR},
  volume       = {abs/1310.3794},
  year         = {2013},
  url          = {http://arxiv.org/abs/1310.3794},
  eprinttype    = {arXiv},
  eprint       = {1310.3794},
  timestamp    = {Mon, 13 Aug 2018 16:48:55 +0200},
  biburl       = {https://dblp.org/rec/journals/corr/Ji13.bib},
  bibsource    = {dblp computer science bibliography, https://dblp.org}
}

@incollection {abramsky2017quantum,
    AUTHOR = {Abramsky, Samson and Barbosa, Rui Soares and de Silva, Nadish
              and Zapata, Octavio},
     TITLE = {The quantum monad on relational structures},
 BOOKTITLE = {42nd {I}nternational {S}ymposium on {M}athematical
              {F}oundations of {C}omputer {S}cience},
    SERIES = {LIPIcs. Leibniz Int. Proc. Inform.},
    VOLUME = {83},
     PAGES = {Art. No. 35, 19},
 PUBLISHER = {Schloss Dagstuhl. Leibniz-Zent. Inform., Wadern},
      YEAR = {2017},
      ISBN = {978-3-95977-046-0},
   MRCLASS = {68Q12 (08A70 68Q55)},
  MRNUMBER = {3755328},
}

@article{banica2007quantum,
  title={Quantum automorphism groups of vertex-transitive graphs of order $\leq$ 11},
  author={Banica, Teodor and Bichon, Julien},
  journal={Journal of Algebraic Combinatorics},
  volume={26},
  number={1},
  pages={83--105},
  year={2007},
  publisher={Springer}
}

@article{HellNesetrilCore,
title = {The core of a graph},
journal = {Discrete Mathematics},
volume = {109},
number = {1},
pages = {117-126},
year = {1992},
issn = {0012-365X},
doi = {https://doi.org/10.1016/0012-365X(92)90282-K},
url = {https://www.sciencedirect.com/science/article/pii/0012365X9290282K},
author = {Pavol Hell and Jaroslav Nešetřil},
abstract = {The core of a graph is its smallest subgraph which also is a homomorphic image. It turns out the core of a finite graph is unique (up to isomorphism) and is also its smallest retract. We investigate some homomorphism properties of cores and conclude that it is NP-complete to decide whether or not a graph is its own core. (A similar conclusion is reached about testing whether or not a graph is rigid, i.e., admits a non-identity homomorphism to itself.) We also give a polynomial-time verifiable condition for a graph of small independence number to be its own core.}
}

@article{bgwz20,
  author	= {Joshua Brakensiek and Venkatesan Guruswami and Marcin Wrochna and Stanislav {\v{Z}}ivn{\'y}},
  title		= {The power of the combined basic {LP} and affine relaxation for promise {CSP}s},
  journal = {{SIAM} J. Comput.},
  doi = {10.1137/20M1312745},
  volume = {49},
  issue = {6},
  pages = {1232--1248},
  year = {2020},
  eprinttype	= {arXiv},
  eprint	= {1907.04383}
}

@article{BBKO21,
  author = {Libor Barto and Jakub Bul{\'{\i}}n and Andrei Krokhin and Jakub Opr\v{s}al},
  title = {Algebraic approach to promise constraint satisfaction},
  journal	= {J. {ACM}},
  volume    = {68},
  number    = {4},
  pages     = {28:1--28:66},
  year      = {2021}
}

@Article{	  Feder98:monotone,
  author	= {Tom\'as Feder and Moshe Y. Vardi},
  title		= {The computational structure of monotone monadic
		  {S{N}{P}} and constraint satisfaction: {A} study
		  through {D}atalog and group theory},
  journal	= "{SIAM} J. Comput.",
  year		= {1998},
  volume	= {28},
  pages		= {57--104},
  number	= {1}
}

@inproceedings{Schaefer78:stoc,
  author = {Thomas Schaefer},
  title = {The complexity of satisfiability problems},
  booktitle = {Proc. 10th Annual ACM Symposium on Theory of Computing (STOC'78)},
  year = {1978},
  pages={216--226}
}

@inproceedings{BG18,
  author    = {Joshua Brakensiek and Venkatesan Guruswami},
  title     = {{Promise Constraint Satisfaction: Structure theory and a
  symmetric Boolean dichotomy}},
  booktitle = {Proc. 29th Annual {ACM-SIAM} Symposium on Discrete Algorithms (SODA'18)},
  pages     = {1782--1801},
  year      = {2018},
  doi       = {10.1137/1.9781611975031.117},
  eprinttype	= {arXiv},
  eprint	= {1704.01937}
}

@InCollection{	  BKW17,
  address	= {Dagstuhl, Germany},
  author	= {Libor Barto and Andrei Krokhin and Ross Willard},
  booktitle	= {The Constraint Satisfaction Problem: Complexity and
		  Approximability},
  pages		= {1--44},
  publisher	= {Schloss Dagstuhl -- Leibniz-Zentrum f{\"u}r Informatik},
  series	= {Dagstuhl Follow-Ups},
  title		= {Polymorphisms, and How to Use Them},
  volume	= {7},
  year		= {2017},
  doi		= {10.4230/DFU.Vol7.15301.1}
}

@article{Zhuk20:jacm,
  author    = {Dmitriy Zhuk},
  title     = {A Proof of the {CSP} Dichotomy Conjecture},
  journal   = {J. {ACM}},
  volume    = {67},
  number    = {5},
  pages     = {30:1--30:78},
  year      = {2020}
}

@InProceedings{Bulatov17:focs,
  author	= {Bulatov, Andrei A.},
  booktitle	= {Proc. 58th Annual IEEE Symposium on Foundations of Computer Science (FOCS'17)},
  pages		= {319--330},
  title		= {A Dichotomy Theorem for Nonuniform {CSP}s},
  year		= {2017},
  issn		= {0272-5428}
}

@ARTICLE{Barto14:jacm,
  author = {Libor Barto and Marcin Kozik},
  title = {Constraint satisfaction problems solvable by local consistency methods},
  journal = {J. {ACM}},
  year = {2014},
  volume = {61},
  number = {1},
  note   =  "Article No. 3",
  doi = "10.1145/2556646"
}

@Article{	  AGH17,
  author	= {Per Austrin and Venkatesan Guruswami and Johan H{\aa}stad},
  journal	= {{SIAM} J. Comput.},
  number	= {5},
  pages		= {1554--1573},
  title		= {(2+{$\epsilon$})-{S}at Is {NP}-hard},
  volume	= {46},
  year		= {2017},
  doi		= {10.1137/15M1006507},
  eprinttype = {eccc},
  eprint       = {2013/159}
}

@inproceedings{Khot02stoc,
  title={On the power of unique 2-prover 1-round games},
  author={Khot, Subhash},
  booktitle={Proc. 34th Annual ACM Symposium on Theory of Computing ({STOC}'02)},
  pages={767--775},
  year={2002},
  publisher={{ACM}}
}

@Article{	  BOP18,
  author	= {Barto, Libor and Opr{\v{s}}al, Jakub and Pinsker, Michael},
  journal	= {Isr. J. Math.},
  number	= {1},
  pages		= {363--398},
  title		= {The wonderland of reflections},
  volume	= {223},
  year		= {2018},
  doi		= {10.1007/s11856-017-1621-9},
  eprinttype	= {arXiv},
  eprint	= {1510.04521}
}

@article{HellN90,
	author    = {Pavol Hell and
	Jaroslav Ne{\v{s}}et{\v{r}}il},
	title     = {On the complexity of {H}-coloring},
  journal={J. Comb. Theory, Ser. B},
	volume    = {48},
	number    = {1},
	year      = {1990},
	pages     = {92-110}
}

@inproceedings{Ciardo22:soda,
  author    = {Lorenzo Ciardo and
               Stanislav {\v{Z}}ivn{\'{y}}},
  title     = {{CLAP: {A} new algorithm for promise CSPs}},
  booktitle = {Proc. 2022 ACM-SIAM Symposium on Discrete Algorithms (SODA'22)},
  year      = {2022},
  pages    = {1057--1068},
  doi       = {10.1137/1.9781611977073.46},
  archivePrefix = {arXiv},
  eprint    = {2107.05018}
}

@Article{cz23sicomp:clap,
  title         = {{CLAP: A new algorithm for promise CSPs}},
  journal       = {SIAM J. Comput.},
  author        = {Lorenzo Ciardo and Stanislav \v{Z}ivn\'y},
  year          = {2023},
  doi = {10.1137/22M1476435},
  volume    = {52},
  number    = {1},
  pages     = {1--37},
  eprinttype={arXiv},
  eprint={2107.05018}
}

@inproceedings{cz23soda:minions,
  author    = {Lorenzo Ciardo and Stanislav {\v{Z}}ivn\'y},
  title     = {{Hierarchies of minion tests for PCSPs through tensors}},
  booktitle = {Proc. 2023 ACM-SIAM Symposium on Discrete Algorithms (SODA'23)},
  year      = {2023},
  pages     = {568--580},
  doi       = {10.1137/1.9781611977554.ch25},
  eprinttype = {arXiv},
  eprint    = {2207.02277},
}

@inproceedings{cz23soda:aip,
  author    = {Lorenzo Ciardo and Stanislav {\v{Z}}ivn{\'{y}}},
  title     = {Approximate Graph Colouring and Crystals},
  booktitle = {Proc. 2023 ACM-SIAM Symposium on Discrete Algorithms (SODA'23)},
  year      = {2023},
  pages     = {2256--2267},
  doi       = {10.1137/1.9781611977554.ch86},
  eprinttype = {arXiv},
  eprint    = {2210.08293}
}

@inproceedings{cz23stoc:ba,
  author    = {Lorenzo Ciardo and Stanislav {\v{Z}}ivn{\'{y}}},
  title     = {Approximate Graph Colouring and the Hollow Shadow},
  booktitle={Proc. 55th Annual ACM Symposium on Theory of Computing ({STOC}'23)},
  year={2023},
  pages={623--631}
}

@article{MancinskaR16,
  author       = {Laura Man\v{c}inska and
                  David E. Roberson},
  title        = {Quantum homomorphisms},
  journal      = {J. Comb. Theory, Ser. {B}},
  volume       = {118},
  pages        = {228--267},
  year         = {2016},
  timestamp    = {Mon, 26 Jun 2023 20:55:05 +0200},
  biburl       = {https://dblp.org/rec/journals/jct/MancinskaR16.bib},
  bibsource    = {dblp computer science bibliography, https://dblp.org}
}

@inproceedings{CiardoZassociation,
  author       = {Lorenzo Ciardo and
                  Stanislav {\v{Z}}ivn{\'{y}}},
  title        = {Semidefinite Programming and Linear Equations vs. Homomorphism Problems},
  booktitle    = {Proc. 56th Annual {ACM} Symposium on Theory of Computing (STOC'24)},
  pages        = {1935--1943},
  year         = {2024},
  url          = {https://doi.org/10.1145/3618260.3649635},
  doi          = {10.1145/3618260.3649635},
  timestamp    = {Tue, 18 Jun 2024 09:24:04 +0200},
  biburl       = {https://dblp.org/rec/conf/stoc/CiardoZ24.bib},
  bibsource    = {dblp computer science bibliography, https://dblp.org}
}

@article{CameronMNSW07,
  author       = {Peter J. Cameron and
                  Ashley Montanaro and
                  Michael W. Newman and
                  Simone Severini and
                  Andreas J. Winter},
  title        = {On the Quantum Chromatic Number of a Graph},
  journal      = {Electron. J. Comb.},
  volume       = {14},
  number       = {1},
  year         = {2007},
  timestamp    = {Fri, 10 Jun 2022 10:35:07 +0200},
  biburl       = {https://dblp.org/rec/journals/combinatorics/CameronMNSW07.bib},
  bibsource    = {dblp computer science bibliography, https://dblp.org}
}

@inproceedings{ciardo_quantum_minion,
  author       = {Lorenzo Ciardo},
  title        = {Quantum advantage and {CSP} complexity},
  booktitle    = {Proc. 39th Annual {ACM/IEEE} Symposium on Logic in Computer
                  Science (LICS'24)},
  pages        = {23:1--23:15},
  year         = {2024},
  timestamp    = {Thu, 04 Jul 2024 22:05:52 +0200},
  biburl       = {https://dblp.org/rec/conf/lics/Ciardo24.bib},
  bibsource    = {dblp computer science bibliography, https://dblp.org}
}

@article{Jeavons97:closure,
  author = {Peter G. Jeavons and David A. Cohen and Marc Gyssens},
  title = {Closure Properties of Constraints},
  journal = {J. {ACM}},
  year = {1997},
  volume = {44},
  pages = {527--548},
  number = {4}
}

@article{Jeavons98:algebraic,
  author = {Peter G. Jeavons},
  title = {On the Algebraic Structure of Combinatorial Problems},
  journal = {Theor. Comput. Sci.},
  year = {1998},
  volume = {200},
  pages = {185--204},
  number = {1-2}
}

@article{AtseriasKS19,
  author       = {Albert Atserias and
                  Phokion G. Kolaitis and
                  Simone Severini},
  title        = {Generalized satisfiability problems via operator assignments},
  journal      = {J. Comput. Syst. Sci.},
  volume       = {105},
  pages        = {171--198},
  year         = {2019},
  timestamp    = {Mon, 03 Feb 2020 15:48:00 +0100},
  biburl       = {https://dblp.org/rec/journals/jcss/AtseriasKS19.bib},
  bibsource    = {dblp computer science bibliography, https://dblp.org}
}

@inproceedings{CleveM14,
  author       = {Richard Cleve and
                  Rajat Mittal},
  title        = {Characterization of binary constraint system games},
  booktitle    = {Proc. 41st International Colloquium on Automata, Languages and Programming (ICALP'14)},
  volume       = {8572},
  pages        = {320--331},
  year         = {2014},
  url          = {https://doi.org/10.1007/978-3-662-43948-7\_27},
  doi          = {10.1007/978-3-662-43948-7\_27},
  timestamp    = {Wed, 26 Aug 2020 17:30:57 +0200},
  biburl       = {https://dblp.org/rec/conf/icalp/CleveM14.bib},
  bibsource    = {dblp computer science bibliography, https://dblp.org}
}

@article{bell1964einstein,
  title={On the {E}instein {P}odolsky {R}osen paradox},
  author={Bell, John S.},
  journal={Physics Physique Fizika},
  volume={1},
  number={3},
  pages={195},
  year={1964},
  publisher={APS}
}

@article{einstein1935can,
  title = {Can Quantum-Mechanical Description of Physical Reality Be Considered Complete?},
  author = {Einstein, A. and Podolsky, B. and Rosen, N.},
  journal = {Phys. Rev.},
  volume = {47},
  issue = {10},
  pages = {777--780},
  numpages = {0},
  year = {1935},
  month = {May},
  publisher = {American Physical Society},
  doi = {10.1103/PhysRev.47.777},
  url = {https://link.aps.org/doi/10.1103/PhysRev.47.777}
}

@inproceedings{Zhuk17_FOCS,
  author       = {Dmitriy Zhuk},
  title        = {A Proof of {CSP} Dichotomy Conjecture},
  booktitle    = {Proc. 58th {IEEE} Annual Symposium on Foundations of Computer Science (FOCS'17)},
  pages        = {331--342},
  year         = {2017},
  timestamp    = {Thu, 23 Mar 2023 23:57:54 +0100},
  biburl       = {https://dblp.org/rec/conf/focs/Zhuk17.bib},
  bibsource    = {dblp computer science bibliography, https://dblp.org}
}

@inproceedings{BrakensiekGS23,
  author       = {Joshua Brakensiek and
                  Venkatesan Guruswami and
                  Sai Sandeep},
  title        = {{SDP}s and Robust Satisfiability of Promise {CSP}},
  booktitle    = {Proc. 55th Annual {ACM} Symposium on Theory of Computing
                  (STOC'23)},
  pages        = {609--622},
  year         = {2023},
  url          = {https://doi.org/10.1145/3564246.3585180},
  doi          = {10.1145/3564246.3585180},
  timestamp    = {Mon, 22 May 2023 13:15:17 +0200},
  biburl       = {https://dblp.org/rec/conf/stoc/BrakensiekGS23.bib},
  bibsource    = {dblp computer science bibliography, https://dblp.org}
}

@article{MancinskaR16_oddities,
  author       = {Laura Man\v{c}inska and
                  David E. Roberson},
  title        = {Oddities of Quantum Colorings},
  journal      = {Balt. J. Mod. Comput.},
  volume       = {4},
  number       = {4},
  year         = {2016},
  timestamp    = {Mon, 26 Jun 2023 20:58:14 +0200},
  bibsource    = {dblp computer science bibliography, https://dblp.org}
}

@article{paulsen2016estimating,
  title={Estimating quantum chromatic numbers},
  author={Paulsen, Vern I and Severini, Simone and Stahlke, Daniel and Todorov, Ivan G and Winter, Andreas},
  journal={Journal of Functional Analysis},
  volume={270},
  number={6},
  pages={2188--2222},
  year={2016},
  publisher={Elsevier}
}

@article{SymmetriesNotEnough,
  author       = {Pierre Gillibert and
                  Julius Jonusas and
                  Michael Kompatscher and
                  Antoine Mottet and
                  Michael Pinsker},
  title        = {When Symmetries Are Not Enough: {A} Hierarchy of Hard Constraint Satisfaction
                  Problems},
  journal      = {{SIAM} J. Comput.},
  volume       = {51},
  number       = {2},
  pages        = {175--213},
  year         = {2022},
  url          = {https://doi.org/10.1137/20m1383471},
  doi          = {10.1137/20M1383471},
  timestamp    = {Tue, 28 Jun 2022 21:08:16 +0200},
  biburl       = {https://dblp.org/rec/journals/siamcomp/GillibertJKMP22.bib},
  bibsource    = {dblp computer science bibliography, https://dblp.org}
}

@article{BPP,
  author       = {Manuel Bodirsky and
                  Michael Pinsker and
                  Andr{\'{a}}s Pongr{\'{a}}cz},
  title        = {Projective clone Homomorphisms},
  journal      = {J. Symb. Log.},
  volume       = {86},
  number       = {1},
  pages        = {148--161},
  year         = {2021},
  url          = {https://doi.org/10.1017/jsl.2019.23},
  doi          = {10.1017/JSL.2019.23},
  timestamp    = {Sat, 30 Sep 2023 10:20:26 +0200},
  biburl       = {https://dblp.org/rec/journals/jsyml/BodirskyPP21.bib},
  bibsource    = {dblp computer science bibliography, https://dblp.org}
}

@article{SmoothApproximations,
  author       = {Antoine Mottet and
                  Michael Pinsker},
  title        = {Smooth approximations: An algebraic approach to CSPs over finitely
                  bounded homogeneous structures},
  journal      = {J. {ACM}},
  volume       = {71},
  number       = {5},
  pages        = {36:1--36:47},
  year         = {2024},
  url          = {https://doi.org/10.1145/3689207},
  doi          = {10.1145/3689207},
  timestamp    = {Sun, 19 Jan 2025 14:36:57 +0100},
  biburl       = {https://dblp.org/rec/journals/jacm/MottetP24.bib},
  bibsource    = {dblp computer science bibliography, https://dblp.org}
}

@inproceedings{harris2024universality,
  title={Universality of graph homomorphism games and the quantum coloring problem},
  author={Harris, Samuel J},
  booktitle={Annales Henri Poincar{\'e}},
  volume={25},
  number={10},
  pages={4321--4356},
  year={2024},
  organization={Springer}
}

@inproceedings{mastel2024two,
  title={Two prover perfect zero knowledge for {MIP}},
  author={Mastel, Kieran and Slofstra, William},
  booktitle={Proceedings of the 56th Annual ACM Symposium on Theory of Computing},
  pages={991--1002},
  year={2024}
}

@inproceedings{culf2024re,
  title={RE-completeness of entangled constraint satisfaction problems},
  author={Culf, Eric and Mastel, Kieran},
  booktitle={2025 IEEE 66th Annual Symposium on Foundations of Computer Science (FOCS)},
  pages={2194--2230},
  year={2025},
  organization={IEEE}
}

@article{kearnes_optimal_2014,
	title = {Optimal strong {Mal}’cev conditions for omitting type 1 in locally finite varieties},
	volume = {72},
	issn = {1420-8911},
	url = {https://doi.org/10.1007/s00012-014-0289-9},
	doi = {10.1007/s00012-014-0289-9},
	number = {1},
	journal = {Algebra universalis},
	author = {Kearnes, Keith and Marković, Petar and McKenzie, Ralph},
	month = aug,
	year = {2014},
	pages = {91--100},
}

@inproceedings{mousavi2025quantum,
  title={A Quantum Unique Games Conjecture},
  author={Mousavi, Hamoon and Spirig, Taro},
  booktitle={16th Innovations in Theoretical Computer Science Conference (ITCS 2025)},
  pages={76--1},
  year={2025},
  organization={Schloss Dagstuhl--Leibniz-Zentrum f{\"u}r Informatik}
}

@inproceedings{Reingold,
  author       = {Omer Reingold},
  editor       = {Harold N. Gabow and
                  Ronald Fagin},
  title        = {Undirected ST-connectivity in log-space},
  booktitle    = {Proceedings of the 37th Annual {ACM} Symposium on Theory of Computing,
                  Baltimore, MD, USA, May 22-24, 2005},
  pages        = {376--385},
  publisher    = {{ACM}},
  year         = {2005},
  url          = {https://doi.org/10.1145/1060590.1060647},
  doi          = {10.1145/1060590.1060647},
  timestamp    = {Mon, 30 Mar 2026 15:37:35 +0200},
  biburl       = {https://dblp.org/rec/conf/stoc/Reingold05.bib},
  bibsource    = {dblp computer science bibliography, https://dblp.org}
}

@article{BartoKozikNiven,
  author       = {Libor Barto and
                  Marcin Kozik and
                  Todd Niven},
  title        = {The {CSP} Dichotomy Holds for Digraphs with No Sources and No Sinks
                  {(A} Positive Answer to a Conjecture of Bang-Jensen and Hell)},
  journal      = {{SIAM} J. Comput.},
  volume       = {38},
  number       = {5},
  pages        = {1782--1802},
  year         = {2009},
  url          = {https://doi.org/10.1137/070708093},
  doi          = {10.1137/070708093},
  timestamp    = {Sun, 22 Oct 2023 11:16:06 +0200},
  biburl       = {https://dblp.org/rec/journals/siamcomp/BartoKN09.bib},
  bibsource    = {dblp computer science bibliography, https://dblp.org}
}

@article{kempe2010unique,
  title={Unique games with entangled provers are easy},
  author={Kempe, Julia and Regev, Oded and Toner, Ben},
  journal={SIAM Journal on Computing},
  volume={39},
  number={7},
  pages={3207--3229},
  year={2010},
  publisher={SIAM}
}

@inproceedings{bulatov2025satisfiability,
  title={Satisfiability of Commutative vs. Non-Commutative {CSP}s},
  author={Bulatov, Andrei A. and {\v{Z}}ivn{\`y}, Stanislav},
  booktitle={52nd International Colloquium on Automata, Languages, and Programming (ICALP'25)},
  pages={1--18},
  year={2025},
  organization={Schloss Dagstuhl--Leibniz-Zentrum f{\"u}r Informatik}
}

@article{ji2021mip,
  title={{MIP*= RE}},
  author={Ji, Zhengfeng and Natarajan, Anand and Vidick, Thomas and Wright, John and Yuen, Henry},
  journal={Communications of the ACM},
  volume={64},
  number={11},
  pages={131--138},
  year={2021},
  publisher={ACM New York, NY, USA}
}

@article{bulatov2005classifying,
  title={Classifying the complexity of constraints using finite algebras},
  author={Bulatov, Andrei and Jeavons, Peter and Krokhin, Andrei},
  journal={SIAM journal on computing},
  volume={34},
  number={3},
  pages={720--742},
  year={2005},
  publisher={SIAM}
}

@article{paddock2025satisfiability,
  title={Satisfiability problems and algebras of boolean constraint system games},
  author={Paddock, Connor and Slofstra, William},
  journal={Illinois Journal of Mathematics},
  volume={69},
  number={1},
  pages={81--107},
  year={2025},
  publisher={Duke University Press}
}

@article{hadek2025categorical,
  title={A categorical perspective on constraint satisfaction: The wonderland of adjunctions},
  author={Hadek, Maximilian and Jakl, Tom{\'a}{\v{s}} and Opr{\v{s}}al, Jakub},
  journal={arXiv preprint arXiv:2503.10353},
  year={2025}
}

@article{culf2025quantum,
  title={The quantum smooth label cover problem is undecidable},
  author={Culf, Eric and Mastel, Kieran and Paddock, Connor and Spirig, Taro},
  journal={arXiv preprint arXiv:2510.03477},
  year={2025}
}

@inproceedings{slofstra2019set,
  title={The set of quantum correlations is not closed},
  author={Slofstra, William},
  booktitle={Forum Math. Pi},
  volume={7},
  year={2019},
  organization={Cambridge University Press}
}

@article{bodirsky2015graph,
  title={Graph homomorphisms and universal algebra course notes},
  author={Bodirsky, Manuel},
  journal={TU Dresden},
  pages={6--18},
  year={2015}
}

@article{juhrich2025abstract,
  title={On abstract and concrete minions},
  author={Juhrich, Lukas},
  journal={arXiv preprint arXiv:2503.13692},
  year={2025}
}

@inproceedings{karamlou2025quantum,
  title={Quantum Relaxations of CSP and Structure Isomorphism},
  author={Karamlou, Amin},
  booktitle={50th International Symposium on Mathematical Foundations of Computer Science (MFCS 2025)},
  pages={61--1},
  year={2025},
  organization={Schloss Dagstuhl--Leibniz-Zentrum f{\"u}r Informatik}
}

@article{brakensiek2025richness,
  title={The Richness of CSP Non-redundancy},
  author={Brakensiek, Joshua and Guruswami, Venkatesan and Jansen, Bart MP and Lagerkvist, Victor and Wahlstr{\"o}m, Magnus},
  journal={arXiv preprint arXiv:2507.07942},
  year={2025}
}

@article{larrauri2025ineffectiveness,
  title={Ineffectiveness for Search and Undecidability of PCSP Meta-Problems},
  author={Larrauri, Alberto},
  journal={arXiv preprint arXiv:2504.04639},
  year={2025}
}

@article{kozik2015characterizations,
  title={Characterizations of several Maltsev conditions},
  author={Kozik, Marcin and Krokhin, Andrei and Valeriote, Matt and Willard, Ross},
  journal={Algebra universalis},
  volume={73},
  number={3},
  pages={205--224},
  year={2015},
  publisher={Springer}
}

@article{geiger1968closed,
  title={Closed systems of functions and predicates},
  author={Geiger, David},
  journal={Pacific journal of mathematics},
  volume={27},
  number={1},
  pages={95--100},
  year={1968},
  publisher={Mathematical Sciences Publishers}
}

@article{bodnarchuk1969galois,
  title={Galois theory for Post algebras. I},
  author={Bodnarchuk, VG and Kaluzhnin, L Ao and Kotov, Viktor N and Romov, Boris A},
  journal={Cybernetics},
  volume={5},
  number={3},
  pages={243--252},
  year={1969},
  publisher={Springer}
}

@article{barto2025multisorted,
  title={Multisorted Boolean clones determined by binary relations up to minion homomorphisms},
  author={Barto, Libor and Kapytka, Maryia},
  journal={Algebra universalis},
  volume={86},
  number={1},
  pages={1},
  year={2025},
  publisher={Springer}
}

@article{bellare1998free,
  title={Free bits, PCPs, and nonapproximability---towards tight results},
  author={Bellare, Mihir and Goldreich, Oded and Sudan, Madhu},
  journal={SIAM Journal on Computing},
  volume={27},
  number={3},
  pages={804--915},
  year={1998},
  publisher={SIAM}
}

@article{chen2017asking,
  title={Asking the metaquestions in constraint tractability},
  author={Chen, Hubie and Larose, Benoit},
  journal={ACM Transactions on Computation Theory (TOCT)},
  volume={9},
  number={3},
  pages={1--27},
  year={2017},
  publisher={ACM New York, NY, USA}
}

@article{levene2026unique,
  title={Unique games and games based on groups},
  author={Levene, Rupert H and Paulsen, Vern I},
  journal={Journal of Mathematical Physics},
  volume={67},
  number={1},
  pages={012204},
  year={2026},
  publisher={AIP Publishing LLC}
}

@article{fu2025succinct,
  title={Succinct Perfect Zero-knowledge for MIP},
  author={Fu, Honghao and Mastel, Kieran and Zhang, Xingjian},
  journal={arXiv preprint arXiv:2503.04517},
  year={2025}
}

@article{kochen1967problem,
  title={The problem of hidden variables in quantum mechanics},
  author={Kochen, S and Specker, EP},
  journal={J. Math. Mech.},
  volume={17},
  pages={59--87},
  year={1967}
}

@article{lupini2020perfect,
  title={Perfect strategies for non-local games},
  author={Lupini, Martino and Man{\v{c}}inska, Laura and Paulsen, Vern I and Roberson, David E and Scarpa, Giannicola and Severini, Simone and Todorov, Ivan G and Winter, Andreas},
  journal={Mathematical Physics, Analysis and Geometry},
  volume={23},
  number={1},
  pages={7},
  year={2020},
  publisher={Springer}
}

@article{kim2018synchronous,
  title={A synchronous game for binary constraint systems},
  author={Kim, Se-Jin and Paulsen, Vern and Schafhauser, Christopher},
  journal={Journal of Mathematical Physics},
  volume={59},
  number={3},
  year={2018},
  publisher={AIP Publishing}
}

@article{beikmohammadi2025discrete,
  title={Discrete Homotopy and Promise Constraint Satisfaction Problem},
  author={Beikmohammadi, Arash and Bulatov, Andrei A},
  journal={arXiv preprint arXiv:2512.05120},
  year={2025}
}

\end{document}